\documentclass[11pt,reqno]{amsart}
\usepackage[T1]{fontenc}
\usepackage{lmodern}
\usepackage{microtype}
\usepackage[a4paper,textwidth=6.25in,textheight=9.05in,centering]{geometry}
\usepackage{amsmath,amssymb,mathtools,bm}
\usepackage{enumitem,booktabs,array,tikz,xcolor}
\usepackage[normalem]{ulem}
\usepackage{placeins}
\usetikzlibrary{arrows.meta,positioning,calc}
\usepackage{aliascnt}
\usepackage[hidelinks,bookmarksnumbered]{hyperref}
\usepackage[capitalise,noabbrev]{cleveref}
\numberwithin{equation}{section}
\newtheorem{theorem}{Theorem}[section]
\newaliascnt{proposition}{theorem}
\newtheorem{proposition}[proposition]{Proposition}
\aliascntresetthe{proposition}
\newaliascnt{lemma}{theorem}
\newtheorem{lemma}[lemma]{Lemma}
\aliascntresetthe{lemma}
\newaliascnt{corollary}{theorem}
\newtheorem{corollary}[corollary]{Corollary}
\aliascntresetthe{corollary}
\theoremstyle{definition}
\newaliascnt{assumption}{theorem}

\aliascntresetthe{assumption}
\newaliascnt{condition}{theorem}
\newtheorem{condition}[condition]{Condition}
\aliascntresetthe{condition}
\newaliascnt{definition}{theorem}

\aliascntresetthe{definition}
\newaliascnt{background}{theorem}

\aliascntresetthe{background}
\theoremstyle{remark}
\newaliascnt{remark}{theorem}
\newtheorem{remark}[remark]{Remark}
\aliascntresetthe{remark}
\crefname{background}{background theorem}{background theorems}
\Crefname{background}{Background theorem}{Background theorems}
\crefname{condition}{condition}{conditions}
\Crefname{condition}{Condition}{Conditions}
\DeclareMathOperator{\Tr}{Tr}
\DeclareMathOperator{\supp}{supp}
\DeclareMathOperator{\dist}{dist}
\DeclareMathOperator{\ad}{ad}
\DeclareMathOperator{\Ad}{\mathsf{Ad}}
\DeclareMathOperator{\sech}{sech}
\DeclareMathOperator{\ran}{ran}

\DeclareMathOperator{\Ent}{Ent}

\DeclareMathOperator{\id}{id}

\newcommand{\cL}{\mathcal L}

\newcommand{\one}{\mathbf 1}
\newcommand{\R}{\mathbb R}

\newcommand{\Z}{\mathbb Z}

\newcommand{\Dtwo}{\widetilde D_2}

\DeclareMathAlphabet{\mathsfgreek}{OT1}{lmss}{m}{n}
\newcommand{\sGamma}{\mathsfgreek{\Gamma}}
\newcommand{\norm}[1]{\left\lVert#1\right\rVert}

\setlist[enumerate]{label=\textup{(\roman*)},leftmargin=2.25em,itemsep=3pt,topsep=4pt}

\newcommand{\Ball}{\mathrm B}

\usepackage{todonotes}

\hypersetup{pdftitle={Size-independent logarithmic Sobolev inequalities for quantum Gibbs samplers},pdfsubject={A general proof framework with one-dimensional and high-temperature results}}
\makeatletter
\renewcommand\paragraph{\@startsection{paragraph}{4}{\z@}%
  {1.2ex \@plus .3ex \@minus .2ex}{-.7em}{\normalfont\bfseries}}
\newcommand\proofstep{\@startsection{paragraph}{4}{\z@}%
  {1.2ex \@plus .3ex \@minus .2ex}{-.7em}{\normalfont\itshape}*}
\makeatother

\allowdisplaybreaks[2]
\title[MLSI for quantum Gibbs samplers]{Size-independent logarithmic Sobolev inequalities for quantum Gibbs samplers}
\author{Hongrui Chen}
\address{Department of Mathematics, Stanford University, Stanford, CA 94305, USA}
\email{hongrui@stanford.edu}
\author{Zhiyan Ding}
\address{Department of Mathematics, University of Michigan, Ann Arbor, MI 48109, USA}
\email{zyding@umich.edu}
\author{Jianfeng Lu}
\address{Department of Mathematics, Duke University, Durham, NC 27708, USA}
\email{jianfeng@math.duke.edu}
\date{October 1, 2026}
\keywords{Quantum Markov semigroups, logarithmic Sobolev inequalities, KMS symmetry, quasi-local continuation, Gibbs states}

\begin{document}
\begin{abstract}
We propose a new method for proving $L^2$ and modified logarithmic Sobolev inequalities with constants independent of system size for quantum Gibbs samplers whose Lindblad generators satisfy Kubo-Martin-Schwinger (KMS) detailed balance. We first establish the $L^2$ logarithmic Sobolev inequality by transporting it along a temperature-dependent path from the infinite-temperature product state to the target Gibbs state while maintaining a size-independent decay rate for the sandwiched R\'enyi-2 divergence. A tensor-power argument, combined with the quantum asymptotic equipartition property, then transfers this decay bound to relative entropy, yielding the modified logarithmic Sobolev inequality. This approach provides a general framework for proving size-independent logarithmic Sobolev inequalities for quantum Gibbs samplers. We apply it to two regimes: one-dimensional quantum spin chains at every finite temperature, and finite-range quantum spin systems at sufficiently high temperature in any fixed dimension.


\end{abstract}
\maketitle
\enlargethispage{2pt}


\tableofcontents

\section{Introduction}\label{sec:introduction}

The modified logarithmic Sobolev inequality (MLSI) characterizes exponential decay of relative entropy under quantum Markov dynamics. By relating relative entropy to its instantaneous dissipation, it provides a central tool for understanding mixing and obtaining quantitative convergence bounds~\cite{KT,Carbone2014}. For many-body systems, a basic question is whether this entropy decay occurs at a rate independent of system size.

Quantum Gibbs sampling provides a natural setting for this question: dissipative dynamics drive the system toward the Gibbs state of an interacting Hamiltonian. Such dynamics provide both algorithms for preparing thermal states on quantum computers and a natural class of processes for studying many-body thermalization~\cite{CKG,RFAThermalization}. For commuting Hamiltonians, Davies generators give a standard choice of Gibbs sampler satisfying Gelfand-Naimark-Segal (GNS) detailed balance~\cite{KCAR}. For non-commuting Hamiltonians, Davies dynamics can become nonlocal and difficult to implement efficiently. Recent constructions address this difficulty through quasi-local Gibbs samplers satisfying Kubo-Martin-Schwinger (KMS) detailed balance, which can be efficiently simulated on quantum computers, such as the Gaussian KMS Gibbs sampler~\cite{CKG} and the broader family in~\cite{DLL}.

Proving a size-independent MLSI for these non-commuting KMS samplers, however, remains challenging. Earlier many-body MLSI results largely concerned GNS-symmetric dynamics and commuting interactions~\cite{CRF,KCAR}, where proofs exploit local conditional expectations and spatial factorization of relative entropy that do not carry over to non-commuting systems. Moreover, while GNS-symmetric dynamics admit a Riemannian gradient-flow formulation of relative entropy~\cite{CM2017,CM2020}, general KMS-symmetric dynamics lack this geometric structure. Consequently, even establishing the existence of a nontrivial MLSI constant in the general KMS-symmetric setting has been far from straightforward.

Recent work has nevertheless made substantial progress in understanding the convergence of non-commuting KMS samplers. Polynomial-time (fast) mixing has been established via spectral-gap bounds~\cite{RFAThermalization,TZ,SMBBFast,BC2025}, while polylogarithmic-time (rapid) mixing has been proved using oscillator-norm contraction~\cite{RFA,SMBBRapid}, quantum Wasserstein contraction~\cite{BLMT,BMT,Kim2026,RFDS2026}, R\'enyi-2 divergence decay~\cite{CD2026}, and entropy quasi-factorization~\cite{CPS2026}. The latter two approaches have yielded MLSI bounds for specific samplers of one-dimensional non-commuting systems, but with constants that deteriorate with system size. For $n$-site systems, \cite{CPS2026} establishes an MLSI lower bound of $\Omega(1/\log n)$ for heat-bath and regularised heat-bath dynamics, while \cite{CD2026} obtains $\Omega(1/\log^4 n)$ for the Gaussian KMS Gibbs sampler~\cite{CKG}.
Obtaining a size-independent rate, both in one dimension and for broader classes of non-commuting systems, therefore remains open. This motivates the following question:

\medskip
\noindent\emph{Can we obtain size-independent MLSI for KMS-symmetric Gibbs samplers of broad classes of non-commuting Hamiltonians?}
\medskip

Addressing this question requires controlling relative-entropy dissipation, a nonlinear quantity that is difficult to estimate directly for interacting quantum systems. A useful alternative is to pass through the $L^2$ logarithmic Sobolev inequality ($L^2$ LSI), which bounds the $L^2$ relative entropy (a rescaled relative entropy; see~\eqref{eq:l2-relative-entropy}) by a Dirichlet form and yields exponential decay of the sandwiched R\'enyi-2 divergence~\cite{OZ1999,KT,MHF}; one then transfers this R\'enyi-2 decay to relative-entropy decay. Within this two-step framework, \cite{CD2026} obtains R\'enyi-2 decay in one dimension by exploiting the stronger dissipation of states far from equilibrium, and transfers it to MLSI via hypercontractivity~\cite{GW2026}; however, both steps incur logarithmic losses in system size.

In this work, we overcome the logarithmic losses in both steps through a new proof strategy, obtaining size-independent bounds for the $L^2$ LSI and MLSI. First, we establish a size-independent $L^2$ LSI by transporting the known $L^2$ LSI at infinite temperature to the target Gibbs state along a unitary flow. The key is to preserve the comparison between the $L^2$ relative entropy and the Dirichlet form as both quantities change along the path. Local estimates on their variations ensure that the resulting constant does not deteriorate as the system grows.
Second, with size-independent R\'enyi-2 decay in hand, we transfer this decay to relative entropy via a tensor-power argument without losing the rate. We use the quantum asymptotic equipartition property~\cite{TCR2009,DMHB} to derive relative-entropy bounds for a single system from R\'enyi-2 bounds on many independent copies. Since the same decay rate holds for every number of copies, the rate is preserved.

This proof strategy provides a general framework for proving modified logarithmic Sobolev inequalities and can apply to various classes of non-commuting Hamiltonians, as shown in our main theorem:


\begin{theorem}[Informal; Theorem~\ref{thm:main-mlsi}]\label{thm:informal-results}
For quantum spin systems with finite-range interactions of bounded strength and fixed local dimension, the Gaussian KMS Gibbs sampler of~\cite{CKG} satisfies both the $L^2$ LSI and MLSI with constants independent of system size in the following regimes:
\begin{itemize}
\item In one dimension, at any fixed inverse temperature $0<\beta<\infty$.
\item In any fixed higher dimension, at sufficiently high temperature, with a temperature threshold independent of system size.
\end{itemize}
\end{theorem}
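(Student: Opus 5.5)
The proof proceeds in two stages, following the strategy described in the introduction: first a size-independent $L^2$ LSI, then its transfer to MLSI.

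\textbf{Stage one: the $L^2$ LSI along a temperature path.} Fix the Hamiltonian $H$. For $s\in[0,\beta]$, let $\sigma_s\propto e^{-sH}$ and let $\cL_s$ be the Gaussian KMS generator of~\cite{CKG} at inverse temperature $s$. At $s=0$ the state is maximally mixed and $\cL_0$ is a sum of local depolarizing-type terms, so the product-state $L^2$ LSI holds with an $O(1)$ constant by tensorization. Define
\[
\alpha(s)=\inf_X \frac{\mathcal E_s(X)}{\Ent_{2,s}(X)},
\]
where $\mathcal E_s$ is the KMS Dirichlet form and $\Ent_{2,s}$ is the $L^2$ relative entropy. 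It is more convenient to work with the KMS-weighted variable $Y=\sigma_s^{1/4}X\sigma_s^{1/4}$, so that changing $s$ acts on $Y$ through a quasi-local unitary-type flow generated by $H$. For a near-optimal $Y$, I would compute $\tfrac{d}{ds}\log\mathcal E_s$ and $\tfrac{d}{ds}\log\Ent_{2,s}$ and aim for a differential inequality of the form $\alpha'(s)\ge -C\,\alpha(s)$, with $C$ independent of $n$. The key input is a locality bound: the derivative of each local jump term and of the weighted norm is a sum of quasi-local perturbations. Each perturbation should be controlled by the local Dirichlet contribution plus a local entropy contribution, using Lieb--Robinson bounds and a summable decay of the filter functions. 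Gr\"onwall's inequality then gives $\alpha(\beta)\ge \alpha(0)e^{-C\beta}$.

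\textbf{Where the two regimes differ.} The main obstacle is proving that the variation of $\Ent_{2,s}$ is bounded by a local quantity rather than something extensive in $n$. Relative entropy is not local, and a naive bound costs a factor of $n$. In the high-temperature regime I would absorb this loss by making $C\beta$ small through a cluster expansion of $\sigma_s$. This yields quasi-local conditional-expectation-like decompositions and hence approximate factorization of $\Ent_2$ with constant $1+O(\beta)$. In one dimension at any $\beta$, I would instead split the path into $O(1)$ steps of size $\delta$, or equivalently use block-decimation. The input here is exponential decay of correlations and the Araki-type locality of $e^{-sH/2}(\cdot)e^{sH/2}$ for 1D Gibbs states at all temperatures. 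These give a perturbation of each local term by an operator supported on a block of $O(1)$ size. The blocks are then summed with bounded overlap, so the constant depends only on $\beta$ and the interaction parameters. I expect this step to be the hardest: establishing a local entropy-variation estimate that holds uniformly in $n$.

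\textbf{Stage two: from $L^2$ LSI to MLSI.} The $L^2$ LSI gives $D_2(e^{t\cL}\rho\|\sigma)\le e^{-2\alpha t}D_2(\rho\|\sigma)$. The constant is size-independent, so the same bound holds for the $m$-fold tensor product $\cL^{(m)}=\sum_i \cL\otimes\mathrm{id}$ acting on $\rho^{\otimes m}$ with target $\sigma^{\otimes m}$. Indeed, that product is itself a Gibbs sampler of the same type on $mn$ sites, so Stage one applies to it directly. Apply the bound to a smoothed version of $\rho^{\otimes m}$. The quantum AEP~\cite{TCR2009,DMHB} identifies $\lim_m \tfrac1m D_2^\epsilon(\rho^{\otimes m}\|\sigma^{\otimes m})$ with $D(\rho\|\sigma)$, and similarly for the evolved state. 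Dividing by $m$ and letting $m\to\infty$ and then $\epsilon\to0$ yields
\[
D(e^{t\cL}\rho\|\sigma)\le e^{-2\alpha t}D(\rho\|\sigma).
\]
Differentiating at $t=0$ gives the MLSI with constant of order $\alpha$, independent of $n$. The only delicacy in this stage is handling the smoothing; continuity of the evolution in trace distance, together with the data-processing inequality, should take care of it.
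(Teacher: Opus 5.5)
Your Stage two is essentially the paper's argument. The $m$-copy tensor-sum sampler lies in the same Hamiltonian class, so R\'enyi-2 contraction holds for all inputs on the copies, including correlated ones. The $D_{\max}$ AEP with $D\le\widetilde D_2\le D_{\max}$, together with convexity, superadditivity and one-copy continuity, then gives relative-entropy contraction.

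The gap is in Stage one, and its premise is false. At fixed $X$, $Y_s=\sigma_s^{1/4}X\sigma_s^{1/4}$ obeys $\partial_sY_s=-\tfrac14\{H-\Tr(\sigma_sH)I,\,Y_s\}$. This is a symmetrized multiplication by an operator of norm of order $|\Lambda|$, neither unitary nor of bounded local strength. At fixed $Y$ there is no flow, but the zero-energy vector $|\Omega_{\sigma_s}\rangle$ moves. For example, $Y=\Omega_{\sigma_{s_0}}$ has $E_{s_0}(Y)=0<E_s(Y)$ for $s\neq s_0$, so $\tfrac{d}{ds}\log E_s$ is unbounded.

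In the fixed-$X$ frame, the derivatives of $\mathcal E_s(X)$ and $\Ent_{2,s}(X)$ are weighted by $H-\Tr(\sigma_sH)I$. This is a sum of $|\Lambda|$ local terms, none of which annihilates the Gibbs vector, so nothing makes them small relative to $\mathcal E_s$ or $\Ent_{2,s}$. Already for classical spins one can choose a test function whose Dirichlet energy sits at atypical energies while its variance does not. That makes $\tfrac{d}{ds}\log(\mathcal E_s/\Ent_{2,s})$ of order $|\Lambda|$. This is the Holley--Stroock loss $e^{O(\delta|\Lambda|)}$ per step of length $\delta$. Cutting $[0,\beta]$ into $O(1)$ steps does not remove it, and small $\beta$ cannot absorb it uniformly in $|\Lambda|$. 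The factorization and decimation schemes you mention are only named, not carried out, and would still not address the moving reference state.

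The missing idea is to move the test vector along a flow that carries the Gibbs vector exactly yet has bounded local strength. The paper solves the centered Poisson equation $\partial_s|\Omega_{\sigma_s}\rangle=-\mathsf L_{\sigma_s}|T_s\rangle$, writes $T_s=\mathcal M_{f_s}(\Omega_{\sigma_s})$, and sets $\mathsf U_s'=[\mathsf M_{f_s},\mathsf L_{\sigma_s}]\mathsf U_s$. Multiplication by $H-\Tr(\sigma_sH)I$ is thereby traded for a commutator with the generator, with potential
$f_s=-\tfrac12\sech(s\ad_H/4)\int_0^\infty e^{t\mathcal L_s}(H-\Tr(\sigma_sH)I)\,dt$.
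This $f_s$ is quasi-local with bounded local strength by virtue of local rapid mixing. That is a dynamical input your plan never invokes; the paper takes it from regional rapid mixing (\cite{CD2026} in 1D, \cite{RFA} at high temperature).

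Along this unitary path $\|Y_s\|_2=1$ is conserved, and the paper proves $|\tfrac{d}{ds}D(Y_s^2\Vert\sigma_s)|\le C_SE_s$ and $|\tfrac{d}{ds}E_s|\le C_EE_s$. The entropy change is bounded by the \emph{Dirichlet energy}, not by local entropies. It is a bilinear form in weighted commutators that annihilate $|\Omega_{\sigma_s}\rangle$, and such quasi-local families are dominated by $-\mathsf L_{\sigma_s}$ through a local action-energy inequality. That inequality comes from local spectral gaps and local state domination. In 1D it uses the weighted Pauli form of \cite{BC2025}; at high temperature it uses the transported product form $\mathsf U_s(-\mathsf L_{\sigma_0})\mathsf U_s^*$. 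Integrating gives $\alpha_2\ge2e^{-\beta C_E}/(C_0+\beta C_S)$ with no entropy factorization at all.
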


  Theorem~\ref{thm:informal-results} implies size-independent decay rates for both the R\'enyi-2 divergence and relative entropy in one dimension at every finite temperature and in higher dimensions at sufficiently high temperature. 
Beyond these quantitative bounds, there are several conceptual implications of our work, which we discuss in the following:
\begin{enumerate}
\item Size-independent MLSI is not necessarily tied to the existence of a gradient flow structure.

GNS-symmetric quantum Markov semigroups admit a Riemannian gradient-flow formulation of relative entropy~\cite{CM2017,CM2020}. More generally, for finite-dimensional ergodic semigroups with a faithful invariant state, such a formulation exists if and only if Bogoliubov-Kubo-Mori (BKM) detailed balance holds~\cite{CM2020,BM2024}. Although KMS detailed balance alone does not guarantee this structure, our transport method surprisingly establishes size-independent MLSI for the Gaussian KMS Gibbs sampler in the one-dimensional and high-temperature regimes of Theorem~\ref{thm:main-mlsi}, without imposing BKM detailed balance or using a gradient-flow formulation. This confirms that the existence of a gradient-flow structure is not necessary for size-independent MLSI.

\item A quantum counterpart of the classical passage from the $L^2$ LSI to entropy decay.

For reversible classical Markov chains, the $L^2$ LSI implies MLSI and hence exponential decay of relative entropy~\cite{DSC1996,BT2006}. In the quantum setting, $L_p$-regularity yields a corresponding comparison without system-size loss~\cite{KT,Carbone2014}, but the required regularity is not known to follow from KMS detailed balance alone. \cite{GW2026} derives MLSI from hypercontractivity without this assumption. However, directly applying their result to the Gaussian KMS Gibbs sampler will incur a logarithmic loss in the MLSI bound.

Our result overcomes this difficulty by establishing an $L^2$ LSI for the tensor-sum generator on every number of copies, with a constant independent of the number of copies and valid for all positive inputs on the joint system. This inequality yields sandwiched R\'enyi-2 decay~\cite{MHF}, and Theorem~\ref{thm:transfer} converts this into relative-entropy decay with the same rate when the bound holds uniformly on every tensor power, including for inputs correlated across copies.

\item The transport method provides a new approach to proving mixing time bounds.

A size-independent MLSI constant directly implies $\mathcal O
\left(\log(n)+\log(1/\epsilon)\right)$ mixing time for the Gaussian KMS Gibbs sampler in the one-dimensional and high-temperature regimes of Theorem~\ref{thm:main-mlsi}. Our method differs from many existing approaches for proving quantum mixing bounds, which establish functional inequalities directly for a Gibbs sampler at the target temperature. Our method obtains mixing bounds by transporting an $L^2$ LSI from infinite temperature to the target Gibbs state via a unitary path.


The unitary path is a quantum analogue of Moser's construction of transport maps~\cite{Moser}. Classically, the Gibbs densities $p_s\propto e^{-sH}p_0$ are transported by a gradient flow whose potential solves a Poisson equation for the Langevin generator, and any density carried by the same flow keeps its relative entropy to $p_s$. On square-root densities this flow is generated by the commutator of a multiplication operator with the generator. Our quantum flow keeps this commutator form, which lets us control how relative entropy changes along the path; see Remark~\ref{rem:classical-counterpart}.

\end{enumerate}





\subsection*{Related work}

\paragraph*{Mixing time of quantum Gibbs samplers}
For commuting Hamiltonians, spectral gap analysis has been used to establish fast mixing of Davies generators, with the required spectral gap often guaranteed by decay of correlations in the Gibbs state~\cite{KB2016}.
In the case of commuting spin chains at every finite temperature, higher-dimensional systems at high temperature~\cite{CRF,KCAR}, and quantum error-correcting-code Hamiltonians~\cite{CSSMLSI,AbelianMLSI,HaahRapid2026}, size-independent MLSI constants have been established through spatial mixing and entropy factorization. This ensures rapid mixing of the Davies generator in these commuting cases.

For non-commuting Hamiltonians, quasi-local Gibbs samplers satisfying KMS detailed balance~\cite{CKG,DLL} have made it possible to study both fast and rapid mixing in several regimes. In perturbative regimes, including high temperature and weak interactions, spectral-gap estimates yield fast mixing~\cite{RFAThermalization,TZ,SMBBFast,CZZ2026}, while subsequent results establish rapid mixing through oscillator-norm contraction~\cite{RFA,SMBBRapid} or quantum Wasserstein path coupling~\cite{BLMT,BMT}. For general one-dimensional spin systems beyond perturbative regimes, \cite{BC2025} established fast mixing for the Gaussian KMS Gibbs sampler~\cite{CKG} at every finite temperature via spectral-gap estimates. More recently, \cite{CD2026} proves rapid mixing for the Gaussian KMS Gibbs sampler from spectral analysis beyond the slowest mode. \cite{Kim2026,RFDS2026} establish rapid mixing for block and smoothed heat-bath dynamics, respectively, using Wasserstein contraction and influence decay. \cite{CPS2026} obtains MLSI bounds for heat-bath and regularised heat-bath dynamics by combining weak quasi-factorization of relative entropy with spectral-gap estimates.

\paragraph*{Entropy decay of quantum Markov semigroups}
Beyond these results for specific Gibbs samplers, a broader literature studies entropy decay through logarithmic Sobolev inequalities for quantum Markov semigroups. One approach to establishing $L^2$ LSIs is through hypercontractivity~\cite{OZ1999,KT}. Tools for establishing hypercontractivity include noncommutative integration and matrix norm inequalities~\cite{CL1993,King}. Interpolation provides another route from finite-time norm estimates and a spectral gap~\cite{TPK2014}. Alternatively, the noncommutative Holley--Stroock method transfers logarithmic Sobolev inequalities between GNS-symmetric generators with common jump operators and different invariant states~\cite{JLR2019}. Our approach instead transports the inequality along a unitary flow, as discussed above.

To obtain MLSI from an $L^2$ LSI, $L_p$-regularity compares the Dirichlet form with relative-entropy dissipation~\cite{KT,Carbone2014}; noncommutative Stroock--Varopoulos inequalities provide such comparisons under GNS detailed balance~\cite{BDR2018}. Without assuming $L_p$-regularity, \cite{GW2026} derives MLSI from hypercontractive estimates; this approach is used in \cite{CD2026} to transfer R\'enyi-2 decay to MLSI with a polylogarithmic loss in system size. Our tensor-power argument uses the quantum asymptotic equipartition property~\cite{TCR2009,DMHB} to transfer R\'enyi-2 decay to relative entropy without losing the rate.

Geometric approaches relate entropy convexity to decay through the GNS gradient-flow framework~\cite{CM2017,CM2020,CLL2018}, whose Riemannian formulation extends precisely to dynamics satisfying Bogoliubov-Kubo-Mori (BKM) detailed balance~\cite{CM2020,BM2024}. Within this framework, curvature bounds combined with mixing estimates yield MLSI~\cite{BGJ2020}. More recently, \cite{GG2026} derives MLSI from BKM coercivity through integral representations of entropy, without assuming detailed balance.

\paragraph*{Concurrent work}
In independent and concurrent work, \cite{gao2026completemodifiedlogarithmicsobolev} proves a size-independent complete MLSI for the Gaussian KMS Gibbs sampler of one-dimensional spin systems at every finite temperature. This overlaps with our one-dimensional MLSI result, but the approaches are substantially different: their proof uses spatial entropy factorization, whereas ours combines unitary transport of an $L^2$ LSI with a tensor-power argument. More broadly, our framework reduces the proof to local mixing and local energy conditions in arbitrary spatial dimension (Theorem~\ref{prop:general-criterion}), separating the general argument from its verification for specific systems. This framework also yields MLSI for non-commuting spin systems at high temperature in arbitrary spatial dimension, which we establish here for the first time.

\section{Notation and preliminaries}\label{sec:notation}
Section~\ref{sec:notation-vectorization} fixes notation for matrices, superoperators, and vectorization. Section~\ref{sec:notation-kms} introduces KMS detailed balance and the weighted generator. Section~\ref{sec:notation-entropies} introduces the divergences and entropy functionals, and Section~\ref{sec:notation-inequalities} introduces the functional inequalities and the entropy decay they imply.

\subsection{Matrices, superoperators, and vectorization}\label{sec:notation-vectorization}
We work on a finite-dimensional Hilbert space $\mathcal H$ and write $\mathcal B(\mathcal H)$ for its matrix algebra and $I$ for the identity matrix. A state is a positive semidefinite matrix of trace one; it is faithful when it is positive definite. We denote the Hilbert--Schmidt inner product $\langle X,Y\rangle=\Tr(X^*Y)$, and $\|X\|_p=(\Tr|X|^p)^{1/p}$ denotes the Schatten $p$-norm for $1\leq p<\infty$, with $\|X\|_\infty$ the operator norm. For simplicity, we also write $\|X\|:=\|X\|_\infty$ for the operator norm. All logarithms are natural, and functions of positive matrices use the convention $0\log0=0$.

To represent superoperators on $\mathcal B(\mathcal H)$ as matrices, fix an orthonormal basis of $\mathcal H$ and define vectorization and its inverse by
\[
 |Y\rangle:=\operatorname{vec}(Y)=\sum_{i,j}Y_{ij}|i\rangle\otimes|j\rangle,
 \qquad \operatorname{vec}^{-1}(|Y\rangle)=Y.
\]
It identifies the matrix algebra, equipped with the Hilbert--Schmidt inner product, with $\mathcal H\otimes\mathcal H$:
\[
 \langle X|Y\rangle=\Tr(X^*Y),\qquad
 \bigl\||Y\rangle\bigr\|_2=\|Y\|_2.
\]
A matrix $Y$ and its vectorization $|Y\rangle$ are distinguished throughout, and $\one$ denotes the identity operator on $\mathcal H\otimes\mathcal H$. In the chosen basis, left and right multiplication satisfy
\[
 |AYB\rangle=(A\otimes B^{\mathrm T})|Y\rangle,
\]
where $^{\mathrm T}$ denotes transpose.

Operators acting on vectorizations are written in sans-serif letters. If $\mathcal T$ is a superoperator on matrices, its vectorized representative $\mathsf{T}$ is defined by
\[
 \mathsf{T}|Y\rangle=|\mathcal T(Y)\rangle,\qquad
 \mathsf{T}=\operatorname{vec}\circ\mathcal T\circ\operatorname{vec}^{-1}.
\]
For a superoperator, $\mathcal T^*$ denotes its Hilbert--Schmidt adjoint; its vectorized representative is $\mathsf{T}^*$. The corresponding induced norms agree:
\begin{equation}\label{eqn:induced-norms_def}
 \|\mathcal T\|_{2\to2}
 =\sup_{\|Y\|_2=1}\|\mathcal T(Y)\|_2=\|\mathsf{T}\|.
\end{equation}
The matrix commutator map and symmetrized multiplication are
\begin{equation}\label{eq:symmetrized-multiplication}
 \ad_V(Y)=[V,Y],\qquad \mathcal M_f(Y)=\tfrac12(fY+Yf).
\end{equation}
Their vectorized representatives are denoted by $\Ad_V$ and $\mathsf{M}_f$, respectively:
\[
 \Ad_V|Y\rangle=|[V,Y]\rangle,\qquad
 \mathsf{M}_f|Y\rangle=|\mathcal M_f(Y)\rangle.
\]
Thus functions of $\ad_V$ applied to matrices are superoperators, whereas $\Ad_V$ acts on vectorizations.

\subsection{KMS detailed balance and the weighted generator}\label{sec:notation-kms}
For a faithful state $\sigma$ we follow~\cite{KT} and write
\[
 \Gamma_\sigma(X)=\sigma^{1/2}X\sigma^{1/2},\qquad
 \Gamma_\sigma^{p}(X)=\sigma^{p/2}X\sigma^{p/2}\quad(p\in\R).
\]
The KMS inner product and its norm are
\begin{equation}\label{eqn:KMS_norm}
 \langle X,Y\rangle_\sigma
 =\langle X,\Gamma_\sigma(Y)\rangle
 =\langle\Gamma_\sigma^{1/2}(X),\Gamma_\sigma^{1/2}(Y)\rangle,\qquad
 \|X\|_{2,\sigma}=\|\Gamma_\sigma^{1/2}(X)\|_2.
\end{equation}
Thus $\Gamma_\sigma^{1/2}$ is an isometry from the KMS inner product space to Hilbert--Schmidt space. Its vectorized representative is written with a sans-serif $\sGamma$:
\[
 \sGamma_\sigma^{1/2}|X\rangle=|\Gamma_\sigma^{1/2}(X)\rangle,\qquad
 \sGamma_\sigma^{1/2}=\sigma^{1/4}\otimes(\sigma^{1/4})^{\mathrm T},
\]
and $\sGamma_\sigma^{-1/2}=(\sGamma_\sigma^{1/2})^{-1}$. The operator $\sGamma_\sigma^{1/2}$ is positive definite, so $(\sGamma_\sigma^{1/2})^*=\sGamma_\sigma^{1/2}$. The isometry above uses the KMS norm on the domain; $\sGamma_\sigma^{1/2}$ need not be unitary in the unweighted inner product. The map $\Gamma_\sigma^{1/2}$ sends the identity matrix to the Gibbs square-root matrix:
\begin{equation}\label{eq:omega_sigma}
 \Omega_\sigma:=\Gamma_\sigma^{1/2}(I)=\sqrt\sigma,\qquad
 |\Omega_\sigma\rangle=\sGamma_\sigma^{1/2}|I\rangle.
\end{equation}
The vector $|\Omega_\sigma\rangle$ is normalized because $\Tr\sigma=1$.

Let $\mathcal L$ be a Lindblad generator acting on observables, with trace adjoint $\mathcal L_*$ and invariant state $\sigma$, so $\mathcal L_*(\sigma)=0$. It is primitive if $e^{t\mathcal L_*}(\rho)\to\sigma$ for every state $\rho$. KMS detailed balance, or KMS symmetry, means
\begin{equation}\label{eqn:L-KMS}
 \langle X,\mathcal L(Y)\rangle_\sigma
 =\langle\mathcal L(X),Y\rangle_\sigma
 \qquad(X,Y\in\mathcal B(\mathcal H)),
\end{equation}
or equivalently $\Gamma_\sigma\circ\mathcal L=\mathcal L_*\circ\Gamma_\sigma$~\cite[Definition~6]{KT}. The Dirichlet form associated with $\mathcal L$ is
\[
 \mathcal E_{2,\sigma}(X):=-\operatorname{Re}\langle X,\mathcal L(X)\rangle_\sigma.
\]

For a primitive KMS-symmetric observable generator $\mathcal L$, let $\mathsf{L}$ and $\mathsf{L}_*$ be the vectorized representatives of $\mathcal L$ and its trace adjoint $\mathcal L_*$, respectively. We define the weighted generator on matrices and its vectorized representative by
\begin{equation}\label{eq:weighted-generator}
 \begin{aligned}
 \mathcal L_\sigma
 &=\Gamma_\sigma^{1/2}\mathcal L\Gamma_\sigma^{-1/2}
  =\Gamma_\sigma^{-1/2}\mathcal L_*\Gamma_\sigma^{1/2},\\
 \mathsf{L}_\sigma
 &=\sGamma_\sigma^{1/2} \mathsf{L} \sGamma_\sigma^{-1/2}=\sGamma_\sigma^{-1/2}\mathsf{L}_*\sGamma_\sigma^{1/2}.
 \end{aligned}
\end{equation}
The equalities between the observable and state forms follow from KMS symmetry: composing $\Gamma_\sigma\circ\mathcal L=\mathcal L_*\circ\Gamma_\sigma$ with $\Gamma_\sigma^{-1/2}$ on both sides gives $\Gamma_\sigma^{1/2}\mathcal L\Gamma_\sigma^{-1/2}=\Gamma_\sigma^{-1/2}\mathcal L_*\Gamma_\sigma^{1/2}$. Since $\mathcal L$ preserves Hermiticity, its trace adjoint $\mathcal L_*$ is also its Hilbert--Schmidt adjoint, so $\mathsf{L}_*=\mathsf{L}^*$. We reserve $\mathsf{L}$ for the original vectorized generator and $\mathsf{L}_\sigma$ for the weighted generator; its parent Hamiltonian is $-\mathsf{L}_\sigma$. They satisfy
\begin{equation}\label{eq:parent-action}
 \mathsf{L}_\sigma \sGamma_\sigma^{1/2}=\sGamma_\sigma^{1/2} \mathsf{L},\qquad
 \mathsf{L}_\sigma=\mathsf{L}_\sigma^*\leq0,\qquad
 \mathsf{L}_\sigma|\Omega_\sigma\rangle=0.
\end{equation}
The first identity is~\eqref{eq:weighted-generator}. Taking adjoints in its second form, with $\mathsf L_*^*=\mathsf L$ and $(\sGamma_\sigma^{\pm1/2})^*=\sGamma_\sigma^{\pm1/2}$, gives $\mathsf L_\sigma^*=\sGamma_\sigma^{1/2}\mathsf L\sGamma_\sigma^{-1/2}=\mathsf L_\sigma$. The last identity follows from $\mathcal L(I)=0$ and~\eqref{eq:omega_sigma}. For the sign, $\mathsf L_\sigma$ is similar to $\mathsf L$, whose eigenvalues have nonpositive real part because the unital channels $e^{t\mathcal L}$ are bounded uniformly in $t$; since $\mathsf L_\sigma$ is self-adjoint, $\mathsf L_\sigma\leq0$. Finally, by~\eqref{eqn:KMS_norm} and the first identity,
\[
 \langle X,\mathcal L(X)\rangle_\sigma
 =\langle X|\sGamma_\sigma^{1/2}\sGamma_\sigma^{1/2}\mathsf L|X\rangle
 =\langle X|\sGamma_\sigma^{1/2}\mathsf L_\sigma\sGamma_\sigma^{1/2}|X\rangle,
\]
which is real because $\mathsf L_\sigma$ is self-adjoint. Consequently,
\begin{equation}\label{eqn:l2-dirichlet}
 \mathcal E_{2,\sigma}(X)
 =\langle X|\sGamma_\sigma^{1/2}(-\mathsf{L}_\sigma)\sGamma_\sigma^{1/2}|X\rangle.
\end{equation}

\subsection{Divergences and entropy functionals}\label{sec:notation-entropies}
For a state $\rho$ and a faithful state $\sigma$, the relative density of $\rho$ is $\Gamma_\sigma^{-1}(\rho)=\sigma^{-1/2}\rho\sigma^{-1/2}$; it satisfies $\Tr(\sigma\Gamma_\sigma^{-1}(\rho))=1$. Umegaki relative entropy and the sandwiched R\'enyi-2 divergence are
\begin{align*}
 D(\rho\Vert\sigma)&=\Tr[\rho(\log\rho-\log\sigma)],\\
 \widetilde D_2(\rho\Vert\sigma)
 &=\log\|\Gamma_\sigma^{-1/2}(\rho)\|_2^2
 =\log\|\Gamma_\sigma^{-1}(\rho)\|_{2,\sigma}^2.
\end{align*}
They satisfy $D\leq\widetilde D_2$~\cite{MLDS,FL}.

Following~\cite{KT}, the $L^2$ relative entropy of a positive definite matrix $X$ is
\begin{equation}\label{eq:l2-relative-entropy}
 \Ent_{2,\sigma}(X)
 =\tfrac12\Tr\Bigl[\bigl(\Gamma_\sigma^{1/2}(X)\bigr)^2
   \Bigl(\log\bigl(\Gamma_\sigma^{1/2}(X)\bigr)^2-\log\sigma\Bigr)\Bigr]
  -\tfrac12\|X\|_{2,\sigma}^2\log\|X\|_{2,\sigma}^2.
\end{equation}
It is a rescaled relative entropy. Since $\|\Gamma_\sigma^{1/2}(X)\|_2=\|X\|_{2,\sigma}$, the matrix
\[
 \eta_X:=\bigl(\Gamma_\sigma^{1/2}(X)\bigr)^2/\|X\|_{2,\sigma}^2
\]
is a state, with square root $\Gamma_\sigma^{1/2}(X)/\|X\|_{2,\sigma}$, and
\begin{equation}\label{eqn:l2-entropy}
 2\Ent_{2,\sigma}(X)=\|X\|_{2,\sigma}^2\,D(\eta_X\Vert\sigma),
 \qquad X>0.
\end{equation}
Indeed, $(\Gamma_\sigma^{1/2}(X))^2=\|X\|_{2,\sigma}^2\,\eta_X$ with $\Tr\eta_X=1$, so $\log(\Gamma_\sigma^{1/2}(X))^2=\log\eta_X+\log\|X\|_{2,\sigma}^2\,I$. Substituting this into the definition, the first term of $\Ent_{2,\sigma}(X)$ equals $\tfrac12\|X\|_{2,\sigma}^2\bigl(D(\eta_X\Vert\sigma)+\log\|X\|_{2,\sigma}^2\bigr)$, and the second term cancels the logarithm of the norm.

\subsection{Functional inequalities and entropy decay}\label{sec:notation-inequalities}
Let $\mathcal L$ be a primitive KMS-symmetric generator with invariant state $\sigma$. We use three functional inequalities for $\mathcal L$: the spectral gap inequality, the $L^2$ LSI, and the modified logarithmic Sobolev inequality. For each, we record its optimal constant and the decay estimate it yields. Theorem~\ref{thm:transfer} states how R\'enyi-2 contraction uniform over independent copies yields relative-entropy contraction.

\paragraph*{Spectral gap}
For an initial state $\rho_0$, define the centered relative density by
\[
 X_t=\Gamma_\sigma^{-1}\bigl(e^{t\mathcal L_*}(\rho_0-\sigma)\bigr).
\]
KMS detailed balance~\eqref{eqn:L-KMS} implies $\Gamma_\sigma^{-1}\mathcal L_*=\mathcal L\Gamma_\sigma^{-1}$, so $\partial_tX_t=\mathcal L(X_t)$. The Dirichlet form therefore gives the decay rate of the squared KMS norm:
\[
 \frac d{dt}\|X_t\|_{2,\sigma}^2=-2\mathcal E_{2,\sigma}(X_t).
\]
The KMS spectral gap of $\mathcal L$ is
\[
 \lambda(\mathcal L)
 =\inf_{\substack{X\ne0\\\Tr(\sigma X)=0}}
   \frac{\mathcal E_{2,\sigma}(X)}{\|X\|_{2,\sigma}^2}>0.
\]
Since $\Tr(\sigma X_t)=0$, the spectral gap gives the $L^2$ relaxation bound
\[
 \frac d{dt}\|X_t\|_{2,\sigma}^2\leq -2\lambda(\mathcal{L})\|X_t\|_{2,\sigma}^2\ \Rightarrow\ \|X_t\|_{2,\sigma}
 \leq e^{-\lambda(\mathcal L)t}
       \|X_0\|_{2,\sigma}
 \quad(t\geq0).
\]

\paragraph*{$L^2$ logarithmic Sobolev inequality}
The $L^2$ LSI constant $\alpha_2(\mathcal L)$ is the largest constant such that
\begin{equation}\label{eq:l2-lsi}
 \alpha_2(\mathcal L)\,\Ent_{2,\sigma}(X)\leq\mathcal E_{2,\sigma}(X)
 \qquad\text{for every }X>0.
\end{equation}
This is the $2$-log-Sobolev inequality of~\cite{KT}, with the same normalization. The following lemma shows that \eqref{eq:l2-lsi} yields exponential decay of the sandwiched R\'enyi-2 divergence.

\begin{lemma}[$L^2$ LSI and R\'enyi-2 decay {\cite[Theorems~3.2 and~4.3]{MHF}}]\label{lem:l2-renyi}\label{cor:endpoint-decay}
For every state $\rho$ and every $t\geq0$,
\[
 \widetilde D_2(e^{t\mathcal L_*}\rho\Vert\sigma)
 \leq e^{-\alpha_2(\mathcal L)t}\widetilde D_2(\rho\Vert\sigma).
\]
\end{lemma}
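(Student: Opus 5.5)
The plan is to differentiate $\widetilde D_2(\rho_t\Vert\sigma)$ along $\rho_t=e^{t\mathcal L_*}\rho$, express the derivative through the Dirichlet form, and close the estimate with \eqref{eq:l2-lsi}. This is the standard route of~\cite{MHF}: the decay follows from a differential inequality and Gr\"onwall.

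\emph{Derivative of the norm.} Set $X_t=\Gamma_\sigma^{-1}(\rho_t)$, which is positive definite; for faithful $\rho$ this is immediate, and the general case follows by approximation and continuity. By KMS symmetry, $\Gamma_\sigma^{-1}\mathcal L_*=\mathcal L\Gamma_\sigma^{-1}$, so $\partial_tX_t=\mathcal L(X_t)$. The quantity $N_t:=\|X_t\|_{2,\sigma}^2=e^{\widetilde D_2(\rho_t\Vert\sigma)}$ then satisfies
\[
\frac{d}{dt}N_t=-2\mathcal E_{2,\sigma}(X_t),
\]
as computed in the spectral-gap paragraph. Hence
\[
\frac{d}{dt}\widetilde D_2(\rho_t\Vert\sigma)=-\frac{2\mathcal E_{2,\sigma}(X_t)}{N_t}.
\]

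\emph{Applying the $L^2$ LSI.} By \eqref{eq:l2-lsi}, $\mathcal E_{2,\sigma}(X_t)\ge\alpha_2\Ent_{2,\sigma}(X_t)$. By \eqref{eqn:l2-entropy}, $2\Ent_{2,\sigma}(X_t)=N_t\,D(\eta_{X_t}\Vert\sigma)$, where $\eta_{X_t}=(\sigma^{-1/4}\rho_t\sigma^{-1/4})^2/N_t$. Combining these gives
\[
\frac{d}{dt}\widetilde D_2(\rho_t\Vert\sigma)\le-\alpha_2\,D(\eta_{X_t}\Vert\sigma).
\]

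\emph{Main obstacle.} To conclude, I must lower bound $D(\eta_{X_t}\Vert\sigma)$ by $\widetilde D_2(\rho_t\Vert\sigma)$; Gr\"onwall then gives the claimed bound $e^{-\alpha_2 t}\widetilde D_2(\rho\Vert\sigma)$. This comparison is the key step, and I would prove it in three moves.

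First, write $Y=\sigma^{-1/4}\rho_t\sigma^{-1/4}$, so that $\eta=Y^2/\Tr Y^2$ and $\rho_t=\sigma^{1/4}Y\sigma^{1/4}$. I expect the inequality $D(\eta\Vert\sigma)\ge\log\Tr Y^2=\widetilde D_2(\rho_t\Vert\sigma)$, which is the "escort" or change-of-measure inequality used in~\cite{MHF}.

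Second, in the commuting case this is Jensen's inequality. Indeed, $D(\eta\Vert\sigma)=\sum\eta_i\log(\eta_i/\sigma_i)$ with $\eta_i=\rho_i^2\sigma_i^{-1}/\Tr Y^2$. Therefore
\[
D(\eta\Vert\sigma)=2\sum\eta_i\log(\rho_i/\sigma_i)-\log\Tr Y^2,
\]
and Jensen gives $\sum\eta_i\log(\rho_i/\sigma_i)\ge\log\sum\eta_i^{\,}\,\cdot$(using $\sum\eta_i\sigma_i/\rho_i=1/\Tr Y^2$). This yields $2\sum\eta_i\log(\rho_i/\sigma_i)\ge2\log\Tr Y^2$, and the claim follows.

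Third, in the noncommutative case I would use the monotonicity of $p\mapsto\widetilde D_p$ along the interpolation $p\mapsto\log\|\sigma^{-(1-1/p)/2}\rho\,\sigma^{-(1-1/p)/2}\|_p$. Its derivative at $p=2$ involves exactly $D(\eta\Vert\sigma)$, and by convexity in $1/p$ (Riesz–Thorin/Hadamard three-lines) that derivative dominates $\widetilde D_2$, i.e.\ $\widetilde D_2\le D(\eta\Vert\sigma)$. This derivative identity is the content of~\cite[Theorem~3.2]{MHF}, which I would invoke if the direct check becomes messy. Integrating the resulting inequality $\frac{d}{dt}\widetilde D_2\le-\alpha_2\widetilde D_2$ completes the proof.
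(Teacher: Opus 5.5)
Your argument is correct and is essentially the standard proof behind the result the paper cites from \cite{MHF}; the paper itself gives no proof. The steps are: differentiate $\widetilde D_2$ along the flow, combine $2\Ent_{2,\sigma}(X_t)=N_t\,D(\eta_{X_t}\Vert\sigma)$ with the $L^2$ LSI, then close with $D(\eta\Vert\sigma)\geq\widetilde D_2(\rho\Vert\sigma)$ and Gr\"onwall. For that last comparison, a direct computation gives $\partial_p\widetilde D_p(\rho\Vert\sigma)\big|_{p=2}=\tfrac12\bigl(D(\eta\Vert\sigma)-\widetilde D_2(\rho\Vert\sigma)\bigr)$, so monotonicity of the sandwiched R\'enyi divergences in $p$ gives it immediately. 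In your convexity sketch, track the factors: the $p$-derivative of $p\mapsto\log\|\sigma^{-(1-1/p)/2}\rho\,\sigma^{-(1-1/p)/2}\|_p$ at $p=2$ is $D(\eta\Vert\sigma)/4$, not $D(\eta\Vert\sigma)$.
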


\paragraph*{Modified logarithmic Sobolev inequality}
The modified logarithmic Sobolev constant $\alpha_1(\mathcal L)$ is the largest constant such that
\begin{equation}\label{eq:mlsi-definition}
 2\alpha_1(\mathcal L)\,D(\rho\Vert\sigma)\leq-\Tr\bigl[\mathcal L_*(\rho)(\log\rho-\log\sigma)\bigr]
 \qquad\text{for every faithful state }\rho.
\end{equation}
The right-hand side is the entropy production $-\frac{d}{dt}D(e^{t\mathcal L_*}\rho\Vert\sigma)\big|_{t=0}$. The following lemma expresses this inequality as an exponential decay bound.

\begin{lemma}[MLSI and relative-entropy decay {\cite[Lemma~21]{KT}}]\label{lem:mlsi-decay}
{Let $\mathcal L$ satisfy KMS detailed balance with invariant state $\sigma$. Then~\eqref{eq:mlsi-definition} holds if and only if, for every faithful state $\rho$ and every $t\geq0$,
\[
 D(e^{t\mathcal L_*}\rho\Vert\sigma)
 \leq e^{-2\alpha_1(\mathcal L)t}D(\rho\Vert\sigma).
\]}
\end{lemma}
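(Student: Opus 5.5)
The statement is an equivalence, and I would prove the two directions separately. Write $\rho_t=e^{t\mathcal L_*}\rho$ and $f(t)=D(\rho_t\Vert\sigma)$.

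\emph{MLSI implies decay.} First I check that $\rho_t$ stays faithful for $t\geq0$, so that \eqref{eq:mlsi-definition} can be applied at every time. If $\rho$ is faithful, then $\rho\geq\lambda_{\min}(\rho)\,\sigma/\|\sigma\|$. Positivity of the channel $e^{t\mathcal L_*}$ and invariance of $\sigma$ then give $\rho_t\geq c\,\sigma>0$. Next I compute the derivative. Since $\Tr\rho_t=1$, the variation of $\Tr\rho\log\rho$ in a traceless direction $A$ is $\Tr[A\log\rho]$, so
\[
 f'(t)=\Tr\bigl[\mathcal L_*(\rho_t)(\log\rho_t-\log\sigma)\bigr].
\]
Here $\rho_t$ is smooth in $t$ and $\log$ is smooth on positive definite matrices. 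This is the identity already stated after \eqref{eq:mlsi-definition}, applied at time $t$ through the semigroup property. Inequality \eqref{eq:mlsi-definition} at $\rho_t$ gives $f'(t)\leq-2\alpha_1(\mathcal L)f(t)$. Grönwall's inequality, i.e.\ $\frac{d}{dt}\bigl(e^{2\alpha_1t}f(t)\bigr)\leq0$, then yields the decay bound.

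\emph{Decay implies MLSI.} Fix a faithful $\rho$. For $t>0$ the decay bound gives
\[
 \frac{f(t)-f(0)}{t}\leq\frac{e^{-2\alpha_1t}-1}{t}\,f(0).
\]
Letting $t\downarrow0$, the left side tends to $f'(0)=\Tr[\mathcal L_*(\rho)(\log\rho-\log\sigma)]$, which is the negative of the entropy production. The right side tends to $-2\alpha_1f(0)$. Rearranging gives \eqref{eq:mlsi-definition} for this $\rho$. If the decay bound holds with some constant, the same argument shows that the optimal constants coincide.

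\emph{Main point to watch.} The only delicate step is differentiating $t\mapsto\Tr\rho_t\log\rho_t$ at a faithful state. I would justify it with the Fréchet derivative of $\log$ and the cyclic identity $\Tr[\rho\,D\log(\rho)[A]]=\Tr A$. That identity follows, for example, from the integral representation $D\log(\rho)[A]=\int_0^\infty(\rho+s)^{-1}A(\rho+s)^{-1}\,ds$.

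KMS symmetry does not enter this argument. It is only needed so that $\alpha_1(\mathcal L)$ is the constant of the KMS-symmetric setting used throughout the paper.
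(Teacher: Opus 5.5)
Your proof is correct and takes essentially the same route as the source: the paper gives no proof of its own and cites \cite[Lemma~21]{KT}, where the equivalence is obtained exactly this way. There one differentiates $t\mapsto D(e^{t\mathcal L_*}\rho\Vert\sigma)$ to obtain the entropy production, applies Gr\"onwall for the forward direction, and takes the $t\downarrow0$ limit for the converse. Your additional checks are precisely what makes the differentiation rigorous: faithfulness of $\rho_t$ via $\rho_t\geq c\,\sigma$, and the identity $\Tr[\rho\,D\log(\rho)[A]]=\Tr A$ from the resolvent integral.
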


\section{Setup and main results}\label{sec:setup-main-result}
We specify the spin systems and Gibbs samplers under consideration and state our main results on entropy decay.

\paragraph*{Quantum spin systems}
We consider quantum spin systems on a finite region $\Lambda\subset\mathbb Z^d$ of the $d$-dimensional lattice. Each site carries $q$ qubits, so the Hilbert space is
\begin{equation}\label{eqn:hilbert-space}
 \mathcal H=\bigotimes_{x\in\Lambda}\mathbb C^{2^q}.
\end{equation}
For a region $I$, write $\mathcal H_I=\bigotimes_{x\in I}\mathbb C^{2^q}$, so $\mathcal B(\mathcal H_I)=\bigotimes_{x\in I}\mathcal B(\mathbb C^{2^q})$. Each factor $\mathcal B(\mathbb C^{2^q})$ has dimension $4^q$.
We choose the basis used for vectorization to be a product basis. Regrouping the doubled factors by site gives
\[
 \mathcal H_I^{\mathrm{dbl}}:=\bigotimes_{x\in I}(\mathbb C^{2^q}\otimes\mathbb C^{2^q}),
 \qquad \operatorname{vec}(\mathcal B(\mathcal H_I))=\mathcal H_I^{\mathrm{dbl}}.
\]
{For a region $I\subseteq\Lambda$, the support conditions for matrices $A\in\mathcal B(\mathcal H)$, superoperators $\mathcal A:\mathcal B(\mathcal H)\to\mathcal B(\mathcal H)$, and operators $\mathsf A\in\mathcal B(\mathcal H\otimes\mathcal H)$ on the doubled space are, respectively,}
\begin{equation}\label{eqn:support}
 \begin{aligned}
  \supp A\subseteq I
  &\quad\Longleftrightarrow\quad A=A_I\otimes I_{I^c},\\
  \supp\mathcal A\subseteq I
  &\quad\Longleftrightarrow\quad
    \mathcal A=\mathcal A_I\otimes\id_{\mathcal B(\mathcal H_{I^c})},\\
  \supp\mathsf A\subseteq I
  &\quad\Longleftrightarrow\quad
    \mathsf A=\mathsf A_I\otimes\one_{I^c}^{\mathrm{dbl}}.
 \end{aligned}
\end{equation}
Here $A_I$ acts on $\mathcal H_I$, $\mathcal A_I$ acts on $\mathcal B(\mathcal H_I)$, and $\mathsf A_I$ acts on $\mathcal H_I^{\mathrm{dbl}}$; the factors on $I^c$ are the corresponding identities.

We use the nearest-neighbor distance $\dist$ in $\mathbb Z^d$ and write $\Ball_x(r)=\{y\in\Lambda:\dist(x,y)\leq r\}$. We consider finite-range Hamiltonians with Hermitian interactions $h_Z$ supported on $Z\subseteq\Lambda$, satisfying
\begin{equation}\label{eq:growth}
 H=\sum_{\substack{Z\subseteq\Lambda\\\operatorname{diam}Z\leq R_0}}h_Z,\qquad
 \sup_{x\in\Lambda}\sum_{Z\ni x}\|h_Z\|_\infty\leq J.
\end{equation}
The range $R_0$ and strength $J$ are fixed.

Fix an inverse temperature $\beta>0$ and consider the Gibbs state
\[
 \sigma_\beta=\frac{e^{-\beta H}}{\Tr e^{-\beta H}}.
\]
We study two settings in this manuscript as examples:
\begin{itemize}
\item \emph{One-dimensional systems.} We consider $d=1$ at any fixed inverse temperature $0<\beta<\infty$. Following~\cite{BC2025}, by grouping a fixed number of neighboring sites and rescaling the interaction strength, we write the Hamiltonian on an open chain $\Lambda=\{1,\ldots,n\}\subset\mathbb Z$ as
\begin{equation}\label{eq:chain-setting}
 H=\sum_{i=1}^{n-1}h_{i,i+1},\qquad \|h_{i,i+1}\|_\infty\leq1.
\end{equation}
\item \emph{High-temperature systems.} On finite boxes $\Lambda\subset\mathbb Z^d$, we consider the finite-range Hamiltonians in \eqref{eq:growth} with $\beta J$ sufficiently small. The threshold depends only on $d,R_0,q$, as specified in Theorem~\ref{thm:main-mlsi}\textup{(ii)}.
\end{itemize}

Unless stated otherwise, constants are independent of the system size $|\Lambda|$. They may depend on $q,R_0,J,d$, the inverse temperature $\beta$, and the quantitative bounds in the stated assumptions.

\paragraph*{Quantum Gibbs samplers}
To prepare $\sigma_\beta$, we consider the Gaussian KMS Gibbs sampler of~\cite{CKG}, an instance of the KMS family in~\cite{DLL}, with coupling operators given by the nonidentity on-site Pauli matrices $S_{x,u}$, where $x$ labels the site and $u$ ranges over its $4^q-1$ nonidentity Pauli matrices.
The generator acting on states is
\begin{equation}\label{eq:CKG}
 \cL_{\beta,*}(\rho)=-i[H_\beta^{\mathrm{coh}},\rho]+
 \sum_{x,u}\int_\R\gamma_\beta(\omega)
 \left(J_{x,u}^{\beta}(\omega)\rho J_{x,u}^{\beta}(\omega)^*
      -\tfrac12\{J_{x,u}^{\beta}(\omega)^*J_{x,u}^{\beta}(\omega),\rho\}\right)d\omega,
\end{equation}
where $\gamma_\beta(\omega)=e^{-(\beta\omega+1)^2/2}$ and the Gaussian-filtered jump operators are
\begin{equation}\label{eq:CKG-jump}
 J_{x,u}^{\beta}(\omega)
 =\frac{(2/\pi)^{1/4}}{\sqrt{2\pi\beta}}
       \int_\R e^{-t^2/\beta^2-i\omega t}e^{itH}S_{x,u}e^{-itH}\,dt.
\end{equation}
The coherent correction is defined by
\begin{equation}\label{eq:CKG-correction}
 H_\beta^{\mathrm{coh}}=\frac i2\tanh\!\left(\frac\beta4\ad_H\right)Q_\beta,\qquad
 Q_\beta=\sum_{x,u}\int_\R\gamma_\beta(\omega)J_{x,u}^{\beta}(\omega)^*J_{x,u}^{\beta}(\omega)\,d\omega,
\end{equation}
with $\ad_H(A)=[H,A]$.

This sampler is primitive and KMS-symmetric with invariant state $\sigma_\beta$~\cite[Theorem~I.1]{CKG}.

Let $\mathcal L_\beta$ be the trace adjoint of $\mathcal L_{\beta,*}$, acting on observables. The corresponding operators on $\mathcal H\otimes\mathcal H$ are defined by
\[
 \mathsf{L}_\beta|Y\rangle=|\mathcal L_\beta(Y)\rangle,\qquad
 \mathsf{L}_{\beta,*}|Y\rangle=|\mathcal L_{\beta,*}(Y)\rangle.
\]
The weighted generator on $\mathcal B(\mathcal H)$ and its corresponding operator on $\mathcal H\otimes\mathcal H$ are
\[
 \mathcal L_{\sigma_\beta}(Y):=\Gamma_{\sigma_\beta}^{1/2}\bigl(\mathcal L_\beta(\Gamma_{\sigma_\beta}^{-1/2}(Y))\bigr),
 \qquad \mathsf{L}_{\sigma_\beta}|Y\rangle:=|\mathcal L_{\sigma_\beta}(Y)\rangle.
\]
By \eqref{eq:parent-action}, $-\mathsf{L}_{\sigma_\beta}$ is positive semidefinite; it is the parent Hamiltonian of the sampler.

\smallskip

We recall the decomposition of the sampler's Dirichlet form into squared norms. Building on the Bohr-frequency form of~\cite[Lemma~C.2]{RFAThermalization}, Chen and Rouz\'e~\cite[Lemma~X.2]{CR2025} wrote the Dirichlet form of the Gaussian KMS Gibbs sampler as an integral of squared KMS norms of commutators (see also~\cite[Lemma~VI.1]{BC2025}). We use the corresponding factorization of the parent Hamiltonian and its parametrization from~\cite{LJL}.
\begin{lemma}[Dirichlet-form decomposition {\cite[Lemma~X.2]{CR2025}; \cite[Theorem~3, Proposition~6, and Appendix~A.4.2]{LJL}}]\label{lem:native-dirichlet-decomposition}
For the Gaussian KMS Gibbs sampler \eqref{eq:CKG}, the weighted generator and its Dirichlet form satisfy, for $X\in\mathcal B(\mathcal H)$,
\begin{equation}\label{eq:native-dirichlet-form}
 \begin{aligned}
 -\mathsf{L}_{\sigma_\beta}&=\sum_{x,u}\int_{\R^2}
       \widehat{\mathsf{B}}_{\beta,x,u}(v,z)^*\widehat{\mathsf{B}}_{\beta,x,u}(v,z)\,d\mu(v,z),\\
 \mathcal E_{2,\sigma_\beta}(X)
 &=\sum_{x,u}\int_{\R^2}
       \bigl\|\widehat{\mathsf{B}}_{\beta,x,u}(v,z)\sGamma_{\sigma_\beta}^{1/2}|X\rangle\bigr\|_2^2\,d\mu(v,z),
 \end{aligned}
\end{equation}
where the operators $\widehat{\mathsf{B}}_{\beta,x,u}(v,z)$ on $\mathcal H\otimes\mathcal H$ are defined by
\[
 \widehat{\mathsf{B}}_{\beta,x,u}(v,z)|Y\rangle
 :=|\widehat{\mathcal B}_{\beta,x,u}(v,z)(Y)\rangle,
 \qquad Y\in\mathcal B(\mathcal H),
\]
with the superoperators and measure given by
\begin{equation}\label{eq:native-gradient}
 \begin{aligned}
 \widehat{\mathcal B}_{\beta,x,u}(v,z)(Y)
 &=\bigl(e^{i\beta vH}\widehat J^\beta_{x,u}(z)e^{-i\beta vH}\bigr)Y-Y\bigl(e^{i\beta vH}\sigma_\beta^{-1/2}
       \widehat J^\beta_{x,u}(z)\sigma_\beta^{1/2}e^{-i\beta vH}\bigr),\\
 d\mu(v,z)&=\frac{e^{-(z+1)^2/2}}{\cosh(2\pi v)}\,dv\,dz.
 \end{aligned}
\end{equation}
Here the rescaled jumps are
\begin{equation}\label{eq:native-rescaled-jump}
 \widehat J^\beta_{x,u}(z)
 :=\beta^{-1/2}J^\beta_{x,u}(z/\beta)
 =\frac{(2/\pi)^{1/4}}{\sqrt{2\pi}}
   \int_\R e^{-r^2-izr}e^{i\beta rH}S_{x,u}e^{-i\beta rH}\,dr.
\end{equation}
Each operator $\widehat{\mathsf{B}}_{\beta,x,u}(v,z)$ satisfies
\[
 \widehat{\mathsf{B}}_{\beta,x,u}(v,z)|\Omega_{\sigma_\beta}\rangle=0.
\]
\end{lemma}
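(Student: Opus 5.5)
Lemma~\ref{lem:native-dirichlet-decomposition} is cited from \cite{CR2025,LJL}, so the plan is to reconstruct its derivation in three steps: write the generator in the Bohr-frequency basis, rescale the frequency integral, and use the KMS relation between the two sides of each commutator to recombine the Gaussian weight into the measure $d\mu$.

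\textbf{Step 1: reduce to the first identity.} Given the operator factorization of $-\mathsf L_{\sigma_\beta}$, the Dirichlet-form identity follows from \eqref{eqn:l2-dirichlet}:
\[
\mathcal E_{2,\sigma_\beta}(X)=\langle X|\sGamma_{\sigma_\beta}^{1/2}(-\mathsf L_{\sigma_\beta})\sGamma_{\sigma_\beta}^{1/2}|X\rangle .
\]
Substituting the factorization gives an integral of squared norms $\|\widehat{\mathsf B}\,\sGamma^{1/2}_{\sigma_\beta}|X\rangle\|_2^2$. The annihilation property $\widehat{\mathsf B}|\Omega_{\sigma_\beta}\rangle=0$ is immediate from \eqref{eq:native-gradient}. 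With $Y=\sigma_\beta^{1/2}$, which commutes with $e^{\pm i\beta vH}$, we get
\[
e^{i\beta vH}\widehat J e^{-i\beta vH}\sigma_\beta^{1/2}-\sigma_\beta^{1/2}e^{i\beta vH}\sigma_\beta^{-1/2}\widehat J\sigma_\beta^{1/2}e^{-i\beta vH}=0 .
\]
So the real content is the factorization of $-\mathsf L_{\sigma_\beta}$.

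\textbf{Step 2: Bohr-frequency decomposition.} Decompose $S_{x,u}=\sum_\nu S_\nu$ over Bohr frequencies of $H$. The filtered jump $J^\beta(\omega)$ in \eqref{eq:CKG-jump} then becomes $\sum_\nu \hat f(\omega-\nu)S_\nu$ with a Gaussian $\hat f$, and $\sigma_\beta^{-1/2}S_\nu\sigma_\beta^{1/2}=e^{\beta\nu/2}S_\nu$. Next write $\mathcal L_{\sigma_\beta}=\Gamma^{-1/2}\mathcal L_{\beta,*}\Gamma^{1/2}$ using \eqref{eq:CKG} and \eqref{eq:CKG-correction}. In the Bohr basis, for each pair of frequencies $(\nu,\nu')$ the jump term and the anticommutator term combined with the $\tanh$ coherent correction yield a quadratic form in the pairs $(S_\nu,S_{\nu'})$. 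The coefficient should be of the form $\int\gamma_\beta(\omega)\hat f(\omega-\nu)\hat f(\omega-\nu')\,d\omega$ times a factor such as $\sech(\beta(\nu-\nu')/4)$; this is the computation of \cite[Lemma~C.2]{RFAThermalization}.

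\textbf{Step 3: recombine into $d\mu$.} Two changes produce the measure.
\begin{itemize}
\item The factor $\sech(\beta(\nu-\nu')/4)$ is the Fourier transform of $1/\cosh(2\pi v)$ up to scaling. Writing it as $\int e^{i\beta v(\nu-\nu')}\,dv/\cosh(2\pi v)$ produces the conjugation $e^{i\beta vH}\cdot e^{-i\beta vH}$.
\item Rescaling $\omega=z/\beta$ turns $\gamma_\beta$ into $e^{-(z+1)^2/2}$ and $J^\beta$ into $\widehat J^\beta$ as in \eqref{eq:native-rescaled-jump}.
\end{itemize}
The form is then a sum of $|\cdot|^2$ of a left-multiplication term minus a right-multiplication term. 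The right factor carries the $e^{\beta\nu/2}$ weights, which recombine into $\sigma_\beta^{-1/2}\widehat J\sigma_\beta^{1/2}$. This gives $\widehat{\mathsf B}^*\widehat{\mathsf B}$ integrated against $d\mu$.

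\textbf{Main obstacle.} Step 3 is the hardest part: verifying that the coherent term's $\tanh$ combines exactly with the dissipative terms to produce a perfect square. The key identity to check is $1-\tanh(\cdot)$-type weights matching $\sech$ cross terms. I would first verify this frequency-by-frequency against \cite[Lemma~X.2]{CR2025} and \cite[Appendix~A.4.2]{LJL}.
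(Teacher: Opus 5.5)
Your outline is correct and follows the same route the paper relies on. The paper does not prove this lemma; it imports it from the Bohr-frequency form of \cite[Lemma~C.2]{RFAThermalization}, rewritten as an integral of squared weighted commutators in \cite[Lemma~X.2]{CR2025} and \cite{LJL}, which is exactly your Steps~2--3. The check you flag as the main obstacle does close frequency by frequency, with $u=\beta(\nu-\nu')$:
\begin{itemize}
\item $\bigl(1-\tanh(u/4)\bigr)e^{u/4}=\sech(u/4)$ handles the multiplication terms;
\item $\int_{\R}e^{ikv}\cosh(2\pi v)^{-1}\,dv=\tfrac12\sech(k/4)$ produces the same factor from $d\mu$;
\item $(e^{u/4}+e^{-u/4})\sech(u/4)=2$, together with the KMS relation of the Gaussian coefficients, makes the two cross terms sum to the weighted jump coefficient.
\end{itemize}
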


\paragraph*{Main result}
The Gaussian KMS Gibbs sampler satisfies entropy decay at a rate independent of system size in both settings above.

\begin{theorem}[Size-independent MLSI for the Gaussian KMS Gibbs sampler]\label{thm:main-mlsi}
For the Gaussian KMS Gibbs sampler~\cite{CKG} in \eqref{eq:CKG}, there is a rate $\kappa>0$ in each of the following settings:
\begin{enumerate}[label=\textup{(\roman*)}]
\item the one-dimensional systems in \eqref{eq:chain-setting} at any fixed $0<\beta<\infty$, where $\kappa$ depends only on $\beta,q$;
\item the finite-range systems in \eqref{eq:growth} on finite boxes $\Lambda\subset\mathbb Z^d$ with $\beta J\leq\beta_0$, where $\beta_0>0$ and $\kappa$ depend only on $d,R_0,q$.
\end{enumerate}
For every state $\rho$ and $t\geq0$,
\[
 \begin{aligned}
 \widetilde D_2(e^{t\mathcal L_{\beta,*}}\rho\Vert\sigma_\beta)
 &\leq e^{-\kappa t}\widetilde D_2(\rho\Vert\sigma_\beta),\\
 D(e^{t\mathcal L_{\beta,*}}\rho\Vert\sigma_\beta)
 &\leq e^{-\kappa t}D(\rho\Vert\sigma_\beta).
 \end{aligned}
\]
In particular, $\alpha_1(\mathcal L_\beta)\geq\kappa/2$; moreover, $\alpha_2(\mathcal L_\beta)\geq\kappa$.
\end{theorem}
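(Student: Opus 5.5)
The proof splits into two steps, matching the two halves of the abstract: first a size-independent $L^2$ LSI obtained by transport, then a tensor-power transfer of the resulting R\'enyi-2 decay to relative entropy.

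\textbf{Step 1: $L^2$ LSI by transport.} We connect the infinite-temperature product state to $\sigma_\beta$ along the path $s\in[0,\beta]$ and track $\alpha_2(\mathcal L_s)$. We do this for the $N$-fold tensor-sum generator $\mathcal L_s^{(N)}=\sum_{k=1}^N \mathcal L_s^{(k)}$ rather than for a single copy, with a bound uniform in $N$.

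At $s=0$ the state is maximally mixed and the sampler is a sum of on-site depolarizing-type generators on $N|\Lambda|$ sites. Hence $\alpha_2(\mathcal L_0^{(N)})$ equals the one-site constant, by tensorization of the $L^2$ LSI for product primitive unital semigroups. To follow the constant along the path, we choose a unitary flow $U_s=e^{-i\int_0^s K_r\,dr}$ on the doubled space. It is generated by a quasi-local Hermitian $K_s$, of the commutator form suggested by the Moser analogy, and satisfies $U_s|\Omega_{\sigma_0}\rangle=|\Omega_{\sigma_s}\rangle$. Pulling back, we set $\Ent_s(X)$ and $\mathcal E_s(X)$ for vectors transported by $U_s$.

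The aim is a differential inequality of the form
\[
 \frac{d}{ds}\mathcal E_s\ge -c\,\mathcal E_s,\qquad \frac{d}{ds}\Ent_s\le c'\,\mathcal E_s,
\]
with $c,c'$ independent of $|\Lambda|$ and $N$. The two ingredients are:
\begin{enumerate}
\item The $s$-derivative of $-\mathsf L_{\sigma_s}$, conjugated by $U_s$, is a sum of quasi-local terms, using Lemma~\ref{lem:native-dirichlet-decomposition} and Lieb--Robinson-type locality of $\widehat J^\beta_{x,u}(z)$.
\item Each such local term is bounded by a constant times the Dirichlet form on a neighbourhood.
\end{enumerate}
The second ingredient is where the regime enters. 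In 1D it uses a local spectral gap, i.e.\ the local mixing of \cite{BC2025}. At high temperature it is perturbative around $s=0$. Grönwall's inequality then yields $\alpha_2(\mathcal L_\beta^{(N)})\ge\kappa$ uniformly in $N$, and Lemma~\ref{lem:l2-renyi} gives the first claimed inequality together with $\alpha_2\ge\kappa$.

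\textbf{Main obstacle.} The hard part is controlling the entropy variation, since $\frac{d}{ds}\Ent_s$ is nonlinear. I would first express it, via \eqref{eqn:l2-entropy}, as $\|X\|^2\frac{d}{ds}D(\eta\|\sigma_s)$. The unitary transport preserves $\|X\|$ and the spectrum of $\eta$, so only $-\Tr\eta\,\partial_s\log\sigma_s$ varies. Its centered part is an energy-type observable, and it should be bounded by local Dirichlet forms through a local Poincaré/energy estimate. This is the ``local energy condition'' referred to in the paper's general criterion.

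\textbf{Step 2: transfer to relative entropy.} Since the $L^2$ LSI holds for $\mathcal L_\beta^{(N)}$ on all positive inputs of the joint system, Lemma~\ref{lem:l2-renyi} gives
\[
 \widetilde D_2\bigl(e^{t\mathcal L^{(N)}_*}\omega\,\Vert\,\sigma^{\otimes N}\bigr)\le e^{-\kappa t}\,\widetilde D_2\bigl(\omega\,\Vert\,\sigma^{\otimes N}\bigr)
\]
for every $N$ and every state $\omega$ on the $N$ copies. Apply this to a smoothed state $\omega\approx\rho^{\otimes N}$ that achieves the smooth max-divergence. Using $\widetilde D_2\le D_{\max}$ for the input and $\widetilde D_2\ge D_{\min}$-type bounds together with data processing for the output, the quantum AEP \cite{TCR2009} gives
\[
 \tfrac1N D^\epsilon_{\max}(\rho^{\otimes N}\Vert\sigma^{\otimes N})\to D(\rho\Vert\sigma).
\]
Because the semigroup is a product, $e^{t\mathcal L^{(N)}_*}\rho^{\otimes N}=(e^{t\mathcal L_*}\rho)^{\otimes N}$, and the output side similarly converges to $D(e^{t\mathcal L_*}\rho\Vert\sigma)$ per copy. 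Dividing by $N$ and sending $N\to\infty$, then $\epsilon\to0$, gives $D(e^{t\mathcal L_*}\rho\Vert\sigma)\le e^{-\kappa t}D(\rho\Vert\sigma)$. Lemma~\ref{lem:mlsi-decay} then yields $\alpha_1\ge\kappa/2$.
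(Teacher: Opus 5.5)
Your overall architecture matches the paper's. You transport the $L^2$ LSI along a unitary path on the doubled space and control energy and entropy along the path, closing with Gr\"onwall. You apply this to tensor-sum generators on $N$ copies, which is equivalent to the paper's closure of the Hamiltonian class under independent copies. You then finish with an AEP argument that needs R\'enyi-2 contraction for correlated inputs. Step~2 is fine: bounding the output from below by a hypothesis-testing divergence and Stein's lemma is a valid substitute for the paper's chain of $D\le\widetilde D_2$, superadditivity, convexity and continuity of $D(\cdot\Vert\sigma)$.

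The genuine gap is in the step you call the main obstacle. Your premise that the transport preserves the spectrum of $\eta=Y^2$ is false. $\mathsf U_s$ is a unitary on $\mathcal H\otimes\mathcal H$, i.e.\ a Hilbert--Schmidt isometry, not a conjugation $Y\mapsto WYW^*$. It cannot preserve spectra, because it must send $\Omega_{\sigma_0}\propto I$ to $\sqrt{\sigma_s}$. Concretely, the generator splits as $\mathcal G_s(Y)=[W_s,Y]+\sum(VYC-CYV)$ (Lemma~\ref{lem:transport-decomposition}). Only the first part is a conjugation; the pair terms move the eigenvalues of $Y_s$. So $\frac{d}{ds}D(Y_s^2\Vert\sigma_s)$ also contains $2\Tr[(\partial_sY_s)Y_s\log Y_s^2]$ and $-\Tr[(\partial_sY_s)\{Y_s,\log\sigma_s\}]$; see \eqref{eq:entropy-simultaneous-derivative}.

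Moreover, the term you keep cannot be bounded by the Dirichlet form on its own, by any local Poincar\'e or energy estimate. Since $\partial_s\log\sigma_s=-(H-\Tr(\sigma_sH))$, it equals $\Tr[\eta_sH]-\Tr[\sigma_sH]$. Take $Y\propto\Omega_{\sigma_s}+\epsilon Z$ with $Z=\Omega_{\sigma_s}(H-\Tr(\sigma_sH))$, which is Hermitian and orthogonal to $\Omega_{\sigma_s}$. Then the term is $2\epsilon\,\mathrm{Var}_{\sigma_s}(H)+O(\epsilon^2)$. But $\langle Y|(-\mathsf L_{\sigma_s})|Y\rangle=O(\epsilon^2)$, because $\mathsf L_{\sigma_s}|\Omega_{\sigma_s}\rangle=0$. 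So even your one-sided bound by $c'\mathcal E_s$ fails. The true bound $|\partial_sD|\le C_SE_s$ holds only because these first-order pieces cancel against those from the motion of $\eta_s$.

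The missing idea is to make this cancellation explicit, as the paper does. The $[W_s,\cdot]$ part drops out by unitary invariance of $D$. For each pair $(V,C)$, the three contributions, written in Bohr components, combine into a scalar kernel $k_{a,c}(x,y)$ that vanishes exactly at $x=e^ay$ and $x=e^{-c}y$. Factoring it as a bounded $H_{a,c}$ times two linear factors (Lemma~\ref{lem:entropy-scalar-kernel}) gives the bilinear representation \eqref{eq:entropy-integral-representation} in weighted commutators that annihilate $\Omega_{\sigma_s}$. Only such operators fall under a local action-energy inequality (Condition~\ref{cond:local-action}).

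Three secondary points also need repair:
\begin{enumerate}
\item That inequality is not just ``local term $\le$ local Dirichlet form''. The operators annihilate only the global Gibbs vector, so the paper needs local gaps and local state domination for an auxiliary frustration-free form (Condition~\ref{ass:local-gap-state-domination}).
\item Quasi-locality of your $K_s$ requires localizing a Poisson inverse, which needs local rapid mixing (Condition~\ref{cond:transport-mixing}). Lieb--Robinson bounds on the jumps are not enough.
\item $\alpha_2$ is not known to tensorize for general quantum semigroups. At $s=0$ the paper splits sites into qubits and uses product-qubit hypercontractivity. This gives only a lower bound ($C_0=2q/\nu_q$), not equality with the one-site constant.
\end{enumerate}
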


\paragraph*{From R\'enyi-2 decay to MLSI}
The proof of Theorem~\ref{thm:main-mlsi} passes from R\'enyi-2 decay to relative-entropy decay using the following theorem, proved in Section~\ref{sec:transfer}.
\begin{theorem}[Uniform R\'enyi-2 contraction implies relative-entropy contraction]
\label{thm:transfer}
Let $\Phi$ be a channel fixing a faithful state $\sigma$.
Suppose that, for a fixed $a\in[0,1]$, all $m\geq1$ and
all states $\nu$ satisfy
\begin{equation}\label{eq:transfer-renyi-contraction}
 \Dtwo(\Phi^{\otimes m}(\nu)\Vert\sigma^{\otimes m})
       \leq a\Dtwo(\nu\Vert\sigma^{\otimes m}).
\end{equation}
Then $D(\Phi(\rho)\Vert\sigma)\leq aD(\rho\Vert\sigma)$
for every state $\rho$.
\end{theorem}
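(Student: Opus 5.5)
The plan is to run the hypothesis \eqref{eq:transfer-renyi-contraction} on a smoothed version of $\rho^{\otimes m}$ and let $m\to\infty$. On the input side, the quantum asymptotic equipartition property~\cite{TCR2009} should turn $\Dtwo$ into $m\,D(\rho\Vert\sigma)$ to leading order. On the output side, quantum Stein's lemma should turn it into $m\,D(\Phi(\rho)\Vert\sigma)$. Since the contraction factor $a$ is the same for every $m$, dividing by $m$ and taking limits (first $m\to\infty$, then the smoothing parameter $\varepsilon\to0$) should give $D(\Phi(\rho)\Vert\sigma)\le a\,D(\rho\Vert\sigma)$. If $D(\rho\Vert\sigma)=\infty$ there is nothing to prove, and $\sigma$ is faithful, so I may assume $D(\rho\Vert\sigma)<\infty$.

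\emph{Input side.} Fix $\varepsilon\in(0,1/2)$. Let $D_{\max}^{\varepsilon}$ denote the smooth max-divergence, smoothed over normalized states within purified (or trace) distance $\varepsilon$. The quantum AEP~\cite{TCR2009} gives
\[
\lim_{m\to\infty}\tfrac1m D_{\max}^{\varepsilon}(\rho^{\otimes m}\Vert\sigma^{\otimes m})=D(\rho\Vert\sigma).
\]
If the AEP is available only with subnormalized smoothing, I renormalize; this costs an additive $\log\frac1{1-\varepsilon}$, which vanishes after dividing by $m$. So I choose a state $\nu_m$ that is $\varepsilon$-close to $\rho^{\otimes m}$ and satisfies $D_{\max}(\nu_m\Vert\sigma^{\otimes m})\le m D(\rho\Vert\sigma)+o(m)$. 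Since $\Dtwo\le D_{\max}$, which follows from $\nu\le e^{\lambda}\sigma$, the hypothesis gives
\[
\Dtwo\bigl(\Phi^{\otimes m}(\nu_m)\Vert\sigma^{\otimes m}\bigr)\le a\bigl(mD(\rho\Vert\sigma)+o(m)\bigr).
\]

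\emph{Output side.} Set $\tau_m=\Phi(\rho)^{\otimes m}$ and $\tau_m'=\Phi^{\otimes m}(\nu_m)$. By data processing for the trace distance, $\tfrac12\|\tau_m-\tau_m'\|_1\le\varepsilon$. I then lower bound $\Dtwo(\tau'_m\Vert\sigma^{\otimes m})$ by a hypothesis-testing divergence in two steps.

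First, let $0\le Q\le I$ be a test with $\Tr Q\tau_m\ge 1-\delta$. Then $\Tr Q\tau_m'\ge 1-\delta-\varepsilon$, and hence
\[
D_H^{\delta+\varepsilon}(\tau_m'\Vert\sigma^{\otimes m})\ge D_H^{\delta}(\tau_m\Vert\sigma^{\otimes m}).
\]
Second, I use the standard comparison $D_H^{\eta}(\omega\Vert\sigma)\le\Dtwo(\omega\Vert\sigma)+2\log\frac1{1-\eta}$. This follows from data processing of $\Dtwo$ under the two-outcome measurement $\{Q,I-Q\}$: one gets $(1-\eta)^2/\Tr Q\sigma\le e^{\Dtwo}$.

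Combining the two steps with quantum Stein's lemma,
\[
\lim_m\tfrac1m D_H^{\delta}\bigl(\Phi(\rho)^{\otimes m}\Vert\sigma^{\otimes m}\bigr)=D\bigl(\Phi(\rho)\Vert\sigma\bigr),
\]
and fixing $\delta$ with $\delta+\varepsilon<1$, I obtain
\[
mD(\Phi(\rho)\Vert\sigma)-o(m)\le aD(\rho\Vert\sigma)m+o(m).
\]
Dividing by $m$ and letting $m\to\infty$ finishes the proof; $\varepsilon$ only enters the $o(m)$ terms.

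The main obstacle is bookkeeping the smoothing conventions. I need a normalized state $\nu_m$, because the hypothesis is stated for states. I also need the AEP limit in the form that uses the same distance that data processing then carries through $\Phi^{\otimes m}$. The hypothesis must apply to the correlated, non-product input $\nu_m$, which is exactly why the statement requires \eqref{eq:transfer-renyi-contraction} for all states on every tensor power. The remaining steps are standard one-shot inequalities.
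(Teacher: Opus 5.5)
Your proposal is correct. The input side is the same as the paper's: the AEP for the smooth max-relative entropy gives a correlated state $\nu_m$ near $\rho^{\otimes m}$, and then you apply $\widetilde D_2\le D_{\max}$ and the hypothesis on $\Phi^{\otimes m}$.

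The output side takes a different route.

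\textbf{The paper's output argument} stays with relative entropy. It averages the single-copy marginals of $\omega_m=\Phi^{\otimes m}(\nu_m)$. Convexity and superadditivity of $D$ with respect to $\sigma^{\otimes m}$, together with $D\le\widetilde D_2$, give $D(\overline\omega_m\Vert\sigma)\le\frac1m\widetilde D_2(\omega_m\Vert\sigma^{\otimes m})$. The $\varepsilon$-perturbation is then removed by uniform continuity of $D(\cdot\Vert\sigma)$ on a single copy.

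\textbf{Your output argument} instead pushes the perturbation into the test level, $D_H^{\delta}(\Phi(\rho)^{\otimes m}\Vert\sigma^{\otimes m})\le D_H^{\delta+\varepsilon}(\tau_m'\Vert\sigma^{\otimes m})$. You then bound $D_H^{\eta}$ by $\widetilde D_2+2\log\frac{1}{1-\eta}$ via data processing under $\{Q,I-Q\}$, and recover $D(\Phi(\rho)\Vert\sigma)$ from the direct part of quantum Stein's lemma. All three steps are sound. Faithfulness of $\sigma^{\otimes m}$ rules out $\Tr Q\sigma^{\otimes m}=0$ for feasible $Q$. Only $\liminf_m\frac1m D_H^{\delta}\ge D$ is needed, for a fixed $\delta$ with $\delta+\varepsilon<1$.

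\textbf{What each route buys.} The paper's route is more elementary: it uses only convexity, subadditivity of von Neumann entropy, and continuity. Yours invokes an operational asymptotic theorem. In exchange it needs no marginal-averaging trick, no additivity property of $D$ on the output, and no continuity estimate. It also treats input and output symmetrically, as one-shot divergences whose normalized limits are $D$.

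\textbf{A remark on the AEP.} You only need its upper half. Even the double-limit form the paper uses, $\limsup_m\frac1mD_{\max}(\nu_m\Vert\sigma^{\otimes m})\le D(\rho\Vert\sigma)+\delta(\varepsilon)$ with $\delta(\varepsilon)\to0$, suffices if you send $\varepsilon\to0$ at the end. So your claim that $\varepsilon$ enters only through $o(m)$ terms is stronger than necessary, though true for relative entropy.
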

The proof of Theorem~\ref{thm:main-mlsi} is completed in Section~\ref{sec:main-proof}, after establishing the transport framework and verifying its hypotheses in the two settings.

\section{Proof idea: transporting \texorpdfstring{$L^2$ LSIs}{L2 LSIs} along a unitary path}\label{sec:vectorization}

Theorem~\ref{thm:main-mlsi} follows from an $L^2$ LSI for $\mathcal L_\beta$ whose constant is independent of the system size. Section~\ref{sec:strategy-mlsi} explains this reduction and outlines the proof of Theorem~\ref{thm:transfer}, which converts R\'enyi-2 decay into relative-entropy decay. We prove the $L^2$ LSI by transporting it from infinite temperature, where the Gibbs state is a product state, along a unitary path in the doubled Hilbert space. Sections~\ref{sec:vectorized-target}--\ref{sec:transport-principle} set up this transport and state the transport principle, Proposition~\ref{prop:transport}. Section~\ref{sec:proof-outline} outlines how the remaining sections carry out the proof.

\subsection{From the $L^2$ LSI to MLSI}\label{sec:strategy-mlsi}
We pass from~\eqref{eq:l2-lsi} to MLSI through R\'enyi-2 decay. By Lemma~\ref{lem:l2-renyi}, \eqref{eq:l2-lsi} yields R\'enyi-2 decay at rate $\alpha_2(\mathcal L_\beta)$; Theorem~\ref{thm:transfer} converts R\'enyi-2 decay into relative-entropy decay; and by Lemma~\ref{lem:mlsi-decay}, relative-entropy decay at rate $\kappa$ gives $\alpha_1(\mathcal L_\beta)\geq\kappa/2$.

The proof of Theorem~\ref{thm:transfer} compares the two divergences on many copies. They are ordered as $D\leq\widetilde D_2\leq D_{\max}$, where $D_{\max}$ is the max-relative entropy (see Section~\ref{sec:transfer} for details). By the quantum asymptotic equipartition property~\cite[Theorem~14]{DMHB}, $\rho^{\otimes m}$ can be approximated by a state $\nu_m$ whose max-relative entropy per copy satisfies $\frac1mD_{\max}(\nu_m\Vert\sigma^{\otimes m})\lesssim D(\rho\Vert\sigma)$ for large $m$. Applying the R\'enyi-2 contraction~\eqref{eq:transfer-renyi-contraction} assumed in Theorem~\ref{thm:transfer} to the approximating state $\nu_m$ gives
\[
 D(\Phi(\rho)\Vert\sigma)
 \lesssim\frac1m\widetilde D_2(\Phi^{\otimes m}(\nu_m)\Vert\sigma^{\otimes m})
 \leq\frac am\widetilde D_2(\nu_m\Vert\sigma^{\otimes m})
 \leq\frac amD_{\max}(\nu_m\Vert\sigma^{\otimes m})
 \lesssim aD(\rho\Vert\sigma).
\]
The first step uses $\nu_m\approx\rho^{\otimes m}$, superadditivity of relative entropy, and $D\leq\widetilde D_2$ to pass to a single copy. The second step is the contraction hypothesis applied to $\nu_m$, the third uses $\widetilde D_2\leq D_{\max}$. The final step follows from the quantum asymptotic equipartition property. Since approximating a product state may introduce correlations, the contraction estimate must hold even when the input state $\nu$ is correlated across copies. See Section~\ref{sec:transfer} for the full proof.

To apply Theorem~\ref{thm:transfer} to our Gibbs sampler, we need a R\'enyi-2 decay rate that remains unchanged when we take independent copies of the system. On $m$ independent copies, the evolution $(e^{t\mathcal L_{\beta,*}})^{\otimes m}$ is generated by the Gaussian KMS Gibbs sampler of the Hamiltonian $H^{(m)}=\sum_{j=1}^mH^{(j)}$ of the copies, and $H^{(m)}$ belongs to the same class with the same local parameters (Section~\ref{sec:independent-copies}). A rate that depends only on these parameters thus applies to $H^{(m)}$ for every $m$ and to every state of the copies, and Corollary~\ref{cor:class-renyi-mlsi} gives MLSI. Our task is therefore to bound $\alpha_2(\mathcal L_\beta)$ in~\eqref{eq:l2-lsi} from below, independently of the system size.

\subsection{The vectorized target inequality}\label{sec:vectorized-target}
We first rewrite~\eqref{eq:l2-lsi} for $\mathcal L_\beta$ as an inequality between a relative entropy and a quadratic form on the doubled Hilbert space. Both sides of~\eqref{eq:l2-lsi} are homogeneous of degree two in $X$, so we may assume $\|X\|_{2,\sigma_\beta}=1$. Define $Y=\Gamma_{\sigma_\beta}^{1/2}(X)$, equivalently $|Y\rangle=\sGamma_{\sigma_\beta}^{1/2}|X\rangle$. Then $Y>0$, $\|Y\|_2=\|X\|_{2,\sigma_\beta}=1$, and $\eta_X=Y^2$ in~\eqref{eqn:l2-entropy}. Since $\sGamma_{\sigma_\beta}^{1/2}$ is self-adjoint, \eqref{eqn:l2-entropy} and~\eqref{eqn:l2-dirichlet} become
\[
 2\Ent_{2,\sigma_\beta}(X)=D(Y^2\Vert\sigma_\beta),\qquad
 \mathcal E_{2,\sigma_\beta}(X)=\langle Y|(-\mathsf{L}_{\sigma_\beta})|Y\rangle.
\]
It therefore suffices to find a constant $C_\beta$, independent of $|\Lambda|$, such that
\begin{equation}\label{eq:vectorized-target}
 D(Y^2\Vert\sigma_\beta)\leq C_\beta\langle Y|(-\mathsf{L}_{\sigma_\beta})|Y\rangle
 \qquad\text{for every }Y=Y^*\text{ with }\|Y\|_2=1.
\end{equation}
This inequality gives $\alpha_2(\mathcal L_\beta)\geq2/C_\beta$, so it remains to bound $C_\beta$ independently of the system size.

\subsection{Interpolation from infinite temperature}\label{sec:temperature-path}
We prove \eqref{eq:vectorized-target} by transporting the inequality from infinite temperature. To this end, we interpolate in the inverse temperature. For $0\leq s\leq\beta$, let
$$
 \sigma_s=\frac{e^{-sH}}{\Tr e^{-sH}},
$$
and for $0<s\leq\beta$, let $\mathcal L_{s,*}$ be the generator \eqref{eq:CKG} with $\beta$ replaced by $s$ throughout \eqref{eq:CKG}--\eqref{eq:CKG-correction}. Thus $\mathcal L_{s,*}$ is the sampler for $\sigma_s$: its filter, jump operators, and coherent correction all depend on $s$. The maps $\mathcal L_s$, $\mathsf L_s$, $\mathcal L_{\sigma_s}$, $\mathsf L_{\sigma_s}$, the rescaled jumps $\widehat J^s_{x,u}$, and the operators $\widehat{\mathsf B}_{s,x,u}(v,z)$ are defined by the same replacement. At $s=0$, we define $\mathcal L_{0,*}$ by taking the limit $s\to0^+$:
\[
 \mathcal L_{0,*}(\rho):=\lim_{s\to0^+}\mathcal L_{s,*}(\rho)
 =\frac{e^{-1/4}}{\sqrt2}\sum_{x,u}\bigl(S_{x,u}\rho S_{x,u}-\rho\bigr).
\]
This is a sum of single-site depolarizing generators whose evolution converges to the maximally mixed state $\sigma_0=I/2^{q|\Lambda|}$ and for which the desired $L^2$ LSI holds (Lemma~\ref{lem:product-entropy}):
\begin{equation}\label{eq:product-state-reference}
 D(Y^2\Vert\sigma_0)\leq C_0\langle Y|(-\mathsf{L}_{\sigma_0})|Y\rangle
 \quad(Y=Y^*,\ \|Y\|_2=1),\qquad C_0=\frac{2\sqrt2e^{1/4}q}{4^q}.
\end{equation}

\subsection{Transport along a unitary path}\label{sec:transport-principle}
To carry~\eqref{eq:product-state-reference} from $s=0$ to $s=\beta$, we seek a Hermiticity-preserving unitary path $\mathsf{U}_s$ on the doubled Hilbert space, with $\mathsf{U}_0=\one$ and $\mathsf{U}_s|\Omega_{\sigma_0}\rangle=|\Omega_{\sigma_s}\rangle$. Define the associated superoperator $\mathcal U_s$ by $|\mathcal U_s(Y)\rangle=\mathsf{U}_s|Y\rangle$; it preserves the Hilbert--Schmidt inner product. For a Hermitian $Y_0$ with $\|Y_0\|_2=1$, the matrices $Y_s=\mathcal U_s(Y_0)$, equivalently $|Y_s\rangle=\mathsf{U}_s|Y_0\rangle$, are Hermitian with $\|Y_s\|_2=1$, so $Y_s^2$ is a state. To maintain the comparison of entropy and energy in the target inequality~\eqref{eq:vectorized-target} along the entire unitary trajectory, we need to control how both quantities change along the path. Writing $E_s=\langle Y_s|(-\mathsf{L}_{\sigma_s})|Y_s\rangle$, it suffices to establish the following derivative bounds:

\begin{equation}\label{eq:variation-targets}
 \begin{aligned}
 \left|\frac d{ds}D(Y_s^2\Vert\sigma_s)\right|
 &\leq C_S\langle Y_s|(-\mathsf{L}_{\sigma_s})|Y_s\rangle,\\
 \left|\frac d{ds}\langle Y_s|(-\mathsf{L}_{\sigma_s})|Y_s\rangle\right|
 &\leq C_E\langle Y_s|(-\mathsf{L}_{\sigma_s})|Y_s\rangle,
 \end{aligned}
 \qquad 0\leq s\leq\beta,
\end{equation}
for every Hermitian $Y_0$ with $\|Y_0\|_2=1$, with $C_S,C_E\geq0$. This transports the product-state inequality~\eqref{eq:product-state-reference} to $s=\beta$, as illustrated in Figure~\ref{fig:unitary-transport} and stated in the following proposition.

\begin{proposition}[Transport of the $L^2$ LSI]\label{prop:transport}\label{thm:transport-principle}
Suppose $\mathsf{U}_s$ is a $C^1$ Hermiticity-preserving unitary path with $\mathsf{U}_0=\one$ and $\mathsf{U}_s|\Omega_{\sigma_0}\rangle=|\Omega_{\sigma_s}\rangle$, satisfying \eqref{eq:variation-targets} for every Hermitian $Y_0$ with $\|Y_0\|_2=1$, with constants $C_S,C_E$. Then \eqref{eq:vectorized-target} holds with $C_\beta=(C_0+\beta C_S)e^{\beta C_E}$, where $C_0$ is from \eqref{eq:product-state-reference}. Consequently,
\begin{equation}\label{eq:transport-lsi}
 \alpha_2(\mathcal L_\beta)\geq\frac{2e^{-\beta C_E}}{C_0+\beta C_S}.
\end{equation}
\end{proposition}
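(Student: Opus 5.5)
Fix a Hermitian $Y_\beta$ with $\|Y_\beta\|_2=1$; this is the input to \eqref{eq:vectorized-target}. Set $Y_0=\mathsf U_\beta^{-1}Y_\beta$, more precisely $|Y_0\rangle=\mathsf U_\beta^*|Y_\beta\rangle$. Because $\mathsf U_\beta$ is unitary and Hermiticity-preserving, so is $\mathsf U_\beta^*$. Hence $Y_0$ is Hermitian with $\|Y_0\|_2=1$, and the trajectory $|Y_s\rangle=\mathsf U_s|Y_0\rangle$ is an admissible path ending at the given $Y_\beta$. All the work therefore happens along this single trajectory. Write $D_s=D(Y_s^2\Vert\sigma_s)$ and $E_s=\langle Y_s|(-\mathsf L_{\sigma_s})|Y_s\rangle\geq0$.

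First I control $E_s$ by a Grönwall argument. The second bound in \eqref{eq:variation-targets} reads $|E_s'|\leq C_EE_s$, which gives $E_\sigma\leq e^{C_E|\sigma-\tau|}E_\tau$ for all $\sigma,\tau\in[0,\beta]$. If $E_\tau=0$ at some point, uniqueness for the linear differential inequality forces $E\equiv0$. Otherwise I differentiate $\log E_s$. In particular, for $s\in[0,\beta]$,
\[
 E_0\leq e^{sC_E}E_s,\qquad E_s\leq e^{(\beta-s)C_E}E_\beta,\qquad E_0\leq e^{\beta C_E}E_\beta.
\]

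Next I integrate the entropy bound. The map $s\mapsto D_s$ needs to be differentiable, or at least absolutely continuous. I take this as part of the hypothesis: \eqref{eq:variation-targets} presupposes the derivative exists. Then
\[
 D_\beta\leq D_0+\int_0^\beta C_S E_s\,ds\leq D_0+\beta C_S e^{\beta C_E}E_\beta ,
\]
using the estimate $E_s\leq e^{(\beta-s)C_E}E_\beta\leq e^{\beta C_E}E_\beta$. At the endpoint $s=0$ I have $\mathsf L_{\sigma_0}$ and $\sigma_0$, so the reference inequality \eqref{eq:product-state-reference} applies to $Y_0$ and gives $D_0\leq C_0E_0\leq C_0e^{\beta C_E}E_\beta$. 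Combining the two gives
\[
 D(Y_\beta^2\Vert\sigma_\beta)\leq (C_0+\beta C_S)e^{\beta C_E}\langle Y_\beta|(-\mathsf L_{\sigma_\beta})|Y_\beta\rangle,
\]
which is \eqref{eq:vectorized-target} with the claimed $C_\beta$.

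The bound \eqref{eq:transport-lsi} then follows from the reduction in Section~\ref{sec:vectorized-target}. Given $X>0$, normalize so that $\|X\|_{2,\sigma_\beta}=1$ and set $Y=\Gamma_{\sigma_\beta}^{1/2}(X)$. Then $2\Ent_{2,\sigma_\beta}(X)=D(Y^2\Vert\sigma_\beta)\leq C_\beta\mathcal E_{2,\sigma_\beta}(X)$, so $\alpha_2\geq 2/C_\beta$. Homogeneity of degree two removes the normalization.

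The step most likely to need care is regularity. Specifically, I need differentiability of $s\mapsto D(Y_s^2\Vert\sigma_s)$ when $Y_s^2$ is singular, since $\log Y_s^2$ may blow up. If this is not built into the hypotheses, I would handle it by applying the argument to the faithful perturbations $(1-\epsilon)Y_s^2+\epsilon\,\cdot$. An easier route is to use only that \eqref{eq:variation-targets} bounds the Dini derivatives, which suffices for the integrated inequality because $D_s$ is continuous. The relative entropy against the faithful, continuously varying $\sigma_s$ is continuous in $s$, so the integral estimate passes to the limit.
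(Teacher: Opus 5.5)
Your proposal is correct and follows the paper's proof essentially step for step. You pull back the target $Y$ by $\mathsf U_\beta^*$, use the energy bound to get $E_s\leq e^{(\beta-s)C_E}E_\beta$ (the paper does this by noting $e^{C_Es}E_s$ is nondecreasing, which avoids your case split on $E_\tau=0$), integrate the entropy bound, apply \eqref{eq:product-state-reference} at $s=0$, and recover the same constant $C_\beta$. Your regularity hedging is unnecessary: the hypothesis \eqref{eq:variation-targets} already presupposes the derivative, and the paper notes separately that $x\mapsto x^2\log x^2$ is $C^1$, so the derivative exists even when $Y_s^2$ is singular.
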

\begin{proof}
Fix a Hermitian $Y_0$ with $\|Y_0\|_2=1$ and let $|Y_s\rangle=\mathsf{U}_s|Y_0\rangle$. By the second inequality in \eqref{eq:variation-targets}, $\frac{d}{ds}E_s\geq-C_EE_s$, so $s\mapsto e^{C_Es}E_s$ is nondecreasing and
\[
 E_s\leq e^{(\beta-s)C_E}E_\beta,\qquad 0\leq s\leq\beta.
\]
Integrating the first inequality in \eqref{eq:variation-targets} from $0$ to $\beta$, and applying \eqref{eq:product-state-reference} to $Y_0$, gives
\[
 D(Y_\beta^2\Vert\sigma_\beta)
 \leq C_0E_0+C_S\int_0^\beta E_s\,ds
 \leq C_0e^{\beta C_E}E_\beta+C_S\int_0^\beta e^{(\beta-s)C_E}E_\beta\,ds
 \leq(C_0+\beta C_S)e^{\beta C_E}E_\beta.
\]
Now let $Y$ be Hermitian with $\|Y\|_2=1$ and take $|Y_0\rangle=\mathsf{U}_\beta^*|Y\rangle$. Since $\mathsf U_\beta$ is unitary and maps the real space of Hermitian matrices into itself, it maps that space onto itself, so $\mathsf U_\beta^*=\mathsf U_\beta^{-1}$ also preserves Hermiticity. Hence $Y_0$ is Hermitian with $\|Y_0\|_2=1$, and $Y_\beta=Y$. This proves \eqref{eq:vectorized-target}, and \eqref{eq:transport-lsi} follows from $\alpha_2(\mathcal L_\beta)\geq2/C_\beta$.
\end{proof}

\begin{figure}[htbp]
\centering
\begin{tikzpicture}[
 font=\footnotesize,
 endpoint/.style={draw,rounded corners=3pt,align=center,minimum width=2.5cm,minimum height=.8cm,inner sep=4pt},
 edgelabel/.style={fill=white,inner sep=4pt,align=center},
 arr/.style={-{Latex[length=2mm]},thick}]
\node[endpoint,draw=blue!75!black,fill=blue!7] (energy0) at (0,0) {$E_0$};
\node[endpoint,draw=blue!75!black,fill=blue!7] (energybeta) at (9.5,0) {$E_\beta$};
\node[endpoint,draw=orange!75!black,fill=orange!9] (entropy0) at (0,-3.1) {$D(Y_0^2\Vert\sigma_0)$};
\node[endpoint,draw=orange!75!black,fill=orange!9] (entropybeta) at (9.5,-3.1) {$D(Y_\beta^2\Vert\sigma_\beta)$};
\node[anchor=east,text=blue!75!black] at (-1.5,0) {\textbf{Energy}};
\node[anchor=east,text=orange!75!black] at (-1.5,-3.1) {\textbf{Entropy}};
\draw[arr,blue!75!black] (energy0.east) -- node[edgelabel,above,text=black] (energybound)
 {$\left|\dfrac{d}{ds}E_s\right|\leq C_EE_s$} (energybeta.west);
\node[align=center,inner sep=1pt,above=1mm of energybound] (transport) {\textbf{Unitary transport}\\[3pt]$|Y_s\rangle=\mathsf{U}_s|Y_0\rangle$};
\node[inner sep=1pt,above=2mm of energy0] {$s=0$};
\node[inner sep=1pt,above=2mm of energybeta] {$s=\beta$};
\draw[arr,orange!75!black] (entropy0.east) -- node[edgelabel,below,text=black]
 {$\left|\dfrac{d}{ds}D(Y_s^2\Vert\sigma_s)\right|\leq C_SE_s$} (entropybeta.west);
\draw[arr] (energy0.south) -- node[edgelabel,right,text width=3.9cm]
 {Product-state comparison\\[5pt]$D(Y_0^2\Vert\sigma_0)\leq C_0E_0$} (entropy0.north);
\draw[arr] (energybeta.south) -- node[edgelabel,left,text width=4.3cm]
 {Transported comparison\\[5pt]$D(Y_\beta^2\Vert\sigma_\beta)\leq C_\beta E_\beta$} (entropybeta.north);
\end{tikzpicture}
\caption{Transporting the product-state inequality along a unitary path (Proposition~\ref{prop:transport}). Here $E_s=\langle Y_s|(-\mathsf{L}_{\sigma_s})|Y_s\rangle$ and $C_\beta=(C_0+\beta C_S)e^{\beta C_E}$.}
\label{fig:unitary-transport}
\end{figure}

\subsection{Outline of the proof}\label{sec:proof-outline}
In the rest of the paper, we construct a unitary flow, prove that it satisfies the required bounds on energy and entropy variation in \eqref{eq:variation-targets}, and pass from the resulting R\'enyi-2 decay to MLSI, as follows:
\begin{itemize}
\item \emph{Construct the unitary path.} In Section~\ref{sec:poisson-construction}, we construct the generator $\mathsf{G}_s$ and its unitary flow $\mathsf{U}_s$, which carries the infinite-temperature Gibbs vector $|\Omega_{\sigma_0}\rangle$ to $|\Omega_{\sigma_s}\rangle$ (Proposition~\ref{prop:exact-continuation}). A centered Poisson equation and a Sylvester equation determine the transport generator. Its integral formula supplies the locality estimates, and the classical construction explains the commutator form. Section~\ref{sec:flow-locality} states the local rapid-mixing condition (Condition~\ref{cond:transport-mixing}) and proves locality of the transport generator in Proposition~\ref{prop:flow-locality}.
\item \emph{Control energy and entropy along the path.} In Sections~\ref{sec:derivative-comparison}-\ref{sec:conservation-estimates}, we calculate the derivatives of the energy and entropy along $\mathsf{U}_s$. Our goal is to establish the bounds in \eqref{eq:variation-targets}. The calculations reduce this task to controlling the combined action of local operators, denoted by $\mathsf{R}_{s,\alpha}$, through a bound of the form
\[
 \sum_\alpha\|\mathsf{R}_{s,\alpha}|Y\rangle\|_2^2
 \lesssim\langle Y|(-\mathsf{L}_{\sigma_s})|Y\rangle.
\]
The local action-energy inequality \eqref{eq:comparison-target} in Condition~\ref{cond:local-action} makes this requirement precise. Together with locality of the derivative terms, it gives the energy and entropy variation bounds as summarized in Figure~\ref{fig:condition-implications}. Finally, we provide a sufficient condition, i.e. Condition~\ref{ass:local-gap-state-domination}, to verify Condition~\ref{cond:local-action}. This new condition is easier to check in practice and is verified for the two regimes of interest in Section~\ref{sec:regime-verification}.

\item \emph{Obtain a general criterion for R\'enyi-2 decay.} In Section~\ref{sec:renyi-criterion}, we combine these steps: Proposition~\ref{prop:transport} transports the $L^2$ LSI from $s=0$ to $s=\beta$, giving the sampler's $L^2$ LSI and hence size-independent R\'enyi-2 decay (Theorem~\ref{prop:general-criterion}).
\item \emph{Pass from R\'enyi-2 decay to MLSI.} Section~\ref{sec:transfer} proves Theorem~\ref{thm:transfer}. For a Hamiltonian class closed under independent copies, with tensor-sum samplers and bounds uniform in the number of copies, Corollary~\ref{cor:class-renyi-mlsi} in Section~\ref{sec:independent-copies} then gives relative-entropy decay and MLSI. Section~\ref{sec:main-proof} combines these results with the verifications in Section~\ref{sec:regime-verification} to prove Theorem~\ref{thm:main-mlsi}.
\end{itemize}
The technical estimates are deferred to the appendices.

\section{Constructing the unitary from local response}\label{sec:poisson-construction}
Proposition~\ref{prop:transport} asks us to transport a prescribed path of Gibbs vectors by a unitary whose action on entropy and energy can be controlled. The construction consists of a Poisson equation and an invertible symmetrized multiplication. We also calculate the integral formula needed for later estimation and its classical motivation.

\subsection{The continuation equation}\label{sec:continuation-equation}

\paragraph*{Constructing the quantum transport}
Recall from~\eqref{eq:omega_sigma} and~\eqref{eq:weighted-generator} that $\Omega_{\sigma_s}=\sqrt{\sigma_s}$ and that $\mathsf{L}_{\sigma_s}$ is the weighted generator. Since $\|\Omega_{\sigma_s}\|_2=1$ and $\Omega_{\sigma_s}$ is Hermitian, $\langle\Omega_{\sigma_s}|\partial_s|\Omega_{\sigma_s}\rangle=0$. Primitivity makes $-\mathsf{L}_{\sigma_s}$ invertible on the orthogonal complement of $|\Omega_{\sigma_s}\rangle$. Define Hermitian matrices $T_s$ and $f_s$ as the unique solutions to the centered Poisson equation and the Sylvester equation:
\begin{equation}\label{eq:continuation-poisson}
 \begin{aligned}
 \partial_s|\Omega_{\sigma_s}\rangle&=-\mathsf{L}_{\sigma_s}|T_s\rangle,
 &\langle\Omega_{\sigma_s}|T_s\rangle&=0,\\
 \mathcal M_{f_s}(\Omega_{\sigma_s})&=\tfrac12(f_s\Omega_{\sigma_s}+\Omega_{\sigma_s}f_s)=T_s,
 &\mathsf{M}_{f_s}|\Omega_{\sigma_s}\rangle&=|T_s\rangle.
 \end{aligned}
\end{equation}
The Poisson equation has a unique Hermitian solution in that orthogonal complement because $-\mathsf{L}_{\sigma_s}$ is invertible on this subspace. Because $\Omega_{\sigma_s}>0$, the Sylvester equation gives a unique Hermitian $f_s$. Moreover, $\Tr(\sigma_sf_s)=\langle\Omega_{\sigma_s}|T_s\rangle=0$.

Define the matrix transport generator and its vectorized representative by
\begin{equation}\label{eq:continuation-generator}
 \mathcal G_s:=[\mathcal M_{f_s},\mathcal L_{\sigma_s}],\qquad
 \mathsf{G}_s:=[\mathsf{M}_{f_s},\mathsf{L}_{\sigma_s}],\qquad
 \mathsf{G}_s|Y\rangle=|\mathcal G_s(Y)\rangle.
\end{equation}
Both factors in the vector commutator are self-adjoint and preserve Hermiticity, so $\mathsf{G}_s$ is skew-adjoint and preserves Hermiticity. Because $\mathsf{L}_{\sigma_s}|\Omega_{\sigma_s}\rangle=0$, the Poisson equation gives
\[
 \mathsf{G}_s|\Omega_{\sigma_s}\rangle
 =-\mathsf{L}_{\sigma_s}\mathsf{M}_{f_s}|\Omega_{\sigma_s}\rangle
 =-\mathsf{L}_{\sigma_s}|T_s\rangle=\partial_s|\Omega_{\sigma_s}\rangle.
\]
Consequently, the evolution
\begin{equation}\label{eq:continuation-unitary}
 \mathsf{U}_s'=\mathsf{G}_s\mathsf{U}_s,\qquad \mathsf{U}_0=\one,
\end{equation}
is a Hermiticity-preserving unitary carrying $|\Omega_{\sigma_0}\rangle$ to $|\Omega_{\sigma_s}\rangle$. Its matrix counterpart satisfies $\partial_sY_s=\mathcal G_s(Y_s)$ for $Y_s=\mathcal U_s(Y_0)$.

\begin{remark}[Classical counterpart]\label{rem:classical-counterpart}

For a classical Gibbs density $p_s=Z_s^{-1}e^{-sH}p_0$ on $\mathbb R^d$, a gradient velocity $v_s=\nabla\phi_s$ transports $p_s$ when
\[
 \partial_sp_s+\nabla\!\cdot(p_s\nabla\phi_s)=0,
 \qquad
 \mathcal L_s\phi_s=H-\int Hp_s\,dx,
 \qquad \mathcal L_s=\frac1{p_s}\nabla\!\cdot(p_s\nabla).
\]
The centered potential is $\phi_s=-\int_0^\infty e^{t\mathcal L_s}(H-\int Hp_s\,dx)\,dt$ when the integral converges~\cite{Moser}. Passing to square-root densities gives $\partial_s\sqrt{p_s}=G_s\sqrt{p_s}$, where $M_h\psi=h\psi$ denotes scalar multiplication on classical $L^2(dx)$ and $L_{p_s}:=M_{\sqrt{p_s}}\mathcal L_sM_{\sqrt{p_s}}^{-1}$,
\begin{equation}\label{eq:classical-commutator}
 G_s=[M_{\phi_s/2},L_{p_s}]
 =-\nabla\phi_s\cdot\nabla-\tfrac12\Delta\phi_s.
\end{equation}
Thus the quantum choice $\mathsf{G}_s=[\mathsf{M}_{f_s},\mathsf{L}_{\sigma_s}]$ retains the commutator construction on square roots, with scalar multiplication replaced by symmetrized multiplication.
\end{remark}

\begin{remark}[Transported entropy]\label{rem:transported-entropy}
For a Hermitian initial matrix $Y_0$ with $\|Y_0\|_2=1$, the flow of Proposition~\ref{prop:exact-continuation} preserves $\|Y_s\|_2=1$. Thus $\eta_s=Y_s^2$ is a state, and the entropy along the path is $D(\eta_s\Vert\sigma_s)=D(Y_s^2\Vert\sigma_s)$.
For $Y=\Gamma_{\sigma_s}^{1/2}(X)$ with $X>0$ and $\|X\|_{2,\sigma_s}=1$, we have $D(Y^2\Vert\sigma_s)=2\Ent_{2,\sigma_s}(X)$ by~\eqref{eqn:l2-entropy}. Here $s$ varies the inverse temperature, and $\eta_s$ is obtained by squaring the transported matrix. Relative-entropy decay for the physical evolution $e^{t\mathcal L_{\beta,*}}\rho$ at fixed temperature is derived later from R\'enyi-2 decay by the tensor-power argument in Section~\ref{sec:renyi-to-mlsi}.

For classical probability densities transported by the same smooth invertible flow, relative entropy is preserved. For the Gaussian KMS Gibbs sampler and the unitary path constructed here, Appendix~\ref{sec:entropy-reduction} computes the quantum remainder: the entropy derivative has the bilinear representation \eqref{eq:entropy-integral-representation} in weighted commutator actions. Each such map annihilates the Gibbs vector, as expressed in \eqref{eq:derivative-gibbs-cancellation}.
\end{remark}

\paragraph*{The integral formula for $T_s$ and $f_s$}
The inverse of $-\mathsf{L}_{\sigma_s}$ on the orthogonal complement of $|\Omega_{\sigma_s}\rangle$ has the semigroup representation
\begin{equation}\label{eq:continuation-integral}
 \begin{aligned}
 |T_s\rangle&=\int_0^\infty e^{t\mathsf{L}_{\sigma_s}}\partial_s|\Omega_{\sigma_s}\rangle\,dt,\\
 T_s&=\Gamma_{\sigma_s}^{1/2}\left(\int_0^\infty e^{t\mathcal L_s}
       \bigl(\Gamma_{\sigma_s}^{-1/2}(\partial_s\Omega_{\sigma_s})\bigr)\,dt\right).
 \end{aligned}
\end{equation}

For $\sigma_s=e^{-sH}/Z_s$, one has $\partial_s\Omega_{\sigma_s}=-\tfrac12(H-\Tr(\sigma_sH)I)\Omega_{\sigma_s}$. Inverting the Sylvester equation after Gibbs weighting yields
\begin{equation}\label{eq:continuation-sylvester}
 f_s=-\frac12\sech\!\left(\frac{s}{4}\ad_H\right)
 \int_0^\infty e^{t\mathcal L_s}
       \bigl(H-\Tr(\sigma_sH)I\bigr)\,dt.
\end{equation}
This formula separates the two analytic inputs: mixing controls the Poisson integral, and the $\sech$ filter expresses the change of weighting through real-time Hamiltonian evolution.

\begin{proposition}[Poisson-generated continuation]\label{prop:exact-continuation}
For the Gibbs path $\sigma_s$ and the Gaussian KMS Gibbs sampler \eqref{eq:CKG} of Section~\ref{sec:setup-main-result}, the construction \eqref{eq:continuation-poisson}--\eqref{eq:continuation-integral} defines a $C^1$ Hermiticity-preserving unitary path $\mathsf{U}_s$, with $\mathsf{U}_0=\one$, satisfying
\begin{equation}\label{eq:continuation-derivative}
 \mathsf{U}_s|\Omega_{\sigma_0}\rangle=|\Omega_{\sigma_s}\rangle,\qquad 0\leq s\leq\beta.
\end{equation}
\end{proposition}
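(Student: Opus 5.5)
We verify, in order: that $T_s$ and $f_s$ are well defined, Hermitian and $C^1$ in $s$ (including at $s=0$); that $\mathsf G_s$ is skew-adjoint and Hermiticity-preserving; and that $\mathsf U_s$ transports the Gibbs vector.

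\textbf{Regularity of the data.} For $0<s\leq\beta$, every ingredient of $\mathcal L_{s,*}$ in \eqref{eq:CKG}--\eqref{eq:CKG-correction} is given by absolutely convergent Gaussian integrals of $e^{itH}S_{x,u}e^{-itH}$. These are smooth in $s$ because the Gaussian factors decay uniformly for $s$ in compact sets. The rescaled form \eqref{eq:native-rescaled-jump} shows that $\widehat J^s_{x,u}(z)$ depends smoothly on $s$ down to $s=0$, where it reduces to a Gaussian multiple of $S_{x,u}$. Since $\mathsf L_{\sigma_s}$ is expressed through \eqref{eq:native-dirichlet-form}, this gives continuity, in fact $C^1$, of $s\mapsto\mathsf L_{\sigma_s}$ on $[0,\beta]$, with the limit $\mathsf L_{\sigma_0}$ of the depolarizing generator. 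The map $s\mapsto\Omega_{\sigma_s}=e^{-sH/2}/\sqrt{Z_s}$ is smooth.

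\textbf{Invertibility of $-\mathsf L_{\sigma_s}$.} Primitivity of each $\mathcal L_{s}$ ([CKG, Thm.~I.1] for $s>0$, and directly for the depolarizing generator at $s=0$) means that $\ker\mathsf L_{\sigma_s}$ is spanned by $|\Omega_{\sigma_s}\rangle$. Hence $-\mathsf L_{\sigma_s}$ restricted to the orthogonal complement has a positive smallest eigenvalue $\lambda_s$. On a finite-dimensional space, eigenvalues vary continuously with the operator and the kernel has constant dimension one along the path, so $\lambda_s\geq\lambda_*>0$ uniformly on the compact interval $[0,\beta]$. We only need this for fixed system size, not size-independently. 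Consequently the reduced resolvent $(-\mathsf L_{\sigma_s})^{+}$, the pseudo-inverse acting on the complement, is $C^1$ in $s$, for instance through the Riesz projection formula.

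\textbf{Well-posedness of $T_s$ and $f_s$.} The source $\partial_s|\Omega_{\sigma_s}\rangle$ lies in the complement, since $\|\Omega_{\sigma_s}\|_2=1$ and $\Omega_{\sigma_s}$ is Hermitian. Therefore $|T_s\rangle=(-\mathsf L_{\sigma_s})^{+}\partial_s|\Omega_{\sigma_s}\rangle$ is $C^1$. Because the uniform gap gives exponential decay of $e^{t\mathsf L_{\sigma_s}}$ on the complement, $T_s$ equals the integral \eqref{eq:continuation-integral}. It is Hermitian because $\mathsf L_{\sigma_s}$ preserves Hermiticity: $\mathcal L_s$ does, and so does $\Gamma^{\pm1/2}_{\sigma_s}$. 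Next, $f_s=\mathcal M_{\Omega_{\sigma_s}}^{-1}(T_s)$, where the symmetrized multiplication by $\Omega_{\sigma_s}>0$ has eigenvalues $(\sqrt{p_i}+\sqrt{p_j})/2>0$ in the eigenbasis of $\sigma_s$. The smallest of these is bounded below uniformly in $s$, so $f_s$ is unique, Hermitian and $C^1$.

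\textbf{Properties of $\mathsf G_s$ and the flow.} $\mathsf G_s=[\mathsf M_{f_s},\mathsf L_{\sigma_s}]$ is continuous in $s$, and it is the commutator of two self-adjoint Hermiticity-preserving operators, hence skew-adjoint and Hermiticity-preserving. The linear ODE \eqref{eq:continuation-unitary} with continuous coefficients has a unique $C^1$ solution. Unitarity follows from
\[
\frac{d}{ds}(\mathsf U_s^*\mathsf U_s)=\mathsf U_s^*(\mathsf G_s^*+\mathsf G_s)\mathsf U_s=0.
\]
Hermiticity preservation holds because the real subspace of vectorized Hermitian matrices is invariant under $\mathsf G_s$, so the solution starting in this subspace stays there.

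\textbf{Transport of the Gibbs vector.} Set $|\phi_s\rangle=\mathsf U_s|\Omega_{\sigma_0}\rangle-|\Omega_{\sigma_s}\rangle$. By the computation $\mathsf G_s|\Omega_{\sigma_s}\rangle=\partial_s|\Omega_{\sigma_s}\rangle$ already shown,
\[
\partial_s|\phi_s\rangle=\mathsf G_s\mathsf U_s|\Omega_{\sigma_0}\rangle-\mathsf G_s|\Omega_{\sigma_s}\rangle=\mathsf G_s|\phi_s\rangle,\qquad |\phi_0\rangle=0,
\]
so $\phi_s\equiv0$ by uniqueness of solutions to this linear ODE, which is \eqref{eq:continuation-derivative}.

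\textbf{Main obstacle.} The delicate step is the endpoint $s=0$: showing that $\mathsf L_{\sigma_s}$ extends $C^1$, or at least continuously, to $s=0$ with the depolarizing limit, so that the uniform gap argument applies there. The approach is to use the $\beta$-rescaled representation \eqref{eq:native-gradient}--\eqref{eq:native-rescaled-jump}, in which $s$ enters only through $e^{isrH}$ and $\sigma_s^{\pm1/2}$, and to differentiate under the Gaussian integral by dominated convergence.
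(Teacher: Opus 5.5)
Your proposal is correct and follows essentially the same route as the paper. Both arguments use a uniform positive gap of $-\mathsf L_{\sigma_s}$ on $[0,\beta]$ at fixed volume (from continuity of eigenvalues), the semigroup representation of a Hermitian $T_s$, inversion of the symmetrized multiplication to get a Hermitian $f_s$, and the identity $\frac{d}{ds}(\mathsf U_s^*\mathsf U_s)=0$ together with ODE uniqueness for unitarity and Hermiticity preservation.

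The differences are cosmetic. The paper needs and proves only continuity of $T_s$, by dominated convergence in the semigroup integral, rather than $C^1$ regularity via the reduced resolvent. It obtains the transport by showing that $\mathsf U_s^*|\Omega_{\sigma_s}\rangle$ is constant, rather than through your difference vector $\phi_s$. It also records the $\sech$ formula for $f_s$ along the way, which the statement does not require.
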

\begin{proof}[Proof of Proposition~\ref{prop:exact-continuation}]
We verify that the first line of \eqref{eq:continuation-integral} gives the Hermitian solution $T_s$ of the centered Poisson equation, that the corresponding Hermitian $f_s$ is given by \eqref{eq:continuation-sylvester} and depends continuously on $s$, and that the resulting path $\mathsf U_s$ is unitary, preserves Hermiticity, and carries $|\Omega_{\sigma_0}\rangle$ to $|\Omega_{\sigma_s}\rangle$.
By \eqref{eq:parent-action} and the invertibility noted above, $\mathsf{L}_{\sigma_s}$ is strictly negative on the orthogonal complement of $|\Omega_{\sigma_s}\rangle$, an invariant subspace containing $\partial_s|\Omega_{\sigma_s}\rangle$. Hence the first integral in \eqref{eq:continuation-integral} converges to a vector orthogonal to $|\Omega_{\sigma_s}\rangle$, and
\[
 \mathsf{L}_{\sigma_s}|T_s\rangle
 =\left[e^{t\mathsf{L}_{\sigma_s}}\partial_s|\Omega_{\sigma_s}\rangle\right]_{t=0}^{t=\infty}
 =-\partial_s|\Omega_{\sigma_s}\rangle,
\]
so it is the solution of the centered Poisson equation in \eqref{eq:continuation-poisson}. The semigroup $e^{t\mathsf{L}_{\sigma_s}}$ preserves Hermiticity and $\partial_s\Omega_{\sigma_s}$ is Hermitian, so $T_s=T_s^*$. By \eqref{eq:parent-action}, $e^{t\mathsf{L}_{\sigma_s}}=\sGamma_{\sigma_s}^{1/2}e^{t\mathsf{L}_s}\sGamma_{\sigma_s}^{-1/2}$, which gives the second line of \eqref{eq:continuation-integral}.

Since $H$ commutes with $\sigma_s$, the formula for $\partial_s\Omega_{\sigma_s}$ above gives $\Gamma_{\sigma_s}^{-1/2}(\partial_s\Omega_{\sigma_s})=-\tfrac12(H-\Tr(\sigma_sH)I)$. In an eigenbasis $H=\operatorname{diag}(E_i)$, write $\Omega_{\sigma_s}=\operatorname{diag}(\omega_i)$ with $\omega_i>0$, so $\omega_i/\omega_j=e^{-s(E_i-E_j)/2}$. Then $\Gamma_{\sigma_s}^{1/2}$ multiplies the $(i,j)$ entry by $\sqrt{\omega_i\omega_j}$, and the Sylvester equation in \eqref{eq:continuation-poisson} reads $\tfrac12(\omega_i+\omega_j)(f_s)_{ij}=(T_s)_{ij}$. By the second line of \eqref{eq:continuation-integral},
\[
 \begin{aligned}
 (f_s)_{ij}
 &=\frac{2(T_s)_{ij}}{\omega_i+\omega_j}
 =-\frac{\sqrt{\omega_i\omega_j}}{\omega_i+\omega_j}
 \left[\int_0^\infty e^{t\mathcal L_s}
       \bigl(H-\Tr(\sigma_sH)I\bigr)\,dt\right]_{ij}\\
 &=-\tfrac12\sech\!\left(\frac{s(E_i-E_j)}4\right)
 \left[\int_0^\infty e^{t\mathcal L_s}
       \bigl(H-\Tr(\sigma_sH)I\bigr)\,dt\right]_{ij}.
 \end{aligned}
\]
Since $\ad_H$ multiplies the $(i,j)$ entry by $E_i-E_j$, this is \eqref{eq:continuation-sylvester}. Since $T_s$ is Hermitian and the $\omega_i$ are real, the first expression gives $\overline{(f_s)_{ji}}=(f_s)_{ij}$, that is, $f_s=f_s^*$.

It remains to show that $s\mapsto f_s$ is continuous. At fixed volume, $s\mapsto\mathsf L_{\sigma_s}$ is continuous on $[0,\beta]$, and for each $s$ the gap $\lambda_s$ of $-\mathsf L_{\sigma_s}$ on the orthogonal complement of $|\Omega_{\sigma_s}\rangle$ is positive by primitivity. Eigenvalues of a continuous family of self-adjoint operators depend continuously on the parameter, so $\lambda_*:=\min_{0\leq s\leq\beta}\lambda_s>0$. Hence $\|e^{t\mathsf L_{\sigma_s}}\partial_s|\Omega_{\sigma_s}\rangle\|_2\leq e^{-\lambda_*t}\|\partial_s|\Omega_{\sigma_s}\rangle\|_2$ uniformly in $s$, and dominated convergence shows that $s\mapsto T_s$ is continuous. Since each $\omega_i$ is continuous in $s$ and $\omega_i+\omega_j>0$, the formula $(f_s)_{ij}=2(T_s)_{ij}/(\omega_i+\omega_j)$ shows that $f_s$, and hence $\mathsf G_s$, is continuous. Thus \eqref{eq:continuation-unitary} has a unique $C^1$ solution. Since $\mathsf{G}_s^*=-\mathsf{G}_s$ and $\mathsf{G}_s|\Omega_{\sigma_s}\rangle=\partial_s|\Omega_{\sigma_s}\rangle$,
\[
 \frac{d}{ds}\bigl(\mathsf{U}_s^*\mathsf{U}_s\bigr)
 =\mathsf{U}_s^*(\mathsf{G}_s^*+\mathsf{G}_s)\mathsf{U}_s=0,
 \qquad
 \frac{d}{ds}\bigl(\mathsf{U}_s^*|\Omega_{\sigma_s}\rangle\bigr)
 =\mathsf{U}_s^*\bigl(\partial_s|\Omega_{\sigma_s}\rangle-\mathsf{G}_s|\Omega_{\sigma_s}\rangle\bigr)=0.
\]
Hence $\mathsf{U}_s^*\mathsf{U}_s=\one$, so $\mathsf U_s$ is unitary (the space is finite-dimensional), and $\mathsf{U}_s^*|\Omega_{\sigma_s}\rangle=|\Omega_{\sigma_0}\rangle$; applying $\mathsf U_s=(\mathsf U_s^*)^{-1}$ gives~\eqref{eq:continuation-derivative}. Finally, $\mathsf U_s$ preserves Hermiticity. Since $\mathsf G_s$ preserves Hermiticity, $\mathcal G_s(Y^*)=\mathcal G_s(Y)^*$ for every $Y$. Thus, for Hermitian $Y_0$, both $Y_s=\mathcal U_s(Y_0)$ and $Y_s^*$ solve $\partial_sY=\mathcal G_s(Y)$ with initial value $Y_0$, and uniqueness gives $Y_s=Y_s^*$.
\end{proof}

\subsection{Locality of the unitary flow}\label{sec:flow-locality}
The transport constructed above has a quasi-local generator under local rapid mixing. This allows us to control its action through estimates near each site.

To quantify locality of the generator, for any matrix $A$ on $\mathcal H$~\eqref{eqn:hilbert-space} and any site $x\in\Lambda$, we define the quasi-local norm centered at $x$ and the local strength norm, respectively:
\begin{equation}\label{eq:polynorm}
 \|A\|_{p,x}=\inf_{\substack{A=\sum_{r\geq0}A_r\\\supp A_r\subseteq\Ball_x(r)}}\sum_{r\geq0}(1+r)^p\|A_r\|,\qquad p\geq0,
\end{equation}
\begin{equation}\label{eq:local-strength-norm}
 \|A\|_{p,\mathrm{loc}}=\inf_{A=\sum_xA_x}\sup_x\|A_x\|_{p,x}.
\end{equation}
The quasi-local norm measures spatial decay around $x$, while the local strength norm measures strength per site. At $p=0$, the quasi-local norm is $\|A\|_{0,x}=\|A\|$.
{We use these two norms for operators $A\in\mathcal B(\mathcal H)$, superoperators $\mathcal A:\mathcal B(\mathcal H)\to\mathcal B(\mathcal H)$, and their vectorized representatives $\mathsf A\in\mathcal B(\mathcal H\otimes\mathcal H)$. The underlying norms are $\|A\|$ and $\|\mathcal A\|_{2\to2}=\|\mathsf A\|$, respectively, as in~\eqref{eqn:induced-norms_def}, with support defined by~\eqref{eqn:support}. Consequently, vectorization preserves both norms.}

The Poisson formula \eqref{eq:continuation-integral} integrates the dynamics over all times. To control its spatial tails, we require decay of each centered interaction term:
\begin{condition}[Local rapid mixing]\label{cond:transport-mixing}
There are constants $C_{\mathrm{mix}}>0$ and $\nu>0$ such that, for every interaction term $h_Z$, every $0\leq s\leq\beta$, and every $t\geq0$,
\begin{equation}\label{eq:transport-mixing}
 \left\|e^{t\mathcal L_s}
       \bigl(h_Z-\Tr(\sigma_sh_Z)I\bigr)\right\|_\infty
 \leq C_{\mathrm{mix}}(1+t)^{-\nu}\|h_Z\|_\infty.
\end{equation}
\end{condition}

We note that~\eqref{eq:transport-mixing} is different from the usual global mixing because it only considers the evolution of the local interaction terms $h_Z$ instead of the whole matrix algebra. Condition~\ref{cond:transport-mixing} naturally follows from regional rapid mixing by Proposition~\ref{prop:regional-mixing}. Section~\ref{sec:local-rapid-mixing} verifies its hypotheses in the one-dimensional and high-temperature settings.

The locality of the sampler controls propagation during short times, while Condition~\ref{cond:transport-mixing} controls the long-time tail of the Poisson integral. Together they give the following bound.
\begin{proposition}[Locality of the transport generator]\label{prop:flow-locality}
For the Gaussian KMS Gibbs sampler \eqref{eq:CKG} and the finite-range interactions in \eqref{eq:growth}, fix $p>0$. If Condition~\ref{cond:transport-mixing} holds with $\nu>p+6d+10$, then there is a constant $C_G$ such that
\begin{equation}\label{eq:flow-locality-target}
 \sup_{0\leq s\leq\beta}\|\mathsf{G}_s\|_{p,\mathrm{loc}}\leq C_G.
\end{equation}
Here $\mathsf{G}_s$ is the vectorized version of the transport generator defined in \eqref{eq:continuation-generator}.
\end{proposition}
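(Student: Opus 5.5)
Our plan is to write $f_s=\sum_Z f_{s,Z}$ as a sum of terms attached to interaction terms, bound each $f_{s,Z}$ in the quasi-local norm centered at a site of $Z$, and then use the fact that the commutator $[\mathsf M_{f_{s,Z}},\mathsf L_{\sigma_s}]$ only sees jump terms near $Z$. By \eqref{eq:continuation-sylvester} and linearity,
\[
 f_{s,Z}=-\tfrac12\sech\!\left(\tfrac s4\ad_H\right)\int_0^\infty e^{t\mathcal L_s}\bigl(h_Z-\Tr(\sigma_sh_Z)I\bigr)\,dt .
\]
Since $|\{Z\ni x\}|$ is bounded (finite range, fixed local dimension), it suffices to bound $\|f_{s,Z}\|_{p',x_Z}\lesssim\|h_Z\|$ for a suitable $p'$ that is larger than $p$, with $x_Z\in Z$.

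\textbf{Step 1: quasi-locality of the Poisson integral.} We split the time integral at a radius-dependent time $t_r\sim r^{\theta}$. For the short-time part we use a Lieb--Robinson-type bound for the quasi-local generator $\mathcal L_s$. The Gaussian-filtered jumps \eqref{eq:CKG-jump} and the coherent term \eqref{eq:CKG-correction} are quasi-local with Gaussian/stretched-exponential tails, uniformly in $s\le\beta$. This lets us approximate $e^{t\mathcal L_s}(h_Z-\Tr(\sigma_s h_Z))$ by an operator supported in $\Ball_{x_Z}(r)$, with error that is polynomially small in $r$ once $t\le t_r$. We also subtract the local-trace correction so the pieces remain centered. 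For the long-time part $t\ge t_r$ we use Condition~\ref{cond:transport-mixing}, which bounds the tail by $\int_{t_r}^\infty C_{\mathrm{mix}}(1+t)^{-\nu}dt\sim r^{-\theta(\nu-1)}$. We then telescope the approximants over $r$ to get a decomposition $\sum_r A_r$ with $\|A_r\|\lesssim r^{-\theta(\nu-1)}+(\text{Lieb--Robinson error})$. Summability against $(1+r)^{p'}$ requires roughly $\nu>p'+\text{const}\cdot d$; this is where the hypothesis $\nu>p+6d+10$ is spent.

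\textbf{Step 2: the $\sech$ filter.} We write $\sech(\tfrac s4\ad_H)$ as a Fourier integral $\int \widehat g(\tau)e^{i\tau sH}(\cdot)e^{-i\tau sH}d\tau$ with $\widehat g$ exponentially decaying, since $\sech$ is analytic in a strip. The real-time Lieb--Robinson bound for $H$ then shows this filter maps $\|\cdot\|_{p',x}$ to $\|\cdot\|_{p'',x}$ with only a constant loss and a small loss of power.

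\textbf{Step 3: the commutator.} We insert the factorization of $-\mathsf L_{\sigma_s}$ from Lemma~\ref{lem:native-dirichlet-decomposition}, or directly the local decomposition $\mathsf L_{\sigma_s}=\sum_{y}\mathsf L_{s,y}$ into quasi-local pieces centered at sites $y$. Then $[\mathsf M_{f_{s,Z}},\mathsf L_{s,y}]$ has norm at most $\|f_{s,Z}\|\,\|\mathsf L_{s,y}\|$, and it is small when the pieces are far apart. It is bounded via the decomposition of $f_{s,Z}$ into $A_r$: only pieces with $\operatorname{dist}(y,x_Z)\lesssim r$ contribute. Summing over $y$ costs a volume factor $r^d$. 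We assign the whole commutator to site $x_Z$ and absorb the $(1+r)^{p+d}$ weights, which gives \eqref{eq:flow-locality-target} uniformly in $s$ and $|\Lambda|$.

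The main obstacle is Step 1: proving a uniform-in-$s$ Lieb--Robinson bound for the non-strictly-local Lindbladian $\mathcal L_s$ (including the $\tanh$-filtered coherent term). It must hold in the $\infty$-norm on observables and be combined with the mixing tail so that the polynomial losses in $d$ stay below $\nu-p$. We would first control the tails of the jump operators and of $H^{\mathrm{coh}}_s$ explicitly, and then run a standard Lieb--Robinson argument with polynomially decaying interactions.
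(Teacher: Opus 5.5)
Your plan is correct and is essentially the paper's proof. The paper also splits $f_s=\sum_Z\sech(s\ad_H/4)X_{s,Z}$ and bounds $\|X_{s,Z}\|_{P,x_Z}\lesssim\|h_Z\|_\infty$ by combining short-time spatial truncation with the mixing tail and telescoping (Proposition~\ref{prop:profile}). It then handles the $\sech$ filter through its real-time representation and the Hamiltonian Lieb--Robinson bound (Lemma~\ref{lem:structure}). It finishes with a summed commutator estimate against $\mathsf L_{\sigma_s}=\sum_x\mathsf L_{\sigma_s,x}$, whose pieces are shown quasi-local via the factorization of Lemma~\ref{lem:native-dirichlet-decomposition} and the Gaussian modular shift \eqref{eq:gaussian-modular-shift}; your Step~3 takes this quasi-locality for granted.

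One point to fix is your last paragraph. The short-time step needs a \emph{linear} light cone, $t_r=\theta r$. The paper gets $\|(e^{t\mathcal L_s}-e^{t\mathcal L_s^{[R]}})h\|_\infty\leq Ce^{vt-\mu R}\|h\|_\infty$ from the exponentially decaying shell decomposition of the sampler (Lemma~\ref{lem:physical-localization}), and this is what makes $\nu>P+2$ with $P=p+6d+8$ suffice. So rely on the exponential tails you mention in Step~1, not on a standard Lieb--Robinson argument with merely polynomially decaying interactions. That argument only gives a usable light cone $t_r\lesssim\log r$, which leaves the non-summable tail $(\log r)^{1-\nu}$.
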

The proof is given in Appendix~\ref{app:transport-coefficients}. {With the local strength norm bound~\eqref{eq:flow-locality-target} holding for sufficiently large $p$,} the Lieb--Robinson bound~\cite[Theorem~3.1]{NSY2019} implies that $\mathsf{U}_s^*\mathsf{O}\mathsf{U}_s$ can be approximated near the support of $\mathsf{O}$, with an error that decays with the truncation radius.


\section{R\'enyi-2 divergence decay from energy and entropy variation bounds}\label{sec:local-control}
Proposition~\ref{prop:transport} reduces the transport of the $L^2$ LSI to bounds on the derivatives of energy and entropy. We now prove these bounds, \eqref{eq:variation-targets}, along $|Y_s\rangle=\mathsf{U}_s|Y_0\rangle$ for the unitary path $\mathsf{U}_s$ constructed in Proposition~\ref{prop:exact-continuation}, with $E_s$ the Dirichlet energy defined before \eqref{eq:variation-targets}. The energy estimates below hold for every initial vector $|Y_0\rangle$, with $E_s=\langle Y_s|(-\mathsf{L}_{\sigma_s})|Y_s\rangle$ given by the same formula; the entropy estimates are for Hermitian $Y_0$ with $\|Y_0\|_2=1$.

\subsection{Derivative bounds and reduction to quadratic-form comparison}\label{sec:derivative-comparison}

In this section, we calculate the energy and entropy derivatives along the unitary flow. We will show that both derivatives can be bounded in terms of quadratic forms built from quasi-local operators. We then reduce the energy and entropy variation bounds in \eqref{eq:variation-targets} to comparisons between these quadratic forms and the Dirichlet form.

\paragraph*{Energy variation}
The sum-of-squares representation \eqref{eq:native-dirichlet-form} gives
\[
 E_s=\sum_{x,u}\int_{\mathbb R^2}
       \|\widehat{\mathsf{B}}_{s,x,u}(v,z)|Y_s\rangle\|_2^2\,d\mu(v,z).
\]
Both $\widehat{\mathsf B}_{s,x,u}(v,z)$ and $|Y_s\rangle$ vary with $s$. By unitarity, we can account for both changes by differentiating $\mathsf U_s^*\widehat{\mathsf B}_{s,x,u}(v,z)\mathsf U_s$ while keeping $|Y_0\rangle$ fixed. Using $\partial_s\mathsf U_s=\mathsf G_s\mathsf U_s$ and $\mathsf G_s^*=-\mathsf G_s$ gives
\begin{equation}\label{eq:moving-frame-derivative}
 \partial_s\bigl(\mathsf U_s^*\widehat{\mathsf B}_{s,x,u}(v,z)\mathsf U_s\bigr)
 =\mathsf U_s^*\bigl(\partial_s\widehat{\mathsf B}_{s,x,u}(v,z)+[\widehat{\mathsf B}_{s,x,u}(v,z),\mathsf G_s]\bigr)\mathsf U_s
 \mathrel{=:}\mathsf U_s^*\mathsf F_{s,x,u}(v,z)\mathsf U_s.
\end{equation}
To differentiate the energy, we now differentiate each squared norm in this fixed frame:
\begin{align*}
 \frac{d}{ds}\|\widehat{\mathsf B}_{s,x,u}(v,z)|Y_s\rangle\|_2^2
 =\frac{d}{ds}\|\mathsf U_s^*\widehat{\mathsf B}_{s,x,u}(v,z)\mathsf U_s|Y_0\rangle\|_2^2=2\operatorname{Re}\langle Y_s|\widehat{\mathsf B}_{s,x,u}(v,z)^*\mathsf F_{s,x,u}(v,z)|Y_s\rangle.
\end{align*}
Summing over $x,u$ and integrating against $d\mu(v,z)$ gives
\begin{equation}\label{eq:energy-derivative-calculation}
 \frac{d}{ds}E_s=2\operatorname{Re}\sum_{x,u}\int_{\mathbb R^2}
 \langle Y_s|\widehat{\mathsf B}_{s,x,u}(v,z)^*\mathsf F_{s,x,u}(v,z)|Y_s\rangle\,d\mu(v,z).
\end{equation}
The Gaussian bounds in Lemmas~\ref{lem:native-gradient-locality} and~\ref{lem:native-gradient-derivatives} allow us to differentiate under the integral. Cauchy--Schwarz then gives the energy derivative bound in Lemma~\ref{lem:derivative-reduction}.

\paragraph*{Entropy variation}
For Hermitian $Y_0$ with $\|Y_0\|_2=1$, both $Y_s$ and $\sigma_s$ vary along the flow, while $\Tr Y_s^2=1$ remains constant. Differentiating the relative entropy $D(Y_s^2\Vert\sigma_s)=\Tr[Y_s^2(\log Y_s^2-\log\sigma_s)]$ gives
\begin{equation}\label{eq:entropy-simultaneous-derivative}
 \frac{d}{ds}D(Y_s^2\Vert\sigma_s)=\underbrace{2\Tr[(\partial_sY_s)Y_s\log Y_s^2]-\Tr[(\partial_sY_s)\{Y_s,\log\sigma_s\}]}_{\text{change in }Y_s}-\underbrace{\Tr[Y_s^2\partial_s\log\sigma_s]}_{\text{change in }\sigma_s}.
\end{equation}
To reformulate the right-hand side of \eqref{eq:entropy-simultaneous-derivative}, we express it in terms of weighted commutators. Define
\[
 \mathcal B_{s,V}(Y)=\Gamma_{\sigma_s}^{1/2}\bigl([V,\Gamma_{\sigma_s}^{-1/2}(Y)]\bigr),
 \qquad \mathsf B_{s,V}|Y\rangle=|\mathcal B_{s,V}(Y)\rangle,
\]
with $\mathsf B_{s,V}=\sGamma_{\sigma_s}^{1/2}\Ad_V\sGamma_{\sigma_s}^{-1/2}$. In these terms, the derivative has the following representation:
\begin{equation}\label{eq:entropy-integral-representation}
 \frac d{ds}D(Y_s^2\Vert\sigma_s)
 =\sum_\gamma\int_{\mathbb R}
 \Tr\!\left[\mathcal B_{s,Q_\gamma(\omega)}(Y_s)
       \mathcal K_{h_\gamma,Y_s}\bigl(\mathcal B_{s,P_\gamma(\omega)}(Y_s)\bigr)\right]d\omega,
 \qquad \|h_\gamma\|_\infty\leq1.
\end{equation}
Each pair $P_\gamma(\omega),Q_\gamma(\omega)$ has an assigned center $x_\gamma\in\Lambda$. The corresponding weighted commutators annihilate the Gibbs vector and will be shown to be quasi-local around this center. For $Y=\sum_\lambda\lambda\Pi_\lambda$, the map $\mathcal K_{h,Y}$ is defined by
\begin{equation}\label{eq:entropy-kernel-contraction}
 \mathcal K_{h,Y}(A)=\sum_{\lambda,\mu}h(\lambda,\mu)\Pi_\lambda A\Pi_\mu,
 \qquad \|\mathcal K_{h,Y}\|_{2\to2}\leq\|h\|_\infty.
\end{equation}
Here, $A\mapsto\Pi_\lambda A\Pi_\mu$ are mutually orthogonal projections on Hilbert--Schmidt space summing to the identity, so for every matrix $A$,
\[
 \|\mathcal K_{h,Y}(A)\|_2^2
 =\sum_{\lambda,\mu}|h(\lambda,\mu)|^2\|\Pi_\lambda A\Pi_\mu\|_2^2
 \leq\|h\|_\infty^2\sum_{\lambda,\mu}\|\Pi_\lambda A\Pi_\mu\|_2^2=\|h\|_\infty^2\|A\|_2^2.
\]
In particular, $\|h\|_\infty\leq1$ makes $\mathcal K_{h,Y}$ a contraction. The construction and proof of \eqref{eq:entropy-integral-representation} are given in Appendix~\ref{sec:entropy-reduction}.

These calculations give the following derivative bounds.

\begin{lemma}[Derivative bounds]\label{lem:derivative-reduction}
Along the unitary path of Proposition~\ref{prop:exact-continuation}, for every initial vector $|Y_0\rangle$,
\begin{equation}\label{eq:energy-derivative-bound}
 \left|\frac{d}{ds}E_s\right|
 \leq2\sqrt{E_s}
 \left(\sum_{x,u}\int_{\mathbb R^2}\|\mathsf{F}_{s,x,u}(v,z)|Y_s\rangle\|_2^2\,d\mu(v,z)\right)^{1/2}.
\end{equation}
For Hermitian $Y_0$ with $\|Y_0\|_2=1$,
\begin{equation}\label{eq:entropy-derivative-bound}
 \left|\frac{d}{ds}D(Y_s^2\Vert\sigma_s)\right|
 \leq\frac12\sum_\gamma\int_{\mathbb R}
 \left(\|\mathsf{B}_{s,P_\gamma(\omega)}|Y_s\rangle\|_2^2+\|\mathsf{B}_{s,Q_\gamma(\omega)}|Y_s\rangle\|_2^2\right)d\omega.
\end{equation}
\end{lemma}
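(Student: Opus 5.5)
Both bounds follow from the two derivative representations just established, \eqref{eq:energy-derivative-calculation} and \eqref{eq:entropy-integral-representation}. In each case we apply Cauchy--Schwarz, first pointwise in the Hilbert--Schmidt inner product and then in the integration and summation variables.

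\emph{Energy bound.} Start from \eqref{eq:energy-derivative-calculation}.
\begin{enumerate}
\item For each $(x,u,v,z)$, bound the integrand by
\[
 |\langle Y_s|\widehat{\mathsf B}^*\mathsf F|Y_s\rangle|
 \le\|\widehat{\mathsf B}|Y_s\rangle\|_2\,\|\mathsf F|Y_s\rangle\|_2 .
\]
\item Apply Cauchy--Schwarz to the sum over $x,u$ and the integral against the positive measure $d\mu$. This gives the product of
\[
 \Bigl(\sum_{x,u}\int\|\widehat{\mathsf B}_{s,x,u}(v,z)|Y_s\rangle\|_2^2\,d\mu\Bigr)^{1/2}
 \quad\text{and}\quad
 \Bigl(\sum_{x,u}\int\|\mathsf F_{s,x,u}(v,z)|Y_s\rangle\|_2^2\,d\mu\Bigr)^{1/2}.
\]
\item The first factor equals $\sqrt{E_s}$ by the sum-of-squares representation \eqref{eq:native-dirichlet-form}. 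The factor $2$ carries over from \eqref{eq:energy-derivative-calculation}, and we use $|\operatorname{Re}z|\le|z|$.
\end{enumerate}
The only analytic point is justifying differentiation under the integral in \eqref{eq:energy-derivative-calculation}. For this we invoke the Gaussian bounds of Lemmas~\ref{lem:native-gradient-locality} and~\ref{lem:native-gradient-derivatives}. Together with $\int d\mu<\infty$ and finitely many $(x,u)$ at fixed volume, they dominate the $s$-derivatives uniformly on compact $s$-intervals.

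\emph{Entropy bound.} Start from \eqref{eq:entropy-integral-representation}. For each $\gamma,\omega$, apply Cauchy--Schwarz in the Hilbert--Schmidt inner product:
\[
 |\Tr[A\,\mathcal K(B)]|\le\|A^*\|_2\,\|\mathcal K(B)\|_2 .
\]
The contraction bound \eqref{eq:entropy-kernel-contraction} with $\|h_\gamma\|_\infty\le1$ gives $\|\mathcal K(B)\|_2\le\|B\|_2$. We also have $\|A^*\|_2=\|A\|_2$. Then AM--GM, $ab\le\tfrac12(a^2+b^2)$, with
\[
 a=\|\mathsf B_{s,Q_\gamma(\omega)}|Y_s\rangle\|_2,\qquad b=\|\mathsf B_{s,P_\gamma(\omega)}|Y_s\rangle\|_2,
\]
yields the integrand of \eqref{eq:entropy-derivative-bound}. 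Summing over $\gamma$ and integrating in $\omega$ finishes the bound, again using the absolute integrability furnished by the construction in Appendix~\ref{sec:entropy-reduction}.

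\emph{Main obstacle.} The substantive step is not this lemma. It is the validity of representation \eqref{eq:entropy-integral-representation} itself, which is proved in the appendix and which we take as given here. Within this lemma the only delicate point is the interchange of the $s$-derivative with the $(v,z)$-integral, which the cited Gaussian decay estimates handle.
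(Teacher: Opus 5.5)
Your proposal is correct and follows essentially the same argument as the paper. For the energy bound, you apply Cauchy--Schwarz first pointwise and then in $(x,u,v,z)$ to \eqref{eq:energy-derivative-calculation}, with the Gaussian bounds justifying differentiation under the integral. For the entropy bound, you use the representation \eqref{eq:entropy-integral-representation} together with the contraction property of $\mathcal K_{h_\gamma,Y_s}$ and $ab\le\tfrac12(a^2+b^2)$, exactly as in Step~4 of Appendix~\ref{sec:entropy-reduction}.
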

The energy bound \eqref{eq:energy-derivative-bound} follows from \eqref{eq:energy-derivative-calculation} as shown above; the entropy bound \eqref{eq:entropy-derivative-bound} is proved in Appendix~\ref{sec:entropy-reduction}, Step~4.

By Lemma~\ref{lem:derivative-reduction}, to derive~\eqref{eq:variation-targets}, it remains to prove the following two quadratic-form comparisons, with nonnegative constants $C_E,C_S$:
\begin{equation}\label{eq:energy-comparison-target}
 \sum_{x,u}\int_{\mathbb R^2}\mathsf{F}_{s,x,u}(v,z)^*\mathsf{F}_{s,x,u}(v,z)\,d\mu(v,z)
 \leq\frac{C_E^2}{4}(-\mathsf{L}_{\sigma_s}),
\end{equation}
\begin{equation}\label{eq:entropy-comparison-target}
 \sum_\gamma\int_{\mathbb R}\left(\mathsf{B}_{s,P_\gamma(\omega)}^*\mathsf{B}_{s,P_\gamma(\omega)}+\mathsf{B}_{s,Q_\gamma(\omega)}^*\mathsf{B}_{s,Q_\gamma(\omega)}\right)d\omega
 \leq2C_S(-\mathsf{L}_{\sigma_s}).
\end{equation}
Both comparisons require the operators on the left to annihilate the zero-energy Gibbs vector. This cancellation holds for the two derivative families:
\begin{equation}\label{eq:derivative-gibbs-cancellation}
 \mathsf{F}_{s,x,u}(v,z)|\Omega_{\sigma_s}\rangle=0,\qquad
 \mathsf{B}_{s,P_\gamma(\omega)}|\Omega_{\sigma_s}\rangle=\mathsf{B}_{s,Q_\gamma(\omega)}|\Omega_{\sigma_s}\rangle=0.
\end{equation}
For the first identity, insert $\partial_s|\Omega_{\sigma_s}\rangle=\mathsf G_s|\Omega_{\sigma_s}\rangle$ into \eqref{eq:moving-frame-derivative} and use $\widehat{\mathsf B}_{s,x,u}(v,z)|\Omega_{\sigma_s}\rangle=0$ from Lemma~\ref{lem:native-dirichlet-decomposition} for every $s$:
\[
 \mathsf F_{s,x,u}(v,z)|\Omega_{\sigma_s}\rangle
 =\partial_s\bigl(\widehat{\mathsf B}_{s,x,u}(v,z)|\Omega_{\sigma_s}\rangle\bigr)
   -\mathsf G_s\widehat{\mathsf B}_{s,x,u}(v,z)|\Omega_{\sigma_s}\rangle=0.
\]
The other two follow from $\mathsf B_{s,V}|\Omega_{\sigma_s}\rangle=\sGamma_{\sigma_s}^{1/2}|[V,I]\rangle=0$. To obtain \eqref{eq:energy-comparison-target} and \eqref{eq:entropy-comparison-target}, we still need to control the size and spatial spread of these actions relative to the Dirichlet form. Section~\ref{sec:conservation-estimates} supplies sufficient conditions through quasi-locality and a common local action-energy inequality.

\subsection{Sufficient conditions for the comparison}\label{sec:conservation-estimates}\label{sec:response}
We treat the energy and entropy derivatives together. We first show that the operators entering both derivatives are quasi-local, then reduce the bounds \eqref{eq:energy-comparison-target} and \eqref{eq:entropy-comparison-target} to a common local comparison with the Dirichlet form.

\paragraph*{Quasi-locality of the derivative terms}
Each operator entering the derivatives can be approximated by operators supported in balls around its center, with errors that decrease as the balls grow.

Proposition~\ref{prop:flow-locality} controls the generator of the transport. Together with the Gaussian filter estimates, it gives the following bounds for the derivative terms in \eqref{eq:moving-frame-derivative} and \eqref{eq:entropy-integral-representation}, for a fixed exponent $p_0>0$:
\begin{lemma}[Quasi-locality of the derivative terms]\label{lem:comparison-families}
Let $H$ be a finite-range Hamiltonian as in \eqref{eq:growth}, and consider its Gaussian KMS Gibbs sampler \eqref{eq:CKG}. Suppose Condition~\ref{cond:transport-mixing} holds with $\nu>p_0+9d+12$. Then
\begin{align}
 \sup_{s,x}\sum_u\int_{\mathbb R^2}\|\mathsf{F}_{s,x,u}(v,z)\|_{p_0,x}^2\,d\mu(v,z)&\leq M_E^2,\label{eq:energy-family-locality}\\
 \sup_{s,x}\sum_{\gamma:x_\gamma=x}\int_{\mathbb R}
 \left(\|\mathsf{B}_{s,P_\gamma(\omega)}\|_{p_0,x}^2+\|\mathsf{B}_{s,Q_\gamma(\omega)}\|_{p_0,x}^2\right)d\omega&\leq M_S^2.\label{eq:entropy-family-locality}
\end{align}
Here $M_E$ and $M_S$ can be chosen depending only on local parameters such as $q,R_0,J,d,p_0$, temperature $\beta$, and the mixing parameters
$C_{\mathrm{mix}},\nu$ in Condition~\ref{cond:transport-mixing},
independently of $s$, $x$, and the system size $|\Lambda|$.
\end{lemma}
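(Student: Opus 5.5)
The approach is to build both families from three ingredients: quasi-local Gaussian-filtered jumps with Gaussian tails in $(v,z)$, the locality of $\mathsf G_s$ from Proposition~\ref{prop:flow-locality}, and an algebra of quasi-local norms. First, I would show that for each $(s,x,u,v,z)$ the operator $\widehat{\mathsf B}_{s,x,u}(v,z)$ satisfies $\|\widehat{\mathsf B}_{s,x,u}(v,z)\|_{p,x}\le C(1+|v|+|z|)^{k}$ for some fixed $k$, uniformly in $s\le\beta$. The jump $\widehat J^s_{x,u}(z)$ is a Gaussian average of $e^{isrH}S_{x,u}e^{-isrH}$. The Lieb--Robinson bound for finite-range $H$ gives $\|\widehat J^s_{x,u}(z)\|_{p,x}\le C$, since the Gaussian $e^{-r^2}$ beats the light cone of radius $\sim v_{LR}s|r|$. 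The conjugation by $e^{i s vH}$ costs a polynomial in $|v|$, and $\mu$ has weight $\sech(2\pi v)$, so this is integrable. The Gibbs-conjugated jump $\sigma_s^{-1/2}\widehat J\sigma_s^{1/2}$ is the imaginary-time shift $r\mapsto r+is/2$ of the Gaussian integral. Shifting the contour gives a Gaussian factor $e^{-(r+is/2)^2}$ times a bounded phase, so it is again quasi-local with constants depending on $\beta$. This is the content I would package as the Gaussian filter estimates (Lemmas~\ref{lem:native-gradient-locality} and~\ref{lem:native-gradient-derivatives}), which give the same kind of bounds for $\partial_s\widehat{\mathsf B}$, since differentiating in $s$ brings down $H$ times $r$ or $v$, again polynomially in $(v,z)$.

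Second, for $\mathsf F_{s,x,u}=\partial_s\widehat{\mathsf B}+[\widehat{\mathsf B},\mathsf G_s]$, I need a commutator estimate. If $\mathsf A$ is quasi-local around $x$ and $\mathsf G=\sum_y\mathsf G_y$ with $\sup_y\|\mathsf G_y\|_{p,y}\le C_G$, then
\[
\|[\mathsf A,\mathsf G]\|_{p',x}\lesssim\sum_y\sum_{r,r'}(1+\max(r,r',|x-y|))^{p'}\|\mathsf A_r\|\,\|\mathsf G_{y,r'}\|\,\one_{\{\Ball_x(r)\cap\Ball_y(r')\neq\emptyset\}}.
\]
The sum over $y$ with $|x-y|\le r+r'$ contributes a factor $(1+r+r')^d$. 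So $\|[\mathsf A,\mathsf G]\|_{p',x}\lesssim\|\mathsf A\|_{p'+d,x}\,\|\mathsf G\|_{p'+d,\mathrm{loc}}$. Choosing $p=p_0+$ (a few multiples of $d$) in Proposition~\ref{prop:flow-locality} explains the stronger requirement $\nu>p_0+9d+12$ compared with $\nu>p+6d+10$. Squaring, summing over the finitely many $u$, and integrating against $\mu$ then gives \eqref{eq:energy-family-locality}, because the polynomial growth in $(v,z)$ is killed by $e^{-(z+1)^2/2}/\cosh(2\pi v)$.

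Third, for \eqref{eq:entropy-family-locality}, the pairs $P_\gamma(\omega),Q_\gamma(\omega)$ come from the explicit construction in Appendix~\ref{sec:entropy-reduction}. I expect them to be built from the local pieces of $f_s$, the jumps, and filtered Heisenberg evolutions, with integrable weights in $\omega$. I would bound $\mathsf B_{s,V}=\sGamma_{\sigma_s}^{1/2}\Ad_V\sGamma_{\sigma_s}^{-1/2}$ by writing it as left multiplication by $\sigma_s^{1/4}V\sigma_s^{-1/4}$ minus right multiplication by $\sigma_s^{-1/4}V\sigma_s^{1/4}$, where both factors are imaginary-time evolutions of $V$. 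For a $V$ that is already Gaussian-filtered in real time, these are again quasi-local by the same contour shift.

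The main obstacle is this last point. Imaginary-time evolution of a general local operator is not quasi-local in higher dimensions (it is only bounded for short times). So I need every $V$ that appears to carry a Gaussian or $\sech$ filter in frequency. For the pieces of $f_s$, the $\sech(\tfrac s4\ad_H)$ in \eqref{eq:continuation-sylvester} supplies exactly such a filter: its Fourier transform decays exponentially, with width comparable to $s/4$, which matches the $\sigma_s^{\pm1/4}$ conjugation. The Poisson integral of local terms is quasi-local by Condition~\ref{cond:transport-mixing} combined with Lieb--Robinson for short times, which is the argument of Proposition~\ref{prop:flow-locality}. I would verify that the convolution of the $\sech$ kernel with the imaginary shift $\pm is/4$ stays analytic and integrable in the strip. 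This holds because $\sech(\tfrac s4\omega)e^{\pm s\omega/4}$ is bounded. If that works, the same Lieb--Robinson and polynomial-weight bookkeeping closes the estimate, uniformly in $s\le\beta$, $x$, and $|\Lambda|$.
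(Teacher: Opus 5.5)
\paragraph*{Energy part} Your argument for \eqref{eq:energy-family-locality} is the paper's. You use the Gaussian-filter bounds on $\widehat{\mathsf B}_{s,x,u}(v,z)$ and $\partial_s\widehat{\mathsf B}_{s,x,u}(v,z)$ (Lemmas~\ref{lem:native-gradient-locality} and~\ref{lem:native-gradient-derivatives}). You then apply the summed commutator bound \eqref{eq:summed-local-commutator} against $\mathsf G_s$, controlled by Proposition~\ref{prop:flow-locality} at order $p_0+d+2$, square, and integrate against $\mu$.

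One correction: in the rescaled variable the modular shift is $r\mapsto r-i/2$, and it produces the factor $e^{1/4+z/2}$ of \eqref{eq:native-gaussian-shift}, not a bounded phase. The correct bound is therefore $C(1+|v|)^{p+d+2}(1+e^{z/2})$, not a polynomial in $z$. It is integrable only because $d\mu$ carries the Gaussian factor $e^{-(z+1)^2/2}$.

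\paragraph*{The gap} The gap is in \eqref{eq:entropy-family-locality}. You leave the pairs unspecified and propose to bound $\|\mathsf B_{s,P_\gamma(\omega)}\|_{p_0,x}$ and $\|\mathsf B_{s,Q_\gamma(\omega)}\|_{p_0,x}$ separately and sum the squares. For the pairs the entropy derivative actually produces, this is not size-independent.

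By Lemma~\ref{lem:transport-decomposition}, the non-unitary part of $\mathcal G_s$ is $\sum_{k,Z}\int(V_kYC_{k,Z}-C_{k,Z}YV_k)\,d\omega$ with $C_{k,Z}=[V_k,f_{s,Z}]$. So $\gamma=(k,Z,\alpha)$ with center $x_k$. One member of each pair is a filtered $V_k(\omega)$, which does not depend on $Z$ at all; the other is a filtered $C_{k,Z}(\omega)$. Summing $\|\mathsf B_{s,P_\gamma}\|^2$ over $\gamma$ with $x_\gamma=x$ therefore counts the same jump once for every interaction term $Z$ in $\Lambda$. That is a factor of order $|\Lambda|$.

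The missing idea is to exploit the bilinear structure of \eqref{eq:entropy-integral-representation}:
\begin{enumerate}
\item Rescale the two filters reciprocally by $\chi=(a_C/a_V)^{1/2}$. This leaves the representation unchanged and makes each pair cost $2C_{\log}a_Va_C$ instead of $a_V^2+a_C^2$; this is \eqref{eq:entropy-pair-local-strength}.
\item Sum over $Z$ using $\mathsf B_{s,[V,f]}=[\mathsf B_{s,V},\mathsf B_{s,f}]$ together with the family commutator bound \eqref{eq:family-local-commutator}. This supplies the decay in $\dist(x_k,x_Z)$ and gives $\sum_Z a_{C_{k,Z}}\lesssim J\,\|\mathsf B_{s,V_k}\|_{\ell,x}$ at a higher order $\ell$.
\end{enumerate}
Alternatively, grouping $C_k=[V_k,f_s]$ before pairing, which is legitimate because $\delta_{V,C}$ is linear in $C$, also avoids the overcount. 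Without one of these devices, your claim that ``the same bookkeeping closes the estimate'' is false.

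\paragraph*{A smaller point} The filters in the actual construction are $\tau_t(V+[K_s,[K_s,V]])$, coming from the polynomial weight $w(a)=1+a^2$; they are not Gaussian or $\sech$ filters. They are harmless only because the Gibbs weighting commutes with $\tau_t$ and $\ad_{K_s}$ (the covariance \eqref{eq:weighted-modular-covariance}). Thus only $V$ and $C$ themselves need the modular bounds of Lemmas~\ref{lem:filtered-modular} and~\ref{lem:potential-modular}. The resulting growth $(1+|t|)^{2p_1}$ must then be absorbed by the Fourier decay \eqref{eq:entropy-fourier-bound} of the factored logarithmic kernel.

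Your remaining ingredients do match the paper: the contour shift for quarter-conjugations of Gaussian-filtered jumps, the identity $e^{\pm a/2}\sech(a/2)=1\pm\tanh(a/2)$ for $f_{s,Z}$, and Proposition~\ref{prop:profile} for $X_{s,Z}$.
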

The energy and entropy estimates in Lemma~\ref{lem:comparison-families} are proved in Appendix~\ref{sec:energy-reduction} and Appendix~\ref{sec:entropy-reduction}, Step~5, respectively.

\paragraph*{A general comparison condition}
Lemma~\ref{lem:comparison-families} allows us to formulate a sufficient condition without referring to the particular operators produced by differentiation. It is enough to prove the same comparison for every quasi-local family that annihilates the Gibbs vector, with a bound determined by its local strength.~This is stated in the following condition.
\begin{condition}[local action-energy inequality]\label{cond:local-action}
For a family $\mathsf{R}_{s,\alpha}$ centered at $x_\alpha$, define its local strength $M$ by
\[
 M^2=\sup_{s,x}\sum_{\alpha:x_\alpha=x}\|\mathsf{R}_{s,\alpha}\|_{p_0,x}^2.
\]
There is a constant $C_{\mathrm{loc}}$ such that every such family with $M<\infty$ satisfying $\mathsf{R}_{s,\alpha}|\Omega_{\sigma_s}\rangle=0$ obeys, for any vector $|Y\rangle$,
\begin{equation}\label{eq:comparison-target}
 \sum_\alpha\|\mathsf{R}_{s,\alpha}|Y\rangle\|_2^2\leq C_{\mathrm{loc}}M^2\langle Y|(-\mathsf{L}_{\sigma_s})|Y\rangle.
\end{equation}
\end{condition}
Applying Condition~\ref{cond:local-action} to the two families in Lemma~\ref{lem:comparison-families} gives both comparisons required in~\eqref{eq:energy-comparison-target} and~\eqref{eq:entropy-comparison-target} of Section~\ref{sec:derivative-comparison}.

\paragraph*{Regional sufficient conditions}
Condition~\ref{cond:local-action} is reminiscent of a comparison in the stability theory of frustration-free Hamiltonians~\cite[Proposition~2]{MZ}. For a finite-range frustration-free Hamiltonian $\mathsf{A}$, let $\mathsf{A}_I$ be the sum of its interaction terms supported inside $I$, and let $\mathsf{P}_I$ be the projection onto the ground state space of $\mathsf{A}_I$. Consider a perturbation $\mathsf{W}=\sum_{x,r}\mathsf{W}_x(r)$ whose Hermitian terms are supported on $\Ball_x(r)$ and satisfy
\[
 \|\mathsf{W}_x(r)\|\leq\varepsilon f(r).
\]
Here $\varepsilon\geq0$ bounds the local perturbation strength, and $f$ is a fixed nonnegative decay profile, independent of $\varepsilon$ and the system size. Under the local spectral gap condition $\mathsf{A}_I\geq\gamma_I(\one-\mathsf{P}_I)$, with $f$ decaying sufficiently fast relative to the inverse local spectral gaps and the lattice volume growth,
\begin{equation}\label{eq:ff-relative-bound}
 \mathsf{W}_x(r)\mathsf{P}_{\Ball_x(r)}=0\quad \Rightarrow\quad |\langle Y|\mathsf{W}|Y\rangle|\leq C\varepsilon\langle Y|\mathsf{A}|Y\rangle,
\end{equation}
where $C$ depends on the decay profile, the local spectral gap bounds, and the lattice geometry, but is independent of $\varepsilon$ and the system size.

In Condition~\ref{cond:local-action}, we regard $-\mathsf{L}_{\sigma_s}$ as a frustration-free Hamiltonian with ground state $|\Omega_{\sigma_s}\rangle$ and seek the following bound for every vector $|Y\rangle$ under a weaker annihilation condition:
\[
 \mathsf{R}_{s,\alpha}|\Omega_{\sigma_s}\rangle=0\ \ (\forall\alpha)
 \quad\Longrightarrow\quad
 \sum_\alpha\|\mathsf{R}_{s,\alpha}|Y\rangle\|_2^2
 \leq C_{\mathrm{loc}}M^2\langle Y|(-\mathsf{L}_{\sigma_s})|Y\rangle.
\]
This only requires each local term to annihilate the global Gibbs vector $|\Omega_{\sigma_s}\rangle$, whereas \eqref{eq:ff-relative-bound} requires each perturbation term to annihilate the entire local ground state space. We therefore impose an additional condition to control the action of local operators on local ground state space. Far from the boundary, an operator that annihilates the Gibbs vector should have little effect on those states. This can be formulated as local state domination, stated in Condition~\ref{ass:local-gap-state-domination}\textup{(ii)} below.

\begin{condition}[Local spectral gaps and local state domination]\label{ass:local-gap-state-domination}
There is a constant $c_{\mathrm{cmp}}>0$ such that, for each $s\in[0,\beta]$, there is a positive form
\begin{equation}\label{eq:local-gap-auxiliary}
 \mathsf{A}_s=\sum_{z\in\Lambda}\mathsf{a}_{s,z},\qquad \mathsf{a}_{s,z}\geq0,\qquad
 \ker \mathsf{A}_s=\operatorname{span}\{|\Omega_{\sigma_s}\rangle\},\qquad
 \mathsf{A}_s\leq c_{\mathrm{cmp}}(-\mathsf{L}_{\sigma_s}).
\end{equation}
Each term annihilates the Gibbs vector, so $\mathsf{A}_s$ is frustration free. For $I\subseteq\Lambda$, let $\mathsf{A}_{s,I}=\sum_{z\in I}\mathsf{a}_{s,z}$ and let $\mathsf{P}_{s,I}$ be its Hilbert--Schmidt kernel projection.
There are constants $a,b>0$ such that the following hold.
\begin{enumerate}
\item \emph{Local spectral gaps.} For all $s$, $x$, $\Lambda$, and integer $r\geq0$,
\begin{equation}\label{eq:local-gap-assumption}
 \mathsf{A}_{s,\Ball_x(r)}\geq\gamma_*(1+r)^{-a}(\one-\mathsf{P}_{s,\Ball_x(r)}),
\end{equation}
with $\gamma_*>0$ being a constant.
\item \emph{Local state domination.} For all $s$, $x$, $\Lambda$, and integer $r\geq0$, set $I=\Ball_x(r)$ and $J=\Ball_x(\lfloor r/2\rfloor)$. Every unit vector $|\psi_s\rangle\in\ker \mathsf{A}_{s,I}$ and every operator supported on the sites in $J$ with $0\leq \mathsf{O}_J\leq\one$ satisfy
\begin{equation}\label{eq:local-gap-kernel-assumption}
 \langle\psi_s|\mathsf{O}_J|\psi_s\rangle
 \leq C_{\mathrm{dom}}\langle\Omega_{\sigma_s}|\mathsf{O}_J|\Omega_{\sigma_s}\rangle+\delta_*(1+r)^{-2b},
\end{equation}
with $C_{\mathrm{dom}}\geq1$ and $\delta_*\geq0$ being constants.
\end{enumerate}
\end{condition}

Condition~\ref{ass:local-gap-state-domination} requires a local spectral gap bound for $\mathsf{A}_s$. If we directly take $\mathsf{A}_s=-\mathsf{L}_{\sigma_s}$, however, the sampler's quasi-local tails couple each region to its exterior, making this gap difficult to control. We therefore choose a comparison form with an explicit regional kernel. For one-dimensional systems, we use the Pauli Dirichlet form of~\cite[Lemma~III.4]{BC2025}
\begin{equation}\label{eq:commutator-form}
 \mathcal D_s(X)=\sum_{x,u}\|[S_{x,u},X]\|_{2,\sigma_s}^2,
\end{equation}
where $S_{x,u}$ are the on-site Pauli operators. The weighted Pauli operator representing this form on Hilbert--Schmidt space is
\begin{equation}\label{eq:pauli-form}
 \mathsf{A}_s=\sum_{x,u}\mathsf{B}_{s,x,u}^*\mathsf{B}_{s,x,u},\qquad
 \mathsf{B}_{s,x,u}=\mathsf{B}_{s,S_{x,u}}=\sGamma_{\sigma_s}^{1/2}\Ad_{S_{x,u}}\sGamma_{\sigma_s}^{-1/2},
\end{equation}
so that $\mathcal D_s(X)=\langle X|\sGamma_{\sigma_s}^{1/2}\mathsf{A}_s\sGamma_{\sigma_s}^{1/2}|X\rangle$.
The commutator maps $\ad_{S_{x,u}}$ are strictly local: the zero-energy condition on a region requires $X$ to act as the identity there, while allowing arbitrary action outside; see Appendix~\ref{app:one-dimensional}, Step~2. This explicit regional kernel makes local spectral gap estimates accessible. For one-dimensional systems, the comparison \eqref{eq:local-gap-auxiliary} holds at every finite temperature~\cite[Theorem~VI.1]{BC2025}. At high temperature, we instead choose $\mathsf{A}_s=\mathsf{U}_s(-\mathsf{L}_{\sigma_0})\mathsf{U}_s^*$, the transported product form in \eqref{eq:ht-transported-form}. Unitary conjugation transfers the local spectral gaps of $-\mathsf{L}_{\sigma_0}$ directly to $\mathsf{A}_s$. The proof of the comparison $\mathsf{A}_s\leq4(-\mathsf{L}_{\sigma_s})$ uses the high-temperature assumption (Lemma~\ref{lem:ht-comparison}).

\begin{proposition}[local action-energy inequality from local spectral gaps]\label{prop:local-gap-criterion}
Suppose Condition~\ref{ass:local-gap-state-domination} holds and $\min\{p_0,b\}>d+a$. Then Condition~\ref{cond:local-action} holds at order $p_0$.
\end{proposition}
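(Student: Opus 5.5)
Because $\mathsf{A}_s\leq c_{\mathrm{cmp}}(-\mathsf{L}_{\sigma_s})$, it is enough to prove $\sum_\alpha\|\mathsf R_{s,\alpha}|Y\rangle\|_2^2\leq C M^2\langle Y|\mathsf A_s|Y\rangle$ and then take $C_{\mathrm{loc}}=c_{\mathrm{cmp}}C$. Fix $s$ and a family $\mathsf R_{s,\alpha}$ centered at $x_\alpha$. Use the definition of $\|\cdot\|_{p_0,x}$ to write $\mathsf R_{s,\alpha}=\sum_{r\geq0}\mathsf R_{\alpha,r}$ with $\supp\mathsf R_{\alpha,r}\subseteq\Ball_{x_\alpha}(r)$ and $\sum_r(1+r)^{p_0}\|\mathsf R_{\alpha,r}\|\leq2\|\mathsf R_{s,\alpha}\|_{p_0,x_\alpha}$. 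These pieces need not annihilate $|\Omega_{\sigma_s}\rangle$, so I will not use them directly. Instead, set $I_r=\Ball_{x_\alpha}(2r)$, $\mathsf P_r=\mathsf P_{s,I_r}$, and pair each piece with the local kernel projection:
\[
 \mathsf R_{\alpha,r}|Y\rangle=\mathsf R_{\alpha,r}(\one-\mathsf P_r)|Y\rangle+\mathsf R_{\alpha,r}\mathsf P_r|Y\rangle .
\]
The first term is controlled by the local gap~\eqref{eq:local-gap-assumption}:
\[
 \|\mathsf R_{\alpha,r}(\one-\mathsf P_r)Y\|^2\leq\|\mathsf R_{\alpha,r}\|^2\gamma_*^{-1}(1+2r)^{a}\langle Y|\mathsf A_{s,I_r}|Y\rangle .
\]
To sum over $r$, apply Cauchy--Schwarz with weights $(1+r)^{-p_0}$, which is summable because $p_0>d+a\geq1$ after adjusting. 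To sum over $\alpha$, each site $z$ lies in $I_r$ for at most $O(M^2(1+r)^d)$ weighted contributions when grouped by centers $x$. This gives a total of at most $CM^2\sum_z\langle Y|\mathsf a_{s,z}|Y\rangle\sum_r(1+r)^{a+d-p_0}$, which converges because $p_0>a+d$.

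The second term is the main obstacle. On $\operatorname{ran}\mathsf P_r$ I need an estimate of $\|\mathsf R_{\alpha,r}\mathsf P_r Y\|$ in terms of the tail $\sum_{r'>r}\|\mathsf R_{\alpha,r'}\|$ together with a small error, while the global annihilation $\mathsf R_{s,\alpha}|\Omega\rangle=0$ is the only available input. My plan is to telescope, rewriting $\sum_r\mathsf R_{\alpha,r}=\sum_r\widetilde{\mathsf R}_{\alpha,r}$ with $\widetilde{\mathsf R}_{\alpha,r}=\mathsf R_{\alpha,\leq r}\mathsf P_r-\mathsf R_{\alpha,\leq r-1}\mathsf P_{r-1}$ plus gap-controlled terms. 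To bound $\mathsf R_{\alpha,\leq r}\mathsf P_r$, take a unit $|\psi\rangle\in\ker\mathsf A_{s,I_r}$ and apply~\eqref{eq:local-gap-kernel-assumption} with $J=\Ball_{x_\alpha}(r)$ and $\mathsf O_J=\mathsf R_{\alpha,\leq r}^*\mathsf R_{\alpha,\leq r}/\|\mathsf R_{\alpha,\leq r}\|^2$. This yields
\[
 \|\mathsf R_{\alpha,\leq r}\psi\|^2\leq C_{\mathrm{dom}}\|\mathsf R_{\alpha,\leq r}\Omega\|^2+\delta_*(1+2r)^{-2b}\|\mathsf R_{\alpha,\leq r}\|^2,
\]
and since $\mathsf R_{\alpha,\leq r}\Omega=-\mathsf R_{\alpha,>r}\Omega$, we have $\|\mathsf R_{\alpha,\leq r}\Omega\|\leq\sum_{r'>r}\|\mathsf R_{\alpha,r'}\|\lesssim(1+r)^{-p_0}\|\mathsf R_{s,\alpha}\|_{p_0,x_\alpha}$. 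The operator norm on $\operatorname{ran}\mathsf P_r$ therefore decays like $(1+r)^{-\min\{p_0,b\}}$.

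Near the boundary of $I_r$, the differences $\mathsf P_r-\mathsf P_{r-1}$ and $\mathsf P_{r}-\mathsf P_{s,\Lambda}$ must in turn be controlled by $\mathsf A_s$. I will handle this with the local gap on the larger ball, using $\|(\mathsf P_{r'}-\mathsf P_{r})Y\|^2\leq\gamma_*^{-1}(1+2r')^{a}\langle Y|\mathsf A_{s,I_{r'}}|Y\rangle$, since $\mathsf P_{r'}\leq\mathsf P_r$ by frustration-freeness and $\mathsf P_{s,\Lambda}Y$ is annihilated by all $\mathsf R$. Combining, each scale contributes $M^2(1+r)^{a-2\min\{p_0,b\}}$ per site, multiplied by a volume factor $(1+r)^d$. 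This is summable under $\min\{p_0,b\}>d+a$, and I expect verifying exactly this bookkeeping of exponents to be the delicate point.
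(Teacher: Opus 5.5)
Your strategy is the paper's. You reduce to $\mathsf A_s$. You show that $\mathsf R$ is small on the regional kernels by applying local state domination to a truncation supported in the half-ball, with $\mathsf R|\Omega_{\sigma_s}\rangle=0$ turning $\|\mathsf R_{\alpha,\le r}|\Omega_{\sigma_s}\rangle\|_2$ into a tail. You then telescope along the nested kernel projections down to $|\Omega_{\sigma_s}\rangle\langle\Omega_{\sigma_s}|$, convert $\one-\mathsf P$ into regional energies with the local gaps, and count centers with multiplicity $O((1+r)^d)$. Your vector identity $\sum_r\mathsf R_{\alpha,r}\mathsf P_r=\sum_{r\geq1}(\mathsf R_{\alpha,\le r-1}\mathsf P_{r-1})(\mathsf P_{r-1}-\mathsf P_r)$ is a legitimate substitute for the operator inequality of Lemma~\ref{lem:nested-kernel-bound}.

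What does not close as written is the exponent count you deferred, because you use every integer radius. Write $c=\min\{p_0,b\}$ and $\varepsilon_r\sim(1+r)^{-c}$. Your second term is bounded by $m\sum_r\varepsilon_{r-1}\|(\one-\mathsf P_r)|Y\rangle\|_2$. Squaring it requires Cauchy--Schwarz weights that are summable over all integers $r$. Moreover, the gap bound $\|(\one-\mathsf P_r)|Y\rangle\|_2^2\lesssim(1+r)^a\langle Y|\mathsf A_{s,\Ball_x(2r)}|Y\rangle$ charges a given site at every integer scale beyond its distance. So the per-scale factor $(1+r)^{a+d-2c}$ that you claim is not actually obtained. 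Optimizing the weights, this chain needs $\sum_r\varepsilon_r(1+r)^{(a+d)/2}<\infty$, i.e.\ $c>\frac{a+d}{2}+1$. That follows from $c>a+d$ only when $a+d\geq2$. This is harmless in the two applications, where $a=1$, but fails for the proposition as stated, e.g.\ $d=1$, $a<1$. Likewise, your first-term series $\sum_r(1+r)^{a+d-p_0}$ converges only if $p_0>a+d+1$.

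The paper avoids both losses by working only at dyadic radii $r_k=2^{k-1}$. All sums over scales then become geometric, and $\sum_k(1+r_k)^{d+a}\epsilon(r_k)<\infty$ holds exactly when $\min\{p_0,b\}>d+a$. If you group your shells $\mathsf R_{\alpha,r}$ into dyadic blocks, with balls $\Ball_{x_\alpha}(2r_k)$, your argument closes under the stated hypothesis. Alternatively, exploit that the projections $\mathsf P_{r-1}-\mathsf P_r$ are mutually orthogonal (Abel summation) instead of bounding each by $\one-\mathsf P_r$.
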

The proof is given in Appendix~\ref{app:local-action-criterion}.

\begin{figure}[htbp]
\centering
\begin{tikzpicture}[
 font=\footnotesize,
 box/.style={draw,rounded corners=3pt,align=center,inner sep=4pt,text width=5.45cm},
 conditionbox/.style={box,draw=blue!75!black,fill=blue!5,minimum height=1.35cm},
 edgelabel/.style={font=\footnotesize,fill=white,inner sep=2pt,align=center},
 arr/.style={-{Latex[length=2mm]},thick}]
\node[conditionbox] (regional) at (0,0) {
 \textbf{Local spectral gaps and local state domination}\\[4pt]
 Condition~\ref{ass:local-gap-state-domination}};
\node[conditionbox] (mixing) at (8,0) {
 \textbf{Local rapid mixing}\\[4pt]
 Condition~\ref{cond:transport-mixing}};
\node[conditionbox] (action) at (0,-2.6) {
 \textbf{local action-energy inequality}\\[4pt]
 Condition~\ref{cond:local-action}};
\node[box,fill=green!5,minimum height=1.35cm] (locality) at (8,-2.6) {
 \textbf{Quasi-local derivative terms}\\[4pt]
 Local strengths $M_E$ and $M_S$};
\coordinate (inputs) at (4,-3.8);
\coordinate (consequences) at (4,-5.0);
\node[box,fill=blue!7,minimum height=2.1cm] (energy) at (0,-6.6) {
 \textbf{Energy variation}\\[5pt]
 $\left|\frac{d}{ds}E_s\right|\leq C_EE_s$\\[5pt]
 $C_E=2\sqrt{C_{\mathrm{loc}}}\,M_E$\\[4pt]
 Any initial vector $|Y_0\rangle$};
\node[box,fill=orange!9,minimum height=2.1cm] (entropy) at (8,-6.6) {
 \textbf{Entropy variation}\\[5pt]
 $\left|\frac{d}{ds}D(Y_s^2\Vert\sigma_s)\right|\leq C_SE_s$\\[5pt]
 $C_S=\tfrac12C_{\mathrm{loc}}M_S^2$\\[4pt]
 Hermitian $Y_0$, $\|Y_0\|_2=1$};
\draw[arr] (regional.south) -- node[edgelabel,midway]
 {Proposition~\ref{prop:local-gap-criterion}} (action.north);
\draw[arr] (mixing.south) -- node[edgelabel,midway]
 {Lemma~\ref{lem:comparison-families}} (locality.north);
\draw[thick] (action.south) -- (0,-3.8) -- (inputs);
\draw[thick] (locality.south) -- (8,-3.8) -- (inputs);
\draw[arr] (inputs) -- node[edgelabel,midway]
 {Lemma~\ref{lem:derivative-reduction}} (consequences);
\draw[thick] (0,-5.0) -- (8,-5.0);
\draw[arr] (0,-5.0) -- (energy.north);
\draw[arr] (8,-5.0) -- (entropy.north);
\end{tikzpicture}
\caption{Conditions for controlled variation of energy and entropy along the flow of Proposition~\ref{prop:exact-continuation}, under the hypotheses of Lemma~\ref{lem:comparison-families} and Proposition~\ref{prop:local-gap-criterion}, with the same exponent $p_0$ in the quasi-local norms.}
\label{fig:condition-implications}
\end{figure}
\FloatBarrier

We summarize the conditions required for the energy and entropy variation bounds \eqref{eq:variation-targets} in Figure~\ref{fig:condition-implications}.

\subsection{Criteria for Decay of R\'enyi-2 Divergence}\label{sec:renyi-criterion}
Since these bounds suffice to transport the $L^2$ LSI by Proposition~\ref{prop:transport}, we obtain the following theorem.

\begin{theorem}[A general criterion for size-independent R\'enyi-2 decay]\label{prop:general-criterion}\label{thm:local-criterion}
For the Gaussian KMS Gibbs sampler \eqref{eq:CKG} on the finite-range spin systems in \eqref{eq:growth}, with local dimension $2^q$, fix $p_0>0$. Assume:
\begin{itemize}
\item Local rapid mixing (Condition~\ref{cond:transport-mixing}) holds with $\nu>p_0+9d+12$. This condition reduces to regional rapid mixing by Proposition~\ref{prop:regional-mixing}.
\item The local action-energy inequality (Condition~\ref{cond:local-action}) holds at order $p_0$. Verifying this condition reduces to verifying Condition~\ref{ass:local-gap-state-domination} with $\min\{p_0,b\}>d+a$.
\end{itemize}
Then the $L^2$ LSI constant satisfies
\begin{equation}\label{eq:criterion-entropy}
 \alpha_2(\mathcal L_\beta)\geq\frac{2}{C_\beta},
 \qquad C_\beta=(C_0+\beta C_S)e^{\beta C_E},
\end{equation}
where $C_0$ is the constant in \eqref{eq:product-state-reference}, $C_E=2\sqrt{C_{\mathrm{loc}}}\,M_E$, and $C_S=\tfrac12C_{\mathrm{loc}}M_S^2$, with $M_E,M_S$ from Lemma~\ref{lem:comparison-families} and $C_{\mathrm{loc}}$ from Condition~\ref{cond:local-action}. Consequently, with $\kappa=2/C_\beta$, every state $\rho$ and every $t\geq0$ satisfy
\begin{equation}\label{eq:criterion-decay}
 \widetilde D_2(e^{t\mathcal L_{\beta,*}}\rho\Vert\sigma_\beta)
 \leq e^{-\kappa t}\widetilde D_2(\rho\Vert\sigma_\beta).
\end{equation}
\end{theorem}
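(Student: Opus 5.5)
The theorem assembles results already established, so I will chain them in order.

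First, by Proposition~\ref{prop:exact-continuation}, the Poisson--Sylvester construction yields a $C^1$ Hermiticity-preserving unitary path $\mathsf U_s$ with $\mathsf U_0=\one$ and $\mathsf U_s|\Omega_{\sigma_0}\rangle=|\Omega_{\sigma_s}\rangle$. This is exactly the path required by Proposition~\ref{prop:transport}. The remaining task is to verify the two variation bounds \eqref{eq:variation-targets} with the stated constants $C_E=2\sqrt{C_{\mathrm{loc}}}M_E$ and $C_S=\tfrac12C_{\mathrm{loc}}M_S^2$.

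Second, I will use the local rapid mixing hypothesis with $\nu>p_0+9d+12$, which enters through Lemma~\ref{lem:comparison-families}. It gives the local strength bounds \eqref{eq:energy-family-locality} and \eqref{eq:entropy-family-locality}, uniformly in $s$ and $|\Lambda|$.

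\emph{Energy family.} Index the families by $\alpha=(x,u,v,z)$ with center $x$, where the sum over $\alpha$ is understood as a sum over $x,u$ and an integral against $d\mu$. Condition~\ref{cond:local-action} applies to integrals as well as sums: I would either apply it to a Riemann discretization and pass to the limit, or simply read $\sum_\alpha$ as integration throughout. The annihilation property $\mathsf F_{s,x,u}(v,z)|\Omega_{\sigma_s}\rangle=0$ comes from \eqref{eq:derivative-gibbs-cancellation}. With $M=M_E$, this gives
\[
 \sum_{x,u}\int_{\mathbb R^2}\|\mathsf F_{s,x,u}(v,z)|Y\rangle\|_2^2\,d\mu(v,z)\leq C_{\mathrm{loc}}M_E^2E.
\]
Inserting this into \eqref{eq:energy-derivative-bound} gives $|dE_s/ds|\leq 2\sqrt{C_{\mathrm{loc}}}M_EE_s$.

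\emph{Entropy family.} Take the families $\mathsf B_{s,P_\gamma(\omega)}$ and $\mathsf B_{s,Q_\gamma(\omega)}$, centered at $x_\gamma$. They annihilate the Gibbs vector by \eqref{eq:derivative-gibbs-cancellation}. Condition~\ref{cond:local-action} with $M=M_S$ bounds the right-hand side of \eqref{eq:entropy-derivative-bound} by $\tfrac12C_{\mathrm{loc}}M_S^2E_s$.

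If only Condition~\ref{ass:local-gap-state-domination} is assumed, I first invoke Proposition~\ref{prop:local-gap-criterion} with $\min\{p_0,b\}>d+a$ to obtain Condition~\ref{cond:local-action} at order $p_0$.

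Third, Proposition~\ref{prop:transport} now gives \eqref{eq:vectorized-target} with $C_\beta=(C_0+\beta C_S)e^{\beta C_E}$. Here $C_0$ comes from \eqref{eq:product-state-reference}, i.e.\ the product-state $L^2$ LSI of Lemma~\ref{lem:product-entropy}. The argument in Section~\ref{sec:vectorized-target} then yields $\alpha_2(\mathcal L_\beta)\geq 2/C_\beta$. One point needs care there: the reduction restricts to $Y=\Gamma_{\sigma_\beta}^{1/2}(X)$ with $X>0$. Such $Y$ is Hermitian, and by homogeneity we may take it normalized, so \eqref{eq:vectorized-target} applies. Finally, Lemma~\ref{lem:l2-renyi} turns $\alpha_2(\mathcal L_\beta)\geq\kappa:=2/C_\beta$ into the R\'enyi-2 decay \eqref{eq:criterion-decay}.

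The only step that is not pure bookkeeping is applying Condition~\ref{cond:local-action} to families indexed by continuous parameters. The condition is phrased for a family $\mathsf R_{s,\alpha}$ with a supremum over $\sum_{\alpha:x_\alpha=x}$. I expect to handle this by treating $\alpha$ as ranging over a measure space, with sums replaced by integrals; the proof of Proposition~\ref{prop:local-gap-criterion} is linear in these quadratic forms and is unaffected by that replacement. Should that fail, I would fall back on approximation by finite sums, using the Gaussian integrability bounds of Lemma~\ref{lem:comparison-families} to control the error. Everything else consists of matching the constants exactly as recorded in Figure~\ref{fig:condition-implications}.
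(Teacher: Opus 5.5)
Your proposal is correct and follows the paper's proof essentially verbatim. Lemma~\ref{lem:comparison-families} supplies $M_E,M_S$, and the cancellation \eqref{eq:derivative-gibbs-cancellation} lets Condition~\ref{cond:local-action} apply to both derivative families. Lemma~\ref{lem:derivative-reduction} then yields \eqref{eq:variation-targets} with exactly $C_E=2\sqrt{C_{\mathrm{loc}}}\,M_E$ and $C_S=\tfrac12C_{\mathrm{loc}}M_S^2$, after which Proposition~\ref{prop:transport} and Lemma~\ref{lem:l2-renyi} give the conclusion. Your reading of $\sum_\alpha$ as integration over the continuous parameters matches the paper's implicit convention (e.g.\ the measure $d\nu_N$ is absorbed into the index $\gamma$), so no separate discretization argument is needed.
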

\begin{proof}
Lemma~\ref{lem:comparison-families} supplies the local strengths $M_E,M_S$ of the derivative families, which annihilate the Gibbs vector by \eqref{eq:derivative-gibbs-cancellation}. Applying Condition~\ref{cond:local-action} and Lemma~\ref{lem:derivative-reduction} gives \eqref{eq:variation-targets} with the constants $C_E$ and $C_S$ above. Proposition~\ref{prop:transport} therefore gives \eqref{eq:criterion-entropy}, and Lemma~\ref{lem:l2-renyi} gives \eqref{eq:criterion-decay} with rate $\kappa=2/C_\beta$.
\end{proof}

We next verify these inputs in the two regimes before completing the MLSI argument in Section~\ref{sec:renyi-to-mlsi}.

\subsection{Verification in the two regimes}\label{sec:regime-verification}
We verify Conditions~\ref{cond:transport-mixing} and~\ref{ass:local-gap-state-domination} for one-dimensional systems at every finite temperature and for higher-dimensional systems at sufficiently high temperature.

\subsubsection{Local rapid mixing}\label{sec:local-rapid-mixing}
Local rapid mixing can be verified from regional rapid mixing. For a ball $\Ball_x(R)$, let $\mathcal L_s^{[R]}$ be the Gaussian KMS Gibbs sampler \eqref{eq:CKG} constructed from $H_{\Ball_x(R)}=\sum_{Z\subseteq\Ball_x(R)}h_Z$, with Gibbs state $\sigma_s^{[R]}$. The notation $[R]$ suppresses the center $x$. Its trace-norm mixing time is
\begin{equation}\label{eq:regional-mixing-time}
 t_{\rm mix}^{[R],s}(\varepsilon)
 =\inf\left\{t\geq0:\sup_\rho
       \|e^{t\mathcal L_{s,*}^{[R]}}\rho-\sigma_s^{[R]}\|_1
       \leq\varepsilon\right\},
\end{equation}
where the supremum is over states on $\Ball_x(R)$.

\begin{proposition}[Regional rapid mixing implies local rapid mixing]\label{prop:regional-mixing}
For the Gaussian KMS Gibbs sampler \eqref{eq:CKG} and the interactions in \eqref{eq:growth}, suppose there are constants $C_{\mathrm{reg}}>0$ and $0\leq a<1$ such that
\[
 t_{\rm mix}^{[R],s}(1/4)\leq C_{\mathrm{reg}}(1+R)^a
\]
for all balls $\Ball_x(R)$ and $s\in[0,\beta]$. Then local rapid mixing (Condition~\ref{cond:transport-mixing}) holds for every fixed $\nu>0$.
\end{proposition}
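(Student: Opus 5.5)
The plan is to compare the global evolution of a centered interaction term $\tilde h_Z:=h_Z-\Tr(\sigma_s h_Z)I$ with a regional evolution on a ball $\Ball_x(R)$ around $Z$. We then use regional rapid mixing to make the regional evolution nearly constant, and optimize $R$ as a function of $t$. Fix $Z$ with $\operatorname{diam}Z\leq R_0$, a site $x\in Z$, and for $t\geq0$ choose a radius $R=R(t)$ to be determined.

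\textbf{Step 1: regional decay.} Submultiplicativity of mixing in trace norm turns $t^{[R],s}_{\rm mix}(1/4)\leq C_{\rm reg}(1+R)^a$ into
\[
 \sup_\rho\|e^{t\mathcal L^{[R]}_{s,*}}\rho-\sigma^{[R]}_s\|_1\leq 2\cdot 2^{-\lfloor t/(C_{\rm reg}(1+R)^a)\rfloor}.
\]
Dually, in the Heisenberg picture this gives
\[
 \|e^{t\mathcal L^{[R]}_s}(h_Z)-\Tr(\sigma^{[R]}_sh_Z)I\|\leq 2\|h_Z\|\,2^{-\lfloor t/(C_{\rm reg}(1+R)^a)\rfloor}.
\]

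\textbf{Step 2: global versus regional dynamics.} Using quasi-locality of the Gaussian KMS sampler (the Gaussian-filtered jumps and coherent term are quasi-local with Gaussian-type tails in $\beta$ and Lieb--Robinson), I would bound
\[
 \|e^{t\mathcal L_s}(h_Z)-e^{t\mathcal L^{[R]}_s}(h_Z)\|\leq C\|h_Z\|\,t\,(1+R)^{d}e^{-c R/(1+t)}\quad\text{or}\quad C\|h_Z\| e^{vt-cR}.
\]
This follows from a Duhamel formula. The difference $\mathcal L_s-\mathcal L_s^{[R]}$ is concentrated near $\partial\Ball_x(R)$, up to quasi-local tails. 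The regional evolution of $h_Z$ stays near $Z$ up to Lieb--Robinson light-cone errors, and both semigroups are contractive in operator norm because they are unital completely positive maps. A super-polynomial tail in $R$ at linear speed suffices, even if the true tails are only stretched exponential.

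\textbf{Step 3: compare the expectations.} I would bound $|\Tr(\sigma_sh_Z)-\Tr(\sigma_s^{[R]}h_Z)|$ by the same argument. Since $\sigma_s$ is invariant, $\Tr(\sigma_sh_Z)=\Tr(\sigma_se^{t\mathcal L_s}h_Z)$. Step 2 replaces this by $\Tr(\sigma_s e^{t\mathcal L^{[R]}_s}h_Z)$, and Step 1 then shows that this quantity equals $\Tr(\sigma^{[R]}_sh_Z)$ up to small errors. This avoids needing any decay of correlations for the Gibbs state.

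\textbf{Step 4: optimize $R$.} Combining Steps 1--3 gives
\[
 \|e^{t\mathcal L_s}\tilde h_Z\|\lesssim \|h_Z\|\bigl(2^{-t/(C(1+R)^a)}+e^{vt-cR}\bigr).
\]
Choose $R=\lceil Kt\rceil$ with $K>v/c$, in fact $K$ large, so that the second term is $e^{-c't}$. The first term is then $\exp(-c''t^{1-a})$ because $a<1$. Both terms decay faster than any power, so for every $\nu$ there is $C_{\rm mix}$ with the bound $(1+t)^{-\nu}$; small $t$ is handled by the trivial bound $2\|h_Z\|$. All constants are uniform in $s\in[0,\beta]$, since the locality bounds of the filtered jumps are uniform for $s\leq\beta$.

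\textbf{Main obstacle.} The main obstacle is Step 2: making the global/regional comparison rigorous with explicit, $s$-uniform, size-independent constants. The jumps $e^{isrH}S_{x,u}e^{-isrH}$ and the coherent term $H^{\rm coh}$ built from $\tanh(\tfrac s4\ad_H)$ are only quasi-local, so $\mathcal L_s-\mathcal L_s^{[R]}$ is not strictly supported near the boundary. I would first decompose $\mathcal L_s=\sum_{y}\mathcal L_{s,y}$ into quasi-local pieces with tails controlled by Lieb--Robinson, then prove a Lieb--Robinson bound for the Lindbladian to get a light cone of speed $v$, and finally absorb the tail terms into the $e^{-cR}$-type error.
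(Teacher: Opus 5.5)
Your proposal is correct and follows the paper's proof step for step: iterated regional mixing giving $\exp(-ct/\tau(R))$, the stationarity trick $\Tr(\sigma_s h_Z)=\Tr(\sigma_s e^{t\mathcal L_s}h_Z)$ to swap in $\Tr(\sigma_s^{[R]}h_Z)$, a Duhamel/Lieb--Robinson truncation bound $Ce^{vt-\mu R}\|h_Z\|_\infty$, and the choice $R\propto t$, which yields $e^{-ct^{1-a}}$ decay. The truncation bound is the paper's spatial-truncation lemma, which compares with the sampler of the cut Hamiltonian $H_B+H_{B^c}$ rather than the regional sampler directly and relies on the exponentially decaying shell decomposition of the generator; also, of your two candidate forms in Step~2 only $Ce^{vt-cR}$ is compatible with $R\propto t$, and that is the one the paper proves and the one you actually use.
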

Regional rapid mixing controls the evolution inside a large ball, while quasi-locality bounds the effect of the exterior. See Appendix~\ref{app:regional-mixing} for details.

Global rapid mixing is known in both regimes. The cited results apply to each regional Hamiltonian in the same class and give the following bounds:
\begin{itemize}[leftmargin=*,itemsep=6pt]
\item For the one-dimensional Hamiltonians in \eqref{eq:chain-setting} at any fixed $0<\beta<\infty$, the rapid-mixing theorem~\cite[Theorem~2]{CD2026} gives
\begin{equation}\label{eq:chain-mixing-input}
 t_{\rm mix}^{[R],s}(1/4)\leq C_{\beta,q}\log^4(2+R)
 \qquad(0\leq s\leq\beta).
\end{equation}
\item At high temperature, for the finite-range interactions in \eqref{eq:growth}, choose $0<\theta_*\leq1$, depending only on $d,R_0,q$, so that the rapid-mixing theorem~\cite[Theorem~1]{RFA} applies. It gives
\begin{equation}\label{eq:ht-mixing-input}
 t_{\rm mix}^{[R],s}(1/4)\leq C\log(2+R)
 \qquad(0\leq s\leq\beta,\ \beta J\leq\theta_*).
\end{equation}
The constant is uniform over this temperature range.
\end{itemize}
Both bounds satisfy the hypothesis of Proposition~\ref{prop:regional-mixing}. Thus Condition~\ref{cond:transport-mixing} holds for every fixed $\nu>0$, with constants independent of system size and, in the high-temperature case, uniform over $\beta J\leq\theta_*$.

\subsubsection{The choice of comparison form and local spectral gap}\label{sec:comparison-choice}
To verify Condition~\ref{ass:local-gap-state-domination}, we choose comparison forms whose local spectral gaps can be controlled uniformly in system size.

\paragraph*{Proof idea for one-dimensional systems}\label{sec:local-gap-chain-assumptions}
We select the comparison form $\mathsf{A}_s$ to be the weighted Pauli operator in \eqref{eq:pauli-form}. The comparison theorem~\cite[Theorem~VI.1]{BC2025} gives $\mathsf{A}_s\leq c_{\beta,q}(-\mathsf{L}_{\sigma_s})$.
To establish the local spectral gap \eqref{eq:local-gap-assumption}, we cut the bonds across the boundary of $I=\Ball_x(r)$. Specifically, the Gibbs state after this cut is
\begin{equation}\label{eq:chain-cut-reference-overview}
 \tau=\tau_I\otimes\tau_{I^c},
\end{equation}
where $\tau_I\propto e^{-sH_I}$ and $\tau_{I^c}\propto e^{-sH_{I^c}^{\mathrm{cut}}}$, with $H_I$ and $H_{I^c}^{\mathrm{cut}}$ containing the interaction terms supported entirely inside and outside $I$, respectively. This product structure completely isolates $I$ from its exterior. We define the weighted Pauli operator for the cut state $\tau$ on $I$ by
\[
 \mathsf{A}_{s,I}^{\mathrm{cut}}=\sum_{x\in I,u}(\mathsf{B}_{s,x,u}^{\mathrm{cut}})^*\mathsf{B}_{s,x,u}^{\mathrm{cut}},\qquad
 \mathsf{B}_{s,x,u}^{\mathrm{cut}}=\sGamma_\tau^{1/2}\Ad_{S_{x,u}}\sGamma_\tau^{-1/2}.
\]
Let $\mathsf{P}_I^{\mathrm{cut}}$ be the projection onto its kernel. Since the Pauli operators act only on $I$ and $\tau$ factorizes across its boundary, the gap inequality reduces to the subsystem $I$, for which~\cite[Theorem~V.1]{BC2025} gives
\[
 \mathsf{A}_{s,I}^{\mathrm{cut}}\geq\lambda_{\mathrm{cut}}(\one-\mathsf{P}_I^{\mathrm{cut}}),
\]
with $\lambda_{\mathrm{cut}}>0$ independent of the interval length. Restoring the cut boundary bonds changes the weighting by $\mathsf{W}=\sGamma_{\sigma_s}^{1/2}\sGamma_\tau^{-1/2}$. Lemma~\ref{lem:chain-cut-locality} bounds both $\|\mathsf{W}\|$ and $\|\mathsf{W}^{-1}\|$ by a constant $K$ depending only on $\beta,q$. Thus restoring the boundary costs only a constant factor in the gap:
\[
 \mathsf{A}_{s,I}\geq\lambda_{\beta,q}(\one-\mathsf{P}_{s,I}),\qquad
 \lambda_{\beta,q}=K^{-4}\lambda_{\mathrm{cut}}>0.
\]
Here $\mathsf{A}_{s,I}=\sum_{x\in I,u}\mathsf{B}_{s,x,u}^*\mathsf{B}_{s,x,u}$ and $\mathsf{P}_{s,I}$ is its kernel projection. The comparison of the two gaps is proved in Lemma~\ref{lem:chain-local-poincare}.

\paragraph*{Proof idea for high-temperature systems}\label{sec:local-gap-ht-assumptions}
For the high-temperature setting in \eqref{eq:growth}, we select the comparison form by transporting the product form at $s=0$:
\begin{equation}\label{eq:ht-transported-form}
 \mathsf{A}_s=\mathsf{U}_s(-\mathsf{L}_{\sigma_0})\mathsf{U}_s^*.
\end{equation}
In Lemma~\ref{lem:ht-comparison}, we show that the comparison $\mathsf{A}_s\leq4(-\mathsf{L}_{\sigma_s})$ holds for $\beta J\leq\beta_0$, where $0<\beta_0\leq\theta_*$ is independent of system size. We now verify the local spectral gap \eqref{eq:local-gap-assumption}.
 At $s=0$, the form $\mathsf{A}_0$ admits the decomposition
\begin{equation}\label{eq:ht-product-projections}
 \mathsf{A}_0=-\mathsf{L}_{\sigma_0}=\nu_q\sum_{z\in\Lambda}\mathsf{Q}_z,\qquad
 \mathsf{A}_{0,I}:=\nu_q\sum_{z\in I}\mathsf{Q}_z,
\end{equation}
see Lemma~\ref{lem:product-entropy} for details. With $\mathsf{P}_I^0$ denoting the projection onto $\ker\mathsf{A}_{0,I}$, the local spectral gap satisfies $\mathsf{A}_{0,I}\geq\nu_q(\one-\mathsf{P}_I^0)$.
Unitary conjugation transports the decomposition of $\mathsf{A}_0$ to $\mathsf{A}_s$:
\[
 \mathsf{A}_s=\sum_{z\in\Lambda}\mathsf{a}_{s,z},\qquad
 \mathsf{a}_{s,z}=\nu_q\mathsf{U}_s\mathsf{Q}_z\mathsf{U}_s^*.
\]
The local spectral gap is preserved under unitary conjugation:
\begin{equation}\label{eq:ht-regional-kernel}
 \mathsf{A}_{s,I}=\sum_{z\in I}\mathsf{a}_{s,z}=\mathsf{U}_s\mathsf{A}_{0,I}\mathsf{U}_s^*\geq\nu_q(\one-\mathsf{P}_{s,I}),\qquad \mathsf{P}_{s,I}=\mathsf{U}_s\mathsf{P}_I^0\mathsf{U}_s^*.
\end{equation}
The detailed calculation and proof of the local spectral gap are given in Appendix~\ref{app:high-temperature}, Step~1. Thus Condition~\ref{ass:local-gap-state-domination}\textup{(i)} holds in both settings. It remains to verify local state domination for these two comparison forms.

\subsubsection{Local state domination}\label{sec:local-domination}
Recall that our goal is the local state domination bound \eqref{eq:local-gap-kernel-assumption}. For $I=\Ball_x(r)$ and $J=\Ball_x(\lfloor r/2\rfloor)$, we need to show that every unit vector $|\psi_s\rangle\in\ker\mathsf{A}_{s,I}$ and every observable $0\leq\mathsf{O}_J\leq\one$ supported on $J$ satisfy
\[
 \langle\psi_s|\mathsf{O}_J|\psi_s\rangle
 \leq C_{\mathrm{dom}}\langle\Omega_{\sigma_s}|\mathsf{O}_J|\Omega_{\sigma_s}\rangle
       +\delta_*(1+r)^{-2b}.
\]

\paragraph*{Proof idea for one-dimensional systems}
Here, we give a brief overview of the proof idea. The detailed derivation is in Appendix~\ref{app:domination-chain}, with the proof in Step~4. We use the same boundary cut as in \eqref{eq:chain-cut-reference-overview}. Once $I$ is disconnected from its exterior, the Pauli constraints inside $I$ fix its Gibbs factor, while leaving the exterior unrestricted:
\[
 \ker\mathsf{A}_{s,I}^{\mathrm{cut}}
 =\operatorname{span}\{|\Omega_{\tau_I}\rangle\}\otimes\mathcal H_{I^c}^{\mathrm{dbl}}.
\]
Consequently, for every unit vector $|\phi\rangle\in\ker\mathsf{A}_{s,I}^{\mathrm{cut}}$ and every observable $\mathsf{O}_I$ supported on $I$,
\begin{equation}\label{eq:cut-interior-expectations}
 \langle\phi|\mathsf{O}_I|\phi\rangle
 =\langle\Omega_\tau|\mathsf{O}_I|\Omega_\tau\rangle.
\end{equation}
For an operator $0\leq\mathsf{O}_J\leq\one$ supported on $J=\Ball_x(\lfloor r/2\rfloor)$ and a unit vector $|\psi_s\rangle\in\ker\mathsf{A}_{s,I}$, we use the weighting map $\mathsf{W}=\sGamma_{\sigma_s}^{1/2}\sGamma_\tau^{-1/2}$ to transfer \eqref{eq:cut-interior-expectations} to the desired local state domination bound:
\begin{equation}\label{eq:chain-domination-reduction-overview}
 \langle\psi_s|\mathsf{O}_J|\psi_s\rangle\leq2K^6\langle\Omega_{\sigma_s}|\mathsf{O}_J|\Omega_{\sigma_s}\rangle+8K^6\|[\sqrt{\mathsf{O}_J},\mathsf{W}]\|^2.
\end{equation}
It remains to show that $\|[\sqrt{\mathsf{O}_J},\mathsf{W}]\|$ decays with the distance from $J$ to the cut bonds.

In one dimension, $\mathsf{W}$ is quasi-local near those bonds. For $r\geq4$, we truncate it to $\widetilde{\mathsf W}_L$ supported on a boundary strip $N_L$, with $L=\lfloor r/4\rfloor$ and $\dist(J,N_L)\geq L$, such that
\begin{equation}\label{eq:chain-boundary-truncation}
 \|\mathsf{W}-\widetilde{\mathsf W}_L\|\leq Ce^{-cL\log(2+L)}.
\end{equation}
See Lemma~\ref{lem:chain-cut-locality} for details. The separation $\dist(J,N_L)\geq L>0$ gives $[\sqrt{\mathsf{O}_J},\widetilde{\mathsf W}_L]=0$, so the remaining term in \eqref{eq:chain-domination-reduction-overview} is controlled by
\[
 \|[\sqrt{\mathsf{O}_J},\mathsf{W}]\|=\|[\sqrt{\mathsf{O}_J},\mathsf{W}-\widetilde{\mathsf W}_L]\|\leq2\|\mathsf{W}-\widetilde{\mathsf W}_L\|\leq C^{\prime}e^{-c^{\prime}r\log(2+r)}.
\]
Substituting this bound into \eqref{eq:chain-domination-reduction-overview} gives \eqref{eq:local-gap-kernel-assumption}.

\paragraph*{Proof idea for high-temperature systems}
The proof of \eqref{eq:local-gap-kernel-assumption} for the high-temperature case is in Appendix~\ref{app:high-temperature}, Step~3. In the proof, we use the product state $\sigma_0=\bigotimes_{z\in\Lambda}(2^{-q}I_{2^q})$ as the reference. For $I=\Ball_x(r)$, we have
\[
 \ker\mathsf{A}_{0,I}=\operatorname{span}\{|\Omega_{\sigma_0,I}\rangle\}\otimes\mathcal H_{I^c}^{\mathrm{dbl}}.
\]
For a unit vector $|\psi_s\rangle\in\ker\mathsf{A}_{s,I}$, the identity $\mathsf{A}_{s,I}=\mathsf{U}_s\mathsf{A}_{0,I}\mathsf{U}_s^*$ in \eqref{eq:ht-regional-kernel} implies $\mathsf{U}_s^*|\psi_s\rangle\in\ker\mathsf{A}_{0,I}$. Hence
\[
 \mathsf{U}_s^*|\psi_s\rangle=|\Omega_{\sigma_0,I}\rangle\otimes|v\rangle,\qquad \||v\rangle\|_2=1.
\]
Thus the transformed vector has the same factor on $I$ as $|\Omega_{\sigma_0}\rangle$. For every Hermitian operator $\mathsf{O}_I$ supported on $I$,
\begin{equation}\label{eq:ht-reference-expectations-overview}
 \langle\psi_s|\mathsf{U}_s\mathsf{O}_I\mathsf{U}_s^*|\psi_s\rangle=\langle\Omega_{\sigma_0}|\mathsf{O}_I|\Omega_{\sigma_0}\rangle=\langle\Omega_{\sigma_s}|\mathsf{U}_s\mathsf{O}_I\mathsf{U}_s^*|\Omega_{\sigma_s}\rangle,
\end{equation}
where the last equality uses $\mathsf{U}_s|\Omega_{\sigma_0}\rangle=|\Omega_{\sigma_s}\rangle$.
Thus, for $0\leq\mathsf{O}_J\leq\one$ supported on $J=\Ball_x(\lfloor r/2\rfloor)$, we bound its expectation by
\begin{equation}\label{eq:ht-domination-transfer-overview}
 \begin{alignedat}{2}
 &\langle\psi_s|\mathsf{O}_J|\psi_s\rangle\leq\langle\Omega_{\sigma_s}|\mathsf{U}_s\mathsf{O}_I\mathsf{U}_s^*|\Omega_{\sigma_s}\rangle+\|\mathsf{U}_s^*\mathsf{O}_J\mathsf{U}_s-\mathsf{O}_I\|&\qquad(\text{using \eqref{eq:ht-reference-expectations-overview}})&\\
 &\leq\langle\Omega_{\sigma_s}|\mathsf{O}_J|\Omega_{\sigma_s}\rangle+2\|\mathsf{U}_s^*\mathsf{O}_J\mathsf{U}_s-\mathsf{O}_I\|.
 \end{alignedat}
\end{equation}
It remains to approximate $\mathsf{U}_s^*\mathsf{O}_J\mathsf{U}_s$ by an operator supported on $I$. The unitary flow is quasi-local, and $\dist(J,I^c)\geq r/2$. Thus there is a Hermitian operator $\mathsf{O}_I$ supported on $I$ such that
\[
 \|\mathsf{U}_s^*\mathsf{O}_J\mathsf{U}_s-\mathsf{O}_I\|\leq C_b(1+r)^{-2b}.
\]
Substituting this bound into \eqref{eq:ht-domination-transfer-overview} proves \eqref{eq:local-gap-kernel-assumption}.

\section{From R\'enyi-2 decay to MLSI}\label{sec:renyi-to-mlsi}
We now pass from the R\'enyi-2 decay of Theorem~\ref{prop:general-criterion} to relative-entropy decay and MLSI, completing the proof of Theorem~\ref{thm:main-mlsi}. Section~\ref{sec:transfer} proves Theorem~\ref{thm:transfer}. Section~\ref{sec:independent-copies} obtains its tensor-power hypothesis from R\'enyi-2 decay uniform over a Hamiltonian class closed under independent copies (Corollary~\ref{cor:class-renyi-mlsi}) and checks the required structure for the Gaussian KMS Gibbs sampler. Section~\ref{sec:main-proof} combines these results with Section~\ref{sec:regime-verification} to prove Theorem~\ref{thm:main-mlsi}.

\subsection{Relative-entropy contraction from tensor-power R\'enyi-2 contraction}\label{sec:transfer}
We now prove Theorem~\ref{thm:transfer}. The idea of the proof is that, by the quantum asymptotic equipartition property~\cite[Theorem~14]{DMHB}, $\rho^{\otimes m}$ is close in trace distance to states whose max-relative entropy per copy is asymptotically at most $D(\rho\Vert\sigma)$; since $\widetilde D_2\leq D_{\max}$, the same bound holds for their R\'enyi-2 divergence per copy. Applying the contraction hypothesis to these states, using $D\leq\widetilde D_2$ on the output, and letting $m\to\infty$ turns R\'enyi-2 contraction on $m$ copies into relative-entropy contraction on a single copy.

\begin{proof}[Proof of Theorem~\ref{thm:transfer}]
For any state $\nu$ and faithful state $\tau$, define the max-relative entropy as
\[
D_{\max}(\nu\Vert\tau)
:=\inf\{r\in\mathbb R:\nu\leq e^r\tau\}.
\]
We claim that
\begin{equation}\label{eq:renyi-max-inequality}
\Dtwo(\nu\Vert\tau)\leq D_{\max}(\nu\Vert\tau).
\end{equation}
Indeed, for any $r$ satisfying $\nu\leq e^r\tau$, let
$Z=\Gamma_\tau^{-1/2}(\nu)=\tau^{-1/4}\nu\tau^{-1/4}$, so that $\Dtwo(\nu\Vert\tau)=\log\Tr(Z^2)$. Conjugating $\nu\leq e^r\tau$ by $\tau^{-1/4}$ gives
$0\leq Z\leq e^r\tau^{1/2}$, and hence, since $Z\geq0$,
\[
\Tr(Z^2)
\leq e^r\Tr(Z\tau^{1/2})
=e^r\Tr(\nu)
=e^r.
\]
Taking logarithms and then the infimum over all admissible $r$ yields the desired inequality.

Fix $\varepsilon\in(0,1)$. Applying the quantum asymptotic equipartition property for the smooth max-relative entropy~\cite[Definition~2 and Theorem~14]{DMHB}, we obtain states $\nu_m$ on $m$ copies of the system and a number $\delta(\varepsilon)\geq0$, with $\delta(\varepsilon)\to0$ as $\varepsilon\to0$, such that
\[
 \norm{\nu_m-\rho^{\otimes m}}_1\leq\varepsilon,
 \qquad
 \limsup_{m\to\infty}\frac1m
       D_{\max}(\nu_m\Vert\sigma^{\otimes m})
 \leq D(\rho\Vert\sigma)+\delta(\varepsilon).
\]

Define
$\omega_m=\Phi^{\otimes m}(\nu_m)$ and let
$\overline\omega_m=\frac{1}{m}\sum_{r=1}^m\omega_{m,r}$, where
\[
\omega_{m,r}
:=\Tr_{\{1,\ldots,m\}\setminus\{r\}}(\omega_m),
\qquad r=1,\ldots,m.
\]
Because $\Phi^{\otimes m}(\rho^{\otimes m})=\Phi(\rho)^{\otimes m}$, and both the partial trace and the channel $\Phi^{\otimes m}$ are trace-distance contractive,
\[
\|\omega_{m,r}-\Phi(\rho)\|_1
\le \|\omega_m-\Phi(\rho)^{\otimes m}\|_1=\|\Phi^{\otimes m}(\nu_m)-\Phi^{\otimes m}(\rho^{\otimes m})\|_1\le \|\nu_m-\rho^{\otimes m}\|_1\leq\varepsilon
\]
for every $r$. By the triangle inequality, $\|\overline\omega_m-\Phi(\rho)\|_1\leq\varepsilon$. Furthermore,
\[
 \begin{aligned}
 D(\overline\omega_m\Vert\sigma)
 &\leq\frac1m\sum_{r=1}^mD(\omega_{m,r}\Vert\sigma)
 \leq\frac1mD(\omega_m\Vert\sigma^{\otimes m})\\
 &\leq\frac1m\Dtwo(\omega_m\Vert\sigma^{\otimes m})
 \leq\frac a m\Dtwo(\nu_m\Vert\sigma^{\otimes m})\leq \frac a m D_{\max}(\nu_m\Vert\sigma^{\otimes m}),
 \end{aligned}
\]
where we use convexity of relative entropy in the first inequality, and superadditivity of relative entropy with respect to the product state $\sigma^{\otimes m}$ (a consequence of subadditivity of the von Neumann entropy) in the second inequality. The third inequality uses $D\leq\Dtwo$ from Section~\ref{sec:notation-entropies}. The second-to-last inequality uses the contraction hypothesis on $\Phi^{\otimes m}$. The last inequality uses \eqref{eq:renyi-max-inequality}.

It remains to compare $\overline\omega_m$ with $\Phi(\rho)$. Because $\sigma$ is faithful, $D(\cdot\Vert\sigma)$ is continuous on the compact set of one-copy states, hence uniformly continuous in trace distance: the modulus of continuity
\[
\theta(\varepsilon):=\sup\bigl\{|D(\omega\Vert\sigma)-D(\omega'\Vert\sigma)|:\ \|\omega-\omega'\|_1\leq\varepsilon\bigr\}
\]
satisfies $\theta(\varepsilon)\to0$ as $\varepsilon\to0$. Since $\|\overline\omega_m-\Phi(\rho)\|_1\leq\varepsilon$, the previous display gives
\[
D(\Phi(\rho)\Vert\sigma)
\leq D(\overline\omega_m\Vert\sigma)+\theta(\varepsilon)
\leq\frac a m D_{\max}(\nu_m\Vert\sigma^{\otimes m})+\theta(\varepsilon).
\]
Taking the $\limsup_{m\to\infty}$ and using the bound on $D_{\max}(\nu_m\Vert\sigma^{\otimes m})$ above gives
\[
D(\Phi(\rho)\Vert\sigma)\leq a\bigl(D(\rho\Vert\sigma)+\delta(\varepsilon)\bigr)+\theta(\varepsilon).
\]
Letting $\varepsilon\to0$ proves the claim.
\end{proof}

For a primitive KMS-symmetric generator $\mathcal L$ with invariant state $\sigma$, if the hypothesis of Theorem~\ref{thm:transfer} holds for $\Phi=e^{t\mathcal L_*}$ and $a=e^{-\kappa t}$ for every $t\geq0$, Theorem~\ref{thm:transfer} and Lemma~\ref{lem:mlsi-decay} give
\[
 D(e^{t\mathcal L_*}\rho\Vert\sigma)\leq e^{-\kappa t}D(\rho\Vert\sigma),
 \qquad \alpha_1(\mathcal L)\geq\kappa/2.
\]

\subsection{Hamiltonian classes closed under independent copies}\label{sec:independent-copies}
R\'enyi-2 decay at a size-independent rate also supplies the tensor-power contraction required by Theorem~\ref{thm:transfer}, provided the Hamiltonian class contains independent copies and the sampler on them is the tensor sum of the original sampler. We state this implication for a general class $\mathfrak H$ of Hamiltonians at fixed $\beta$. For each $H\in\mathfrak H$, let $\mathcal L_H$ be a primitive KMS-symmetric Lindblad generator with invariant state $\sigma_H=e^{-\beta H}/\Tr e^{-\beta H}$.
For $m$ independent copies, write $H^{(m)}=\sum_{j=1}^mH^{(j)}$, where $H^{(j)}$ acts as $H$ on copy $j$ and as the identity on the other factors. The structural requirements are
\begin{equation}\label{eq:criterion-copy-structure}
 H^{(m)}\in\mathfrak H,\qquad
 \sigma_{H^{(m)}}=\sigma_H^{\otimes m},\qquad
 \mathcal L_{H^{(m)}}=\sum_{j=1}^m\mathcal L_H^{(j)}.
\end{equation}
Here $\mathcal L_H^{(j)}$ acts only on copy $j$, with the same time normalization as $\mathcal L_H$.

\begin{corollary}[Uniform R\'enyi-2 decay implies MLSI]\label{cor:class-renyi-mlsi}
Suppose \eqref{eq:criterion-copy-structure} holds for every $H\in\mathfrak H$ and $m\geq1$. If a common $\kappa>0$ satisfies
\[
 \widetilde D_2(e^{t\mathcal L_{H,*}}\rho\Vert\sigma_H)
 \leq e^{-\kappa t}\widetilde D_2(\rho\Vert\sigma_H)
 \qquad(H\in\mathfrak H,\ t\geq0,\ \rho\text{ any state}),
\]
then, for every $H\in\mathfrak H$,
\[
 D(e^{t\mathcal L_{H,*}}\rho\Vert\sigma_H)
 \leq e^{-\kappa t}D(\rho\Vert\sigma_H),\qquad
 \alpha_1(\mathcal L_H)\geq\kappa/2.
\]
\end{corollary}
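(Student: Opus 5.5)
The plan is to verify the hypothesis of Theorem~\ref{thm:transfer} for $\Phi=e^{t\mathcal L_{H,*}}$ and $a=e^{-\kappa t}$, for each fixed $t\geq0$, and then invoke Lemma~\ref{lem:mlsi-decay}. Fix $H\in\mathfrak H$, $t\geq0$ and $m\geq1$. By \eqref{eq:criterion-copy-structure}, $H^{(m)}\in\mathfrak H$, so the assumed uniform R\'enyi-2 decay applies to $H^{(m)}$ with the same $\kappa$ and to every state $\nu$ on the $m$-fold system, including states correlated across copies:
\[
 \widetilde D_2(e^{t\mathcal L_{H^{(m)},*}}\nu\Vert\sigma_{H^{(m)}})\leq e^{-\kappa t}\widetilde D_2(\nu\Vert\sigma_{H^{(m)}}).
\]

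Next I identify this evolution with $\Phi^{\otimes m}$. The summands $\mathcal L_H^{(j)}$ act on distinct tensor factors, so they commute. Taking trace adjoints gives $\mathcal L_{H^{(m)},*}=\sum_j\mathcal L_{H,*}^{(j)}$, and hence
\[
 e^{t\mathcal L_{H^{(m)},*}}=\prod_{j=1}^m e^{t\mathcal L_{H,*}^{(j)}}=(e^{t\mathcal L_{H,*}})^{\otimes m}.
\]
Using $\sigma_{H^{(m)}}=\sigma_H^{\otimes m}$, the display above becomes exactly \eqref{eq:transfer-renyi-contraction} with $\Phi=e^{t\mathcal L_{H,*}}$ and $a=e^{-\kappa t}\in[0,1]$. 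Moreover $\Phi$ is a channel fixing the faithful state $\sigma_H$, since $\mathcal L_H$ is primitive with invariant state $\sigma_H$. Theorem~\ref{thm:transfer} then yields $D(e^{t\mathcal L_{H,*}}\rho\Vert\sigma_H)\leq e^{-\kappa t}D(\rho\Vert\sigma_H)$ for every state $\rho$.

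Finally, this holds for all $t\geq0$, in particular for every faithful $\rho$. Since $\mathcal L_H$ is KMS-symmetric, Lemma~\ref{lem:mlsi-decay} shows that decay at rate $2\alpha$ is equivalent to MLSI with constant $\alpha$. Taking $2\alpha=\kappa$ gives $\alpha_1(\mathcal L_H)\geq\kappa/2$.

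There is no real obstacle here; the argument is bookkeeping. The only points needing care are the tensor factorization of the semigroup, which relies on commuting generators acting on different factors, and the fact that the class hypothesis covers arbitrary, possibly correlated, inputs on the $m$ copies. The latter is exactly what Theorem~\ref{thm:transfer} demands, because the approximating states $\nu_m$ produced there need not be product states.
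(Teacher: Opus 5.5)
Your proposal is correct and follows essentially the same route as the paper's proof. Both use closure of the class under copies, together with the tensor-sum generator, to identify $e^{t\mathcal L_{H^{(m)},*}}$ with $(e^{t\mathcal L_{H,*}})^{\otimes m}$. The uniform R\'enyi-2 bound (valid for correlated inputs) then supplies the hypothesis of Theorem~\ref{thm:transfer} with $a=e^{-\kappa t}$, and Lemma~\ref{lem:mlsi-decay} converts the resulting relative-entropy decay into $\alpha_1(\mathcal L_H)\geq\kappa/2$.
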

\begin{proof}
Fix $H$. We need the R\'enyi-2 contraction hypothesis of Theorem~\ref{thm:transfer} on every tensor power of $e^{t\mathcal L_{H,*}}$. The tensor-sum identity in \eqref{eq:criterion-copy-structure} implies
\[
 e^{t\mathcal L_{H^{(m)},*}}
 =(e^{t\mathcal L_{H,*}})^{\otimes m}.
\]
Since $H^{(m)}$ belongs to the same class, the assumed contraction bound applies to every state $\eta$ on its Hilbert space, including states correlated between copies:
\[
 \widetilde D_2\!\left((e^{t\mathcal L_{H,*}})^{\otimes m}(\eta)
                      \Vert\sigma_H^{\otimes m}\right)
 \leq e^{-\kappa t}\widetilde D_2(\eta\Vert\sigma_H^{\otimes m}).
\]
The rate is the same for all $m$. Theorem~\ref{thm:transfer}, with $\Phi=e^{t\mathcal L_{H,*}}$ and $a=e^{-\kappa t}$, therefore gives relative-entropy contraction at rate $\kappa$, and Lemma~\ref{lem:mlsi-decay} gives $\alpha_1(\mathcal L_H)\geq\kappa/2$.
\end{proof}

For the Gaussian KMS Gibbs sampler, a coupling $S^{(j)}$ on copy $j$ evolves within that copy, so its filtered jumps and coherent correction in \eqref{eq:CKG} do as well:
\[
 e^{itH^{(m)}}S^{(j)}e^{-itH^{(m)}}
 =(e^{itH}Se^{-itH})^{(j)}
 \quad\Longrightarrow\quad
 \mathcal L_{H^{(m)}}=\sum_{j=1}^m\mathcal L_H^{(j)}.
\]
The one-dimensional and high-temperature Hamiltonian classes are closed under independent copies with unchanged local parameters, so \eqref{eq:criterion-copy-structure} holds in both settings.

\subsection{Proof of the main theorem}\label{sec:main-proof}
\begin{proof}[Proof of Theorem~\ref{thm:main-mlsi}]
By Corollary~\ref{cor:class-renyi-mlsi} and \eqref{eq:criterion-copy-structure}, it suffices to obtain a uniform R\'enyi-2 contraction rate in each class. The verifications in Section~\ref{sec:regime-verification} give the hypotheses of Theorem~\ref{prop:general-criterion}, with $a=1$ and $p_0=b=10$ in one dimension, and $a=1$ and $p_0=b=2d+2$ at high temperature. That theorem gives the rate
\[
 \kappa=\frac{2}{C_\beta},\qquad
 C_\beta=(C_0+\beta C_S)e^{\beta C_E},\qquad C_0=\frac{2q}{\nu_q}.
\]
In one dimension, $C_E,C_S$ depend only on $\beta,q$, so $\kappa$ has the same dependence. At high temperature, Lemmas~\ref{lem:comparison-families} and~\ref{lem:derivative-reduction} give $C_E,C_S\leq CJ$, with $C$ depending only on $d,R_0,q$. Hence a common rate can be chosen as
\[
 \kappa=\frac{2e^{-C\beta_0}}{C_0+C\beta_0}>0.
\]
Since $\beta C_E,\beta C_S\leq C\beta J\leq C\beta_0$, this rate satisfies $\kappa\leq2/C_\beta$. In both settings, \eqref{eq:criterion-entropy} therefore gives $\alpha_2(\mathcal L_\beta)\geq2/C_\beta\geq\kappa$. Corollary~\ref{cor:class-renyi-mlsi} now gives relative-entropy contraction and MLSI in both settings.
\end{proof}

\appendix
\section{The reference inequality at infinite temperature}\label{app:product}
In Section~\ref{sec:vectorization}~\eqref{eq:product-state-reference}, we claimed that an entropy inequality at the maximally mixed product state holds with a precise constant. We formally state and prove that inequality in the following.

Recall the Hilbert space in~\eqref{eqn:hilbert-space} and the Gaussian KMS Gibbs sampler in~\eqref{eq:CKG}. At $s=0$, defined by the limit in Section~\ref{sec:temperature-path}, the Gaussian KMS Gibbs sampler becomes a sum of independent single-site depolarizing generators, where each site is a $q$-qubit system. The entropy inequality is summarized in the following lemma.
\begin{lemma}[The product-state entropy inequality]\label{lem:product-entropy}
For $\sigma_0=I/2^{q|\Lambda|}$, the vectorization of the Gibbs sampler in~\eqref{eq:CKG} satisfies
\begin{equation}\label{eq:native-product-generator}
 -\mathsf{L}_{\sigma_0}=\nu_q\sum_x\mathsf{Q}_x,\qquad
 \nu_q=\frac{4^qe^{-1/4}}{\sqrt2},\qquad
 \mathsf{Q}_x=\one-\left|I_{2^q}/2^{q/2}\right\rangle\left\langle I_{2^q}/2^{q/2}\right|_x.
\end{equation}
For every Hermitian $Y$ with $\|Y\|_2=1$,
\begin{equation}\label{eq:product-entropy-inequality}
 D(Y^2\Vert\sigma_0)\leq\frac{2q}{\nu_q}\langle Y|(-\mathsf{L}_{\sigma_0})|Y\rangle.
\end{equation}
\end{lemma}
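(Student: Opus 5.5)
The proof has two parts. First we identify $-\mathsf L_{\sigma_0}$ explicitly. Then we prove \eqref{eq:product-entropy-inequality}: quasi-factorization reduces it to a single-site inequality, and that inequality is the $L^2$ LSI of one completely depolarizing generator. That single-site inequality is the step I expect to be the main obstacle.

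\textbf{The generator.} By the limit formula in Section~\ref{sec:temperature-path}, $\mathcal L_{0,*}(\rho)=c\sum_{x,u}(S_{x,u}\rho S_{x,u}-\rho)$ with $c=e^{-1/4}/\sqrt2$. Since $\sigma_0\propto I$, $\Gamma_{\sigma_0}^{\pm1/2}$ is a scalar multiple of the identity. Hence $\mathcal L_{\sigma_0}=\mathcal L_0=\mathcal L_{0,*}$, where the last equality holds because every $S_{x,u}$ is Hermitian.

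We then use the Pauli twirl identity on one site of dimension $D=2^q$. Summing $P\rho P$ over all $4^q$ Paulis, including the identity, gives $2^q\,I_x\otimes\Tr_x\rho=4^q\,\mathcal E_x(\rho)$, where $\mathcal E_x(\rho)=\tfrac{I_x}{2^q}\otimes\Tr_x\rho$. Therefore
\[
 \sum_u(S_{x,u}\rho S_{x,u}-\rho)=4^q(\mathcal E_x-\id)(\rho).
\]
In the product vectorization basis, $\mathcal E_x$ is the rank-one projection onto $|I_{2^q}/2^{q/2}\rangle$ on the doubled site $x$, tensored with the identity elsewhere. It is self-adjoint and idempotent, and it maps $I/2^q$ to itself. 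This gives \eqref{eq:native-product-generator} with $\nu_q=4^qc$ and $\langle Y|\mathsf Q_x|Y\rangle=\|Y-\mathcal E_x(Y)\|_2^2$.

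\textbf{Reduction to one site.} Set $\rho=Y^2$, which is a state. Writing $S$ for von Neumann entropy, we have $D(\rho\Vert\sigma_0)=|\Lambda|q\log2-S(\rho)$. A direct computation gives
\[
 D(\rho\Vert\mathcal E_x(\rho))=q\log2-S(x\,|\,x^c)_\rho .
\]
Order the sites $x_1,\dots,x_n$. The chain rule writes $S(\rho)$ as $\sum_k S(x_k\,|\,x_1\cdots x_{k-1})$. Strong subadditivity gives $S(x_k\,|\,x_1\cdots x_{k-1})\ge S(x_k\,|\,x_k^c)$ for each $k$. Summing yields the quasi-factorization
\[
 D(\rho\Vert\sigma_0)\le\sum_x D\bigl(Y^2\Vert\mathcal E_x(Y^2)\bigr).
\]
It therefore suffices to prove the conditional single-site $L^2$ LSI
\[
 D\bigl(Y^2\Vert\mathcal E_x(Y^2)\bigr)\le 2q\,\|Y-\mathcal E_x(Y)\|_2^2 .
\]
Summing over $x$ then gives exactly $\tfrac{2q}{\nu_q}\langle Y|(-\mathsf L_{\sigma_0})|Y\rangle$.

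\textbf{Single-site inequality (main obstacle).} Write $Y=A+B$, where $A=\mathcal E_x(Y)=I_x\otimes Z$ is Hermitian and $B=Y-A$ is $\mathcal E_x$-centred. The claim amounts to two facts. It vanishes when $B=0$, since then $Y^2=I_x\otimes Z^2=\mathcal E_x(Y^2)$. And it is bounded linearly by $2q\|B\|_2^2=\log(D^2)\,\|B\|_2^2/\log 2$, which matches the trivial a priori bound $D(\rho\Vert\mathcal E_x\rho)\le\log D^2=2q\log 2$.

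I would first try to deduce it from the known $L^2$ LSI for completely depolarizing channels in conditional (amalgamated) form. The tool is King's / Kastoryano--Temme hypercontractivity for unital depolarizing semigroups, which holds conditionally because $\mathcal E_x$ is a conditional expectation onto $I_x\otimes\mathcal B(\mathcal H_{x^c})$. Its unconditional constant is of order $\log(D^2)/(1-D^{-2})$, so it gives the required $2q$ up to harmless slack.

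Failing that, I would argue directly. Use $\log\mathcal E_x(Y^2)\ge\log(A^2)$-type operator inequalities on the range. Apply the operator concavity of $\log$ together with $\log t\le t-1$ to bound $\Tr[Y^2(\log Y^2-\log\mathcal E_x Y^2)]$. This should produce a bound by $\Tr(B^2)$ times a logarithmic factor of at most $\log D^2$, using $\Tr_x(B)=0$ to kill the first-order cross term $\Tr(AB)$.
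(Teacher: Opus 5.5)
Your computation of $-\mathsf L_{\sigma_0}$ is correct; the Pauli twirl is equivalent to the double-commutator identity the paper uses. Your quasi-factorization $D(\rho\Vert\sigma_0)\le\sum_x D(\rho\Vert\mathcal E_x(\rho))$, from the chain rule and strong subadditivity, is also valid. The gap is the step you flag as the main obstacle: the conditional single-site inequality $D(Y^2\Vert\mathcal E_x(Y^2))\le 2q\|Y-\mathcal E_x(Y)\|_2^2$ is false, and no constant can replace $2q$. Take $q=1$, two sites $x,y$, $P=|0\rangle\langle1|+|1\rangle\langle0|$, and
\[
 Y=I_x\otimes z|0\rangle\langle0|_y+\epsilon\,P_x\otimes P_y,\qquad 2z^2+4\epsilon^2=1 .
\]
Then $\mathcal E_x(Y)=I_x\otimes z|0\rangle\langle0|_y$ and $\|Y-\mathcal E_x(Y)\|_2^2=4\epsilon^2$. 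On the other hand, $Y$ is a direct sum of two copies of $\begin{pmatrix}z&\epsilon\\ \epsilon&0\end{pmatrix}$, on $\{|00\rangle,|11\rangle\}$ and on $\{|10\rangle,|01\rangle\}$. In each block, $\mathcal E_x(Y^2)=I_x\otimes\operatorname{diag}(z^2+\epsilon^2,\epsilon^2)$ is exactly the diagonal of $Y^2$. The eigenvalues of $Y^2$ in each block are $z^2+2\epsilon^2+O(\epsilon^4)$ and $\epsilon^4/z^2+O(\epsilon^6)$, which gives
\[
 D\bigl(Y^2\Vert\mathcal E_x(Y^2)\bigr)=2\epsilon^2\bigl(1+\log(z^2/\epsilon^2)\bigr)+o(\epsilon^2).
\]
The ratio to $\|Y-\mathcal E_x(Y)\|_2^2$ therefore grows like $\log(1/\epsilon)$; at $\epsilon=0.01$ it is already about $4.8>2q$.

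The mechanism is general. An $\mathcal E_x$-centred perturbation $B$ of $I_x\otimes Z$ that couples eigenvectors of $Z$ with eigenvalues $z_k\gg z_l$ costs conditional entropy of order $\|B\|_2^2\log(z_k/z_l)$, not $O(\|B\|_2^2)$. So the $L^2$ LSI of the depolarizing generator does not hold in the amalgamated form you need. Neither the appeal to King or Kastoryano--Temme ``conditionally'' nor the operator-concavity fallback can succeed. The lemma itself survives this example only through the other site's term, $\|Y-\mathcal E_y(Y)\|_2^2=z^2+4\epsilon^2$, which is exactly what your per-site reduction throws away.

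The paper never passes through site-conditional entropies. For $Y>0$ it sets $X=\sqrt{d_\Lambda}\,Y$ and applies King's hypercontractivity for the product of all $q|\Lambda|$ single-qubit depolarizing semigroups:
\[
 \|e^{-t\sum_i\mathcal Q_i}X\|_{p(t),\operatorname{tr}_0}\le\|X\|_{2,\operatorname{tr}_0},\qquad p(t)=1+e^{2t}.
\]
Tensorization is already built into King's theorem, so differentiating at $t=0$ gives the global bound $\tfrac12 D(Y^2\Vert\sigma_0)\le\sum_i\langle Y|\mathsf Q_i|Y\rangle$ at once. The qubit projections are then compared with the site projections through $\sum_{i\text{ at }x}\mathsf Q_i\le q\mathsf Q_x$. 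A general Hermitian $Y$ is handled by applying this to $(|Y|+\varepsilon I)/\||Y|+\varepsilon I\|_2$, using $\|[S,|Y|]\|_2\le\|[S,Y]\|_2$, and letting $\varepsilon\downarrow0$. To repair your argument, you need a genuinely global (tensorized) inequality of this kind in place of the per-site step.
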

\begin{proof}[Proof of Lemma~\ref{lem:product-entropy}]

We first derive~\eqref{eq:native-product-generator}. For $0<s\leq\beta$, substituting $\omega=z/s$ in \eqref{eq:CKG} and \eqref{eq:CKG-correction} and using \eqref{eq:native-rescaled-jump} gives
\begin{equation}\label{eq:CKG-rescaled}
 \begin{aligned}
 \mathcal L_{s,*}(\rho)
 &=-i[H_s^{\mathrm{coh}},\rho]+\sum_{x,u}\int_\R e^{-(z+1)^2/2}
   \Bigl(\widehat J^s_{x,u}(z)\rho\widehat J^s_{x,u}(z)^*
   -\tfrac12\bigl\{\widehat J^s_{x,u}(z)^*\widehat J^s_{x,u}(z),\rho\bigr\}\Bigr)dz,\\
 Q_s&=\sum_{x,u}\int_\R e^{-(z+1)^2/2}\,\widehat J^s_{x,u}(z)^*\widehat J^s_{x,u}(z)\,dz.
 \end{aligned}
\end{equation}
The integral in \eqref{eq:native-rescaled-jump} extends to $s=0$, and the coherent correction vanishes there because $\tanh(0)=0$. Thus \eqref{eq:CKG-rescaled} at $s=0$ gives the limiting generator $\mathcal L_{0,*}$.

The Pauli identity
\[
 \sum_{u=1}^{4^q-1}[S_{x,u},[S_{x,u},Y]]
 =2\cdot4^q\bigl(Y-I_x\otimes\Tr_xY/2^q\bigr)
\]
identifies the site depolarizing generator. Setting $s=0$ in \eqref{eq:native-rescaled-jump} gives
\[
 \widehat J^0_{x,u}(z)=(2\pi)^{-1/4}e^{-z^2/4}S_{x,u},\qquad
 \int_\R\frac{e^{-z^2/2-(z+1)^2/2}}{\sqrt{2\pi}}\,dz
 =\frac{e^{-1/4}}{\sqrt2}.
\]
Since $H_0^{\mathrm{coh}}=0$, \eqref{eq:CKG-rescaled} at $s=0$ gives $\mathcal L_0(X)=(e^{-1/4}/\sqrt2)\sum_{x,u}(S_{x,u}XS_{x,u}-X)$, proving \eqref{eq:native-product-generator}.

Next, we prove~\eqref{eq:product-entropy-inequality}. We first consider the case $Y>0$, so that the matrix $X$ below is positive definite, as required in~\eqref{eqn:l2-entropy}. Let $d_\Lambda=2^{q|\Lambda|}=\dim\mathcal H$, so that $\sigma_0=I/d_\Lambda$, and set $X=\sqrt{d_\Lambda}\,Y$. The weighting map satisfies $\Gamma_{\sigma_0}^{1/2}(X)=d_\Lambda^{-1/2}X=Y$, so $\|X\|_{2,\sigma_0}=\|Y\|_2=1$ and $\eta_X=Y^2$. By~\eqref{eqn:l2-entropy},
\[
 2\Ent_{2,\sigma_0}(X)
 =D(Y^2\Vert\sigma_0).
\]
Let $\operatorname{tr}_0=\Tr/d_\Lambda$ be the normalized trace, and let $\mathcal Q_i(X)=X-I_i\otimes\Tr_i(X)/2$ remove the identity component at the $i$th individual qubit, with vectorized representative $\mathsf{Q}_i$, defined by $\mathsf{Q}_i|X\rangle=|\mathcal Q_i(X)\rangle$. Product qubit hypercontractivity~\cite{King} states
\begin{equation}\label{eq:product-hypercontractivity}
 \|e^{-t\sum_i\mathcal Q_i}X\|_{p(t),\operatorname{tr}_0}
 \leq\|X\|_{2,\operatorname{tr}_0},\qquad p(t)=1+e^{2t}\,.
\end{equation}
Here, $\|X\|_{p,\operatorname{tr}_0}=\bigl(\operatorname{tr}_0(|X|^p)\bigr)^{1/p}$ is the normalized Schatten $p$-norm. Since the Gibbs state at $s=0$ is $\sigma_0=I/d_\Lambda$, we have $\operatorname{tr}_0(A)=\Tr(\sigma_0 A)$. In particular, the normalized Hilbert--Schmidt norm agrees with the KMS norm defined in~\eqref{eqn:KMS_norm}:
\[
 \|X\|_{2,\operatorname{tr}_0}
 =\|\sigma_0^{1/4}X\sigma_0^{1/4}\|_2
 =\|X\|_{2,\sigma_0}.
\]
Set $\mathcal A=\sum_i\mathcal Q_i$. At $\sigma_0=I/d_\Lambda$, the $L^2$ relative entropy takes the form
\[
 2\Ent_{2,\sigma_0}(X)
 =\operatorname{tr}_0(X^2\log X^2)
 -\operatorname{tr}_0(X^2)\log\operatorname{tr}_0(X^2).
\]
For $X>0$, the hypercontractive bound~\eqref{eq:product-hypercontractivity} is an equality at $t=0$, so the right derivative of its logarithm is nonpositive there. Using $p(0)=2$ and $p'(0)=2$ gives
\[
 \left.\frac{d}{dt}\log\|e^{-t\mathcal A}X\|_{p(t),\operatorname{tr}_0}\right|_{t=0^+}
 =\frac{\Ent_{2,\sigma_0}(X)
 -\operatorname{tr}_0\bigl(X\mathcal A(X)\bigr)}
 {\operatorname{tr}_0(X^2)}\leq0.
\]
Consequently,
\begin{equation}\label{eq:qubit-l2-entropy}
 \frac{1}{2}D(Y^2\Vert\sigma_0)=\Ent_{2,\sigma_0}(X)
 \leq\sum_i\operatorname{tr}_0\bigl(X\mathcal Q_i(X)\bigr).
\end{equation}
The right-hand side of~\eqref{eq:qubit-l2-entropy} can be rewritten:
\[
 \operatorname{tr}_0\bigl(X\mathcal Q_i(X)\bigr)
 =\frac1{d_\Lambda}\Tr\bigl[(\sqrt{d_\Lambda}\,Y)\mathcal Q_i(\sqrt{d_\Lambda}\,Y)\bigr]
 =\Tr\bigl[Y\mathcal Q_i(Y)\bigr]
 =\langle Y|\mathsf{Q}_i|Y\rangle.
\]
Because $\sum_{i\text{ at }x}\mathsf{Q}_i\leq q\mathsf{Q}_x$, we have
\[
 D(Y^2\Vert\sigma_0)
 \leq2\sum_i\langle Y|\mathsf{Q}_i|Y\rangle
 \leq2q\sum_x\langle Y|\mathsf{Q}_x|Y\rangle.
\]
This proves~\eqref{eq:product-entropy-inequality} for $Y>0$.

Finally, we extend the inequality to all Hermitian $Y$ with $\|Y\|_2=1$. For Hermitian $Y$ with eigenvalues $y_i$, each matrix $S$ obeys
\[
 \|[S,|Y|]\|_2^2=\sum_{i,j}|S_{ij}|^2(|y_i|-|y_j|)^2
 \leq\sum_{i,j}|S_{ij}|^2(y_i-y_j)^2=\|[S,Y]\|_2^2.
\]
By the Pauli identity above and~\eqref{eq:native-product-generator}, $\langle Y|(-\mathsf L_{\sigma_0})|Y\rangle=\frac{\nu_q}{2\cdot4^q}\sum_{x,u}\|[S_{x,u},Y]\|_2^2$. Let $Y$ be Hermitian with $\|Y\|_2=1$. For $\varepsilon>0$, apply the case already proved to the positive definite matrix $Y_\varepsilon=(|Y|+\varepsilon I)/\||Y|+\varepsilon I\|_2$. Since $[S_{x,u},|Y|+\varepsilon I]=[S_{x,u},|Y|]$, the commutator bound above gives
\[
 D(Y_\varepsilon^2\Vert\sigma_0)
 \leq\frac{2q}{\nu_q}\,\frac{\langle Y|(-\mathsf L_{\sigma_0})|Y\rangle}{\||Y|+\varepsilon I\|_2^2}.
\]
Letting $\varepsilon\downarrow0$ and using $|Y|^2=Y^2$, $\||Y|\|_2=1$, and continuity of $D(\cdot\Vert\sigma_0)$ for faithful $\sigma_0$ proves~\eqref{eq:product-entropy-inequality} for $Y$.
\end{proof}

\section{Quasi-locality of the unitary flow}\label{app:transport-details}

In this section, we first establish Proposition~\ref{prop:regional-mixing}. We then turn to the proof of Proposition~\ref{prop:flow-locality}. Both proofs use the spatial truncation bound in Lemma~\ref{lem:physical-localization}.

\subsection{From regional rapid mixing to local rapid mixing}\label{app:regional-mixing}
We first compare the full evolution with the sampler constructed from a regional Hamiltonian. This estimate will also be used to localize the Poisson inverse in Proposition~\ref{prop:profile}.

\begin{lemma}[Spatial truncation of the Gaussian KMS Gibbs sampler]\label{lem:physical-localization}
For the Gaussian KMS Gibbs sampler \eqref{eq:CKG} and interactions satisfying \eqref{eq:growth}, let $h$ be supported in $\Ball_x(R_0)$. For $R\geq R_0$, let $\mathcal L_s^{[R]}$ be the sampler constructed from $H_{\Ball_x(R)}$. There are constants $C,v,\mu>0$, independent of system size, such that
\begin{equation}\label{eq:physical-native}
 \|(e^{t\mathcal L_s}-e^{t\mathcal L_s^{[R]}})h\|_\infty\leq Ce^{vt-\mu R}\|h\|_\infty
 \qquad(t\geq0,\ 0\leq s\leq\beta).
\end{equation}
\end{lemma}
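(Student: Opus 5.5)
We plan the standard Duhamel comparison of the two semigroups combined with quasi-locality of the generator difference. Write
\[
 (e^{t\mathcal L_s}-e^{t\mathcal L_s^{[R]}})h=\int_0^t e^{(t-\tau)\mathcal L_s}\bigl(\mathcal L_s-\mathcal L_s^{[R]}\bigr)e^{\tau\mathcal L_s^{[R]}}h\,d\tau .
\]
Since $e^{(t-\tau)\mathcal L_s}$ is a unital completely positive map, it is a contraction in operator norm. It therefore suffices to bound $\|(\mathcal L_s-\mathcal L_s^{[R]})(e^{\tau\mathcal L_s^{[R]}}h)\|_\infty$ by $Ce^{v\tau-\mu R}\|h\|_\infty$; integrating in $\tau$ then gives \eqref{eq:physical-native}, with the factor $t$ absorbed into $e^{vt}$ after enlarging $v$.

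\emph{Step 1: Lieb--Robinson for the regional sampler.} Write $\mathcal L_s^{[R]}=\sum_{y,u}\mathcal L^{[R]}_{s,y,u}$, summing over the on-site jump terms together with their share of the coherent part. Each $\mathcal L^{[R]}_{s,y,u}$ is quasi-local around $y$ with exponential (in fact Gaussian-times-Lieb--Robinson) tails, uniformly in $s\in[0,\beta]$. This follows from the Gaussian filter in \eqref{eq:native-rescaled-jump}, whose time integral involves $e^{-r^2}$ against Heisenberg evolutions $e^{i\beta rH}S_{y,u}e^{-i\beta rH}$; those evolutions are localized by the Hamiltonian Lieb--Robinson bound. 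The same holds for the $\tanh(\tfrac s4\ad_H)$ correction, which has an analytic kernel with exponential decay in Fourier time. The Lindbladian Lieb--Robinson bound (e.g.~\cite{NSY2019}-type) then gives the following: $e^{\tau\mathcal L_s^{[R]}}h$ is approximated by an operator supported in $\Ball_x(R_0+v'\tau+\ell)$, with error $Ce^{v'\tau-\mu'\ell}\|h\|$.

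\emph{Step 2: locality of the generator difference.} The difference $\mathcal L_s-\mathcal L_s^{[R]}$ comes from two sources. The first is jump sites outside $\Ball_x(R)$, which appear only in $\mathcal L_s$. The second is sites inside the ball whose filtered jumps and coherent terms see $H$ instead of $H_{\Ball_x(R)}$. For a site $y$ at distance $\ell$ from the boundary $\partial\Ball_x(R)$, the replacement $H\to H_{\Ball_x(R)}$ changes $e^{i\beta rH}S_{y,u}e^{-i\beta rH}$ by at most $Ce^{c|r|-\mu\ell}$ (Lieb--Robinson plus Duhamel in $r$). The Gaussian weight $e^{-r^2}$ absorbs $e^{c|r|}$, so that term difference is $O(e^{-\mu\ell})$ in $2\to2$ and $\infty\to\infty$ norm. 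The $\infty\to\infty$ norm is the one needed here; it follows from the Kraus/commutator form with operator-norm bounded jumps. Every term of the difference is therefore either of size $O(e^{-\mu\,\dist(y,\partial\Ball_x(R))})$, or supported near $y\notin\Ball_x(R)$. In the latter case it annihilates any operator supported away from $y$, up to its own quasi-local tail.

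\emph{Step 3: combine.} We apply the difference to the truncated evolved operator $h_\tau$, supported in $\Ball_x(R_0+v'\tau+\ell)$. We choose $\ell$ comparable to $R$ and sum the contributions over $y$. Terms with $y$ far outside the support give $e^{-\mu\dist(y,\supp h_\tau)}$. Terms inside the support give $e^{-\mu(R-\dist(x,y))}$. Summing over the polynomially many sites per shell yields $Ce^{v\tau-\mu R}$ for suitable constants. The truncation error from Step 1 is handled by $\|\mathcal L_s-\mathcal L_s^{[R]}\|_{\infty\to\infty}\lesssim$ (boundary volume), which is polynomial in $R$ and hence also absorbed.

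The main obstacle is Step 1 together with the $\infty\to\infty$ control in Step 2 for the coherent part. The operator $\tanh(\tfrac s4\ad_H)Q_s$ must be shown quasi-local with exponential tails uniformly in $s$. For this we would write $\tanh(\tfrac s4\ad_H)$ as a Fourier integral over real-time Heisenberg evolutions whose kernel decays exponentially. That kernel is integrable, as $\tanh$ is analytic in a strip. We then apply the Hamiltonian Lieb--Robinson bound to it, as for the jumps.
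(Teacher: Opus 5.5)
Your scheme matches the paper's in outline: a Duhamel formula against a localized evolution, a Lindbladian Lieb--Robinson bound for that evolution, and localization of the generator difference near the boundary of $B=\Ball_x(R)$. The bookkeeping differs, though. The paper does not compare with $\mathcal L_s^{[R]}$ directly. It interposes the cut sampler $\mathcal L_s^{\rm cut}$, built from $H_B+H_{B^c}$ with jumps on all sites. This sampler is a tensor sum, so it agrees with $\mathcal L_s^{[R]}$ on the orbit of $h$; see \eqref{eq:cut-regional-evolution}. Consequently $\mathcal L_s-\mathcal L_s^{\rm cut}$ consists only of the shell pieces $\mathcal D_{s,y,r}[H]-\mathcal D_{s,y,r}[H^{\rm cut}]$ whose balls leave $B$, and jump sites outside $B$ need no separate treatment. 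Both the propagation bound and the difference estimate then run off the single exponentially decaying decomposition \eqref{eq:physical-generator-shells} imported from \cite{RFA}, which already includes the coherent term.

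The paper's propagation bound is a Gr\"onwall argument on $\|\delta_zA(u)\|_\infty$ that keeps the whole site Lindbladians $\sum_{y\neq z}\mathcal L_{s,y}$ as the contractive part. This matters because the strictly local shell pieces are not Lindbladians. Since \cite{NSY2019} is a Hamiltonian result, your Step~1 needs this dissipative argument rather than a citation.

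Your direct comparison with $\mathcal L_s^{[R]}$ also works, because $e^{\tau\mathcal L_s^{[R]}}h$ is exactly supported in $B$. In Step~3, take the truncation as a partial trace of $e^{\tau\mathcal L_s^{[R]}}h$, so that the truncation error is itself supported in $B$. This is needed because the jump terms at $y\notin B$ have total $\infty\to\infty$ norm of order $|\Lambda\setminus B|$; they cost only a boundary-volume factor when applied to operators supported in $B$.

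One claim in your treatment of the coherent term is false: the Fourier kernel of $\tanh$ is not integrable. Since $\tanh\to\pm1$ at infinity, its kernel is a principal-value $1/\sinh(\pi t)$ with a $1/t$ singularity at the origin. Analyticity in a strip only gives the exponential decay at large $|t|$.

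Repair it as in the proof of Lemma~\ref{lem:potential-modular}:
\begin{enumerate}
\item Write $\tanh(\tfrac s4\ad_H)=-\tanh(\ad_{K_s}/2)$.
\item Use the symmetrized formula $\tanh(\ad_{K_s}/2)X=-i\int_0^\infty(\tau_t(X)-\tau_{-t}(X))\,dt/\sinh(\pi t)$.
\item For $|t|\leq1$, bound $\|\tau_t(X)-\tau_{-t}(X)\|_\infty\leq s|t|\,\|[H,X]\|_\infty$.
\end{enumerate}
Taking $X$ to be the site-$y$ share of $Q_s$, this gives boundedness, quasi-locality, and stability under $H\to H_{\Ball_x(R)}$ of the coherent part. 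Alternatively, you can use the representation of the coherent term with integrable Heisenberg-time weights from \cite{CKG}.
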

\begin{proof}
We prove \eqref{eq:physical-native} by cutting the interactions across the boundary of the ball. We estimate the change of the generator along the cut evolution, then integrate this estimate using Duhamel's formula.

Write $B=\Ball_x(R)$ and $B^c=\Lambda\setminus B$. We cut the Hamiltonian by deleting the terms whose supports meet both regions:
\begin{equation}\label{eq:cut-hamiltonian}
 V_{\partial B}:=\sum_{\substack{Z\cap B\ne\varnothing\\Z\cap B^c\ne\varnothing}}h_Z,
 \qquad H^{\rm cut}:=H-V_{\partial B}=H_B+H_{B^c}.
\end{equation}
Define the cut sampler $\mathcal L_s^{\rm cut}$ from $H^{\rm cut}$ in \eqref{eq:cut-hamiltonian}. Because the two Hamiltonians $H_B$ and $H_{B^c}$ act on separate tensor factors, we have
$
 \mathcal L_s^{\rm cut}
 =\mathcal L_s^B\otimes\operatorname{id}_{B^c}
  +\operatorname{id}_B\otimes\mathcal L_s^{B^c}.
$
Since $h$ is supported in $B$ and $\mathcal L_s^{B^c}(I_{B^c})=0$, we have:
\begin{equation}\label{eq:cut-regional-evolution}
 e^{u\mathcal L_s^{\rm cut}}h=e^{u\mathcal L_s^{[R]}}h,
 \qquad u\geq0.
\end{equation}
Thus, it suffices to prove \eqref{eq:physical-native} with $\mathcal L_s^{[R]}$ replaced by $\mathcal L_s^{\rm cut}$.

We first justify the comparison between the full and cut generators. Let $\mathcal L_{s,y}[F]$ denote the contribution centered at $y$ to the sampler constructed from a Hamiltonian $F$, and let $\mathcal L_{s,y}^{[r]}[F]$ be that contribution constructed instead from $F_{\Ball_y(r)}$. Set
\[
 \mathcal D_{s,y,r}[F]
 :=\mathcal L_{s,y}^{[r]}[F]-\mathcal L_{s,y}^{[r-1]}[F],
 \qquad \mathcal L_{s,y}^{[-1]}[F]:=0.
\]
According to~\cite[Eq.~(28) and Appendix~C.1, Eqs.~(54), (62), and (68)--(70)]{RFA}, the sampler can be decomposed into local superoperators supported on balls of increasing radius, with exponentially decaying norms:
\begin{equation}\label{eq:physical-generator-shells}
 \begin{gathered}
 \mathcal L_s[F]=\sum_{y\in\Lambda}\sum_{r\geq0}\mathcal D_{s,y,r}[F],
 \qquad \supp\mathcal D_{s,y,r}[F]\subseteq\Ball_y(r),\\
 \mathcal D_{s,y,r}[F](I)=0,\qquad
 \|\mathcal D_{s,y,r}[F]\|_{\infty\to\infty}
 \leq b_r:=C_{\mathcal D}e^{-\xi r}.
 \end{gathered}
\end{equation}
This holds uniformly for $F=H$ and $F=H^{\rm cut}$ and $0\leq s\leq\beta$.

Next, we bound the spread of the cut evolution, so that the effect of the deleted interactions can be estimated. Define
\[
 \delta_z:=\operatorname{id}-\frac{I_z}{2^q}\otimes \mathrm{Tr}_z[\cdot],\qquad
 A(u):=e^{u\mathcal L_s^{\rm cut}}h,\qquad
 d_z(u):=\|\delta_z A(u)\|_\infty.
\]
A map $\Phi$ supported on $U$ with $\Phi(I)=0$ satisfies
\begin{equation}\label{eq:local-map-oscillation}
 \|\Phi(A)\|_\infty
 \leq\|\Phi\|_{\infty\to\infty}
       \sum_{w\in U}\|\delta_w A\|_\infty.
\end{equation}
The contractive-semigroup argument in~\cite[Appendix~B.1, Eqs.~(21)--(24)]{RFA}, with the full contribution $\mathcal L_{s,z}[H^{\rm cut}]$ kept as the source term, gives the following integral inequality. Indeed, $\sum_{y\ne z}\mathcal L_{s,y}[H^{\rm cut}]$ is a Lindblad generator, each local term commutes with $\delta_z$ unless its support contains $z$, and $\|\delta_z\|_{\infty\to\infty}\leq2$. Applying \eqref{eq:local-map-oscillation} to the terms in \eqref{eq:physical-generator-shells} yields
\[
 d_z(u)\leq d_z(0)+\int_0^u\sum_w M_{zw}d_w(v)\,dv,
 \qquad
 M_{zw}:=4\sum_{\substack{y\in\Lambda,\ r\geq0\\z,w\in\Ball_y(r)}}b_r.
\]
Here the sum runs over $y\in\Lambda$, integer $r\geq0$, and $b_r$ is defined in \eqref{eq:physical-generator-shells}.

For $0<\eta<\xi/2$, define
\[
 v_\eta:=\sup_z\sum_w e^{\eta\dist(z,w)}M_{zw}
 \leq C\sum_{r\geq0}(1+r)^{2d}e^{-(\xi-2\eta)r}<\infty.
\]
We note that
\[
(M^n)_{zw}\leq v_\eta^n e^{-\eta\dist(z,w)}\,.
\]
Iteration then gives $d(u)\leq e^{uM}d(0)$ and $(e^{uM})_{zw}\leq e^{v_\eta u-\eta\dist(z,w)}$. Because $d_w(0)=0$ outside $\supp h\subseteq\Ball_x(R_0)$, $
\dist(z,w)\geq\dist(z,x)-R_0$, and $\|\delta_z\|_{\infty\to\infty}\leq2$, we obtain
\begin{equation}\label{eq:cut-observable-propagation}
 \|\delta_z e^{u\mathcal L_s^{\rm cut}}h\|_\infty
 \leq C e^{v_\eta u-\eta\dist(z,x)}\|h\|_\infty.
\end{equation}

It remains to locate the terms changed by cutting. Define
\[
 \mathcal T_{y,r}:=\mathcal D_{s,y,r}[H]
                         -\mathcal D_{s,y,r}[H^{\rm cut}].
\]
These maps have support in $\Ball_y(r)$, annihilate $I$, and satisfy $\|\mathcal T_{y,r}\|_{\infty\to\infty}\leq2b_r$. Moreover, $\mathcal T_{y,r}=0$ whenever $\Ball_y(r)\subseteq B$, since the two Hamiltonians have identical interactions inside $B$. Thus $\mathcal T_{y,r}\ne0$ requires $\dist(x,y)+r>R$.

To sum the changed terms, apply \eqref{eq:local-map-oscillation} and \eqref{eq:cut-observable-propagation}. Since $\dist(x,z)\geq\dist(x,y)-r$ for $z\in\Ball_y(r)$ and $|\Ball_y(r)|\leq C(1+r)^d$, we obtain
\[
 \|(\mathcal L_s-\mathcal L_s^{\rm cut})A(u)\|_\infty
 \leq C e^{v_\eta u}\|h\|_\infty
   \sum_{\substack{y\in\Lambda,\ r\geq0\\\dist(x,y)+r>R}}
        (1+r)^d e^{-(\xi-\eta)r-\eta\dist(x,y)}.
\]
Grouping sites by $\rho=\dist(x,y)$ bounds this sum by the tail, over $r+\rho>R$, of the convolution of $(1+r)^d e^{-(\xi-\eta)r}$ and $C(1+\rho)^{d-1}e^{-\eta\rho}$. The scalar exponential-convolution estimate in~\cite[proof of Lemma~5.5, the exponential case following Eq.~(25)]{CLMPG} bounds this tail by $Ce^{-\mu R}$ for $0<\mu<\eta<\xi/2$. Hence
\[
 \|(\mathcal L_s-\mathcal L_s^{\rm cut})A(u)\|_\infty
 \leq C e^{v_\eta u-\mu R}\|h\|_\infty.
\]
Using \eqref{eq:cut-regional-evolution}, Duhamel's formula and operator-norm contractivity yield
\[
 \begin{aligned}
 \|(e^{t\mathcal L_s}-e^{t\mathcal L_s^{[R]}})h\|_\infty=\left\|\int_0^t e^{(t-u)\mathcal L_s}
       (\mathcal L_s-\mathcal L_s^{\rm cut})
       e^{u\mathcal L_s^{\rm cut}}h\,du\right\|_\infty\leq C e^{(v_\eta+1)t-\mu R}\|h\|_\infty.
 \end{aligned}
\]
This proves \eqref{eq:physical-native}.
\end{proof}

\begin{proof}[Proof of Proposition~\ref{prop:regional-mixing}]
Fix $\nu>0$, $s\in[0,\beta]$, and an interaction term $h=h_Z$ supported in $\Ball_x(R_0)$. Our goal is the bound
\[
 \|e^{t\mathcal L_s}(h-\Tr(\sigma_sh)I)\|_\infty
 \leq C_\nu(1+t)^{-\nu}\|h\|_\infty,\qquad t\geq0,
\]
with $C_\nu$ independent of $s$, $Z$, and the volume. Set $\tau(R)=\max\{1,C_{\mathrm{reg}}\}(1+R)^a$, so the hypothesis gives $t_{\rm mix}^{[R],s}(1/4)\leq\tau(R)$. Lemma~\ref{lem:physical-localization} supplies the spatial truncation bound. We combine it with regional rapid mixing on a ball whose radius grows linearly with $t$, and then convert the resulting stretched-exponential decay to the required polynomial bound.

\proofstep{Step 1: regional rapid mixing for all $t$}
Fix $R\geq R_0$. For the evolution inside the ball, duality gives
\[
 \begin{aligned}
 \|e^{t\mathcal L_s^{[R]}}h-\Tr(\sigma_s^{[R]}h)I\|_\infty
 &=\sup_\rho\left|\Tr\!\left[\bigl(e^{t\mathcal L_{s,*}^{[R]}}\rho-\sigma_s^{[R]}\bigr)h\right]\right|\\
 &\leq\|h\|_\infty\sup_\rho\|e^{t\mathcal L_{s,*}^{[R]}}\rho-\sigma_s^{[R]}\|_1
 =:\|h\|_\infty m_R(t).
 \end{aligned}
\]
The hypothesis gives $m_R(\tau(R))\leq1/4$. We extend this bound to all $t$ by iterating over intervals of length $\tau(R)$. For a nonzero traceless Hermitian $W$, its positive and negative parts give $W=\tfrac12\|W\|_1(\rho_+-\rho_-)$ with states $\rho_\pm$. Hence
\[
 \begin{aligned}
 \|e^{\tau(R)\mathcal L_{s,*}^{[R]}}W\|_1\leq\frac{\|W\|_1}{2}\sum_{\epsilon=\pm}
       \|e^{\tau(R)\mathcal L_{s,*}^{[R]}}\rho_\epsilon-\sigma_s^{[R]}\|_1\leq\tfrac14\|W\|_1.
 \end{aligned}
\]
Applying this to $W=e^{t\mathcal L_{s,*}^{[R]}}\rho-\sigma_s^{[R]}$ gives
\[
 m_R(t+\tau(R))\leq\tfrac14m_R(t),\qquad m_R(t)\leq2,
\]
and therefore
\begin{equation}\label{eq:regional-mixing-iteration}
 m_R(t)\leq2(1/4)^{\lfloor t/\tau(R)\rfloor}
       \leq8e^{-(\log4)t/\tau(R)}.
\end{equation}

\proofstep{Step 2: transfer to the full system and choice of radius}
Fix $R\geq R_0$. We first replace the Gibbs state in the target by the regional Gibbs state. Stationarity gives
\[
 \begin{aligned}
 |\Tr(\sigma_sh)-\Tr(\sigma_s^{[R]}h)|=\left|\Tr\!\left[\sigma_s\bigl(e^{t\mathcal L_s}h-\Tr(\sigma_s^{[R]}h)I\bigr)\right]\right|\leq\|e^{t\mathcal L_s}h-\Tr(\sigma_s^{[R]}h)I\|_\infty.
 \end{aligned}
\]
The triangle inequality and Lemma~\ref{lem:physical-localization} therefore give
\begin{equation}\label{eq:regional-decay-reduction}
 \begin{aligned}
 \|e^{t\mathcal L_s}(h-\Tr(\sigma_sh)I)\|_\infty
 &\leq2\|e^{t\mathcal L_s}h-\Tr(\sigma_s^{[R]}h)I\|_\infty\\
 &\leq2Ce^{vt-\mu R}\|h\|_\infty
       +2\|e^{t\mathcal L_s^{[R]}}h-\Tr(\sigma_s^{[R]}h)I\|_\infty.
 \end{aligned}
\end{equation}

Using~\eqref{eq:regional-mixing-iteration} gives
\[
 \|e^{t\mathcal L_s}(h-\Tr(\sigma_sh)I)\|_\infty
 \leq2\left[Ce^{vt-\mu R}+8e^{-(\log4)t/\tau(R)}\right]\|h\|_\infty.
\]
We now choose $R=R_0+\lfloor Kt\rfloor$ with $\mu K>v+1$. Since $\tau$ is nondecreasing,
\[
 Ce^{vt-\mu R}\leq C'e^{-t},\qquad
 e^{-(\log4)t/\tau(R)}\leq e^{-(\log4)t/\tau(R_0+Kt)}.
\]
Substitution gives the intermediate bound
\begin{equation}\label{eq:profile-from-mixing}
 \|e^{t\mathcal L_s}(h-\Tr(\sigma_sh)I)\|_\infty
 \leq C\left[e^{-ct}
       +\exp\!\left(-\frac{ct}{\tau(R_0+Kt)}\right)\right]\|h\|_\infty,
\end{equation}
where $C,c,K>0$ are independent of $s$, $Z$, and the volume.

\proofstep{Step 3: polynomial decay}
Finally, we compare the scalar decay function in \eqref{eq:profile-from-mixing} with $(1+t)^{-\nu}$. By the definition of $\tau$,
\[
 \tau(R_0+Kt)\leq A(1+t)^a,
 \qquad A=\max\{1,C_{\mathrm{reg}}\}(1+R_0+K)^a.
\]
Consequently, for $t\geq1$,
\[
 \frac{t}{\tau(R_0+Kt)}
 \geq\frac{t}{A(1+t)^a}
 \geq\frac{2^{-a}}{A}t^{1-a}.
\]
Also $t\geq t^{1-a}$ for $t\geq1$, since $a\geq0$. Hence \eqref{eq:profile-from-mixing} implies
\[
 \|e^{t\mathcal L_s}(h-\Tr(\sigma_sh)I)\|_\infty
 \leq C_1e^{-c_1t^{1-a}}\|h\|_\infty,
 \qquad t\geq1.
\]
Here $a<1$ is essential: it gives $t^{1-a}/\log(1+t)\to\infty$, and therefore
\[
 M_\nu:=\sup_{t\geq1}(1+t)^\nu e^{-c_1t^{1-a}}<\infty.
\]
For $0\leq t\leq1$, operator-norm contractivity bounds the centered evolution by $2\|h\|_\infty$. Thus, with $C_\nu=\max\{C_1M_\nu,2^{\nu+1}\}$, we obtain
\[
 \|e^{t\mathcal L_s}(h-\Tr(\sigma_sh)I)\|_\infty
 \leq C_\nu(1+t)^{-\nu}\|h\|_\infty,
 \qquad t\geq0.
\]
This is Condition~\ref{cond:transport-mixing}. Since $\nu>0$ was arbitrary, the proposition follows.
\end{proof}

\subsection{Locality of the transport generator}\label{app:transport-coefficients}

We prove Proposition~\ref{prop:flow-locality}. Fix $p>0$ and assume Condition~\ref{cond:transport-mixing} with $\nu>p+6d+10$. Our goal is to prove the volume-independent bound in~\eqref{eq:flow-locality-target}. Specifically, we will prove
\begin{equation}\label{eq:flow-locality-target-2}
 \sup_{0\leq s\leq\beta}\|\mathsf G_s\|_{p,\mathrm{loc}}\leq CJ,
 \qquad \mathsf G_s=[\mathsf M_{f_s},\mathsf L_{\sigma_s}].
\end{equation}
For the orders $P$ and $\ell$ chosen below, we first use Condition~\ref{cond:transport-mixing} and Lemma~\ref{lem:physical-localization} to prove $\|X_{s,Z}\|_{P,x_Z}\leq C\|h_Z\|_\infty$, where $X_{s,Z}$ is defined in \eqref{eq:local-poisson-integral}. We then estimate $\mathsf M_{f_s}$, where $f_s=\sum_Z\sech(s\ad_H/4)X_{s,Z}$, and the terms $\mathsf L_{\sigma_s,x}$ in $\mathsf L_{\sigma_s}=\sum_x\mathsf L_{\sigma_s,x}$ separately, obtaining
\[
 \|\mathsf M_{f_s}\|_{\ell,\mathrm{loc}}\leq CJ,
 \qquad \sup_x\|\mathsf L_{\sigma_s,x}\|_{\ell,x}\leq C.
\]
Finally, applying the commutator estimate to $\mathsf G_s=\sum_x[\mathsf M_{f_s},\mathsf L_{\sigma_s,x}]$ gives
\[
 \|\mathsf G_s\|_{p,\mathrm{loc}}
 \leq C\|\mathsf M_{f_s}\|_{\ell,\mathrm{loc}}
       \sup_x\|\mathsf L_{\sigma_s,x}\|_{\ell,x}
 \leq CJ.
\]

\begin{proof}[Proof of Proposition~\ref{prop:flow-locality}]

Set
\[
 \ell=p+d+2,\qquad P=p+6d+8.
\]
The hypothesis gives $\nu>P+2$, while $\ell+d+2=p+2d+4\leq P$. All constants below are uniform in $s\in[0,\beta]$ and the volume.

\proofstep{Step 1: local Poisson inverse}
For each interaction term $h_Z$, define
\begin{equation}\label{eq:local-poisson-integral}
 X_{s,Z}=-\frac12\int_0^\infty e^{t\mathcal L_s}
       \bigl(h_Z-\Tr(\sigma_sh_Z)I\bigr)\,dt.
\end{equation}
Choose $x_Z\in Z$ as the center for the locality estimates below. Our first task is to bound $\|X_{s,Z}\|_{P,x_Z}$ by $\|h_Z\|_\infty$. We use local rapid mixing for the long-time tail and Lemma~\ref{lem:physical-localization} for short times. This gives the following proposition:

\begin{proposition}[A local Poisson inverse from power-law mixing]\label{prop:profile}
For the Gaussian KMS Gibbs sampler \eqref{eq:CKG}, assume Condition~\ref{cond:transport-mixing} with $\nu>P+2$, where $P\geq0$. Choose $x_Z\in Z$. The integral \eqref{eq:local-poisson-integral} satisfies
\begin{equation}\label{eq:local-poisson}
 \begin{gathered}
 \mathcal L_sX_{s,Z}=\tfrac12\bigl(h_Z-\Tr(\sigma_sh_Z)I\bigr),
 \qquad \Tr(\sigma_sX_{s,Z})=0,\\
 \|X_{s,Z}\|_{P,x_Z}\leq C_P\|h_Z\|_\infty.
 \end{gathered}
\end{equation}
\end{proposition}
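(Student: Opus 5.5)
The plan has three parts: first justify that the integral converges and solves the Poisson equation, then bound its quasi-local norm by splitting it at a radius-dependent time and truncating spatially below that time.

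\emph{Convergence and the Poisson equation.} Write $\bar h=h_Z-\Tr(\sigma_sh_Z)I$. Condition~\ref{cond:transport-mixing} gives $\|e^{t\mathcal L_s}\bar h\|_\infty\leq C_{\mathrm{mix}}(1+t)^{-\nu}\|h_Z\|_\infty$, which is integrable since $\nu>1$. So \eqref{eq:local-poisson-integral} converges absolutely. We then compute
\[
\mathcal L_sX_{s,Z}=-\tfrac12\int_0^\infty\tfrac{d}{dt}e^{t\mathcal L_s}\bar h\,dt=\tfrac12\bigl(\bar h-\lim_{t\to\infty}e^{t\mathcal L_s}\bar h\bigr)=\tfrac12\bar h.
\]
Stationarity gives $\Tr(\sigma_se^{t\mathcal L_s}\bar h)=\Tr(\sigma_s\bar h)=0$ for every $t$, so $\Tr(\sigma_sX_{s,Z})=0$.

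\emph{Locality estimate.} I would build a telescoping decomposition $X_{s,Z}=\sum_{r\geq0}A_r$ with $\supp A_r\subseteq\Ball_{x_Z}(r)$, and bound $\sum_r(1+r)^P\|A_r\|$. For $R\geq R_0$ let $\bar h^{[R]}=h_Z-\Tr(\sigma_s^{[R]}h_Z)I$, and let $\mathcal L_s^{[R]}$ be the regional sampler of Lemma~\ref{lem:physical-localization}, with $B_R=\Ball_{x_Z}(R)$. Choose a cutoff time $T(R)=\mu R/(2v)$, and define the localized approximant
\[
X^{(R)}:=-\tfrac12\int_0^{T(R)}e^{t\mathcal L_s^{[R]}}\bar h^{[R]}\,dt,
\]
which is supported in $B_R$ because $\mathcal L_s^{[R]}$ acts only there. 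Then
\[
X_{s,Z}-X^{(R)}=-\tfrac12\int_0^{T(R)}\bigl(e^{t\mathcal L_s}-e^{t\mathcal L_s^{[R]}}\bigr)h_Z\,dt+\tfrac{T(R)}2\bigl(\Tr(\sigma_s h_Z)-\Tr(\sigma_s^{[R]}h_Z)\bigr)I-\tfrac12\int_{T(R)}^\infty e^{t\mathcal L_s}\bar h\,dt.
\]
The first term is at most $CT(R)e^{vT(R)-\mu R}\|h_Z\|_\infty\leq Ce^{-\mu R/3}\|h_Z\|_\infty$ by Lemma~\ref{lem:physical-localization}. The third is at most $C(1+T(R))^{1-\nu}\|h_Z\|_\infty\leq C(1+R)^{1-\nu}\|h_Z\|_\infty$ by Condition~\ref{cond:transport-mixing}.

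\emph{Main obstacle: the Gibbs-mean difference.} The middle term is the delicate one, since it requires $|\Tr(\sigma_sh_Z)-\Tr(\sigma_s^{[R]}h_Z)|$ to be small. I would handle it as in Step~2 of the proof of Proposition~\ref{prop:regional-mixing}. Stationarity gives
\[
|\Tr(\sigma_sh_Z)-\Tr(\sigma_s^{[R]}h_Z)|\leq\|e^{t\mathcal L_s}h_Z-\Tr(\sigma_s^{[R]}h_Z)I\|_\infty\quad\text{for every }t.
\]
The triangle inequality bounds the right side by $\|(e^{t\mathcal L_s}-e^{t\mathcal L_s^{[R]}})h_Z\|_\infty+\|e^{t\mathcal L_s}\bar h\|_\infty+|\Tr(\sigma_sh_Z)-\Tr(\sigma_s^{[R]}h_Z)|$, which is circular because the last term is the quantity being bounded. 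To avoid this, I apply Condition~\ref{cond:transport-mixing} to the regional sampler $\mathcal L_s^{[R]}$, which is constructed from $H_{B_R}$, a member of the same class. This gives $\|e^{t\mathcal L_s^{[R]}}\bar h^{[R]}\|\leq C(1+t)^{-\nu}\|h_Z\|_\infty$. Taking $t=T(R)$ then yields
\[
|\Tr(\sigma_sh_Z)-\Tr(\sigma_s^{[R]}h_Z)|\leq Ce^{-\mu R/2}\|h_Z\|_\infty+C(1+R)^{-\nu}\|h_Z\|_\infty.
\]
Multiplied by $T(R)\sim R$, the middle term is therefore $O\bigl((1+R)^{1-\nu}\bigr)\|h_Z\|_\infty$. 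If the condition is not available for regional samplers, I would instead replace $\bar h^{[R]}$ by $\bar h$ in $X^{(R)}$. The difference is then a multiple of the identity, and the identity is supported everywhere, so I would absorb it into the $r=0$ term.

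\emph{Conclusion.} Altogether $\|X_{s,Z}-X^{(R)}\|\leq C(1+R)^{1-\nu}\|h_Z\|_\infty$, and $\|X^{(R)}\|\leq C\|h_Z\|_\infty$ similarly. Set $A_{R_0}=X^{(R_0)}$ and $A_R=X^{(R)}-X^{(R-1)}$ for $R>R_0$. Each $A_R$ is supported in $\Ball_{x_Z}(R)$, the series converges to $X_{s,Z}$, and $\|A_R\|\leq C(1+R)^{1-\nu}\|h_Z\|_\infty$. Since $\nu>P+2$, the sum $\sum_R(1+R)^{P+1-\nu}$ converges, giving $\|X_{s,Z}\|_{P,x_Z}\leq C_P\|h_Z\|_\infty$ uniformly in $s$, $Z$, and the volume.
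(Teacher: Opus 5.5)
Your proof follows the paper's structure. The paper truncates the Poisson integral at time $\theta R$ with $v\theta<\mu/2$, uses Lemma~\ref{lem:physical-localization} below that time and Condition~\ref{cond:transport-mixing} above it. It then telescopes $X^{[R]}-X^{[R-1]}$ with weights $(1+R)^P$, which is where $\nu>P+2$ enters.

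\textbf{The unlicensed step.} Your primary handling of the ``main obstacle'' is not justified by the statement. Condition~\ref{cond:transport-mixing} is assumed only for the sampler $\mathcal L_s$ of $H$ on $\Lambda$, so you cannot apply it to the regional sampler $\mathcal L_s^{[R]}$.

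\textbf{Your fallback is the paper's proof, and it removes the obstacle entirely.} Center the regional integrand by the global mean:
$X^{[R]}=-\tfrac12\int_0^{\theta R}\bigl(e^{t\mathcal L_s^{[R]}}h_Z-\Tr(\sigma_sh_Z)I\bigr)\,dt$.
\begin{itemize}
\item This is still supported in $\Ball_{x_Z}(R)$, because a scalar multiple of $I$ is supported in every ball (not ``everywhere'').
\item Both semigroups are unital, so for $t\leq\theta R$ the integrand difference is exactly $(e^{t\mathcal L_s}-e^{t\mathcal L_s^{[R]}})h_Z$.
\item Hence your middle term is identically zero, and nothing needs to be absorbed into an $r=0$ piece.
\end{itemize}

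\textbf{If you want to keep regional centering.} The circularity can be broken without any extra hypothesis. Use stationarity of $\sigma_s^{[R]}$ under $\mathcal L_s^{[R]}$, instead of stationarity of $\sigma_s$:
$|\Tr(\sigma_s^{[R]}h_Z)-\Tr(\sigma_sh_Z)|\leq\|e^{t\mathcal L_s^{[R]}}h_Z-\Tr(\sigma_sh_Z)I\|_\infty\leq Ce^{vt-\mu R}\|h_Z\|_\infty+C_{\mathrm{mix}}(1+t)^{-\nu}\|h_Z\|_\infty$.
This uses only Lemma~\ref{lem:physical-localization} and the global condition, and at $t=T(R)$ it gives the bound you wanted.
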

\begin{proof}[Proof of Proposition~\ref{prop:profile}]

First, Condition~\ref{cond:transport-mixing} makes \eqref{eq:local-poisson-integral} converge and gives
\[
 \|X_{s,Z}\|_\infty
 \leq\frac{C_{\mathrm{mix}}}{2}\|h_Z\|_\infty
           \int_0^\infty(1+t)^{-\nu}\,dt
 =\frac{C_{\mathrm{mix}}}{2(\nu-1)}\|h_Z\|_\infty.
\]
The integrand tends to zero, so applying the generator yields
\[
 \mathcal L_sX_{s,Z}
 =-\frac12\left[e^{t\mathcal L_s}
       \bigl(h_Z-\Tr(\sigma_sh_Z)I\bigr)\right]_{t=0}^{t=\infty}
 =\frac12\bigl(h_Z-\Tr(\sigma_sh_Z)I\bigr).
\]
Since $\mathcal L_{s,*}\sigma_s=0$, we have $e^{t\mathcal L_{s,*}}\sigma_s=\sigma_s$ for every $t\geq0$. By duality,
\[
 \begin{aligned}
 \Tr\!\left[\sigma_s e^{t\mathcal L_s}
       \bigl(h_Z-\Tr(\sigma_sh_Z)I\bigr)\right]
 &=\Tr\!\left[(e^{t\mathcal L_{s,*}}\sigma_s)
       \bigl(h_Z-\Tr(\sigma_sh_Z)I\bigr)\right]\\
 &=\Tr(\sigma_sh_Z)-\Tr(\sigma_sh_Z)\Tr(\sigma_s)=0.
 \end{aligned}
\]
Thus the integrand in \eqref{eq:local-poisson-integral} has zero expectation under $\sigma_s$ at every time, and taking the trace under the norm-convergent integral gives $\Tr(\sigma_sX_{s,Z})=0$.

We next approximate $X_{s,Z}$ inside $\Ball_{x_Z}(R)$. Let $\mathcal L_s^{[R]}$ be the local sampler on this ball, generated by $H_{\Ball_{x_Z}(R)}$ and extended trivially outside. Choose $\theta>0$ with $v\theta<\mu/2$, where $v,\mu$ are from Lemma~\ref{lem:physical-localization}. Set
\[
 X_{s,Z}^{[R]}=-\frac12\int_0^{\theta R}
       \bigl(e^{t\mathcal L_s^{[R]}}h_Z-\Tr(\sigma_sh_Z)I\bigr)\,dt,
 \qquad R\geq R_0.
\]
This matrix is supported on $\Ball_{x_Z}(R)$. Using Lemma~\ref{lem:physical-localization} for the first and \eqref{eq:transport-mixing} for the second gives
\begin{equation}\label{eq:poisson-truncation}
 \begin{aligned}
 \|X_{s,Z}-X_{s,Z}^{[R]}\|_\infty
 &\leq\frac{\|h_Z\|_\infty}{2}
   \left(C\int_0^{\theta R}e^{vt-\mu R}\,dt
       +C_{\mathrm{mix}}\int_{\theta R}^\infty(1+t)^{-\nu}\,dt\right)\\
 &\leq C\|h_Z\|_\infty
       \bigl(e^{-cR}+(1+R)^{1-\nu}\bigr).
 \end{aligned}
\end{equation}
Write the decomposition
\[
 X_{s,Z}=X_{s,Z}^{[R_0]}
       +\sum_{R>R_0}\bigl(X_{s,Z}^{[R]}-X_{s,Z}^{[R-1]}\bigr).
\]
The initial term has norm at most $C\|h_Z\|_\infty$. The two successive errors in \eqref{eq:poisson-truncation} bound each difference by
\[
 \|X_{s,Z}^{[R]}-X_{s,Z}^{[R-1]}\|_\infty
 \leq C\|h_Z\|_\infty
       \bigl(e^{-cR}+(1+R)^{1-\nu}\bigr).
\]
Each difference is supported on $\Ball_{x_Z}(R)$. Thus the definition \eqref{eq:polynorm} gives
\[
 \|X_{s,Z}\|_{P,x_Z}
 \leq C\|h_Z\|_\infty
 \left[1+\sum_{R>R_0}(1+R)^Pe^{-cR}
          +\sum_{R>R_0}(1+R)^{P+1-\nu}\right]
 \leq C_P\|h_Z\|_\infty,
\]
since $\nu>P+2$.
\end{proof}

\proofstep{Step 2: estimates in the quasi-local and local strength norms}

For the real-time representation of the potential filter, write
\[
 K_s=\tfrac12\log\sigma_s,\qquad
 \tau_t(A)=e^{itK_s}Ae^{-itK_s}=e^{-istH/2}Ae^{istH/2}.
\]
Thus bounds for Hamiltonian evolution also control $\tau_t$. Recall the quasi-local norm \eqref{eq:polynorm} and the local strength norm \eqref{eq:local-strength-norm}. The following lemma collects the basic estimates needed to control the quasi-local norms of products, commutators, and time evolution.

\begin{lemma}[Locality under evolution, products, and commutators]\label{lem:locality-calculus}
Under \eqref{eq:growth}, for $p\geq0$ and $\eta\geq0$,
\begin{align}
 \|e^{itH}Ae^{-itH}\|_{p,x}
 &\leq C_p(1+J|t|)^{p+d+2}\|A\|_{p+d+2,x},\label{eq:hamiltonian-locality}\\
 \|\tau_t(A)\|_{p,x}
 &\leq C_p(1+|t|)^{p+d+2}\|A\|_{p+d+2,x},\label{eq:real-polynomial}\\
 \|[C_x,D_y]\|_{p,x}
 &\leq C_{p,\eta}(1+\dist(x,y))^{-\eta}
       \|C_x\|_{p+\eta,x}\|D_y\|_{p+\eta,y}.\label{eq:overlap}
\end{align}
At the same center $x$, products and adjoints obey
\begin{equation}\label{eq:local-norm-product}
 \|AB\|_{p,x}\leq\|A\|_{p,x}\|B\|_{p,x},\qquad
 \|A^*\|_{p,x}=\|A\|_{p,x}.
\end{equation}
For a generator measured in the local strength norm \eqref{eq:local-strength-norm}, summation of the commutator estimate gives
\begin{equation}\label{eq:summed-local-commutator}
 \|[A,G]\|_{p,x}
 \leq C_p\|A\|_{p+d+2,x}\|G\|_{p+d+2,\mathrm{loc}}.
\end{equation}
For a family $D_\alpha$ centered at $x_\alpha$, summing the individual commutator norms gives
\begin{equation}\label{eq:family-local-commutator}
 \sum_\alpha\|[A,D_\alpha]\|_{p,x}
 \leq C_p\|A\|_{p+d+2,x}
       \sup_y\sum_{\alpha:x_\alpha=y}\|D_\alpha\|_{p+d+2,y}.
\end{equation}
For a scalar function $f$ with finite weighted integral below, real-time locality also gives
\begin{equation}\label{eq:local-filter-integral}
 \left\|\int_{\mathbb R}f(t)\tau_t(A)\,dt\right\|_{p,x}
 \leq C_p\|A\|_{p+d+2,x}
       \int_{\mathbb R}|f(t)|(1+|t|)^{p+d+2}\,dt.
\end{equation}
\end{lemma}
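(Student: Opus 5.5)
The plan is to prove each item of Lemma~\ref{lem:locality-calculus} directly from the infimum definition \eqref{eq:polynorm}, fixing a near-optimal decomposition $A=\sum_r A_r$ with $\supp A_r\subseteq\Ball_x(r)$ and treating each shell separately.

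\textbf{Products and adjoints.} Decompose $A=\sum_r A_r$ and $B=\sum_{r'}B_{r'}$. Each product $A_rB_{r'}$ is supported in $\Ball_x(\max(r,r'))$, and $(1+\max(r,r'))^p\leq(1+r)^p(1+r')^p$. This gives \eqref{eq:local-norm-product} after taking infima. Adjoints preserve supports and norms, so $\|A^*\|_{p,x}=\|A\|_{p,x}$.

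\textbf{Hamiltonian evolution.} For \eqref{eq:hamiltonian-locality}, apply the Lieb--Robinson bound for finite-range $H$ to each shell $A_r$. For $k\geq0$, project $e^{itH}A_re^{-itH}$ onto $\Ball_x(r+k)$, for instance by the partial-trace conditional expectation $\Pi_{r+k}$. The error is at most $C\|A_r\|\,|\Ball_x(r)|\min\{1,e^{-\mu(k-vJ|t|)}\}$. Telescoping $\Pi_{r+k}-\Pi_{r+k-1}$ yields pieces supported in $\Ball_x(r+k)$. Summing $(1+r+k)^p$ times these errors over $k$ produces a factor $(1+r)^{p+d}(1+J|t|)^{p+1}$ plus a summable exponential tail; the volume factor $|\Ball_x(r)|\lesssim(1+r)^d$ is the source of the loss $d$. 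Absorbing this into $(1+J|t|)^{p+d+2}(1+r)^{p+d+2}$ gives the claim. The estimate \eqref{eq:real-polynomial} is the same argument with the Hamiltonian $sH/2$, $s\leq\beta$, so the factor $J$ is absorbed into $C_p$. For \eqref{eq:local-filter-integral}, integrate \eqref{eq:real-polynomial} against $|f|$ shell by shell, using the triangle inequality for the infimum norm; Bochner measurability is routine in finite dimension.

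\textbf{Commutators.} For \eqref{eq:overlap}, expand $[C_x,D_y]=\sum_{r,r'}[C_r,D_{r'}]$. A term vanishes unless $r+r'\geq\dist(x,y)$, and each nonzero term is supported in $\Ball_x(\max(r,r'+\dist(x,y)))$. On that support the weight is $(1+r+r'+\dist(x,y))^p$, which is at most $C(1+r)^p(1+r')^p(1+\max(r,r'))^p$ because $\dist(x,y)\leq r+r'$. Inserting the factor $1\leq(1+\max(r,r'))^\eta(1+\dist(x,y))^{-\eta}2^\eta$ gives $C(1+\dist)^{-\eta}\|C_x\|_{2p+\eta,x}\|D_y\|_{2p+\eta,y}$.

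\emph{I expect the main obstacle to be exactly this exponent bookkeeping.} To get $p+\eta$ rather than $2p+\eta$, I would use the bound $(1+r+r'+\dist)^p\leq 3^p(1+\max(r,r'))^p$. This is valid since $\dist\leq r+r'\leq 2\max(r,r')$. The $p$ then appears only once, giving the stated exponents. The same care is needed for superoperators, where the underlying norm is $\|\cdot\|_{2\to2}$ but the argument is identical.

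\textbf{Summed commutator estimates.} For \eqref{eq:summed-local-commutator}, write $G=\sum_yG_y$ with $\sup_y\|G_y\|_{p+d+2,y}$ near optimal. Apply \eqref{eq:overlap} with $\eta=d+2$ and sum $\sum_y(1+\dist(x,y))^{-d-2}\leq C$, which converges since the number of sites at distance $k$ is $O((1+k)^{d-1})$. For \eqref{eq:family-local-commutator}, do the same, grouping the $\alpha$ by their center $y$.
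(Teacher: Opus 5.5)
Your proposal is correct and follows the paper's proof essentially step by step. Products and adjoints use the same support-and-weight argument. Hamiltonian evolution uses Lieb--Robinson plus telescoped partial-trace truncations; the paper starts the telescoping at $r+\lceil v_HJ|t|\rceil$ and gets $(1+J|t|)^{p}$, while your start at radius $r$ costs one extra power, still well within $p+d+2$. The commutator bounds use the same support counting, with $\eta=d+2$ and a convergent lattice sum for the summed versions.

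The only slip is a harmless constant. With the radius bound $1+r+r'+\dist(x,y)\leq 1+4\max(r,r')$ you need $4^p$ rather than $3^p$; alternatively, keep $3^p$ by using the actual radius $\max(r,r'+\dist(x,y))\leq 3\max(r,r')$. The paper's direct inequality $(1+2r+2\ell)^p(1+\dist(x,y))^\eta\leq C(1+r)^{p+\eta}(1+\ell)^{p+\eta}$ avoids your intermediate $2p+\eta$ detour altogether.
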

\begin{proof}
\emph{Evolution and filtering.}
For \eqref{eq:hamiltonian-locality}, write $A=\sum_rA_r$ with $\supp A_r\subseteq\Ball_x(r)$. Fix $r$ and set $B(t)=e^{itH}A_re^{-itH}$. We first approximate $B(t)$ by operators supported on larger balls. For a region $I$, let
\[
 \mathbb E_I(B)
 :=\frac{\Tr_{I^c}B}{\dim\mathcal H_{I^c}}\otimes I_{I^c},
 \qquad I^c=\Lambda\setminus I.
\]
Define
\[
 B_R(t):=\mathbb E_{\Ball_x(R)}B(t),\qquad R\geq r.
\]
Then $\supp B_R(t)\subseteq\Ball_x(R)$, and contractivity of $\mathbb E_{\Ball_x(R)}$ gives $\|B_R(t)\|\leq\|B(t)\|=\|A_r\|$. The Hamiltonian Lieb--Robinson bound~\cite[Theorem~3.1]{NSY2019}, applied to these local approximations as in~\cite[Eq.~(2)]{BHV2006}, gives constants $C,\mu,v_H>0$, independent of the volume, such that
\begin{equation}\label{eq:hamiltonian-local-approximation}
 \|B(t)-B_R(t)\|
 \leq C(1+r)^d e^{-\mu(R-r-v_HJ|t|)}\|A_r\|.
\end{equation}
Here, the factor $(1+r)^d$ bounds the size of the initial support. Thus the approximation error decays exponentially once $R$ exceeds $r+v_HJ|t|$.

To turn these approximations into a decomposition for the quasi-local norm, choose the integer radius
\[
 R_*=r+\lceil v_HJ|t|\rceil
\]
and set $D_n(t)=B_{R_*+n}(t)-B_{R_*+n-1}(t)$ for $n\geq1$. Telescoping gives
\[
 B(t)=B_{R_*}(t)+\sum_{n\geq1}D_n(t),
 \qquad \supp D_n(t)\subseteq\Ball_x(R_*+n).
\]
Applying \eqref{eq:hamiltonian-local-approximation} to both terms in each difference yields
\[
 \begin{aligned}
 \|D_n(t)\|
 &\leq\|B_{R_*+n}(t)-B(t)\|
       +\|B(t)-B_{R_*+n-1}(t)\|\\
 &\leq C(1+r)^d e^{-\mu(n-1)}\|A_r\|.
 \end{aligned}
\]
Using this decomposition in \eqref{eq:polynorm}, we obtain
\[
 \begin{aligned}
 \|B(t)\|_{p,x}
 &\leq(1+R_*)^p\|B_{R_*}(t)\|
       +\sum_{n\geq1}(1+R_*+n)^p\|D_n(t)\|\\
 &\leq(1+R_*)^p\|A_r\|
       \left[1+C(1+r)^d\sum_{n\geq1}(1+n)^p e^{-\mu(n-1)}\right]\\
 &\leq C_p(1+r)^d(1+R_*)^p\|A_r\|.
 \end{aligned}
\]
Here we used $1+R_*+n\leq(1+R_*)(1+n)$ and the convergence of the exponentially weighted sum. Since $1+R_*\leq C(1+r)(1+J|t|)$, this proves
\[
 \|e^{itH}A_re^{-itH}\|_{p,x}
 \leq C_p(1+r)^{p+d}(1+J|t|)^p\|A_r\|.
\]
Increasing both exponents to $p+d+2$ gives:
\[
 \|e^{itH}A_re^{-itH}\|_{p,x}
 \leq C_p(1+r)^{p+d+2}(1+J|t|)^{p+d+2}\|A_r\|.
\]
Summing over $r$ and taking the infimum proves \eqref{eq:hamiltonian-locality}. Substituting $-st/2$ for $t$ gives \eqref{eq:real-polynomial}.
Integrating \eqref{eq:real-polynomial} against $|f(t)|$ proves \eqref{eq:local-filter-integral}.

\smallskip\noindent
\emph{Products and adjoints.}
For products centered at $x$, the support radius is at most $\max(r,\ell)$, and
\[
 \|AB\|_{p,x}
 \leq\sum_{r,\ell}(1+\max(r,\ell))^p\|A_r\|\|B_\ell\|
 \leq\left(\sum_r(1+r)^p\|A_r\|\right)
       \left(\sum_\ell(1+\ell)^p\|B_\ell\|\right).
\]
Taking infima proves the product bound in \eqref{eq:local-norm-product}. Taking adjoints preserves the supports and norms of every term, proving the remaining identity~\eqref{eq:local-norm-product}.

\smallskip\noindent
\emph{Commutators.}
For \eqref{eq:overlap}, terms of radii $r,\ell$ commute unless $\dist(x,y)\leq r+\ell$; their union then lies in $\Ball_x(2r+2\ell)$. The inequality
\[
 (1+2r+2\ell)^p(1+\dist(x,y))^\eta
 \leq C_{p,\eta}(1+r)^{p+\eta}(1+\ell)^{p+\eta}
\]
proves the claim after summing their commutators.

For \eqref{eq:family-local-commutator}, apply \eqref{eq:overlap} with $\eta=d+2$ and group the terms by their centers:
\[
 \sum_\alpha\|[A,D_\alpha]\|_{p,x}
 \leq C_p\|A\|_{p+d+2,x}
       \left(\sup_y\sum_{\alpha:x_\alpha=y}\|D_\alpha\|_{p+d+2,y}\right)
       \sum_y(1+\dist(x,y))^{-d-2}.
\]
The distance sum is bounded by $\sum_{z\in\mathbb Z^d}(1+\|z\|_1)^{-d-2}<\infty$. Applying this estimate to a decomposition $G=\sum_yG_y$ and taking the infimum gives \eqref{eq:summed-local-commutator}.
\end{proof}

\proofstep{Step 3: locality of the Poisson potential}
We now prove the first bound needed for the generator:
\begin{equation}\label{eq:flow-potential-target}
 \sup_{0\leq s\leq\beta}\|\mathsf M_{f_s}\|_{\ell,\mathrm{loc}}\leq CJ.
\end{equation}
By \eqref{eq:continuation-sylvester} and linearity, $f_s=\sum_Z f_{s,Z}$, where each $f_{s,Z}$ is obtained by applying the $\sech$ filter to $X_{s,Z}$. The following estimate controls the locality lost under this filter.

\begin{lemma}[Locality of the Poisson potential]\label{lem:structure}
For $f_{s,Z}=\sech(s\ad_H/4)X_{s,Z}$ and every $p\geq0$,
\[
 \|f_{s,Z}\|_{p,x_Z}\leq C_p\|X_{s,Z}\|_{p+d+2,x_Z}.
\]
Here $x_Z\in Z$ is the center for the quasi-local norm, and $C_p$ is independent of $s$, $Z$, and the volume.
\end{lemma}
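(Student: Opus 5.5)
The plan is to write the $\sech$ filter as a real-time integral against the modular evolution $\tau_t$, and then apply the filter estimate \eqref{eq:local-filter-integral} of Lemma~\ref{lem:locality-calculus}.

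\emph{Fourier representation.} The Fourier transform of $\sech$ is again a $\sech$: $\sech(\pi\omega/2)=\int_{\mathbb R}\sech(\pi t)e^{i\omega t}\,dt$ up to normalization. Hence for real $y$ we have $\sech(y)=\int_{\mathbb R}g(t)e^{-iyt}\,dt$ with $g(t)=\tfrac12\sech(\pi t/2)$. Apply this with $y=s\ad_H/4$, working entrywise in an eigenbasis of $H$ as in the proof of Proposition~\ref{prop:exact-continuation}. Since $e^{-i u\ad_H}A=e^{-iuH}Ae^{iuH}$, the filter becomes
\[
 f_{s,Z}=\int_{\mathbb R}g(t)\,e^{-istH/4}X_{s,Z}e^{istH/4}\,dt .
\]
The normalization constant and the sign convention do not matter, because $g$ is even.

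\emph{Locality bound.} Reparametrize $t\mapsto t/2$ so that the integrand is $\tau_{t'}(X_{s,Z})$ with $\tau_{t'}(A)=e^{-ist'H/2}Ae^{ist'H/2}$, matching the definition in Step~2. The weight becomes $\tilde g(t')=2g(t')$, which is still exponentially decaying. Alternatively, one can use \eqref{eq:hamiltonian-locality} directly with time $st/4$.

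Applying \eqref{eq:local-filter-integral} gives
\[
 \|f_{s,Z}\|_{p,x_Z}\le C_p\|X_{s,Z}\|_{p+d+2,x_Z}\int_{\mathbb R}|\tilde g(t)|(1+|t|)^{p+d+2}\,dt .
\]
The integral is finite because $\sech$ decays exponentially. The constant is uniform in $s\in[0,\beta]$, since \eqref{eq:real-polynomial} already absorbs the factor $s$: it uses $(1+s J|t|/2)\le C_\beta(1+|t|)$ with $J$ fixed. It is also independent of $Z$ and of the volume, since the Lieb--Robinson constants are.

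\emph{Main obstacle.} The only delicate point is justifying the operator identity $\sech(s\ad_H/4)=\int g(t)e^{-ist\ad_H/4}\,dt$ at the superoperator level. In finite dimension this follows from the spectral decomposition of $\ad_H$, which acts diagonally on the matrix units $|i\rangle\langle j|$ with real eigenvalues $E_i-E_j$. The scalar identity then applies eigenvalue by eigenvalue, with absolute convergence supplied by the integrability of $g$. Everything else is a direct application of Lemma~\ref{lem:locality-calculus}.
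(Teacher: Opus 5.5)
Your proposal is correct and follows the paper's proof. The paper writes $\sech(s\ad_H/4)X=\int_{\mathbb R}\tau_t(X)/\cosh(\pi t)\,dt$ and applies \eqref{eq:local-filter-integral}, and your Fourier representation gives exactly this identity after the substitution $t\mapsto 2t'$; one small slip is that the rescaled weight is $2g(2t')=\sech(\pi t')$, not $2g(t')$, which does not affect the argument.
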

\begin{proof}
To bound the filtered potential, use the real-time representation
\begin{equation}\label{eq:sech}
 \sech(\ad_{K_s}/2)X=\int_\R\frac{\tau_t(X)}{\cosh(\pi t)}\,dt.
\end{equation}
Since $K_s=-sH/2$ up to a scalar multiple of $I$ and $\sech$ is even, the left-hand side of \eqref{eq:sech} is $\sech(s\ad_H/4)X$. Applying Lemma~\ref{lem:locality-calculus} \eqref{eq:local-filter-integral} gives
\[
 \|f_{s,Z}\|_{p,x_Z}
 \leq C_p\|X_{s,Z}\|_{p+d+2,x_Z}
       \int_{\mathbb R}\frac{(1+|t|)^{p+d+2}}{\cosh(\pi t)}\,dt
 \leq C_p'\|X_{s,Z}\|_{p+d+2,x_Z}.
\]
This proves the claimed locality bound.
\end{proof}

Apply Lemma~\ref{lem:structure} at order $\ell$. Since the quasi-local norms increase with their order and $\ell+d+2\leq P$, Proposition~\ref{prop:profile} \eqref{eq:local-poisson} gives
\[
 \|f_{s,Z}\|_{\ell,x_Z}
 \leq C\|X_{s,Z}\|_{\ell+d+2,x_Z}
 \leq C\|X_{s,Z}\|_{P,x_Z}
 \leq C\|h_Z\|_\infty.
\]
Symmetrized multiplication preserves support and satisfies \[
\|\mathsf M_f\|
=\|\mathcal M_f\|_{2\to2}
=\sup_{Y\ne0}
\frac{\left\|\frac12(fY+Yf)\right\|_2}{\|Y\|_2}\leq \|f\|_{\infty}\,.
\]
The same bound thus holds for $\|\mathsf M_{f_{s,Z}}\|_{\ell,x_Z}$. Grouping the terms by $x_Z$ therefore yields
\begin{equation}\label{eq:potential-local-strength}
 \begin{aligned}
 \sup_s\|\mathsf M_{f_s}\|_{\ell,\mathrm{loc}}\leq\sup_{s,x}\sum_{Z:x_Z=x}\|f_{s,Z}\|_{\ell,x}\leq C\sup_x\sum_{Z:x_Z=x}\|h_Z\|_\infty
 \leq CJ.
 \end{aligned}
\end{equation}
The last inequality uses $x_Z\in Z$ and the interaction-strength bound \eqref{eq:growth}. This proves~\eqref{eq:flow-potential-target}.

\proofstep{Step 4: locality of the weighted generator}
The second bound needed for the generator is
\[
 \sup_{s,x}\|\mathsf L_{\sigma_s,x}\|_{\ell,x}\leq C.
\]
We obtain it by controlling the gradients in the sampler factorization and then applying the product estimate from Step 2.

The factorization \eqref{eq:native-dirichlet-form} gives
\begin{equation}\label{eq:local-sampler-decomposition}
 \mathsf L_{\sigma_s,x}=-\sum_u\int_{\mathbb R^2}
 \widehat{\mathsf B}_{s,x,u}(v,z)^*\widehat{\mathsf B}_{s,x,u}(v,z)\,d\mu(v,z),\qquad
 \mathsf L_{\sigma_s}=\sum_x\mathsf L_{\sigma_s,x}.
\end{equation}
By the product bound Lemma~\ref{lem:locality-calculus} \eqref{eq:local-norm-product}, it therefore suffices to bound the integrated squared quasi-local norms of the gradients.
\begin{lemma}[Locality of the sampler gradients]\label{lem:native-gradient-locality}
For the Gaussian KMS Gibbs sampler \eqref{eq:CKG} of a finite-range Hamiltonian satisfying \eqref{eq:growth}, and every $p>0$,
\begin{equation}\label{eq:native-gradient-locality}
 \sup_{0\leq s\leq\beta,\,x}\sum_u\int_{\mathbb R^2}
 \|\widehat{\mathsf B}_{s,x,u}(v,z)\|_{p,x}^2\,d\mu(v,z)\leq C_B^2\,,
\end{equation}
where $C_B$ is independent of $s$, $x$, and the volume.
\end{lemma}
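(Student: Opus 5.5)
The plan is to write $\widehat{\mathsf B}_{s,x,u}(v,z)$ as a difference of a left multiplication and a right multiplication, bound the quasi-local norm of each factor, and then integrate against $d\mu$.

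\emph{Splitting into two multiplications.} By \eqref{eq:native-gradient}, with $A=e^{isvH}\widehat J^s_{x,u}(z)e^{-isvH}$ and $A'=e^{isvH}\sigma_s^{-1/2}\widehat J^s_{x,u}(z)\sigma_s^{1/2}e^{-isvH}$, the map is $Y\mapsto AY-YA'$. Left multiplication by $A$ has vectorized form $A\otimes\one$, and right multiplication by $A'$ has form $\one\otimes A'^{\mathrm T}$. In the product basis both preserve support, and their operator norms equal $\|A\|$ and $\|A'\|$. Hence
\[
\|\widehat{\mathsf B}_{s,x,u}(v,z)\|_{p,x}\leq\|A\|_{p,x}+\|A'\|_{p,x},
\]
and it remains to bound the quasi-local norms of these two operators at center $x$.

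\emph{Bounding $A$.} Start from \eqref{eq:native-rescaled-jump}. Since $S_{x,u}$ is supported on the single site $x$, we have $\|S_{x,u}\|_{q',x}=1$ for every order $q'$. By \eqref{eq:hamiltonian-locality}, with time $s r$ and $s\leq\beta$,
\[
\|e^{isrH}S_{x,u}e^{-isrH}\|_{p',x}\leq C(1+\beta J|r|)^{p'+d+2}.
\]
Integrating this against the Gaussian $e^{-r^2}$ gives $\|\widehat J^s_{x,u}(z)\|_{p',x}\leq C_{p'}$, uniformly in $z$, $s$ and $x$. Conjugating by $e^{isvH}$ and applying \eqref{eq:hamiltonian-locality} again gives
\[
\|A\|_{p,x}\leq C(1+\beta J|v|)^{p+d+2}.
\]

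\emph{Bounding $A'$: the main obstacle.} The imaginary-time conjugation $\sigma_s^{-1/2}(\cdot)\sigma_s^{1/2}$ is unbounded in general. I would handle it in frequency space rather than by a Lieb--Robinson estimate. In the Bohr decomposition, $\sigma_s^{-1/2}J\sigma_s^{1/2}$ multiplies the frequency-$\omega$ component by $e^{s\omega/2}$. Taking the Fourier transform in $r$ of \eqref{eq:native-rescaled-jump}, the frequency profile of $\widehat J^s(z)$ is a Gaussian in $z+s\omega$, so the weight $e^{s\omega/2}$ is absorbed by completing the square. The result is the identity
\[
\sigma_s^{-1/2}\widehat J^s_{x,u}(z)\sigma_s^{1/2}=c\,\widehat J^s_{x,u}(z+\text{shift})\times e^{\text{quadratic in }z},
\]
which is a Gaussian-weighted integral of real-time evolutions. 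Concretely, I expect
\[
\int e^{-(r-i/2)^2-izr}e^{isrH}S_{x,u}e^{-isrH}\,dr
\]
after contour shifting, with prefactor $e^{z/2+1/16}$ or similar. The same Gaussian-filter argument then bounds $\|A'\|_{p,x}\leq Ce^{z/2}(1+\beta J|v|)^{p+d+2}$. The factor $e^{z/2}$, or whatever linear-in-$z$ growth appears, must be controlled by the weight $e^{-(z+1)^2/2}$ in $d\mu$. Checking the exact contour shift and confirming that the exponent stays integrable is the delicate step.

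\emph{Integrating.} Squaring gives the integrand
\[
C(1+e^{z})(1+\beta J|v|)^{2(p+d+2)}\,\frac{e^{-(z+1)^2/2}}{\cosh(2\pi v)}.
\]
This is integrable in $(v,z)$: the Gaussian in $z$ beats $e^{z}$, and $\sech(2\pi v)$ beats any polynomial in $v$. Summing over the $4^q-1$ values of $u$ gives \eqref{eq:native-gradient-locality}, with $C_B$ depending only on $p$, $d$, $q$, $\beta J$ and $R_0$, and not on $s$, $x$ or the volume.
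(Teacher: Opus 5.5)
Your proposal is correct and follows essentially the paper's route: write $\widehat{\mathsf B}_{s,x,u}(v,z)$ as a left multiplication minus a right multiplication, bound each by Gaussian-filtered real-time Lieb--Robinson estimates \eqref{eq:hamiltonian-locality}, remove the imaginary-time conjugation by a Gaussian shift, and integrate against $d\mu$. Your completing-the-square computation gives exactly the paper's contour-shift identity \eqref{eq:native-gaussian-shift}, $\sigma_s^{-1/2}\widehat J^s_{x,u}(z)\sigma_s^{1/2}=e^{1/4+z/2}\widehat J^s_{x,u}(z+1)$, so the growth factor is exactly $e^{z/2}$ and is absorbed by $e^{-(z+1)^2/2}$. (Your guessed form should read $e^{z/2}\int e^{-(t+i/2)^2-izt}e^{istH}S_{x,u}e^{-istH}\,dt$, not $e^{-(r-i/2)^2}$ with $e^{1/16}$; the modulus of the integrand is the same, so the bound is unaffected.)
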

\begin{proof}
The Gibbs conjugation can be removed by shifting the Gaussian integration variable. For the rescaled jumps in \eqref{eq:native-rescaled-jump} and $a\in\mathbb R$,
\begin{equation}\label{eq:gaussian-modular-shift}
 \sigma_s^a\widehat J^s_{x,u}(z)\sigma_s^{-a}
 =e^{a^2-az}\widehat J^s_{x,u}(z-2a).
\end{equation}
In \eqref{eq:native-rescaled-jump}, Gibbs conjugation replaces the evolution time $sr$ by $s(r+ia)$. Substituting $t=r+ia$ gives
\[
 \begin{aligned}
 \sigma_s^a\widehat J^s_{x,u}(z)\sigma_s^{-a}
 =\frac{(2/\pi)^{1/4}}{\sqrt{2\pi}}e^{a^2-az}
   \int_{\mathbb R+ia}e^{-t^2-i(z-2a)t}
       e^{istH}S_{x,u}e^{-istH}\,dt=e^{a^2-az}\widehat J^s_{x,u}(z-2a).
 \end{aligned}
\]
The last equality shifts the contour to $\mathbb R$.

The Gibbs conjugation in \eqref{eq:native-gradient} is the case $a=-1/2$ of \eqref{eq:gaussian-modular-shift}, so
\begin{equation}\label{eq:native-gaussian-shift}
 \sigma_s^{-1/2}\widehat J^s_{x,u}(z)\sigma_s^{1/2}
 =e^{1/4+z/2}\widehat J^s_{x,u}(z+1).
\end{equation}
Thus the two multiplication matrices in \eqref{eq:native-gradient} are obtained from
\begin{equation}\label{eq:native-jump-evolution}
 e^{isvH}\widehat J^s_{x,u}(z)e^{-isvH}
 =\frac{(2/\pi)^{1/4}}{\sqrt{2\pi}}
   \int_{\mathbb R}e^{-r^2-izr}
       e^{is(v+r)H}S_{x,u}e^{-is(v+r)H}\,dr
\end{equation}
by using $z$ and $z+1$, respectively, and multiplying the second by $e^{1/4+z/2}$. For the integrand of \eqref{eq:native-jump-evolution}, \eqref{eq:hamiltonian-locality} applied to $S_{x,u}$ gives
\[
 \|e^{itH}S_{x,u}e^{-itH}\|_{p,x}
 \leq C_p(1+J|t|)^{p+d+2}.
\]
Applying this bound to both terms of~\eqref{eq:native-gradient} and using \eqref{eq:native-gaussian-shift}, we obtain
\begin{align*}
 \|\widehat{\mathsf{B}}_{s,x,u}(v,z)\|_{p,x}\leq C_p(1+e^{z/2})\int_{\mathbb R}
       e^{-r^2}(1+sJ|v+r|)^{p+d+2}\,dr\leq C_{p,\beta J}(1+|v|)^{p+d+2}(1+e^{z/2}).
\end{align*}
Squaring and integrating against $d\mu$ gives the bound:
\[
 \begin{aligned}
 \sup_{s,x}\sum_u\int_{\mathbb R^2}
       \|\widehat{\mathsf{B}}_{s,x,u}(v,z)\|_{p,x}^2\,d\mu(v,z)\leq C_{p,\beta J}\int_{\mathbb R^2}
       (1+|v|)^{2p+2d+4}(1+e^{z/2})^2\,d\mu(v,z)\leq C_B^2.
 \end{aligned}
\]
Here $d\mu(v,z)=e^{-(z+1)^2/2}\,dv\,dz/\cosh(2\pi v)$, so the polynomial factor in $v$ and the factor $(1+e^{z/2})^2$ are integrable. The constants are uniform in $s$, $x$, and the volume.
\end{proof}

We can now bound the local sampler terms. Using \eqref{eq:local-sampler-decomposition}, the product and adjoint bounds \eqref{eq:local-norm-product}, and Lemma~\ref{lem:native-gradient-locality} at order $\ell$, we obtain
\begin{equation}\label{eq:weighted-sampler-local-strength}
 \begin{aligned}
 \sup_{s,x}\|\mathsf L_{\sigma_s,x}\|_{\ell,x}
 \leq\sup_{s,x}\sum_u\int_{\mathbb R^2}
       \|\widehat{\mathsf B}_{s,x,u}(v,z)\|_{\ell,x}^2\,d\mu(v,z)\leq C.
 \end{aligned}
\end{equation}
This proves the required sampler bound.

\proofstep{Step 5: the commutator bound}
It remains to combine the potential bound \eqref{eq:potential-local-strength} and the sampler bound \eqref{eq:weighted-sampler-local-strength}. Decompose the generator according to the sampler sites:
\[
 \mathsf G_s=\sum_x\mathsf G_{s,x},\qquad
 \mathsf G_{s,x}:=[\mathsf M_{f_s},\mathsf L_{\sigma_s,x}].
\]
Apply \eqref{eq:summed-local-commutator} with the operator centered at $x$ equal to $\mathsf L_{\sigma_s,x}$ and the other factor equal to $\mathsf M_{f_s}$. Since $\ell=p+d+2$, this gives
\begin{equation}\label{eq:flow-locality-reduction}
 \begin{aligned}
 \|\mathsf G_s\|_{p,\mathrm{loc}}\leq\sup_x\|\mathsf G_{s,x}\|_{p,x}\leq C_p\|\mathsf M_{f_s}\|_{\ell,\mathrm{loc}}
       \sup_x\|\mathsf L_{\sigma_s,x}\|_{\ell,x}\leq CJ.
 \end{aligned}
\end{equation}
Both input bounds are uniform in $s$ and the volume. Taking the supremum over $0\leq s\leq\beta$ proves \eqref{eq:flow-locality-target} with $C_G\leq CJ$.
\end{proof}

\section{Technical estimates for energy and entropy variation}\label{app:kernel}
We prove Lemma~\ref{lem:comparison-families}, the entropy representation \eqref{eq:entropy-integral-representation}, and the entropy bound \eqref{eq:entropy-derivative-bound} in Lemma~\ref{lem:derivative-reduction}; the energy bound \eqref{eq:energy-derivative-bound} and the cancellations \eqref{eq:derivative-gibbs-cancellation} were derived in Section~\ref{sec:derivative-comparison}. Appendix~\ref{sec:energy-reduction} bounds the quasi-local norms of $\mathsf F_{s,x,u}(v,z)$ from \eqref{eq:moving-frame-derivative}. Appendix~\ref{sec:entropy-reduction} decomposes the transport generator into pairs of matrix products, computes the entropy contribution of each pair, constructs the filtered commutators $\mathsf B_{s,P_\gamma(\omega)}$ and $\mathsf B_{s,Q_\gamma(\omega)}$, and sums their quasi-local norms. Together with Condition~\ref{cond:local-action}, verified in Appendix~\ref{sec:local-gap-criterion-proof}, these estimates give \eqref{eq:variation-targets} as in the proof of Theorem~\ref{prop:general-criterion}.

\subsection{Energy variation}\label{sec:energy-reduction}
We prove \eqref{eq:energy-family-locality} in Lemma~\ref{lem:comparison-families}.


\proofstep{Step 1: differentiate the Gaussian gradients}
The measure $d\mu$ in \eqref{eq:native-gradient} is independent of $s$. We therefore differentiate the rescaled gradients at fixed $(v,z)$.
\begin{lemma}[Locality of the temperature derivatives]\label{lem:native-gradient-derivatives}
For the Gaussian KMS Gibbs sampler \eqref{eq:CKG} of a finite-range Hamiltonian satisfying \eqref{eq:growth}, and every $p>0$,
\begin{equation}\label{eq:dirichlet-locality-target}
 \sup_{0\leq s\leq\beta,\,x}\sum_u\int_{\mathbb R^2}
 \|\partial_s\widehat{\mathsf B}_{s,x,u}(v,z)\|_{p,x}^2\,d\mu(v,z)\leq c_p^2J^2=:C_{\dot B}^2,
\end{equation}
where $C_{\dot B}$ is independent of $s$, $x$, and the volume.
\end{lemma}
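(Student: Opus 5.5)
We follow the proof of Lemma~\ref{lem:native-gradient-locality}, differentiating the explicit integral representation at fixed $(v,z)$. By \eqref{eq:native-gradient} and \eqref{eq:native-gaussian-shift}, the operator $\widehat{\mathsf B}_{s,x,u}(v,z)$ is left multiplication by $A_{s}(v,z):=e^{isvH}\widehat J^s_{x,u}(z)e^{-isvH}$ minus right multiplication by $e^{1/4+z/2}A_s(v,z+1)$. By \eqref{eq:native-jump-evolution}, $A_s(v,z)$ is a Gaussian average over $r$ of $e^{is(v+r)H}S_{x,u}e^{-is(v+r)H}$. Differentiating in $s$ under the integral, which is justified by the Gaussian weight, gives
\[
 \partial_sA_s(v,z)=\frac{(2/\pi)^{1/4}}{\sqrt{2\pi}}\int_{\mathbb R}e^{-r^2-izr}\,i(v+r)\,e^{is(v+r)H}[H,S_{x,u}]e^{-is(v+r)H}\,dr.
\]
The key point is that $[H,S_{x,u}]=\sum_{Z\ni x}[h_Z,S_{x,u}]$ is supported in $\Ball_x(R_0)$ with norm at most $2J$ by \eqref{eq:growth}. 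This produces the factor $J$, with $\|[H,S_{x,u}]\|_{p+d+2,x}\leq 2(1+R_0)^{p+d+2}J$.

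Next we apply \eqref{eq:hamiltonian-locality} to the evolved commutator. This gives a bound $C_pJ(1+sJ|v+r|)^{p+d+2}$ on the quasi-local norm of each integrand. Multiplying by $|v+r|$ and integrating against $e^{-r^2}$ yields
\[
 \|\partial_sA_s(v,z)\|_{p,x}\leq C_{p,\beta J}\,J\,(1+|v|)^{p+d+3}.
\]
Left and right multiplication by a matrix preserve support and operator norm in the vectorized picture, so the same bound transfers to the two multiplication parts of $\partial_s\widehat{\mathsf B}_{s,x,u}(v,z)$. The second part carries the factor $e^{1/4+z/2}$, so
\[
 \|\partial_s\widehat{\mathsf B}_{s,x,u}(v,z)\|_{p,x}\leq C_{p,\beta J}\,J\,(1+|v|)^{p+d+3}(1+e^{z/2}).
\]

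Finally we square this bound, sum over the $4^q-1$ values of $u$, and integrate against $d\mu$. The $\cosh(2\pi v)^{-1}$ factor absorbs the polynomial growth in $v$, and $e^{-(z+1)^2/2}$ absorbs the factor $(1+e^{z/2})^2$. This gives $c_p^2J^2$, uniformly in $s\in[0,\beta]$, $x$ and the volume; the dependence on $\beta J$ is absorbed into $c_p$.

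The main obstacle is bookkeeping rather than any genuine difficulty. First, the $s$-dependence enters only through the times $s(v+r)$, since the conjugation by $\sigma_s^{-1/2}$ has already been converted into the fixed shift $z\mapsto z+1$ with an $s$-independent prefactor. Second, the factor $|v+r|$ must be tracked, because it only raises the polynomial degree by one.
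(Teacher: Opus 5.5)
Your proposal is correct and follows the paper's proof essentially step for step. You differentiate the Gaussian-averaged evolution \eqref{eq:native-jump-evolution} in $s$, which produces the factor $i(v+r)$ and the commutator $[H,S_{x,u}]$, supported in $\Ball_x(R_0)$ with norm at most $2J$. You then apply \eqref{eq:hamiltonian-locality}, use that the shift factor $e^{1/4+z/2}$ from \eqref{eq:native-gaussian-shift} does not depend on $s$, and integrate the squared bound $C J^2(1+|v|)^{2p+2d+6}(1+e^{z/2})^2$ against $d\mu$.
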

\begin{proof}
We follow the proof of Lemma~\ref{lem:native-gradient-locality}. Differentiate \eqref{eq:native-jump-evolution}:
\[
 \partial_s\!\left(e^{isvH}\widehat J^s_{x,u}(z)e^{-isvH}\right)
 =\frac{i(2/\pi)^{1/4}}{\sqrt{2\pi}}
   \int_{\mathbb R}(v+r)e^{-r^2-izr}
       e^{is(v+r)H}[H,S_{x,u}]e^{-is(v+r)H}\,dr.
\]
Since $H$ has finite range,
\[
 \supp[H,S_{x,u}]\subseteq\Ball_x(R_0),\qquad
 \|[H,S_{x,u}]\|\leq2J.
\]
Applying \eqref{eq:hamiltonian-locality} to $[H,S_{x,u}]$ therefore gives
\[
 \|e^{itH}[H,S_{x,u}]e^{-itH}\|_{p,x}
 \leq C_pJ(1+J|t|)^{p+d+2}.
\]
Since the factor $e^{1/4+z/2}$ in \eqref{eq:native-gaussian-shift} is independent of $s$, applying this bound inside the differentiated integral gives
\begin{align*}
 \|\partial_s\widehat{\mathsf{B}}_{s,x,u}(v,z)\|_{p,x}
 &\leq C_pJ(1+e^{z/2})\int_{\mathbb R}
       |v+r|e^{-r^2}(1+sJ|v+r|)^{p+d+2}\,dr\\
 &\leq C_{p,\beta J}J(1+|v|)^{p+d+3}(1+e^{z/2}).
\end{align*}
Squaring and integrating against $d\mu$ gives
\[
 \begin{aligned}
 \sup_{s,x}\sum_u\int_{\mathbb R^2}
       \|\partial_s\widehat{\mathsf{B}}_{s,x,u}(v,z)\|_{p,x}^2\,d\mu(v,z)
 &\leq C_{p,\beta J}J^2\int_{\mathbb R^2}
       (1+|v|)^{2p+2d+6}(1+e^{z/2})^2\,d\mu(v,z)\\
       &\leq c_p^2J^2=C_{\dot B}^2.
 \end{aligned}
\]
\end{proof}

\proofstep{Step 2: bound the quasi-local norms of $\mathsf F_{s,x,u}$}
\begin{proof}[Proof of \eqref{eq:energy-family-locality} in Lemma~\ref{lem:comparison-families}]
To prove \eqref{eq:energy-family-locality}, set $p_1=p_0+d+2$ and estimate the two terms in \eqref{eq:moving-frame-derivative}. Since Condition~\ref{cond:transport-mixing} holds with
$\nu>p_0+9d+12>p_1+6d+10$, Proposition~\ref{prop:flow-locality} bounds $\mathsf G_s$ at order $p_1$ by $C_G$. Applying \eqref{eq:summed-local-commutator} to the commutator term gives
\[
 \|\mathsf F_{s,x,u}(v,z)\|_{p_0,x}
 \leq\|\partial_s\widehat{\mathsf B}_{s,x,u}(v,z)\|_{p_1,x}
       +cC_G\|\widehat{\mathsf B}_{s,x,u}(v,z)\|_{p_1,x}.
\]
Applying Lemmas~\ref{lem:native-gradient-locality} and~\ref{lem:native-gradient-derivatives} gives
\begin{equation}\label{eq:differentiated-gradient-locality}
 \sup_{s,x}\sum_u\int_{\mathbb R^2}\|\mathsf F_{s,x,u}(v,z)\|_{p_0,x}^2\,d\mu(v,z)
 \leq c(C_{\dot B}^2+C_G^2C_B^2)=:M_E^2.
\end{equation}
This proves \eqref{eq:energy-family-locality}.
\end{proof}

\subsection{Entropy derivative and gradient locality}\label{sec:entropy-reduction}
We prove \eqref{eq:entropy-derivative-bound} in Lemma~\ref{lem:derivative-reduction} and \eqref{eq:entropy-family-locality} in Lemma~\ref{lem:comparison-families}. The first task is to construct the representation \eqref{eq:entropy-integral-representation} of the derivative, with $\mathcal B_{s,V}$ and $\mathcal K_{h,Y}$ as in Section~\ref{sec:derivative-comparison}.

\proofstep{Step 1: write $\mathcal G_s(Y)$ as a sum of matrix products}
The entropy variation formula in Section~\ref{sec:derivative-comparison} requires a decomposition of $\mathsf{G}_s$ into multiplication and commutator terms. We obtain it from the Gaussian KMS Gibbs sampler and the Poisson potentials constructed in Section~\ref{sec:poisson-construction}.

\begin{lemma}[Decomposition of the transport]\label{lem:transport-decomposition}
Let $H=\sum_Zh_Z$ with $h_Z=h_Z^*$, and let $\mathcal L_s$ be the trace adjoint of the Gaussian KMS Gibbs sampler \eqref{eq:CKG} for $\sigma_s=e^{-sH}/\Tr(e^{-sH})$, with $s>0$. Write $k=(x,u,r)$, where $r\in\{1,2\}$ labels the real and imaginary parts of a weighted jump, and set $x_k=x$. The weighted generator $\mathcal{L}_{\sigma_s}$ has the continuous decomposition

\begin{equation}\label{eq:aux-parent}
\mathcal{L}_{\sigma_s}=-\sum_k\int_{\mathbb R}\mathrm{ad}_{V_k(\omega)}^2\,d\omega-\mathcal{M}_F,
 \qquad \mathcal M_F(Y)=\tfrac12\{F,Y\},
\end{equation}
where the matrices are defined directly from the filtered jumps by
\begin{equation}\label{eq:weighted-jump-coefficients}
 \begin{aligned}
 V_{x,u,1}(\omega)
 =\sqrt{\frac{\gamma_s(\omega)}2}\,
   \operatorname{Re}\!\left(\sigma_s^{1/4}J_{x,u}^{s}(\omega)^*\sigma_s^{-1/4}\right),\quad V_{x,u,2}(\omega)=\sqrt{\frac{\gamma_s(\omega)}2}\,
   \operatorname{Im}\!\left(\sigma_s^{1/4}J_{x,u}^{s}(\omega)^*\sigma_s^{-1/4}\right).
 \end{aligned}
\end{equation}
Here $\operatorname{Re}A=(A+A^*)/2$, $\operatorname{Im}A=(A-A^*)/(2i)$, and
\begin{equation}\label{eq:transport-multiplication-coefficient}
 F=2\operatorname{Re}\!\left[
       \sigma_s^{1/4}\left(\tfrac12Q_s-iH_s^{\mathrm{coh}}\right)
       \sigma_s^{-1/4}\right]
       -2\sum_k\int_{\mathbb R}V_k(\omega)^2\,d\omega.
\end{equation}
The matrices $Q_s$ and $H_s^{\mathrm{coh}}$ are those in \eqref{eq:CKG-correction}.

Moreover, the transport $\mathcal{G}_s=[\mathcal{M}_{f_s},\mathcal{L}_{\sigma_s}]$ acts by
\begin{equation}\label{eq:G-multiplication}
 \mathcal G_s(Y)=[W_s,Y]+\sum_{k,Z}\int_{\mathbb R}
  \bigl(V_k(\omega)YC_{k,Z}(\omega)
            -C_{k,Z}(\omega)YV_k(\omega)\bigr)\,d\omega,
\end{equation}
where
\[
 C_{k,Z}(\omega)=[V_k(\omega),f_{s,Z}],\qquad
 W_s=\tfrac14\left[F+2\sum_k\int_{\mathbb R}V_k(\omega)^2\,d\omega,f_s\right].
\]

The potential is $f_s=\sum_Zf_{s,Z}$, with $f_{s,Z}=\sech(s\ad_H/4)X_{s,Z}$ and $X_{s,Z}$ from \eqref{eq:local-poisson-integral}, as in Appendix~\ref{app:transport-coefficients}, Step~3. These matrices satisfy $F=F^*$, $W_s^*=-W_s$, and $C_{k,Z}(\omega)^*=-C_{k,Z}(\omega)$.
\end{lemma}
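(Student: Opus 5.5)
The plan is to start from the state form of the generator and conjugate by the Gibbs weights. By \eqref{eq:weighted-generator}, $\mathcal L_{\sigma_s}=\Gamma_{\sigma_s}^{-1/2}\mathcal L_{s,*}\Gamma_{\sigma_s}^{1/2}$. We substitute \eqref{eq:CKG} and write $\Gamma_{\sigma_s}^{1/2}(Y)=\sigma_s^{1/4}Y\sigma_s^{1/4}$.

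\textbf{Step 1: the weighted generator.} The jump term becomes $\gamma_s(\omega)\,\widetilde J Y \widetilde J^\dagger$ with $\widetilde J=\sigma_s^{-1/4}J^s_{x,u}(\omega)\sigma_s^{1/4}$ and $\widetilde J^\dagger=\sigma_s^{-1/4}J^s_{x,u}(\omega)^*\sigma_s^{1/4}$. The anticommutator and coherent terms become $-\tfrac12(\widetilde K Y+Y\widetilde K^\dagger)$, with $K=\tfrac12Q_s-iH^{\mathrm{coh}}_s$ up to the analogous weighting.

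KMS symmetry makes $\mathcal L_{\sigma_s}$ Hilbert--Schmidt self-adjoint, so we may replace it by its symmetrization $\tfrac12(\mathcal L_{\sigma_s}+\mathcal L_{\sigma_s}^*)$. The dissipative part then becomes $\tfrac12\gamma_s(A Y A^*+A^*YA)$ with $A=\sigma_s^{1/4}J^{s*}\sigma_s^{-1/4}$, using the Gaussian modular shift \eqref{eq:gaussian-modular-shift} to identify the two weightings under the $\omega$-integral. I expect this identification, namely that the Hilbert--Schmidt adjoint matches the reweighted jumps after a change of variable in $\omega$, to be the delicate bookkeeping point.

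Writing $A=\sqrt{2/\gamma_s}\,(V_1+iV_2)$ with $V_1,V_2$ Hermitian, as in \eqref{eq:weighted-jump-coefficients}, gives
\[
\tfrac{\gamma_s}{2}(AYA^*+A^*YA)=2\,(V_1YV_1+V_2YV_2).
\]
The cross terms cancel because the symmetrization removes them. We then use $\mathrm{ad}_V^2(Y)=V^2Y-2VYV+YV^2$, which gives $2VYV=-\mathrm{ad}_V^2Y+\{V^2,Y\}$. Collecting all anticommutator terms into $-\mathcal M_F$ yields \eqref{eq:aux-parent} with $F$ as in \eqref{eq:transport-multiplication-coefficient}. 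The Hermiticity $F=F^*$ holds by construction via $\operatorname{Re}$.

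\textbf{Step 2: the transport.} Next compute $\mathcal G_s=[\mathcal M_{f_s},\mathcal L_{\sigma_s}]$ term by term.

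For the multiplication part, the elementary identity
\[
[\mathcal M_f,\mathcal M_F]Y=\tfrac14\bigl[[f,F],Y\bigr]
\]
holds, because the terms $fYF$ and $FYf$ cancel.

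For each $\mathrm{ad}_V^2$ term, expand $-[\mathcal M_f,\mathrm{ad}_V^2]=-[\mathcal M_f,\mathcal M_{2V^2}]+2[\mathcal M_f,V(\cdot)V]$. The first piece again gives a commutator with $[V^2,f]$. For the second piece, a direct computation gives
\[
\tfrac12\bigl(fVYV+VYVf-VfYV-VYfV\bigr)=\tfrac12\bigl([f,V]YV+VY[V,f]\bigr).
\]
Collect the pure commutator terms into $[W_s,Y]$. Then split $f_s=\sum_Z f_{s,Z}$ and set $C_{k,Z}=[V_k,f_{s,Z}]$, which gives \eqref{eq:G-multiplication}; the factor of 2 absorbs the $\tfrac12$. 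I will check the signs and the exact coefficient $\tfrac14$ in $W_s$ by matching the terms $f\,V^2Y$.

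\textbf{Step 3: adjoint properties.} Skew-adjointness of $W_s$ and $C_{k,Z}$ is immediate, since each is a commutator of Hermitian matrices.

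The main obstacle is Step 1. One must verify that the weighted jump and coherent terms combine exactly into Hermitian $V_k$ with no leftover non-symmetric part. This requires the KMS structure of the coherent term $H^{\mathrm{coh}}_s$ from \cite{CKG}. My first attempt would rely on self-adjointness of $\mathcal L_{\sigma_s}$ and symmetrize, rather than track $H^{\mathrm{coh}}_s$ explicitly.
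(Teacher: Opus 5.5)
Your proposal follows the paper's proof. Replacing $\mathcal L_{\sigma_s}$ by $\tfrac12(\mathcal L_{\sigma_s}+\mathcal L_{\sigma_s}^*)$ is exactly the paper's averaging of $\Gamma_{\sigma_s}^{1/2}\mathcal L_s\Gamma_{\sigma_s}^{-1/2}$ with $\Gamma_{\sigma_s}^{-1/2}\mathcal L_{s,*}\Gamma_{\sigma_s}^{1/2}$, and your real/imaginary splitting, the identity $2VYV=\{V^2,Y\}-\mathrm{ad}_V^2(Y)$, and the two commutator identities with $\mathcal M_f$ are the paper's steps.

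One correction removes the step you flagged as delicate. The weighted jump term is $\widetilde J Y\widetilde J^*$ with right factor $\widetilde J^*=\sigma_s^{1/4}J^s_{x,u}(\omega)^*\sigma_s^{-1/4}$, not $\sigma_s^{-1/4}J^s_{x,u}(\omega)^*\sigma_s^{1/4}$. Its Hilbert--Schmidt adjoint is therefore $\widetilde J^*Y\widetilde J$ at the same $\omega$, so no Gaussian modular shift or change of variable is needed. Likewise, the multiplication part of $\mathcal L_{s,*}$ is $-K^*\rho-\rho K$ with $K=\tfrac12Q_s-iH_s^{\mathrm{coh}}$, without the extra factor $\tfrac12$ you wrote.
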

\begin{proof}[Proof of Lemma~\ref{lem:transport-decomposition}]


A direct calculation using \eqref{eq:CKG} gives
\begin{equation}\label{eq:decomposition-cited-form}
 \begin{aligned}
 \Gamma_{\sigma_s}^{1/2}\mathcal L_s\Gamma_{\sigma_s}^{-1/2}(Y)
 &=\sum_{x,u}\int_{\mathbb R}\!d\omega\,\gamma_s(\omega)
 (\sigma_s^{1/4}J_{x,u}^{s}(\omega)^*\sigma_s^{-1/4})Y
 (\sigma_s^{-1/4}J_{x,u}^{s}(\omega)\sigma_s^{1/4})\\
 &\quad-\sigma_s^{1/4}\left(\tfrac12Q_s-iH_s^{\mathrm{coh}}\right)\sigma_s^{-1/4}Y
 -Y\sigma_s^{-1/4}\left(\tfrac12Q_s+iH_s^{\mathrm{coh}}\right)\sigma_s^{1/4}.
 \end{aligned}
\end{equation}
KMS detailed balance gives $\Gamma_{\sigma_s}^{1/2}\mathcal L_s\Gamma_{\sigma_s}^{-1/2}=\Gamma_{\sigma_s}^{-1/2}\mathcal L_{s,*}\Gamma_{\sigma_s}^{1/2}$. This implies
\begin{equation}\label{eq:decomposition-symmetrized-form}
 \begin{aligned}
 \Gamma_{\sigma_s}^{1/2}\mathcal L_s\Gamma_{\sigma_s}^{-1/2}(Y)
 &=\tfrac12\left(\Gamma_{\sigma_s}^{1/2}\mathcal L_s\Gamma_{\sigma_s}^{-1/2}
 +\Gamma_{\sigma_s}^{-1/2}\mathcal L_{s,*}\Gamma_{\sigma_s}^{1/2}\right)(Y)\\
 &=\tfrac12\sum_{x,u}\int_{\mathbb R}\!d\omega\,\gamma_s(\omega)\Bigl[
 (\sigma_s^{1/4}J_{x,u}^{s}(\omega)^*\sigma_s^{-1/4})Y
 (\sigma_s^{-1/4}J_{x,u}^{s}(\omega)\sigma_s^{1/4})\\
 &\qquad+(\sigma_s^{-1/4}J_{x,u}^{s}(\omega)\sigma_s^{1/4})Y
 (\sigma_s^{1/4}J_{x,u}^{s}(\omega)^*\sigma_s^{-1/4})\Bigr]\\
 &\quad-\left\{\operatorname{Re}\!\left[\sigma_s^{1/4}\left(\tfrac12Q_s-iH_s^{\mathrm{coh}}\right)\sigma_s^{-1/4}\right],Y\right\}.
 \end{aligned}
\end{equation}
We reformulate \eqref{eq:decomposition-symmetrized-form} using the notation in \eqref{eq:weighted-jump-coefficients} and \eqref{eq:transport-multiplication-coefficient}:
\begin{equation}\label{eq:decomposition-weighted-final}
 \begin{aligned}
 \Gamma_{\sigma_s}^{1/2}\mathcal L_s\Gamma_{\sigma_s}^{-1/2}(Y)
 &=2\sum_k\int_{\mathbb R}V_k(\omega)YV_k(\omega)\,d\omega
 -\tfrac12\left\{F+2\sum_k\int_{\mathbb R}V_k(\omega)^2\,d\omega,Y\right\}\\
 &=-\sum_k\int_{\mathbb R}[V_k(\omega),[V_k(\omega),Y]]\,d\omega-\tfrac12\{F,Y\}.
 \end{aligned}
\end{equation}
This proves \eqref{eq:aux-parent}.

Define $A=F+2\sum_k\int V_k(\omega)^2\,d\omega$ in this calculation. Then,
\[
 \mathcal L_{\sigma_s}(Y)=2\sum_k\int V_k(\omega)YV_k(\omega)\,d\omega-\mathcal M_A(Y).
\]
The commutator with $\mathcal M_{f_s}$ follows from two algebraic identities:
\[
 \begin{aligned}
 [\mathcal M_f,\,2V(\,\cdot\,)V](Y)
 &=VY[V,f]-[V,f]YV,\\
 [\mathcal M_f,-\mathcal M_A](Y)&=\tfrac14[[A,f],Y].
 \end{aligned}
\]
Summing the first identity and using $f_s=\sum_Zf_{s,Z}$ gives the pair terms in \eqref{eq:G-multiplication}; the second gives $[W_s,Y]$ with the stated $W_s$.
\end{proof}

\proofstep{Step 2: compute the entropy contribution of one pair}
We now compute the derivative produced by the two types of terms in \eqref{eq:G-multiplication}.

Recall that $\|Y_s\|_2=1$, so $Y_s^2$ is a state and
\[
 D(Y_s^2\Vert\sigma_s)=\Tr[Y_s^2(\log Y_s^2-\log\sigma_s)].
\]
The unitary preserves $\|Y_s\|_2$, so $\Tr[(\partial_sY_s)Y_s]=0$. Differentiating the two trace terms gives
\[
 \begin{aligned}
 \partial_s\Tr[Y_s^2\log Y_s^2]
 &=2\Tr[(\partial_sY_s)Y_s\log Y_s^2],\\
 \partial_s\Tr[Y_s^2\log\sigma_s]
 &=\Tr[(\partial_sY_s)\{Y_s,\log\sigma_s\}]
   +\Tr[Y_s^2\partial_s\log\sigma_s].
 \end{aligned}
\]
Their difference gives \eqref{eq:entropy-simultaneous-derivative}. Equation~\eqref{eq:entropy-simultaneous-derivative} also holds at zero eigenvalues: the function $x\mapsto x^2\log x^2$, with value zero at $x=0$, is continuously differentiable and has derivative zero there. Thus $Y_s\log Y_s^2$ is interpreted by continuous functional calculus.
Both matrices move by the transport:
\begin{equation}\label{eq:simultaneous-matrix-motion}
 \partial_sY_s=\mathcal G_s(Y_s),\qquad
 \partial_s\sqrt{\sigma_s}=\mathcal G_s(\sqrt{\sigma_s}),
\end{equation}
with $\mathcal G_s$ as in \eqref{eq:G-multiplication}. The entropy derivative contains $\partial_s\log\sigma_s$, whereas \eqref{eq:simultaneous-matrix-motion} gives $\partial_s\sqrt{\sigma_s}$. The resolvent formula connects them:
\begin{equation}\label{eq:entropy-log-derivative}
 \begin{aligned}
 \log\sigma_s
 &=2\int_0^\infty\left(\frac{I}{1+t}-(\sqrt{\sigma_s}+tI)^{-1}\right)dt,\\
 \partial_s(\sqrt{\sigma_s}+tI)^{-1}
 &=-(\sqrt{\sigma_s}+tI)^{-1}
       (\partial_s\sqrt{\sigma_s})(\sqrt{\sigma_s}+tI)^{-1},\\
 \partial_s\log\sigma_s
 &=-2\int_0^\infty\partial_s(\sqrt{\sigma_s}+tI)^{-1}\,dt =2\int_0^\infty(\sqrt{\sigma_s}+tI)^{-1}
       (\partial_s\sqrt{\sigma_s})
       (\sqrt{\sigma_s}+tI)^{-1}\,dt.
 \end{aligned}
\end{equation}
The commutator terms $[W_s,\cdot]$ in \eqref{eq:G-multiplication} contribute zero. Indeed, simultaneous unitary conjugation preserves the relative entropy:
\[
 D(UY^2U^*\Vert U\sigma U^*)=D(Y^2\Vert\sigma)\qquad(U^*U=I).
\]
Since $W_s^*=-W_s$, taking $U=e^{\theta W_s}$ and differentiating at $\theta=0$ cancels all terms arising from $[W_s,Y_s]$ and $[W_s,\sqrt{\sigma_s}]$. The remaining derivative is therefore
\begin{equation}\label{eq:entropy-pair-derivative}
 \begin{aligned}
 \frac d{ds}D(Y_s^2\Vert\sigma_s)
 &=\sum_{k,Z}\int_{\mathbb R}\!d\omega\,
       \delta_{V_k(\omega),C_{k,Z}(\omega)}(Y_s),\\
 \delta_{V,C}(Y)
 &=\underbrace{2\Tr[(VYC-CYV)Y\log Y^2]}_{\mathrm I}\\
 &\quad+\underbrace{\bigl(-\Tr[(VYC-CYV)\{Y,\log\sigma_s\}]\bigr)}_{\mathrm{II}}\\
 &\quad+\underbrace{\left(-2\int_0^\infty\Tr\!\left[
       Y^2(\sqrt{\sigma_s}+tI)^{-1}
       (V\sqrt{\sigma_s}C-C\sqrt{\sigma_s}V)
       (\sqrt{\sigma_s}+tI)^{-1}\right]dt\right)}_{\mathrm{III}}.
 \end{aligned}
\end{equation}

Here $\delta_{V,C}(Y)$ is defined by this formula for every Hermitian $Y$; it depends on $s$ through $\sigma_s$. Its three terms combine into weighted commutators. The next step gives the representation and its local strength.

\proofstep{Step 3: construct the filtered commutators}
For a single $\delta_{V,C}$, we need a representation by weighted commutators and a bound on the sum of their squared quasi-local norms, as summarized in the following lemma:

\begin{lemma}[Filtered representation and locality of one pair]\label{lem:entropy-filtered-representation}
Fix $p_0>0$, $p_1=p_0+d+2$, and a center $x$. For $V=V^*$ and $C^*=-C$, there are smooth scalar filters $p_\alpha,q_\alpha$, independent of $Y$, such that
\begin{equation}\label{eq:entropy-filtered-matrices}
 P_\alpha=p_\alpha(-s\ad_H/2)V,\qquad
 Q_\alpha=q_\alpha(-s\ad_H/2)C
\end{equation}
satisfy the following properties.
\begin{enumerate}
\item For every Hermitian $Y$, the operators in \eqref{eq:entropy-kernel-contraction} give
\begin{equation}\label{eq:entropy-filtered-identity}
 \delta_{V,C}(Y)
 =\sum_\alpha\Tr\!\left[\mathcal B_{s,Q_\alpha}(Y)
       \mathcal K_{h_\alpha,Y}\bigl(\mathcal B_{s,P_\alpha}(Y)\bigr)\right],
 \qquad \|h_\alpha\|_\infty\leq1.
\end{equation}
\item The filtered gradients obey
\begin{equation}\label{eq:entropy-pair-local-strength}
 \sum_\alpha\bigl(\|\mathsf{B}_{s,P_\alpha}\|_{p_0,x}^2
                   +\|\mathsf{B}_{s,Q_\alpha}\|_{p_0,x}^2\bigr)
 \leq2C_{\log}\|\mathsf{B}_{s,V}\|_{p_1+2d,x}\|\mathsf{B}_{s,C}\|_{p_1+2d,x}.
\end{equation}
\end{enumerate}
Here $\sum_\alpha$ denotes integration over $(t,u)\in\mathbb R^2$ with the positive measure $d\nu_N(t,u)$ constructed in the proof; it is independent of $Y$. The constant depends only on $p_0$, an upper bound on $\beta J$, the interaction range, and the geometry.
\end{lemma}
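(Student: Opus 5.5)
The plan is to diagonalize everything in the eigenbases of $Y=\sum_\lambda\lambda\Pi_\lambda$ and of $\sigma_s$, and to read $\delta_{V,C}(Y)$ as a bilinear form in the matrix elements of $V$ and $C$ with a scalar kernel. That kernel should then factor as a product of a divided-difference weight for the $Y$-dependence and a smooth filter in the modular frequency.

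\textbf{Step 1: reduce the three terms to a common form.} Write $\sigma_s^{1/2}=\sum_i\omega_i P_i$ in the Hamiltonian eigenbasis. Term III is computed explicitly via the resolvent integral: $\int_0^\infty(\omega_i+t)^{-1}(\omega_j+t)^{-1}dt=(\log\omega_i-\log\omega_j)/(\omega_i-\omega_j)$. This converts III into a trace against $Y^2$ of the logarithmic-mean-weighted version of $V\sqrt{\sigma_s}C-C\sqrt{\sigma_s}V$. Terms I and II are rewritten with cyclicity as $\Tr[V\,Y\,C\,\Phi]-\Tr[C\,Y\,V\,\Phi]$, where $\Phi=Y\log Y^2+\log Y^2\,Y$ (for I) or $\{Y,\log\sigma_s\}$ (for II).

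Next, substitute $V=\mathcal B$-type variables. Since $\mathcal B_{s,V}(Y)=\Gamma^{1/2}_{\sigma_s}[V,\Gamma^{-1/2}_{\sigma_s}(Y)]$, one has
\[
 [V,Y]=\mathcal B_{s,\tilde V}(Y)+(\text{weighting correction}),
\]
with $\tilde V$ a modular rescaling of $V$. The aim is to express $\delta_{V,C}(Y)$ as $\Tr[\mathcal B_{s,\tilde C}(Y)\,\mathcal K(\mathcal B_{s,\tilde V}(Y))]$. Here $\mathcal K$ is a double-operator integral in the spectrum of $Y$ whose symbol is a ratio such as
\[
 \frac{\lambda\log\lambda^2-\mu\log\mu^2}{\lambda-\mu}
\]
minus logarithmic-mean terms.

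The key cancellation must come from the fact that both $Y$ and $\sqrt{\sigma_s}$ move under the same $\mathcal G_s$. The $\log\sigma_s$ pieces (II plus III) and the $\log Y^2$ piece (I) should combine into a symbol that vanishes when $Y=\sqrt{\sigma_s}$. I would verify this first by checking $\delta_{V,C}(\sqrt{\sigma_s})=0$, which is the infinitesimal statement that relative entropy is preserved when $Y$ is transported exactly.

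\textbf{Step 2: separate the $Y$-symbol from the $\sigma$-symbol.} The combined symbol is a function of $(\lambda,\mu)$ and of the Bohr frequency $\omega_i/\omega_j=e^{-s\nu/2}$. I would write the $\sigma$-dependence through a Fourier representation in $\nu$, i.e. $\int g(t)e^{it\nu}dt$ with $g$ decaying like $1/\cosh$. The $Y$-dependence would be written as a superposition, in an auxiliary variable $u$, of functions $h_{t,u}(\lambda,\mu)$ bounded by $1$ — for example via the integral $\log x=\int_0^\infty(\frac1{1+u}-\frac1{x+u})du$ split into bounded divided differences.

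Absorbing the positive parts of the weights symmetrically into $p_\alpha$ and $q_\alpha$ (geometric-mean splitting, as in Cauchy–Schwarz) yields \eqref{eq:entropy-filtered-identity}. Here $\alpha=(t,u)$, $d\nu_N$ is a positive measure, $P_\alpha=p_\alpha(-s\ad_H/2)V$ and $Q_\alpha=q_\alpha(-s\ad_H/2)C$. Since $V$ is Hermitian, $C$ is anti-Hermitian, and the filters have the appropriate parity, the $P_\alpha$ and $Q_\alpha$ keep the needed symmetry.

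\textbf{Step 3: locality.} Each filtered matrix is $\int\hat p_\alpha(t)\tau_t(V)\,dt$. Since $\mathsf B_{s,P}=\sGamma^{1/2}\Ad_P\sGamma^{-1/2}$ and modular filters commute with $\ad_H$, we get $\mathsf B_{s,P_\alpha}$ as the same filter applied to $\mathsf B_{s,V}$. Lemma~\ref{lem:locality-calculus}, estimate \eqref{eq:local-filter-integral}, applied twice (the second time to the $u$-integral) bounds $\|\mathsf B_{s,P_\alpha}\|_{p_0,x}$ by $\|\mathsf B_{s,V}\|_{p_1+2d,x}$ times weighted filter moments.

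The geometric-mean splitting makes $\int(\|\hat p_\alpha\|^2_w+\|\hat q_\alpha\|^2_w)\,d\nu_N$ equal to twice a product of moments. The rescaling between the two operators (multiplying $P$ by $c$ and $Q$ by $1/c$) then turns sum-of-squares into $2C_{\log}\|\mathsf B_{s,V}\|\,\|\mathsf B_{s,C}\|$, which is \eqref{eq:entropy-pair-local-strength}.

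\textbf{Main obstacle.} The hard step is Step 2: showing that the combined symbol factorizes with a positive measure and uniformly bounded $h_\alpha$. The logarithmic divided differences are unbounded near $\lambda,\mu\to0$, and they must be tamed by the $\mathcal B$-weighting. I would first attempt the explicit case of $2\times2$ blocks to identify the symbol before generalizing.
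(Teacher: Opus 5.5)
Your Step~3 matches the paper: filtering commutes with the weighting, so $\mathsf B_{s,P_\alpha}$ is the filtered $\mathsf B_{s,V}$, and the reciprocal rescaling of $P$ and $Q$ turns the sum of squares into $2C_{\log}\|\mathsf B_{s,V}\|\,\|\mathsf B_{s,C}\|$. However, part (i), which is the heart of the lemma, is left open, and the mechanism you sketch for it would not work.

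Expand in the eigenbasis $Y=\sum_i x_i|e_i\rangle\langle e_i|$ and in the Bohr components $V_\nu$, $C_\eta$. This gives $\delta_{V,C}(Y)=\sum k_{a,c}(x_i,x_j)(V_\nu)_{ij}(C_\eta)_{ji}$, with $a=-s\nu/2$, $c=-s\eta/2$ and an explicit scalar kernel $k_{a,c}$. The missing idea is that $k_{a,c}(x,y)$ vanishes on \emph{each} of the two lines $x=e^{a}y$ and $x=e^{-c}y$. Hence $k_{a,c}(x,y)=H_{a,c}(x,y)(e^{a/2}y-e^{-a/2}x)(e^{c/2}x-e^{-c/2}y)$, and these two linear factors are exactly the entries $(\mathcal B_{s,V_\nu}(Y))_{ij}$ and $(\mathcal B_{s,C_\eta}(Y))_{ji}$.

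Your proposed check $\delta_{V,C}(\sqrt{\sigma_s})=0$ cannot establish this. At $Y=\sqrt{\sigma_s}$ only pairs with $\eta=-\nu$ contribute, so it tests $k$ only on the diagonal $a+c=0$, where the two lines coincide. It says nothing about vanishing on each line separately when $a+c\neq0$.

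Beyond divisibility, you need the quotient $H_{a,c}(x,y)$ to be bounded, together with all its $(a,c)$-derivatives. This must hold uniformly over all real $x,y$ (a Hermitian $Y$ has eigenvalues of either sign, possibly zero) and over all Bohr frequencies. The paper gets this from the explicit formula $H_{a,c}(x,y)=2\vartheta(a+c)\int_0^1F'_{\operatorname{sign}(xy)}(\log|x/y|-(1-t)a+tc)\,dt$, with $\vartheta(u)=u/(2\sinh(u/2))$, $F_+(u)=u\coth(u/2)$ and $F_-(u)=u\tanh(u/2)$. Without this you have neither the weighted-commutator structure of (i) nor $\|h_\alpha\|_\infty\le1$.

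The separation you sketch in Step~2 also fails as stated, for three reasons.
\begin{enumerate}
\item The quotient does not decay in the Bohr variables. For example, $H_{m,-m}(1,1)=-(\sinh m-m)/\sinh^2(m/2)\to-2$ as $m\to\infty$. So there is no representation with $1/\cosh$-decaying Fourier weights.
\item A one-variable transform in $\nu$ ignores that $V$ and $C$ carry independent frequencies $\nu,\eta$, which must be filtered separately.
\item The integral representation of $\log x$ is unavailable for negative eigenvalues, and it does not by itself produce bounded $h$.
\end{enumerate}
The paper instead proceeds as follows.
\begin{enumerate}
\item It divides $H_{a,c}(x,y)$ by $w(a)w(c)$ with $w(a)=1+a^2$ and Fourier-transforms in $(a,c)$ at fixed $(x,y)$.
\item It integrates by parts using the uniform derivative bounds, obtaining $\sup_{x,y}|\widehat F_{x,y}(t,u)|\le C_N(1+t^2)^{-N}(1+u^2)^{-N}$.
\item It absorbs this envelope into $d\nu_N$, so that $|h_{t,u}|\le1$.
\item It reinserts $w$ through the filters $\tau_t(V+[K_s,[K_s,V]])$ and $\tau_u(C+[K_s,[K_s,C]])$.
\end{enumerate}
That double commutator is what produces the $+2d$ in $\|\mathsf B_{s,V}\|_{p_1+2d,x}$. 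In your Step~3 this exponent is taken from the statement rather than derived from your filters.
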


We first derive and factor the scalar kernel of $\delta_{V,C}(Y)$, treating the three terms $\mathrm I$, $\mathrm{II}$, $\mathrm{III}$ of \eqref{eq:entropy-pair-derivative} separately.
For term $\mathrm I$, take an orthonormal decomposition $Y=\sum_i x_i|e_i\rangle\langle e_i|$ and write $V_{ij}=\langle e_i|V|e_j\rangle$, $C_{ij}=\langle e_i|C|e_j\rangle$ in this basis. Expanding the trace gives
\[
 \begin{aligned}
 \mathrm I
 =2\sum_{i,j}\bigl[x_j(x_i\log x_i^2)-x_i(x_j\log x_j^2)\bigr]V_{ij}C_{ji}=4\sum_{i,j}x_ix_j\log|x_i/x_j|\,V_{ij}C_{ji}.\qquad(0\log0=0)
 \end{aligned}
\]
For term $\mathrm{II}$, cyclicity of the trace gives
\begin{equation}\label{eq:entropy-second-term-commutators}
 \begin{aligned}
 \mathrm{II}
 &=-\Tr([\log\sigma_s,V]YCY)
   +\Tr([\log\sigma_s,C]YVY).
 \end{aligned}
\end{equation}
Since $\log\sigma_s=-sH-(\log Z_s)I$, both can be evaluated by separating $V$ and $C$ according to energy differences of $H$. Writing $H=\sum_EE\Pi_E$, set
\[
 V=\sum_\nu V_\nu,\quad V_\nu=\sum_{E-E'=\nu}\Pi_EV\Pi_{E'},
 \qquad
 C=\sum_\eta C_\eta,\quad C_\eta=\sum_{E-E'=\eta}\Pi_EC\Pi_{E'}.
\]
The commutators and Gibbs factors now take the explicit form
\begin{equation}\label{eq:entropy-bohr-identities}
 \begin{aligned}
 [\log\sigma_s,V_\nu]&=-s\nu V_\nu,&
 V_\nu\sqrt{\sigma_s}&=e^{s\nu/2}\sqrt{\sigma_s}V_\nu,\\
 [\log\sigma_s,C_\eta]&=-s\eta C_\eta,&
 C_\eta\sqrt{\sigma_s}&=e^{s\eta/2}\sqrt{\sigma_s}C_\eta.
 \end{aligned}
\end{equation}
Substituting the commutator identities in \eqref{eq:entropy-bohr-identities} into \eqref{eq:entropy-second-term-commutators} gives
\[
 \mathrm{II}
 =s\sum_{\nu,\eta}\bigl(\nu\Tr[V_\nu YC_\eta Y]
                         -\eta\Tr[C_\eta YV_\nu Y]\bigr)
 =s\sum_{\nu,\eta,i,j}(\nu-\eta)x_ix_j(V_\nu)_{ij}(C_\eta)_{ji}.
\]
For term $\mathrm{III}$, the identities in \eqref{eq:entropy-bohr-identities} give
\begin{equation}\label{eq:entropy-pair-gibbs-product}
 V\sqrt{\sigma_s}C-C\sqrt{\sigma_s}V
 =\sqrt{\sigma_s}\sum_{\nu,\eta}
       (e^{s\nu/2}V_\nu C_\eta-e^{s\eta/2}C_\eta V_\nu).
\end{equation}
Substituting \eqref{eq:entropy-pair-gibbs-product} into the integral in term $\mathrm{III}$ gives
\begin{equation}\label{eq:entropy-pair-resolvent}
 \begin{aligned}
 &\int_0^\infty(\sqrt{\sigma_s}+tI)^{-1}
       (V\sqrt{\sigma_s}C-C\sqrt{\sigma_s}V)
       (\sqrt{\sigma_s}+tI)^{-1}\,dt\\
 &=\sum_{\nu,\eta}\int_0^\infty
       (\sqrt{\sigma_s}+tI)^{-1}\sqrt{\sigma_s}
       (e^{s\nu/2}V_\nu C_\eta-e^{s\eta/2}C_\eta V_\nu)
       (\sqrt{\sigma_s}+tI)^{-1}\,dt\\
 &\overset{\mathrm{(i)}}{=}\sum_{\nu,\eta}
       \left[\int_0^\infty\sqrt{\sigma_s}
       (\sqrt{\sigma_s}+tI)^{-1}
       (e^{s(\nu+\eta)/2}\sqrt{\sigma_s}+tI)^{-1}\,dt\right]
       (e^{s\nu/2}V_\nu C_\eta-e^{s\eta/2}C_\eta V_\nu)\\
 &\overset{\mathrm{(ii)}}{=}\sum_{\nu,\eta}r\!\left(-\frac{s(\nu+\eta)}2\right)
       (e^{s\nu/2}V_\nu C_\eta-e^{s\eta/2}C_\eta V_\nu),
 \end{aligned}
\end{equation}
where $r(u)=u/(1-e^{-u})$, with $r(0)=1$. Equality \textup{(i)} follows from the Gibbs identities in \eqref{eq:entropy-bohr-identities}:
\[
 \begin{aligned}
 V_\nu C_\eta(\sqrt{\sigma_s}+tI)^{-1}
 &=(e^{s(\nu+\eta)/2}\sqrt{\sigma_s}+tI)^{-1}V_\nu C_\eta,\\
 C_\eta V_\nu(\sqrt{\sigma_s}+tI)^{-1}
 &=(e^{s(\nu+\eta)/2}\sqrt{\sigma_s}+tI)^{-1}C_\eta V_\nu.
 \end{aligned}
\]
Equality \textup{(ii)} uses the scalar identity
\begin{equation}\label{eq:entropy-scalar-resolvent}
 \int_0^\infty\frac{\lambda\,dt}{(\lambda+t)(e^{-u}\lambda+t)}
 =r(u),\qquad \lambda>0.
\end{equation}
Its value is independent of $\lambda$, so applying it with $u=-s(\nu+\eta)/2$ to the eigenvalues of $\sqrt{\sigma_s}$ proves equality \textup{(ii)}.
Multiplying \eqref{eq:entropy-pair-resolvent} by $-2Y^2$ and taking the trace evaluates term $\mathrm{III}$:
\[
 \mathrm{III}
 =-2\sum_{\nu,\eta,i,j}r\!\left(-\frac{s(\nu+\eta)}2\right)
       (e^{s\nu/2}x_i^2-e^{s\eta/2}x_j^2)(V_\nu)_{ij}(C_\eta)_{ji}.
\]
Adding terms $\mathrm I$, $\mathrm{II}$, and $\mathrm{III}$ gives
\begin{equation}\label{eq:entropy-bohr-kernel}
 \begin{aligned}
 \delta_{V,C}(Y)
 &=\sum_{\nu,\eta,i,j}
 \Bigl[4x_ix_j\log|x_i/x_j|+s(\nu-\eta)x_ix_j\\
 &\hspace{16mm}-2r\!\left(-\frac{s(\nu+\eta)}2\right)
       (e^{s\nu/2}x_i^2-e^{s\eta/2}x_j^2)\Bigr]
       (V_\nu)_{ij}(C_\eta)_{ji}\\
 &=\sum_{\nu,\eta,i,j}k_{-s\nu/2,-s\eta/2}(x_i,x_j)(V_\nu)_{ij}(C_\eta)_{ji},
 \end{aligned}
\end{equation}
where the scalar kernel is
\begin{equation}\label{eq:entropy-scalar-kernel}
 k_{a,c}(x,y):=4xy\log|x/y|-2(a-c)xy-2\frac{a+c}{1-e^{-(a+c)}}(e^{-a}x^2-e^{-c}y^2).
\end{equation}

Direct substitution into \eqref{eq:entropy-scalar-kernel} gives
\[
 k_{a,c}(e^a y,y)=0,\qquad k_{a,c}(e^{-c}y,y)=0.
\]
Lemma~\ref{lem:entropy-scalar-kernel} below writes $k_{a,c}(x,y)$ as a bounded kernel $H_{a,c}(x,y)$ times the two linear factors vanishing at these zeros. For $a=-s\nu/2$, $c=-s\eta/2$, $x=x_i$, and $y=x_j$, these two factors are exactly the weighted commutator factors:
\[
 \begin{aligned}
 (e^{-s\nu/4}x_j-e^{s\nu/4}x_i)(V_\nu)_{ij}
 &=(\mathcal B_{s,V_\nu}(Y))_{ij},\\
 (e^{-s\eta/4}x_i-e^{s\eta/4}x_j)(C_\eta)_{ji}
 &=(\mathcal B_{s,C_\eta}(Y))_{ji},
 \end{aligned}
\]
with $\mathcal B_{s,V}$ as in Section~\ref{sec:derivative-comparison}. Consequently,
\begin{equation}\label{eq:entropy-bohr-products}
 \delta_{V,C}(Y)
 =\sum_{\nu,\eta,i,j}H_{-s\nu/2,-s\eta/2}(x_i,x_j)
       (\mathcal B_{s,V_\nu}(Y))_{ij}(\mathcal B_{s,C_\eta}(Y))_{ji}.
\end{equation}

We next give the bound on the scalar kernel $H_{a,c}(x,y)$ and its derivatives.
\begin{lemma}[Factorization of the logarithmic kernel]\label{lem:entropy-scalar-kernel}
For the scalar kernel \eqref{eq:entropy-scalar-kernel} and $x,y\in\mathbb R$, there is a function $H_{a,c}(x,y)$, smooth in $a,c\in\mathbb R$, such that
\[
 k_{a,c}(x,y)=H_{a,c}(x,y)
 (e^{a/2}y-e^{-a/2}x)(e^{c/2}x-e^{-c/2}y),
\]
and, for all nonnegative integers $m,n$,
\[
 \sup_{a,c,x,y\in\mathbb R}
 |\partial_a^m\partial_c^nH_{a,c}(x,y)|\leq C_{m,n}.
\]
The constants are independent of $x,y,a,c$.
\end{lemma}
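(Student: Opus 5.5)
The plan is to exploit homogeneity and a change of variables that removes one of the two parameters, and then reduce everything to a one-variable divided-difference estimate.

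\emph{Step 1 (eliminate $a-c$).} Set $b=(a+c)/2$ and $d=(a-c)/2$, and rescale $x=e^{d/2}X$, $y=e^{-d/2}Y$. A direct substitution turns the two linear factors into
\[
 e^{b/2}Y-e^{-b/2}X
 \qquad\text{and}\qquad
 e^{b/2}X-e^{-b/2}Y .
\]
In \eqref{eq:entropy-scalar-kernel} the terms $4dXY$ and $-2(a-c)xy$ cancel, and $e^{-a}x^2=e^{-b}X^2$, $e^{-c}y^2=e^{-b}Y^2$. This leaves
\[
 k_{a,c}(x,y)=4XY\log|X/Y|-\frac{2b}{\sinh b}(X^2-Y^2).
\]
The product of the two factors equals $2\cosh b\,XY-X^2-Y^2$. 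Both expressions are homogeneous of degree two. So I will define $H$ as their ratio, which depends only on $b$ and $t=X/Y=e^{-d}x/y$; at $x=y=0$ I set $H=0$.

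Derivatives behave well under this reduction. $\partial_a$ and $\partial_c$ become combinations of $\partial_b$ and $\partial_d$. On the positive branch $\partial_d$ acts as $-\partial_\tau$ in the logarithmic variable $\tau=\log t$, and in general as $-t\partial_t$. It therefore suffices to bound $\partial_b^m(t\partial_t)^n$ of
\[
 h_b(t)=\frac{4t\log|t|-(2b/\sinh b)(t^2-1)}{2t\cosh b-t^2-1}
\]
uniformly over $b\in\R$ and $t\in\R\cup\{\infty\}$.

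\emph{Step 2 (the region $t\le0$, including $t=0,\infty$).} For $t\le 0$ the denominator satisfies $|2t\cosh b-t^2-1|\ge 1+t^2$. Since $|t\log|t||\lesssim 1+t^2$ and $b/\sinh b$ has all $b$-derivatives bounded, $h_b$ is bounded. Each application of $t\partial_t$ preserves the structure "polynomial in $t,\ t\log|t|$ divided by a power of the denominator", so the derivative bounds follow the same way. The endpoint values $t=0$ and $t=\infty$ (that is, $y=0$ or $x=0$) are covered by the same bound.

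\emph{Step 3 (the region $t>0$, main obstacle).} Write $t=e^\tau$ and divide numerator and denominator by $2t$:
\[
 h_b=\frac{\varphi_b(\tau)}{\cosh b-\cosh\tau},\qquad
 \varphi_b(\tau)=2\tau-\frac{2b}{\sinh b}\sinh\tau .
\]
Here $\varphi_b$ is odd in $\tau$, vanishes at $\tau=\pm b$, and is smooth jointly in $(b,\tau)$. The denominator factors as $-2\sinh\frac{\tau+b}2\sinh\frac{\tau-b}2$.

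The difficulty is the confluence $b\to0$, where the two zeros $\tau=\pm b$ merge; at $b=0$ the numerator even has a triple zero at $\tau=0$. I will handle this with a divided-difference (Hermite–Genocchi) representation. Because $\varphi_b$ vanishes at $\tau=\pm b$, one has $\varphi_b(\tau)=(\tau-b)(\tau+b)\psi_b(\tau)$, where $\psi_b$ is an integral over the simplex of $\partial_\tau^2\varphi_b$; this $\psi_b$ is smooth in $(b,\tau)$ with locally bounded derivatives. The leftover factors $\frac{(\tau\pm b)/2}{\sinh((\tau\pm b)/2)}$ are smooth and bounded with all derivatives. This gives the bounds on any compact set of $(b,\tau)$.

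For large $|\tau|$ or $|b|$ I will argue directly instead. Writing $\varphi_b=2\tau-\frac{b}{\sinh b}(e^\tau-e^{-\tau})$ and $\cosh b-\cosh\tau$ in exponentials shows the following. When $|\tau|\ge |b|+1$ the ratio tends to a bounded combination of $b/\sinh b$, $\tau e^{-|\tau|}$ and $e^{|b|-|\tau|}$. When $|b|$ is large with $|\tau|\le|b|+1$, the coefficient $b/\sinh b$ decays like $e^{-|b|}$. Near the zeros $\tau=\pm b$ in this large-$|b|$ regime, I will reuse the divided-difference form around each single zero separately.

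Patching these regions yields uniform bounds on $\partial_b^m\partial_\tau^n h_b$, and hence on $\partial_a^m\partial_c^nH_{a,c}$. The constants are independent of $x,y,a,c$, which is the claim of Lemma~\ref{lem:entropy-scalar-kernel}.
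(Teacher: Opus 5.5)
Your proposal is correct. Its reduction coincides with the paper's: the paper's variables $z=\log|x/y|-(a-c)/2$ and $t_0=(a+c)/2$ are exactly your $\tau$ and $b$, and it also reduces to the single ratio $h_b$, with the branch fixed by $\operatorname{sign}(xy)$.

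You diverge in how you obtain the uniform derivative bounds. The paper uses the identities $\coth A-\coth B=\sinh(B-A)/(\sinh A\sinh B)$ and $\tanh A-\tanh B=\sinh(A-B)/(\cosh A\cosh B)$ to write
\[
 h_b(\tau)=\frac{2b}{\sinh b}\cdot\frac{F_\pm(\tau+b)-F_\pm(\tau-b)}{2b}
 =\frac{2b}{\sinh b}\int_0^1F_\pm'(\tau-b+2bs)\,ds,
 \qquad F_+(u)=u\coth\tfrac u2,\quad F_-(u)=u\tanh\tfrac u2 ,
\]
with $F_+$ for $xy>0$ and $F_-$ for $xy<0$. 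Every positive-order derivative of $F_\pm$ and every derivative of $b/\sinh b$ is globally bounded. So the confluence $b\to0$, the regimes of large $|b|$ or $|\tau|$, and both sign branches are all handled in one line.

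You instead factor $\varphi_b$ through its zeros $\tau=\pm b$. The resulting product $\psi_b\cdot\frac{(\tau+b)/2}{\sinh((\tau+b)/2)}\cdot\frac{(\tau-b)/2}{\sinh((\tau-b)/2)}$ is bounded only through an exponential cancellation between the growth of $\psi_b$ and the decay of the other two factors. That is why you need to patch regions:
\begin{enumerate}
\item a compact region, handled by Hermite--Genocchi;
\item the region $|\tau|\ge|b|+1$, where dividing by $e^{|\tau|}$ leaves the denominator $1-2\cosh b\,e^{-|\tau|}+e^{-2|\tau|}\ge(1-e^{-1})^2$;
\item the large-$|b|$ region, handled with single-zero divided differences, whose denominators are $\gtrsim e^{|b|}$ while the numerator derivatives are $\lesssim|b|$.
\end{enumerate}
These do close. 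Your route buys an elementary argument that never needs the $\coth$/$\tanh$ identity; the paper's buys a closed form that makes uniformity manifest.

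One point to tighten in Step 2. The $\partial_b$-derivatives put $\sinh b$ and $\cosh b$ into the numerators, so degree counting in $t$ alone gives $b$-dependent constants. What makes the bounds uniform is that, for $t\le0$ and $D=2t\cosh b-t^2-1$,
\[
 |D|=2|t|\cosh b+t^2+1 ,
\]
so every ratio $\partial_b^m(t\partial_t)^nD/D$ is bounded by $2^n$.
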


\begin{proof}[Proof of Lemma~\ref{lem:entropy-scalar-kernel}]
For $xy\ne0$, the factorization and the bounds follow from the explicit formula
\begin{equation}\label{eq:kernel-factorization}
 H_{a,c}(x,y)
 =2\vartheta(a+c)
   \int_0^1 F_{\operatorname{sign}(xy)}'
       (\log|x/y|-(1-t)a+tc)\,dt,
\end{equation}
with
\[
 \vartheta(u)=\frac{u}{2\sinh(u/2)},\qquad
 F_+(u)=u\coth(u/2),\qquad F_-(u)=u\tanh(u/2).
\]
The values at removable singularities are understood by continuity.
To check the factorization, put
$z=\log|x/y|-(a-c)/2$ and $t_0=(a+c)/2$.
Dividing $k_{a,c}$ by the two unbalanced factors
$(y-e^{-a}x)(x-e^{-c}y)$ gives, respectively,
\[
 2r(2t_0)
 \frac{\sinh z\mp(z/t_0)\sinh t_0}
      {\cosh z\mp\cosh t_0}
 =2r(2t_0)
   \frac{F_\pm(z+t_0)-F_\pm(z-t_0)}{2t_0}.
\]
Multiplying by $e^{-t_0}$ gives
\eqref{eq:kernel-factorization}.
All derivatives of $F_\pm$ of positive order are bounded, as are
all derivatives of $\vartheta$. Differentiating under the integral
in \eqref{eq:kernel-factorization} therefore gives the derivative bounds of the lemma for $xy\ne0$.
At zero arguments, the same factorization and derivative bounds hold with
\[
 H_{a,c}(0,y)=-2\vartheta(a+c),\quad H_{a,c}(x,0)=2\vartheta(a+c)\quad(x,y\ne0),\qquad H_{a,c}(0,0)=0.
\]

\end{proof}

Let $K_s$ be as in Appendix~\ref{app:transport-coefficients}, Step~2. For any matrix $A$, we write $A=\sum_r A_r$ with $\supp A_r\subseteq\Ball_x(r)$. Finite range gives
\[
 \supp[H,A_r]\subseteq\Ball_x(r+R_0),\qquad
 \|[H,A_r]\|\leq2J|\Ball_x(r)|\,\|A_r\|.
\]
Since $\ad_{K_s}=-(s/2)\ad_H$ and $|\Ball_x(r)|\leq C(1+r)^d$,
\[
 \begin{aligned}
 \|[K_s,A]\|_{p,x}\leq sJ\sum_r(1+r+R_0)^p|\Ball_x(r)|\,\|A_r\|\leq C_p\sum_r(1+r)^{p+d}\|A_r\|.
 \end{aligned}
\]
Taking the infimum and iterating proves
\begin{equation}\label{eq:modular-derivatives}
 \|\ad_{K_s}^m A\|_{p,x}\leq C_{p,m}\|A\|_{p+dm,x},
\end{equation}
where $C_{p,m}$ depends only on $p$, $m$, an upper bound on $\beta J$, the interaction range, and the geometry.

Now, we are ready to prove Lemma~\ref{lem:entropy-filtered-representation}.

\begin{proof}[Proof of Lemma~\ref{lem:entropy-filtered-representation}]
Recall the representation \eqref{eq:entropy-bohr-products}, where $x_i$ are the eigenvalues of $Y$ and $V_\nu,C_\eta$ are the Bohr components used in \eqref{eq:entropy-bohr-identities}. Lemma~\ref{lem:entropy-scalar-kernel} bounds $H_{a,c}(x,y)$ and its derivatives. Thus, with $w(a)=1+a^2$, the function
\[
 F_{x,y}(a,c)=\frac{H_{a,c}(x,y)}{w(a)w(c)}
\]
is integrable in $a,c$. Using the Fourier convention
\[
 \widehat F_{x,y}(t,u)=\frac1{(2\pi)^2}
       \int_{\mathbb R^2}F_{x,y}(a,c)e^{-i(ta+uc)}\,da\,dc,
\]
we can rewrite the scalar factor as
\begin{equation}\label{eq:entropy-separated-kernel}
 H_{a,c}(x,y)=\int_{\mathbb R^2}
 \widehat F_{x,y}(t,u)w(a)e^{ita}w(c)e^{iuc}\,dt\,du.
\end{equation}
Inserting \eqref{eq:entropy-separated-kernel} into \eqref{eq:entropy-bohr-products}, the factors $w(a)e^{ita}$ and $w(c)e^{iuc}$ multiply $V_\nu$ at $a=-s\nu/2$ and $C_\eta$ at $c=-s\eta/2$. With $K_s$ and $\tau_t$ as in Appendix~\ref{app:transport-coefficients}, Step~2, these sums are
\begin{equation}\label{eq:entropy-unbalanced-filters}
 \begin{aligned}
 \widehat P_t&=\sum_\nu w(-s\nu/2)e^{-ist\nu/2}V_\nu
 =\tau_t\bigl(V+[K_s,[K_s,V]]\bigr),\\
 \widehat Q_u&=\sum_\eta w(-s\eta/2)e^{-isu\eta/2}C_\eta
 =\tau_u\bigl(C+[K_s,[K_s,C]]\bigr).
 \end{aligned}
\end{equation}
Consequently, collecting the spectral blocks of $Y$ using \eqref{eq:entropy-kernel-contraction} gives
\[
 \delta_{V,C}(Y)
 =\int_{\mathbb R^2}\Tr\!\left[\mathcal B_{s,\widehat Q_u}(Y)
       \mathcal K_{\widehat F_{\cdot,\cdot}(t,u),Y}
       \bigl(\mathcal B_{s,\widehat P_t}(Y)\bigr)\right]dt\,du.
\]

\emph{Proof of assertion (i).}
This is a representation of the form \eqref{eq:entropy-filtered-identity} with filters $\widehat P_t,\widehat Q_u$, the measure $dt\,du$, and kernel $\widehat F_{x,y}(t,u)$; we now normalize this kernel. Fix an integer $N>p_1+1$. We now derive the uniform bound
\begin{equation}\label{eq:entropy-fourier-bound}
 \sup_{x,y\in\mathbb R}|\widehat F_{x,y}(t,u)|
 \leq C_N(1+t^2)^{-N}(1+u^2)^{-N}.
\end{equation}
Each derivative of $w^{-1}$ is bounded by a constant times $w^{-1}$. The bounds in Lemma~\ref{lem:entropy-scalar-kernel} and the product rule therefore give
\[
 \sup_{x,y}\left\|(1-\partial_a^2)^N(1-\partial_c^2)^N
                         F_{x,y}\right\|_{L^1(\mathbb R^2)}\leq C_N.
\]
Integration by parts in the Fourier integral yields
\[
 (1+t^2)^N(1+u^2)^N|\widehat F_{x,y}(t,u)|
 \leq\frac1{(2\pi)^2}
   \left\|(1-\partial_a^2)^N(1-\partial_c^2)^N F_{x,y}\right\|_1,
\]
which proves \eqref{eq:entropy-fourier-bound}.
This decay also justifies Fourier inversion in \eqref{eq:entropy-separated-kernel} and interchange of the integral with the finite Bohr sums. Writing $\rho_N(t)=(1+t^2)^{-N}$, we absorb the positive bound into the measure by setting
\[
  d\nu_N(t,u)=C_N\rho_N(t)\rho_N(u)\,dt\,du,\qquad
 h_{t,u}(x,y)=\frac{\widehat F_{x,y}(t,u)}{C_N\rho_N(t)\rho_N(u)}.
\]
Then $|h_{t,u}(x,y)|\leq1$, and $\nu_N$ is positive and independent of $Y$. With these choices, the representation reads
\begin{equation}\label{eq:entropy-normalized-representation}
 \delta_{V,C}(Y)
 =\int_{\mathbb R^2}\Tr\!\left[\mathcal B_{s,\widehat Q_u}(Y)
       \mathcal K_{h_{t,u},Y}\bigl(\mathcal B_{s,\widehat P_t}(Y)\bigr)\right]d\nu_N(t,u).
\end{equation}
The scalar filters $w(a)e^{ita}$ and $w(c)e^{iuc}$ are smooth and independent of $Y$.

\emph{Proof of assertion (ii).}
Since the self-adjoint operator $\Ad_{K_s}$ commutes with $\sGamma_{\sigma_s}^{1/2}$, the weighted commutators satisfy
\begin{equation}\label{eq:weighted-modular-covariance}
 \mathsf B_{s,[K_s,V]}=[\Ad_{K_s},\mathsf B_{s,V}],\qquad
 \mathsf B_{s,\tau_t(V)}=e^{it\Ad_{K_s}}\mathsf B_{s,V}e^{-it\Ad_{K_s}}.
\end{equation}
On the site tensor factors of Hilbert--Schmidt space, $\Ad_{K_s}=K_s\otimes I-I\otimes K_s^{\mathrm T}$ has interaction range $R_0$ and interaction strength at most $sJ$, so it obeys the same evolution and commutator estimates as \eqref{eq:real-polynomial} and \eqref{eq:modular-derivatives}. Applying these estimates to \eqref{eq:entropy-unbalanced-filters} gives
\begin{equation}\label{eq:entropy-filter-locality}
 \begin{aligned}
 \|\mathsf B_{s,\widehat P_t}\|_{p_0,x}
 &\leq C(1+|t|)^{p_1}\|\mathsf B_{s,V}\|_{p_1+2d,x},\\
 \|\mathsf B_{s,\widehat Q_u}\|_{p_0,x}
 &\leq C(1+|u|)^{p_1}\|\mathsf B_{s,C}\|_{p_1+2d,x}.
 \end{aligned}
\end{equation}
Here the evolution requires $p_1=p_0+d+2$, and the double commutator increases the required exponent to $p_1+2d$. Write
\[
 a_V=\|\mathsf B_{s,V}\|_{p_1+2d,x},\qquad
 a_C=\|\mathsf B_{s,C}\|_{p_1+2d,x}.
\]
Since $N>p_1+1$, the squared bounds are integrable against $\nu_N$, giving
\[
 \int\|\mathsf B_{s,\widehat P_t}\|_{p_0,x}^2\,d\nu_N\leq C_{\log}a_V^2,
 \qquad
 \int\|\mathsf B_{s,\widehat Q_u}\|_{p_0,x}^2\,d\nu_N\leq C_{\log}a_C^2.
\]
If $a_Va_C=0$, one of the maps $\mathsf B_{s,V}$ or $\mathsf B_{s,C}$ is zero. By \eqref{eq:weighted-modular-covariance}, all its filtered versions are also zero, so \eqref{eq:entropy-normalized-representation} vanishes. Taking both filters to be zero proves (ii) in this case.

For $a_Va_C>0$, obtain the product required in \eqref{eq:entropy-pair-local-strength} by rescaling the two filters reciprocally:
\[
 \chi=(a_C/a_V)^{1/2},\qquad
 P_{t,u}=\chi\widehat P_t,\qquad Q_{t,u}=\chi^{-1}\widehat Q_u.
\]

The factors cancel in \eqref{eq:entropy-normalized-representation}, leaving the representation unchanged. The squared quasi-local norms now satisfy
\[
 \int\bigl(\|\mathsf B_{s,P_{t,u}}\|_{p_0,x}^2+
            \|\mathsf B_{s,Q_{t,u}}\|_{p_0,x}^2\bigr)\,d\nu_N
 \leq C_{\log}(\chi^2a_V^2+\chi^{-2}a_C^2)
 =2C_{\log}a_Va_C.
\]
Here $C_{\log}$ is independent of $V$ and $C$. The rescaled scalar filters $p_{t,u}(a)=\chi w(a)e^{ita}$ and $q_{t,u}(c)=\chi^{-1}w(c)e^{iuc}$ in~\eqref{eq:entropy-filtered-matrices} remain smooth and independent of $Y$.

\end{proof}

\proofstep{Step 4: bound the entropy derivative}
The representation has now been established. We are ready to prove the entropy estimate in Lemma~\ref{lem:derivative-reduction}.

\begin{proof}[Proof of~\eqref{eq:entropy-derivative-bound} in Lemma~\ref{lem:derivative-reduction}]
Apply Lemma~\ref{lem:entropy-filtered-representation}\textup{(i)} to each pair in \eqref{eq:entropy-pair-derivative}. This gives \eqref{eq:entropy-integral-representation}. Write $\gamma=(k,Z,\alpha)$, with center $x_\gamma=x_k$; the sum over $\alpha$ includes $d\nu_N(t,u)$.

By \eqref{eq:entropy-kernel-contraction} and $\|h_\gamma\|_\infty\leq1$, $\mathcal K_{h_\gamma,Y_s}$ is a contraction. Thus $|\Tr[A\mathcal K_{h_\gamma,Y_s}(B)]|\leq\|A\|_2\|B\|_2$ and $ab\leq(a^2+b^2)/2$ give
\begin{equation}\label{eq:entropy-square-reduction}
 \begin{aligned}
 |\partial_sD(Y_s^2\Vert\sigma_s)|
 &\leq\sum_\gamma\int
       \|\mathsf{B}_{s,P_\gamma(\omega)}|Y_s\rangle\|_2
       \|\mathsf{B}_{s,Q_\gamma(\omega)}|Y_s\rangle\|_2\,d\omega\\
 &\leq\tfrac12\sum_\gamma\int
       \bigl(\|\mathsf{B}_{s,P_\gamma(\omega)}|Y_s\rangle\|_2^2
             +\|\mathsf{B}_{s,Q_\gamma(\omega)}|Y_s\rangle\|_2^2\bigr)\,d\omega.
 \end{aligned}
\end{equation}
Equation~\eqref{eq:entropy-square-reduction} proves~\eqref{eq:entropy-derivative-bound} of Lemma~\ref{lem:derivative-reduction}.
\end{proof}

\proofstep{Step 5: sum the quasi-local norms of the filtered commutators}
We now prove the entropy estimate \eqref{eq:entropy-family-locality} in Lemma~\ref{lem:comparison-families}. Since Gibbs weighting introduces conjugations by $\sigma_s^{\pm1/4}$, we first bound $\sigma_s^{\pm1/4}V_k(\omega)\sigma_s^{\mp1/4}$ in Lemma~\ref{lem:filtered-modular} and $\sigma_s^{\pm1/4}f_{s,Z}\sigma_s^{\mp1/4}$ in Lemma~\ref{lem:potential-modular}.


\begin{lemma}[Modular bounds for Gaussian-filtered jumps]\label{lem:filtered-modular}
For the finite-range Hamiltonians in \eqref{eq:growth} and every fixed $\beta,p$, the Hermitian matrices in \eqref{eq:weighted-jump-coefficients} satisfy
\[
 \sup_{0<s\leq\beta}\sup_x\sum_{k:x_k=x}\int_{\mathbb R}\!d\omega\,
 \left(\sum_{\epsilon=\pm1}
 \|\sigma_s^{\epsilon/4}V_k(\omega)\sigma_s^{-\epsilon/4}\|_{p,x}\right)^2
 \leq C_{p,\beta}.
\]
In particular,
\[
 \sup_{s,x}\sum_{k:x_k=x}\int_{\mathbb R}\|\mathsf{B}_{s,V_k(\omega)}\|_{p,x}^2\,d\omega
 \leq C_{p,\beta}.
\]
\end{lemma}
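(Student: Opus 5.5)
Throughout, I would rewrite the modular conjugations as Gaussian shifts in frequency, exactly as in the proof of Lemma~\ref{lem:native-gradient-locality}. Write $A_{x,u}(\omega):=\sigma_s^{1/4}J^s_{x,u}(\omega)^*\sigma_s^{-1/4}$. By \eqref{eq:weighted-jump-coefficients}, $V_{x,u,1}$ and $V_{x,u,2}$ are $\sqrt{\gamma_s(\omega)/2}$ times $\tfrac12(A\pm A^*)$, up to a factor $i$. Conjugating once more by $\sigma_s^{\epsilon/4}$ then produces the four matrices
\[
\sigma_s^{(1+\epsilon)/4}J^*\sigma_s^{-(1+\epsilon)/4}\qquad\text{and}\qquad\sigma_s^{(\epsilon-1)/4}J\sigma_s^{(1-\epsilon)/4}.
\]
Each is a modular conjugate $\sigma_s^{a}J\sigma_s^{-a}$ or its adjoint, with $a\in\{0,\pm1/4,\pm1/2\}$. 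This uses $(\sigma^aJ\sigma^{-a})^*=\sigma^{-a}J^*\sigma^{a}$. By the triangle inequality and the adjoint identity in \eqref{eq:local-norm-product}, it therefore suffices to bound
\[
\sum_u\int\gamma_s(\omega)\,\|\sigma_s^{a}J^s_{x,u}(\omega)\sigma_s^{-a}\|_{p,x}^2\,d\omega\leq C_{p,\beta}
\]
uniformly in $s,x$, for each of these finitely many $a$.

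Next I change variables to $z=s\omega$, so that $J^s(\omega)=s^{1/2}\widehat J^s(z)$ and $d\omega=dz/s$. The factors of $s$ cancel, and $\gamma_s(\omega)=e^{-(z+1)^2/2}$. This is the same rescaling that leads to \eqref{eq:CKG-rescaled}, and it is what makes the bound uniform as $s\to0^+$. Applying \eqref{eq:gaussian-modular-shift} gives
\[
\sigma_s^a\widehat J^s(z)\sigma_s^{-a}=e^{a^2-az}\,\widehat J^s(z-2a).
\]
Then \eqref{eq:native-rescaled-jump} together with \eqref{eq:hamiltonian-locality} applied to $S_{x,u}$ yields
\[
\|\widehat J^s(z')\|_{p,x}\leq C\int e^{-r^2}(1+sJ|r|)^{p+d+2}\,dr\leq C_{p,\beta J},
\]
uniformly in $z'$. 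Hence the integrand is bounded by
\[
C\,e^{-(z+1)^2/2}\,e^{2a^2-2az},
\]
which is integrable in $z$ because the Gaussian dominates the exponential for $|a|\leq1/2$. Summing over the finitely many $u$, namely $4^q-1$ of them, gives the first claim.

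For the second claim, $\mathsf B_{s,V}=\sGamma_{\sigma_s}^{1/2}\Ad_V\sGamma_{\sigma_s}^{-1/2}$ acts on $Y$ by
\[
Y\mapsto(\sigma_s^{1/4}V\sigma_s^{-1/4})\,Y-Y\,(\sigma_s^{-1/4}V\sigma_s^{1/4}).
\]
Left and right multiplication by a matrix $M$ has vectorized quasi-local norm at most $\|M\|_{p,x}$, since supports and operator norms are preserved. Therefore
\[
\|\mathsf B_{s,V}\|_{p,x}\leq\sum_{\epsilon=\pm1}\|\sigma_s^{\epsilon/4}V\sigma_s^{-\epsilon/4}\|_{p,x},
\]
and the second bound follows from the first by squaring. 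I expect the main obstacle to be bookkeeping rather than analysis. One has to check that the real- and imaginary-part decomposition only ever produces conjugations with $|a|\leq1/2$, so that \eqref{eq:gaussian-modular-shift} yields a Gaussian-integrable weight. One also has to handle the $s$-rescaling carefully so that no $s^{-1}$ factor survives near $s=0$.
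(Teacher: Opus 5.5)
Your proof is correct and follows the paper's argument essentially step for step: the Gaussian modular shift \eqref{eq:gaussian-modular-shift} removes the Gibbs conjugations, and the rescaling $\omega=z/s$ cancels the factor $s$ coming from $J^s_{x,u}(z/s)=\sqrt{s}\,\widehat J^s_{x,u}(z)$; then \eqref{eq:hamiltonian-locality} gives a $z$-uniform bound on $\|\widehat J^s_{x,u}\|_{p,x}$, and the $\mathsf B_{s,V}$ estimate follows from the left/right multiplication form of $\mathsf B_{s,V}$. The only cosmetic difference is that the paper applies the shift to $J^*$ with $a\in\{0,1/2\}$, obtaining the weights $e^{-(z+1)^2/4}$ and $e^{-z^2/4}$ directly, whereas you pass to $J$ via adjoints; the exponents $\pm1/4$ you list never actually arise, and the restriction $|a|\leq1/2$ is not needed for Gaussian integrability.
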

\begin{proof}[Proof of Lemma~\ref{lem:filtered-modular}]
Recall $V_k(\omega)$ from \eqref{eq:weighted-jump-coefficients} and the rescaled jumps $\widehat J^s_{x,u}(z)$ from \eqref{eq:native-rescaled-jump}. Applying \eqref{eq:hamiltonian-locality} to $S_{x,u}$ gives
\[
 \|\widehat J^s_{x,u}(z)\|_{p,x}
 \leq C_p\int_{\mathbb R}e^{-r^2}(1+sJ|r|)^{p+d+2}\,dr
 \leq C_{p,\beta J}.
\]
The bound is independent of $z$. For $\omega=z/s$, the two Gibbs conjugations of the weighted jump inside \eqref{eq:weighted-jump-coefficients} correspond to $a=0$ and $a=1/2$ below: Taking adjoints in \eqref{eq:gaussian-modular-shift} gives
\[
 e^{-(z+1)^2/4}\sigma_s^a\widehat J^s_{x,u}(z)^*\sigma_s^{-a}
 =\begin{cases}
 e^{-(z+1)^2/4}\widehat J^s_{x,u}(z)^*,&a=0,\\
 e^{-z^2/4}\widehat J^s_{x,u}(z+1)^*,&a=1/2.
 \end{cases}
\]
Since $J^s_{x,u}(z/s)=\sqrt{s}\,\widehat J^s_{x,u}(z)$, for $r=1,2$, the above calculations give
\[
 \sum_{\epsilon=\pm1}
 \|\sigma_s^{\epsilon/4}V_{x,u,r}(z/s)\sigma_s^{-\epsilon/4}\|_{p,x}
 \leq C_{p,\beta J}\sqrt{s}\,
       \bigl(e^{-(z+1)^2/4}+e^{-z^2/4}\bigr).
\]
Finally, $\|\mathsf{B}_{s,V}\|_{p,x}\leq\sum_{\epsilon=\pm1}\|\sigma_s^{\epsilon/4}V\sigma_s^{-\epsilon/4}\|_{p,x}$ gives the second bound.
\end{proof}

The following estimate controls the Gibbs conjugations of $f_{s,Z}$.
\begin{lemma}[Modular bounds for the Poisson potential]\label{lem:potential-modular}
For $f_{s,Z}=\sech(s\ad_H/4)X_{s,Z}$ and every $p\geq0$,
\[
 \sum_{\epsilon=\pm1}\|\sigma_s^{\epsilon/4}f_{s,Z}\sigma_s^{-\epsilon/4}\|_{p,x_Z}
 \leq C_p\|X_{s,Z}\|_{p+2d+2,x_Z}.
\]
\end{lemma}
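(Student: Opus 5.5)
The plan is to compute the conjugation spectrally and reduce it to one commutator with $K_s$ followed by a real-time filter with an integrable kernel. Write $X=X_{s,Z}$ and decompose it into Bohr components $X=\sum_\nu X_\nu$ with respect to $H$, as in \eqref{eq:entropy-bohr-identities}.

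\textbf{Step 1: reduce to a scalar function of $\ad_{K_s}$.} On Bohr components, $\sech(s\ad_H/4)$ multiplies $X_\nu$ by $\sech(s\nu/4)$. Since $\sigma_s\propto e^{-sH}$, conjugation $\sigma_s^{\epsilon/4}(\cdot)\sigma_s^{-\epsilon/4}$ multiplies $X_\nu$ by $e^{-\epsilon s\nu/4}$. With $u=s\nu/4$, the combined multiplier is
\[
 \frac{e^{-\epsilon u}}{\cosh u}=1-\epsilon\tanh u .
\]
Recall $K_s=\tfrac12\log\sigma_s$, so $\ad_{K_s}=-\tfrac s2\ad_H$ and $u=-\tfrac12\cdot(\text{eigenvalue of }\ad_{K_s})$. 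Hence
\[
 \sigma_s^{\epsilon/4}f_{s,Z}\sigma_s^{-\epsilon/4}=X+\epsilon\tanh\!\bigl(\tfrac12\ad_{K_s}\bigr)X .
\]
The identity term is bounded by $\|X\|_{p,x_Z}$, so everything reduces to bounding $\tanh(\tfrac12\ad_{K_s})X$. The difficulty is that $\tanh$ does not decay, so its Fourier transform is not integrable (it is a principal value $1/\sinh$). Unlike the $\sech$ filter of Lemma~\ref{lem:structure}, it cannot be represented directly as $\int\phi(t)\tau_t\,dt$ with $\phi\in L^1$.

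\textbf{Step 2: factor out one commutator.} Write $\tanh(w/2)=w\,g(w)$ with $g(w)=\tanh(w/2)/w$. This gives
\[
 \tanh\!\bigl(\tfrac12\ad_{K_s}\bigr)X=g(\ad_{K_s})\bigl([K_s,X]\bigr).
\]
The function $g$ is even, analytic in the strip $|\operatorname{Im}w|<\pi$, and decays like $1/|w|$. Its Fourier transform $\phi=\widehat g$ therefore has at worst a logarithmic singularity at $t=0$, coming from the $1/|w|$ tail. It also decays exponentially as $|t|\to\infty$, because of analyticity in the strip. Consequently $\int|\phi(t)|(1+|t|)^{m}\,dt<\infty$ for every $m$. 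Then
\[
 g(\ad_{K_s})A=\int_{\mathbb R}\phi(t)\,\tau_t(A)\,dt,
\]
which is justified on each Bohr component by Fourier inversion in the $L^1$/principal sense; equivalently one can first regularize $g$ by $e^{-\delta w^2}$ and let $\delta\to0$.

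\textbf{Step 3: locality bookkeeping.} Applying \eqref{eq:local-filter-integral} to $A=[K_s,X]$ gives
\[
 \|g(\ad_{K_s})[K_s,X]\|_{p,x_Z}\leq C_p\|[K_s,X]\|_{p+d+2,x_Z}\int|\phi(t)|(1+|t|)^{p+d+2}\,dt .
\]
The modular derivative estimate \eqref{eq:modular-derivatives} with $m=1$ then gives
\[
 \|[K_s,X]\|_{p+d+2,x_Z}\leq C\|X\|_{p+2d+2,x_Z}.
\]
Summing over $\epsilon=\pm1$ and using the monotonicity of quasi-local norms in the order yields the claim with loss exactly $2d+2$.

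The main obstacle is Step 2: one must verify that $\widehat g$ is integrable with polynomial moments uniformly. The first thing to try is an explicit argument. The large-$|t|$ decay follows by contour shift within the strip of analyticity. The small-$|t|$ behaviour follows by splitting $g=(g-\tfrac{1}{\sqrt{1+w^2}})+\tfrac{1}{\sqrt{1+w^2}}$: the first piece is $O(|w|^{-2})$ with integrable derivatives, and the second has the explicit Fourier transform $K_0(|t|)/\pi$, which is log-singular at $t=0$ and exponentially decaying. No dependence on $s$ enters this step, since $s$ has been absorbed into $K_s$, and the constant in \eqref{eq:modular-derivatives} depends only on an upper bound for $\beta J$.
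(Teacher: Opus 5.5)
Your proposal is correct and is essentially the paper's proof. Both use the same identity $\sigma_s^{\pm1/4}f_{s,Z}\sigma_s^{\mp1/4}=X_{s,Z}\pm\tanh(\ad_{K_s}/2)X_{s,Z}$, absorb the $1/t$ singularity of the $\tanh$ kernel into one commutator $[K_s,X_{s,Z}]$, and close with \eqref{eq:local-filter-integral} and \eqref{eq:modular-derivatives}, giving the loss $2d+2$. The only difference is minor: the paper uses the kernel $1/\sinh(\pi t)$ and extracts the commutator only for $t\leq1$, bounding $t\geq1$ directly by $\|X_{s,Z}\|_{p+d+2,x_Z}$, whereas you factor it out for all $t$. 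Your kernel is in fact explicit, $\phi(t)=\int_{|t|}^\infty\frac{dr}{\sinh(\pi r)}=-\pi^{-1}\log\tanh(\pi|t|/2)$, which settles the obstacle you flagged.
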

\begin{proof}[Proof of Lemma~\ref{lem:potential-modular}]
With $K_s$ and $\tau_t$ as in Appendix~\ref{app:transport-coefficients}, Step~2, use the identities
\[
 \sigma_s^{\pm1/4}f_{s,Z}\sigma_s^{\mp1/4}
 =X_{s,Z}\pm\tanh(\ad_{K_s}/2)X_{s,Z},\qquad
 \tanh(\ad_{K_s}/2)X
 =-i\int_0^\infty\frac{\tau_t(X)-\tau_{-t}(X)}{\sinh(\pi t)}\,dt.
\]
The first identity follows from $e^{\pm a/2}\sech(a/2)=1\pm\tanh(a/2)$ with $a=\ad_{K_s}$. To estimate the integral, split it at $t=1$. For $0<t\leq1$,
\[
 \tau_t(X)-\tau_{-t}(X)=i\int_{-t}^t\tau_u([K_s,X])\,du,
 \qquad
 \|\tau_t(X)-\tau_{-t}(X)\|_{p,x}
 \leq C_pt\|[K_s,X]\|_{p+d+2,x}.
\]
For $t\geq1$, \eqref{eq:real-polynomial} bounds the difference by
$C_p(1+t)^{p+d+2}\|X\|_{p+d+2,x}$. Consequently,
\[
 \begin{aligned}
 \|\tanh(\ad_{K_s}/2)X\|_{p,x}
 &\leq C_p\|[K_s,X]\|_{p+d+2,x}
                \int_0^1\frac{t\,dt}{\sinh(\pi t)}+C_p\|X\|_{p+d+2,x}
                \int_1^\infty\frac{(1+t)^{p+d+2}}{\sinh(\pi t)}\,dt\\
 &\leq C_p\|X\|_{p+2d+2,x}.
 \end{aligned}
\]
Both scalar integrals are finite, and the last inequality uses \eqref{eq:modular-derivatives}.
Substituting $X=X_{s,Z}$ proves the bound.
\end{proof}

Now, we are ready to prove the entropy estimate \eqref{eq:entropy-family-locality} in Lemma~\ref{lem:comparison-families}.

\begin{proof}[Proof of~\eqref{eq:entropy-family-locality} in Lemma~\ref{lem:comparison-families}]
Set
\[
 p_1=p_0+d+2,\qquad p=p_1+2d=p_0+3d+2,\qquad \ell=p+d+2.
\]
The pairwise estimate \eqref{eq:entropy-pair-local-strength} reduces the desired bound to summing
$\|\mathsf B_{s,V_k(\omega)}\|_{p,x}\|\mathsf B_{s,C_{k,Z}(\omega)}\|_{p,x}$ over $k,Z$ with $x_k=x$ and integrating over $\omega$. We first control this sum using the two previous lemmas.

Since $C_{k,Z}(\omega)=[V_k(\omega),f_{s,Z}]$ (Lemma~\ref{lem:transport-decomposition}), we have
\[
 \mathsf B_{s,C_{k,Z}(\omega)}
 =[\mathsf B_{s,V_k(\omega)},\mathsf B_{s,f_{s,Z}}].
\]
For each $k,\omega$, apply the summed commutator bound \eqref{eq:family-local-commutator} to the family $\mathsf B_{s,f_{s,Z}}$, centered at $x_Z$. This gives
\begin{equation}\label{eq:modular-product-reduction}
 \begin{aligned}
 &\sup_{s,x}\sum_{k,Z:x_k=x}\int_{\mathbb R}\!d\omega\,
 \|\mathsf{B}_{s,V_k(\omega)}\|_{p,x}\|\mathsf{B}_{s,C_{k,Z}(\omega)}\|_{p,x}\\
 &\quad\leq C\left(\sup_{s,x}
 \sum_{k:x_k=x}\int_{\mathbb R}\!d\omega\,\|\mathsf{B}_{s,V_k(\omega)}\|_{\ell,x}^2\right)
 \left(\sup_{s,z}\sum_{Z:x_Z=z}\|\mathsf{B}_{s,f_{s,Z}}\|_{\ell,z}\right).
 \end{aligned}
\end{equation}
For the first factor, the Gaussian bound in Lemma~\ref{lem:filtered-modular} gives
\[
 \sup_{s,x}\sum_{k:x_k=x}\int_{\mathbb R}\!d\omega\,
 \|\mathsf{B}_{s,V_k(\omega)}\|_{\ell,x}^2\leq C.
\]
For the second factor, Lemma~\ref{lem:potential-modular} reduces the estimate to the quasi-local norm of $X_{s,Z}$ at order $\ell+2d+2$. Proposition~\ref{prop:profile} applies at this order because
\[
 \nu>p_0+9d+12>\ell+2d+4=p_0+6d+8.
\]
Consequently, Lemma~\ref{lem:potential-modular} and \eqref{eq:local-poisson} give
\[
 \|\mathsf{B}_{s,f_{s,Z}}\|_{\ell,x_Z}
 \leq\sum_{\epsilon=\pm1}
       \|\sigma_s^{\epsilon/4}f_{s,Z}\sigma_s^{-\epsilon/4}\|_{\ell,x_Z}
 \leq C\|X_{s,Z}\|_{\ell+2d+2,x_Z}
 \leq C\|h_Z\|_\infty.
\]
Since $x_Z\in Z$, the interaction-strength bound \eqref{eq:growth} implies
\[
 \sup_{s,z}\sum_{Z:x_Z=z}\|\mathsf B_{s,f_{s,Z}}\|_{\ell,z}
 \leq C\sup_z\sum_{Z:x_Z=z}\|h_Z\|_\infty\leq CJ.
\]
Substituting the two factor bounds into \eqref{eq:modular-product-reduction} yields
\begin{equation}\label{eq:flow-modular-target}
 \sup_{s,x}\sum_{k,Z:x_k=x}\int_{\mathbb R}\!d\omega\,
 \|\mathsf B_{s,V_k(\omega)}\|_{p,x}
 \|\mathsf B_{s,C_{k,Z}(\omega)}\|_{p,x}\leq CJ.
\end{equation}

Summing \eqref{eq:entropy-pair-local-strength} over $k,Z$ and integrating over $\omega$, then applying \eqref{eq:flow-modular-target} with $p=p_1+2d$, gives
\begin{equation}\label{eq:entropy-gradient-local-strength}
 \begin{aligned}
 &\sup_{s,x}\sum_{\gamma:x_\gamma=x}\int_{\mathbb R}\!d\omega\,
       \bigl(\|\mathsf{B}_{s,P_\gamma(\omega)}\|_{p_0,x}^2
             +\|\mathsf{B}_{s,Q_\gamma(\omega)}\|_{p_0,x}^2\bigr)\\
 &\quad\leq2C_{\log}\sup_{s,x}\sum_{k,Z:x_k=x}\int_{\mathbb R}\!d\omega\,
       \|\mathsf{B}_{s,V_k(\omega)}\|_{p_1+2d,x}
       \|\mathsf{B}_{s,C_{k,Z}(\omega)}\|_{p_1+2d,x}\\
 &\quad\leq2C_{\log}CJ=:M_S^2.
 \end{aligned}
\end{equation}
This proves \eqref{eq:entropy-family-locality}. Together with the energy estimate proved in Appendix~\ref{sec:energy-reduction}, it completes the proof of Lemma~\ref{lem:comparison-families}.
\end{proof}

\section{Verification of the local action-energy inequality}\label{sec:local-gap-criterion-proof}
Our goal is Condition~\ref{cond:local-action}: for every family with
\[
 \mathsf R_{s,\alpha}|\Omega_{\sigma_s}\rangle=0,
 \qquad M^2=\sup_{s,x}\sum_{\alpha:x_\alpha=x}
                    \|\mathsf R_{s,\alpha}\|_{p_0,x}^2<\infty,
\]
we must prove
\[
 \sum_\alpha\mathsf R_{s,\alpha}^*\mathsf R_{s,\alpha}
 \leq C_{\mathrm{loc}}M^2(-\mathsf L_{\sigma_s}).
\]
This is \eqref{eq:comparison-target}. We prove it in three parts.
\begin{enumerate}
\item In Appendix~\ref{app:local-action-criterion}, we prove Proposition~\ref{prop:local-gap-criterion}. The local state domination in \eqref{eq:local-gap-kernel-assumption} bounds the action of each $\mathsf R_{s,\alpha}$ on regional kernels. The local spectral gaps in \eqref{eq:local-gap-assumption} then give a bound by regional energies. Summing over centers and radii and using \eqref{eq:local-gap-auxiliary} proves \eqref{eq:comparison-target}.
\item In Appendix~\ref{app:one-dimensional}, we verify \eqref{eq:local-gap-auxiliary}, \eqref{eq:local-gap-assumption}, and \eqref{eq:local-gap-kernel-assumption} for the weighted Pauli operator \eqref{eq:pauli-form} to verify Condition~\ref{cond:local-action} for the one-dimensional sampler. The bounds on the change of Gibbs weighting in \eqref{eq:chain-cut-map} give both the local spectral gap and the comparison of interior expectations.
\item In Appendix~\ref{app:high-temperature}, we verify the same three equations for the transported product form \eqref{eq:ht-transported-form} to verify Condition~\ref{cond:local-action} for the high-temperature sampler. Unitary conjugation preserves the local spectral gaps, and \eqref{eq:ht-observable-truncation} gives local state domination. To compare the form with the sampler, we use \eqref{eq:ht-poisson-action}, \eqref{eq:ht-poisson-product-quadratic}, and the derivative estimate \eqref{eq:differentiated-gradient-locality} from Appendix~\ref{sec:energy-reduction}.
\end{enumerate}

\subsection{From Condition~\ref{ass:local-gap-state-domination} to Condition~\ref{cond:local-action}}\label{app:local-action-criterion}
The comparison in \eqref{eq:local-gap-auxiliary} reduces Proposition~\ref{prop:local-gap-criterion} to proving
\begin{equation}\label{eq:local-gap-proof-target}
 \sum_\alpha\|\mathsf R_{s,\alpha}|Y\rangle\|_2^2
 \leq CM^2\langle Y|\mathsf A_s|Y\rangle,
\end{equation}
with $C$ independent of $s$ and the volume.

\proofstep{Step 1: compare with complements of nested projections}

\begin{lemma}[A bound from nested kernels]\label{lem:nested-kernel-bound}
Let $\mathsf{P}_0\geq \mathsf{P}_1\geq\cdots\geq \mathsf{P}_N$ be orthogonal projections on a finite-dimensional Hilbert space. Suppose $\mathsf{R}\mathsf{P}_N=0$, $\|\mathsf{R}\|\leq m$, and $\|\mathsf{R}\mathsf{P}_k\|\leq m\varepsilon_k$. With $\mathsf{Q}_k=\one-\mathsf{P}_k$, one has
\begin{equation}\label{eq:nested-kernel-bound}
 \mathsf{R}^*\mathsf{R}\leq m^2\mathsf{Q}_0+
       2m^2\sum_{k=1}^N(\varepsilon_{k-1}+\varepsilon_k)\mathsf{Q}_k.
\end{equation}
\end{lemma}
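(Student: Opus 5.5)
The plan is to telescope along the nested projections, so that $\mathsf R^*\mathsf R$ is written as a sum of pieces, each living on a spectral layer of the chain $\mathsf P_0\geq\cdots\geq\mathsf P_N$ and each carrying a small factor $\varepsilon$. Set $\mathsf D_k=\mathsf P_{k-1}-\mathsf P_k$ for $1\leq k\leq N$. These are mutually orthogonal projections, and
\[
 \mathsf Q_0+\sum_{k=1}^N\mathsf D_k=\one-\mathsf P_N,\qquad \mathsf D_k\leq\mathsf Q_k,\qquad \mathsf Q_0\leq\mathsf Q_1\leq\cdots\leq\mathsf Q_N .
\]
Because $\mathsf R\mathsf P_N=0$, every vector $y$ satisfies $\mathsf Ry=\mathsf R\mathsf Q_0y+\sum_k\mathsf R\mathsf D_ky$. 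Moreover $\mathsf R\mathsf D_k=(\mathsf R\mathsf P_{k-1})\mathsf D_k$, so $\|\mathsf R\mathsf D_ky\|\leq m\varepsilon_{k-1}\|\mathsf D_ky\|\leq m\varepsilon_{k-1}\|\mathsf Q_ky\|$. This is where the $\varepsilon_{k-1}$ in \eqref{eq:nested-kernel-bound} should come from.

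To keep the coefficient of $\mathsf Q_0$ exactly $m^2$, I will telescope at the level of the quadratic form rather than expand $\|\mathsf Ry\|^2$ by Cauchy--Schwarz over all layers. Write $\mathsf X=\mathsf R^*\mathsf R$ and use the following two identities, the first of which holds because $\mathsf P_N\mathsf X\mathsf P_N=0$:
\[
 \mathsf X=\mathsf Q_0\mathsf X\mathsf Q_0+\mathsf Q_0\mathsf X\mathsf P_0+\mathsf P_0\mathsf X\mathsf Q_0+\sum_{k=1}^N\bigl(\mathsf P_{k-1}\mathsf X\mathsf P_{k-1}-\mathsf P_k\mathsf X\mathsf P_k\bigr),
\]
\[
 \mathsf P_{k-1}\mathsf X\mathsf P_{k-1}-\mathsf P_k\mathsf X\mathsf P_k=\mathsf D_k\mathsf X\mathsf D_k+\mathsf D_k\mathsf X\mathsf P_k+\mathsf P_k\mathsf X\mathsf D_k .
\]
The diagonal pieces are easy. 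We have $\mathsf Q_0\mathsf X\mathsf Q_0\leq m^2\mathsf Q_0$ and $\mathsf D_k\mathsf X\mathsf D_k\leq m^2\varepsilon_{k-1}^2\mathsf D_k\leq m^2\varepsilon_{k-1}\mathsf Q_k$. For the second inequality I first reduce to $\varepsilon_k\leq1$, which is harmless after replacing $\varepsilon_k$ by $\min\{\varepsilon_k,1\}$, since $\|\mathsf R\|\leq m$.

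Each off-diagonal piece factors through a small operator. For example, $\mathsf D_k\mathsf X\mathsf P_k=(\mathsf R\mathsf P_{k-1}\mathsf D_k)^*(\mathsf R\mathsf P_k)$ has norm at most $m^2\varepsilon_{k-1}\varepsilon_k$. I then apply $\mathsf A^*\mathsf B+\mathsf B^*\mathsf A\leq t\,\mathsf A^*\mathsf A+t^{-1}\mathsf B^*\mathsf B$ with $\mathsf A=\mathsf R\mathsf D_k$ and $\mathsf B=\mathsf R\mathsf P_k$, and choose $t$ to distribute the weights $\varepsilon_{k-1}$ and $\varepsilon_k$. The $\mathsf Q_0$ cross term is handled the same way, with $\mathsf Q_0\leq\mathsf Q_1$ used to absorb the $\varepsilon_0$ part into the $k=1$ summand.

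\textbf{Main obstacle.} These cross terms are the crux. The resulting factor $\mathsf B^*\mathsf B=\mathsf P_k\mathsf X\mathsf P_k$ lives on $\mathsf P_k$, not on $\mathsf Q_k$, so it cannot be bounded by $\mathsf Q_k$ directly. My first attempt will be to avoid estimating it separately and to re-sum instead. Since $\mathsf P_k\mathsf X\mathsf P_k$ is itself a tail of the same telescoping series, I will move these contributions into the $j>k$ layers, where they can be controlled by $\mathsf D_j\leq\mathsf Q_j$ with the small factor $\varepsilon_{j-1}$. Keeping track of which $\varepsilon$ multiplies which $\mathsf Q$ in this bookkeeping is what should produce the combination $2m^2(\varepsilon_{k-1}+\varepsilon_k)\mathsf Q_k$.

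If the re-summation does not close, the fallback is a vector argument. Split $\mathsf Ry=\mathsf R\mathsf Q_0y+\mathsf R\mathsf P_0\mathsf Q_Ny$. Bound the mixed term $2\operatorname{Re}\langle\mathsf R\mathsf Q_0y,\mathsf R\mathsf P_0\mathsf Q_Ny\rangle$ by $m^2\varepsilon_0(\|\mathsf Q_0y\|^2+\|\mathsf Q_Ny\|^2)$, and bound $\|\mathsf R\mathsf P_0\mathsf Q_Ny\|^2$ layer by layer using the monotonicity $\mathsf Q_0\leq\mathsf Q_k$. This accepts a possibly worse grouping of the $\varepsilon$'s, which is still sufficient for the later summation in Proposition~\ref{prop:local-gap-criterion}.
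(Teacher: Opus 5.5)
The gap is at the cross terms you flag, and neither of your two routes closes it.

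With the splitting $\mathsf P_{k-1}=\mathsf D_k+\mathsf P_k$, the increment $\mathsf P_{k-1}\mathsf X\mathsf P_{k-1}-\mathsf P_k\mathsf X\mathsf P_k$ has a vanishing $\mathsf P_k$--$\mathsf P_k$ block but a nonzero $\mathsf D_k$--$\mathsf P_k$ block. It is therefore not bounded by any multiple of $\mathsf Q_k$ unless $\mathsf D_k\mathsf X\mathsf P_k=0$. Write $\mathsf T_k=\mathsf P_{k-1}\mathsf X\mathsf P_{k-1}$. Your AM--GM step gives $\mathsf T_k\leq(1+t_k)\mathsf D_k\mathsf X\mathsf D_k+(1+t_k^{-1})\mathsf T_{k+1}$, and re-expanding the tail attaches products $\prod_{i<k}(1+t_i^{-1})$ to layer $k$. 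At best this gives a weaker inequality with constants depending on $\sum_i\varepsilon_i$, not the coefficients $2m^2(\varepsilon_{k-1}+\varepsilon_k)$. You also do not actually carry out that bookkeeping.

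The fallback is not sufficient downstream, contrary to your claim. Bounding the mixed term by $m^2\varepsilon_0\|\mathsf Q_Ny\|^2$ puts an $O(1)$ coefficient on $\mathsf Q_N$, the complement for the ball covering all of $\Lambda$; in the application $\varepsilon_0=K\epsilon(0)\geq K$. In Step~4 of the proof of Proposition~\ref{prop:local-gap-criterion}, a radius-$r$ term costs $(1+r)^{d+a}$, from overlapping centers times the local gap. Such a term therefore yields a constant that grows with $|\Lambda|$. Your layer-by-layer bound on $\|\mathsf R\mathsf P_0\mathsf Q_Ny\|^2$ also runs into the same cross terms again.

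The missing idea is to telescope with compressions onto the growing complements, $\mathsf F_k=\mathsf Q_k\mathsf X\mathsf Q_k$, as the paper does.
\begin{itemize}
\item Since $\mathsf R=\mathsf R\mathsf Q_N$, you have $\mathsf F_N=\mathsf X$.
\item Since $\mathsf Q_{k-1}\leq\mathsf Q_k$, the Hermitian increment $\mathsf F_k-\mathsf F_{k-1}$ is supported in $\operatorname{ran}\mathsf Q_k$. Hence $\mathsf F_k-\mathsf F_{k-1}\leq\|\mathsf F_k-\mathsf F_{k-1}\|\,\mathsf Q_k$, and a pure norm estimate suffices.
\item The bound $\|\mathsf X-\mathsf F_k\|\leq2\|\mathsf R\|\,\|\mathsf R\mathsf P_k\|\leq2m^2\varepsilon_k$ gives $\|\mathsf F_k-\mathsf F_{k-1}\|\leq2m^2(\varepsilon_{k-1}+\varepsilon_k)$.
\end{itemize}

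In your notation this is simply the other splitting, $\mathsf Q_k=\mathsf Q_{k-1}+\mathsf D_k$. The cross term is then
$2\operatorname{Re}\langle\mathsf R\mathsf Q_{k-1}y,\mathsf R\mathsf D_ky\rangle\leq m^2\varepsilon_{k-1}\bigl(\|\mathsf Q_{k-1}y\|^2+\|\mathsf D_ky\|^2\bigr)=m^2\varepsilon_{k-1}\|\mathsf Q_ky\|^2$,
which stays inside $\mathsf Q_k$. Combined with your diagonal estimate (after reducing to $\varepsilon_k\leq1$), this gives $\mathsf F_k-\mathsf F_{k-1}\leq2m^2\varepsilon_{k-1}\mathsf Q_k$, which implies the lemma.
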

\begin{proof}[Proof of Lemma~\ref{lem:nested-kernel-bound}]
To obtain \eqref{eq:nested-kernel-bound}, we decompose $\mathsf{R}^*\mathsf{R}$ into increments supported on the increasing projections $\mathsf{Q}_k$ and bound each increment. The compressions $\mathsf{F}_k=\mathsf{Q}_k\mathsf{R}^*\mathsf{R}\mathsf{Q}_k$ satisfy $\mathsf{F}_N=\mathsf{R}^*\mathsf{R}$ because $\mathsf{R}\mathsf{P}_N=0$, so
\[
 \mathsf{R}^*\mathsf{R}=\mathsf{F}_0+\sum_{k=1}^N(\mathsf{F}_k-\mathsf{F}_{k-1}).
\]
The first term obeys $\mathsf{F}_0\leq m^2\mathsf{Q}_0$. To control each remaining increment, compare both compressions with $\mathsf{R}^*\mathsf{R}$:
\[
 \|\mathsf{R}^*\mathsf{R}-\mathsf{F}_k\|\leq2\|\mathsf{R}\|\|\mathsf{R}\mathsf{P}_k\|\leq2m^2\varepsilon_k.
\]
Since $\mathsf{Q}_{k-1}\leq \mathsf{Q}_k$, the Hermitian difference $\mathsf{F}_k-\mathsf{F}_{k-1}$ is supported on $\mathsf{Q}_k$. Hence
\[
 \mathsf{F}_k-\mathsf{F}_{k-1}\leq\|\mathsf{F}_k-\mathsf{F}_{k-1}\|\mathsf{Q}_k\leq2m^2(\varepsilon_{k-1}+\varepsilon_k)\mathsf{Q}_k.
\]
Summing these bounds proves \eqref{eq:nested-kernel-bound}.
\end{proof}

\proofstep{Step 2: bound the action on regional kernels}
Recall that $\mathsf{P}_{s,I}$ is the Hilbert-Schmidt kernel projection of $\mathsf{A}_{s,I}=\sum_{z\in I}\mathsf{a}_{s,z}$. We will prove
\begin{equation}\label{eq:local-gap-derived-kernel-action}
 \|\mathsf{R}\mathsf{P}_{s,\Ball_x(r)}\|\leq K\epsilon(r)\|\mathsf{R}\|_{p_0,x}\,,
\end{equation}
where
\begin{equation}\label{eq:local-gap-derived-decay}
 \epsilon(r)=(1+\lfloor r/2\rfloor)^{-p_0}+\sqrt{\delta_*}(1+r)^{-b},\qquad K=2(1+\sqrt{C_{\mathrm{dom}}}).
\end{equation}
\begin{proof}[Proof of \eqref{eq:local-gap-derived-kernel-action}]
Fix $s$ and an operator $\mathsf R$ centered at $x$ with $\mathsf R|\Omega_{\sigma_s}\rangle=0$. Assume $0<\|\mathsf R\|_{p_0,x}<\infty$, the zero case being immediate. To apply \eqref{eq:local-gap-kernel-assumption} inside $\Ball_x(r)$, choose a decomposition in \eqref{eq:polynorm} with weighted sum at most $2\|\mathsf R\|_{p_0,x}$, and retain the terms of radius at most $\lfloor r/2\rfloor$. This gives a truncation $\widetilde{\mathsf R}_r$ supported on $\Ball_x(\lfloor r/2\rfloor)$ with
\[
 \|\widetilde{\mathsf{R}}_r\|\leq2\|\mathsf{R}\|_{p_0,x},\qquad
 \|\mathsf{R}-\widetilde{\mathsf{R}}_r\|\leq2\|\mathsf{R}\|_{p_0,x}(1+\lfloor r/2\rfloor)^{-p_0}.
\]
Because $\mathsf R|\Omega_{\sigma_s}\rangle=0$ and $\|\Omega_{\sigma_s}\|_2^2=\Tr\sigma_s=1$,
\[
 \|\widetilde{\mathsf R}_r|\Omega_{\sigma_s}\rangle\|_2
 =\|(\widetilde{\mathsf R}_r-\mathsf R)|\Omega_{\sigma_s}\rangle\|_2
 \leq\|\mathsf R-\widetilde{\mathsf R}_r\|.
\]
To transfer this bound to~\eqref{eq:local-gap-derived-kernel-action}, set $I=\Ball_x(r)$ and $J=\Ball_x(\lfloor r/2\rfloor)$. The case $\widetilde{\mathsf R}_r=0$ is immediate. Otherwise, the operator
\[
 \mathsf O_J=\frac{\widetilde{\mathsf R}_r^*\widetilde{\mathsf R}_r}
                  {\|\widetilde{\mathsf R}_r\|^2}
\]
is supported on $J$ and satisfies $0\leq\mathsf O_J\leq\one$, since $\widetilde{\mathsf R}_r^*\widetilde{\mathsf R}_r\leq\|\widetilde{\mathsf R}_r\|^2\one$. Thus \eqref{eq:local-gap-kernel-assumption} applies to every unit vector $|\psi_s\rangle\in\ker\mathsf A_{s,I}$. Multiplying that inequality by $\|\widetilde{\mathsf R}_r\|^2$ gives
\[
 \begin{aligned}
 \|\widetilde{\mathsf R}_r|\psi_s\rangle\|_2^2=\|\widetilde{\mathsf R}_r\|^2\langle\psi_s|\mathsf O_J|\psi_s\rangle\leq C_{\mathrm{dom}}\|\widetilde{\mathsf R}_r|\Omega_{\sigma_s}\rangle\|_2^2
       +\delta_*(1+r)^{-2b}\|\widetilde{\mathsf R}_r\|^2.
 \end{aligned}
\]
The right-hand side is independent of $|\psi_s\rangle$. Since $\mathsf P_{s,I}$ is the orthogonal projection onto $\ker\mathsf A_{s,I}$, it is contractive and fixes every vector in that kernel. Taking the supremum over its unit vectors, and then using the preceding bound on $\|\widetilde{\mathsf R}_r|\Omega_{\sigma_s}\rangle\|_2$, therefore yields
\[
 \begin{aligned}
 \|\widetilde{\mathsf R}_r\mathsf P_{s,I}\|^2=\sup_{\substack{|\psi_s\rangle\in\ker\mathsf A_{s,I}\\\|\psi_s\|_2=1}}
       \|\widetilde{\mathsf R}_r|\psi_s\rangle\|_2^2\leq C_{\mathrm{dom}}\|\mathsf R-\widetilde{\mathsf R}_r\|^2
       +\delta_*(1+r)^{-2b}\|\widetilde{\mathsf R}_r\|^2.
 \end{aligned}
\]
Adding back the truncation error gives
\[
 \|\mathsf{R}\mathsf{P}_{s,\Ball_x(r)}\|
 \leq2\|\mathsf{R}\|_{p_0,x}\left[(1+\sqrt{C_{\mathrm{dom}}})(1+\lfloor r/2\rfloor)^{-p_0}+\sqrt{\delta_*}(1+r)^{-b}\right].
\]
This concludes the proof.
\end{proof}

\proofstep{Step 3: compare with regional energies}

We next convert \eqref{eq:local-gap-derived-kernel-action} into a bound on $\mathsf{R}^*\mathsf{R}$. Fix $s,x$ and $\mathsf R$ as in Step~2. Choose $r_0=0$ and $r_k=2^{k-1}$ for $k\geq1$, and use the notation of Lemma~\ref{lem:nested-kernel-bound}:
\[
 \mathsf P_k=\mathsf P_{s,\Ball_x(r_k)},\qquad
 m=\|\mathsf R\|_{p_0,x},\qquad
 \varepsilon_k=K\epsilon(r_k).
\]
Because the balls are nested and the terms $\mathsf a_{s,z}$ are positive,
\[
 \mathsf A_{s,\Ball_x(r_{k+1})}-\mathsf A_{s,\Ball_x(r_k)}
 =\sum_{z\in\Ball_x(r_{k+1})\setminus\Ball_x(r_k)}\mathsf a_{s,z}\geq0.
\]
Consequently, $\ker\mathsf A_{s,\Ball_x(r_{k+1})}\subseteq\ker\mathsf A_{s,\Ball_x(r_k)}$, so $\mathsf P_{k+1}\leq\mathsf P_k$. At finite volume, choose $N$ such that $\Ball_x(r_N)=\Lambda$. By \eqref{eq:local-gap-auxiliary},
\[
 \mathsf P_N=|\Omega_{\sigma_s}\rangle\langle\Omega_{\sigma_s}|,
 \qquad \mathsf R\mathsf P_N=0.
\]
The definition \eqref{eq:polynorm} gives $\|\mathsf R\|\leq m$, and \eqref{eq:local-gap-derived-kernel-action} gives $\|\mathsf R\mathsf P_k\|\leq m\varepsilon_k$. Thus \eqref{eq:nested-kernel-bound} yields
\[
 \begin{aligned}
 \mathsf R^*\mathsf R
 &\leq m^2(\one-\mathsf P_0)
       +2m^2\sum_{k=1}^N(\varepsilon_{k-1}+\varepsilon_k)(\one-\mathsf P_k)\\
 &=\|\mathsf R\|_{p_0,x}^2\left[
       (\one-\mathsf P_0)
       +2K\sum_{k=1}^N\bigl(\epsilon(r_{k-1})+\epsilon(r_k)\bigr)
          (\one-\mathsf P_k)\right].
 \end{aligned}
\]
We now replace each projection complement by the corresponding regional energy. Rearranging \eqref{eq:local-gap-assumption} gives
\[
 \begin{aligned}
 \one-\mathsf P_k
 &\leq\frac{(1+r_k)^a}{\gamma_*}\mathsf A_{s,\Ball_x(r_k)},\qquad k\geq1,\\
 \one-\mathsf P_0
 &\leq\frac1{\gamma_*}\mathsf A_{s,\Ball_x(0)},
 \end{aligned}
\]
where the second line uses $r_0=0$. Factoring out $\|\mathsf R\|_{p_0,x}^2/\gamma_*$ and extending the finite sum to $k\geq1$ by adding positive terms gives
\begin{equation}\label{eq:local-gap-single-operator}
 \mathsf{R}^*\mathsf{R}\leq\frac{\|\mathsf{R}\|_{p_0,x}^2}{\gamma_*}\left[
 \mathsf{A}_{s,\Ball_x(0)}
 +2K\sum_{k\geq1}
 (1+r_k)^a\bigl(\epsilon(r_{k-1})+\epsilon(r_k)\bigr)
 \mathsf{A}_{s,\Ball_x(r_k)}\right].
\end{equation}

\proofstep{Step 4: sum over centers and radii}
\begin{proof}[Proof of Proposition~\ref{prop:local-gap-criterion}]
To sum \eqref{eq:local-gap-single-operator} over the family, we control how often each energy term is counted. At radius $r$, a site belongs to at most $(2r+1)^d\leq2^d(1+r)^d$ balls. The definition of local strength in Condition~\ref{cond:local-action} and positivity therefore give
\[
 \sum_\alpha\|\mathsf{R}_{s,\alpha}\|_{p_0,x_\alpha}^2\mathsf{A}_{s,\Ball_{x_\alpha}(r)}
 \leq M^2\sum_x\mathsf{A}_{s,\Ball_x(r)}\leq 2^dM^2(1+r)^d\mathsf{A}_s.
\]
Substituting this bound into the sum of \eqref{eq:local-gap-single-operator} gives
\begin{equation}\label{eq:local-gap-summability}
 \begin{aligned}
 \sum_\alpha\|\mathsf R_{s,\alpha}|Y\rangle\|_2^2
 &\leq\frac{2^dM^2}{\gamma_*}
 \left[1+2K\sum_{k\geq1}(1+r_k)^{d+a}
 \bigl(\epsilon(r_{k-1})+\epsilon(r_k)\bigr)\right]
 \langle Y|\mathsf A_s|Y\rangle.
 \end{aligned}
\end{equation}
It remains to bound the series in \eqref{eq:local-gap-summability}. Set $c=\min\{p_0,b\}>d+a$. Since $1+\lfloor r/2\rfloor\geq(1+r)/2$, the definition \eqref{eq:local-gap-derived-decay} gives
\[
 \epsilon(r)\leq(2^{p_0}+\sqrt{\delta_*})(1+r)^{-c}.
\]
Using $1+r_k\leq2(1+r_{k-1})$, each summand in \eqref{eq:local-gap-summability} is controlled by
\[
 (1+r_k)^{d+a}\bigl(\epsilon(r_{k-1})+\epsilon(r_k)\bigr)
 \leq(2^c+1)(2^{p_0}+\sqrt{\delta_*})\,2^{-(k-1)(c-d-a)}.
\]
The resulting geometric series gives
\[
 \sum_{k\geq1}(1+r_k)^{d+a}\bigl(\epsilon(r_{k-1})+\epsilon(r_k)\bigr)
 \leq\frac{(2^c+1)(2^{p_0}+\sqrt{\delta_*})}{1-2^{-(c-d-a)}}<\infty.
\]
Substitution into \eqref{eq:local-gap-summability} proves \eqref{eq:local-gap-proof-target}. Applying \eqref{eq:local-gap-auxiliary} then proves \eqref{eq:comparison-target}, with $C_{\mathrm{loc}}$ independent of $s$ and the volume.
\end{proof}

\subsection{Proof for one-dimensional systems}\label{app:one-dimensional}\label{app:domination-chain}
We verify Condition~\ref{ass:local-gap-state-domination} for the chains in \eqref{eq:chain-setting}, with constants depending only on $\beta,q$. We use the weighted Pauli operator \eqref{eq:pauli-form}. We first prove \eqref{eq:local-gap-auxiliary}. The bounds on $\mathsf W=\sGamma_{\sigma_s}^{1/2}\sGamma_\tau^{-1/2}$ in \eqref{eq:chain-cut-map} then give \eqref{eq:BC-input} and \eqref{eq:local-gap-kernel-assumption}. These verify the hypotheses of Proposition~\ref{prop:local-gap-criterion} with $a=1$ and $p_0=b=10$.

\proofstep{Step 1: compare the weighted Pauli operator with the sampler}
\begin{proof}[Proof of \eqref{eq:local-gap-auxiliary} in one dimension]
According to~\cite[Theorem~VI.1, Eq.~(189)]{BC2025},
\begin{equation}\label{eq:chain-pauli-comparison-input}
 \sum_{x,u}\|[S_{x,u},X]\|_{2,\sigma_s}^2
 \leq c_{\beta,q}\,\mathcal E_{2,\sigma_s}(X),
 \qquad X\in\mathcal B(\mathcal H),\quad 0\leq s\leq\beta.
\end{equation}
Substituting $|Y\rangle=\sGamma_{\sigma_s}^{1/2}|X\rangle$ gives
\[
 \mathsf A_s\leq c_{\beta,q}(-\mathsf L_{\sigma_s}).
\]
Also, $\mathsf B_{s,x,u}|\Omega_{\sigma_s}\rangle=0$. If all these commutators vanish, then $X$ commutes with every on-site matrix, hence $X$ is scalar. Thus $\ker\mathsf A_s=\operatorname{span}\{|\Omega_{\sigma_s}\rangle\}$, proving \eqref{eq:local-gap-auxiliary}.
\end{proof}

\proofstep{Step 2: estimate the change of weighting across the boundary}
For an interval $I$, removing its at most two boundary bonds gives
\begin{equation}\label{eq:chain-cut-reference}
 H^{\mathrm{cut}}=H_I+H_{I^c}^{\mathrm{cut}},\qquad
 \tau=\tau_I\otimes\tau_{I^c},\qquad
 \mathsf{W}=\sGamma_{\sigma_s}^{1/2}\sGamma_\tau^{-1/2},
\end{equation}
where $H_I$ and $H_{I^c}^{\mathrm{cut}}$ contain the interaction terms supported entirely inside and outside $I$, respectively, and $\tau_I\propto e^{-sH_I}$, $\tau_{I^c}\propto e^{-sH_{I^c}^{\mathrm{cut}}}$.
Write $\mathsf A_{s,I}=\sum_{x\in I,u}\mathsf B_{s,x,u}^*\mathsf B_{s,x,u}$ and let $\mathsf P_{s,I}$ be its kernel projection. An operator commuting with every on-site Pauli in $I$ is of the form $I_I\otimes X$, so
\[
 \ker\mathsf A_{s,I}
 =\{\sGamma_{\sigma_s}^{1/2}|I_I\otimes X\rangle:
                         X\in\mathcal B(\mathcal H_{I^c})\}.
\]
Replacing $\sigma_s$ by the product state $\tau$ gives the weighted Pauli operator of the isolated interval,
\[
 \mathsf{A}_{s,I}^{\mathrm{cut}}=\sum_{x\in I,u}(\mathsf{B}_{s,x,u}^{\mathrm{cut}})^*\mathsf{B}_{s,x,u}^{\mathrm{cut}},\qquad
 \mathsf{B}_{s,x,u}^{\mathrm{cut}}=\sGamma_\tau^{1/2}\Ad_{S_{x,u}}\sGamma_\tau^{-1/2}\,.
\]
Furthermore,
\[
 \ker\mathsf A_{s,I}^{\mathrm{cut}}
 =\{\sGamma_\tau^{1/2}|I_I\otimes X\rangle:
                         X\in\mathcal B(\mathcal H_{I^c})\}.
\]
Since $\mathsf W\sGamma_\tau^{1/2}=\sGamma_{\sigma_s}^{1/2}$, the kernels and Gibbs vectors therefore satisfy
\begin{equation}\label{eq:chain-cut-kernel}
 \mathsf{W}|\Omega_\tau\rangle=|\Omega_{\sigma_s}\rangle,\qquad
 \ran \mathsf{P}_{s,I}=\mathsf{W}\ker\mathsf{A}_{s,I}^{\mathrm{cut}}.
\end{equation}

\begin{lemma}[Bounds and locality of the boundary weighting]\label{lem:chain-cut-locality}
There are $K\geq1$ and $c>0$ such that $\mathsf{W}$ has approximants $\widetilde{\mathsf W}_L$ supported within distance $L$ of the cut bonds, with
\begin{equation}\label{eq:chain-cut-map}
 \|\mathsf{W}\|+\|\mathsf{W}^{-1}\|+\|\widetilde{\mathsf W}_L\|\leq K,\qquad
 \|\mathsf{W}-\widetilde{\mathsf W}_L\|\leq Ke^{-cL\log(2+L)}.
\end{equation}
The bounds are uniform in $0\leq s\leq\beta$, the interval, and the chain length; $K,c$ depend only on $\beta,q$.
\end{lemma}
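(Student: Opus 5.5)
The plan is to factor $\mathsf W$ into two matrix pieces and reduce everything to Araki-type expansionals in one dimension. Since $\sGamma_\rho^{1/2}=\rho^{1/4}\otimes(\rho^{1/4})^{\mathrm T}$, we have
\[
 \mathsf W=\bigl(\sigma_s^{1/4}\tau^{-1/4}\bigr)\otimes\bigl(\tau^{-1/4}\sigma_s^{1/4}\bigr)^{\mathrm T},
 \qquad
 \mathsf W^{-1}=\bigl(\tau^{1/4}\sigma_s^{-1/4}\bigr)\otimes\bigl(\sigma_s^{-1/4}\tau^{1/4}\bigr)^{\mathrm T}.
\]
Operator norms multiply under tensor products and are invariant under transpose. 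Support in the doubled space is inherited factorwise. So it suffices to bound and localize the four matrices $\sigma_s^{\pm1/4}\tau^{\mp1/4}$ and $\tau^{\mp1/4}\sigma_s^{\pm1/4}$.

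\emph{Step 1: normalization.} Write $\sigma_s=e^{-sH}/Z$ and $\tau=e^{-sH^{\mathrm{cut}}}/Z^{\mathrm{cut}}$, with $H-H^{\mathrm{cut}}=V_{\partial I}$ the sum of at most two bonds, so $\|V_{\partial I}\|\leq2$. Golden--Thompson or Peierls--Bogoliubov gives $e^{-2s}\leq Z/Z^{\mathrm{cut}}\leq e^{2s}$. The scalar prefactors $(Z^{\mathrm{cut}}/Z)^{\pm1/4}$ are therefore bounded by $e^{\beta/2}$, and it remains to treat the expansionals $E(t)=e^{-tH}e^{tH^{\mathrm{cut}}}$ and their inverses, adjoints and transposes at $t=s/4$.

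\emph{Step 2: bounds via the expansional.} Differentiating in $t$ gives
\[
 \partial_tE(t)=-\,e^{-tH}V_{\partial I}e^{tH}\,E(t),\qquad E(0)=I,
\]
so $E(t)$ is the time-ordered exponential of $-\Phi_{-it}(V_{\partial I})$. Here $\Phi_{-it}$ is imaginary-time Heisenberg evolution under $H$. The key input is Araki's one-dimensional estimate, in the quantitative form of Pérez-Garc\'ia--P\'erez-Hern\'andez and as used in~\cite{BC2025}. It states that for a local $V$ on a chain and $|t|\leq\beta$, the evolved operator $e^{-tH}Ve^{tH}$ has norm at most $C(\beta)\|V\|$. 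Moreover, it is approximated by its analogue with $H$ restricted to the $L$-neighbourhood of $\supp V$, with error $\leq C e^{-cL\log(2+L)}$. Grönwall then gives $\|E(t)\|\leq\exp(C\beta)$. The same argument applies to $E(t)^{-1}=e^{-tH^{\mathrm{cut}}}e^{tH}$ with the roles of $H$ and $H^{\mathrm{cut}}$ swapped, and to adjoints. This yields $\|\mathsf W\|,\|\mathsf W^{-1}\|\leq K$ with $K$ depending only on $\beta,q$.

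\emph{Step 3: locality.} Define $\widetilde E_L(t)$ by the same Dyson equation, but with $H$ and $H^{\mathrm{cut}}$ replaced by their restrictions $H_{N_L}$ and $H^{\mathrm{cut}}_{N_L}$ to the strip $N_L$ of radius $L$ around the cut bonds. Then $\widetilde E_L(t)=e^{-tH_{N_L}}e^{tH^{\mathrm{cut}}_{N_L}}$ is supported on $N_L$, and it obeys the same $C(\beta)$ bound by Step~2 applied to the smaller chain. A Duhamel comparison of the two Dyson equations bounds $\|E(t)-\widetilde E_L(t)\|$ by $\int_0^t\|E\|\,\|\widetilde E_L\|$ times the Araki truncation error of $\Phi_{-iu}(V_{\partial I})$, which is $\leq Ke^{-cL\log(2+L)}$. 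Tensoring the truncated factors gives $\widetilde{\mathsf W}_L$. A telescoping bound $\|A\otimes B-\widetilde A\otimes\widetilde B\|\leq\|A-\widetilde A\|\|B\|+\|\widetilde A\|\|B-\widetilde B\|$ then yields \eqref{eq:chain-cut-map}, after enlarging $K$. Uniformity in $s\in[0,\beta]$, in the interval and in the chain length holds because all inputs depend only on $\beta$ and the bond bound.

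The main obstacle is the imaginary-time locality input with the superexponential rate $e^{-cL\log L}$, uniform in $t\leq\beta/4$. I would cite it from Araki and its quantitative refinements rather than reprove it. If it must be reproved, the route is to expand $e^{-tH}Ve^{tH}=\sum_n\frac{(-t)^n}{n!}\ad_H^n(V)$. In one dimension the nested commutators $\ad_H^n(V)$ grow only like $C^n n!$ through connected cluster counting, but spread at most $n$ sites. Splitting the series at $n\approx L$ then gives the factorial tail $\sim (Ct)^L/L!$, which is the source of the $e^{-cL\log L}$ rate.
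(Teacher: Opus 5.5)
Your proposal is correct, and its top-level reduction is the same as the paper's. In both, $\mathsf W$ is the scalar $(Z^{\mathrm{cut}}/Z)^{1/2}$ times $E_I\otimes(E_I^*)^{\mathrm T}$ with $E_I=e^{-sH/4}e^{sH^{\mathrm{cut}}/4}$. The scalar is controlled by $|\log(Z/Z^{\mathrm{cut}})|\leq 2s$, and everything reduces to bounding and localizing $E_I$ and $E_I^{-1}$.

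You differ in how that expansional estimate is obtained. The paper cites \cite[Lemma~A.3 and Corollary~A.1]{BC2025} directly at the level of the expansional. For a single removed bond, this gives $\|E\|+\|E^{-1}\|\leq K_\beta$ and $E=\sum_nE_n$, with $E_n$ supported within distance $n$ of the bond and $\|E_n\|\leq A_\beta B_\beta^n/n!$. Truncating at $n=L$ gives the $e^{-cL\log(2+L)}$ error. Two bonds are handled by factoring $E_I$ into two one-bond expansionals, the second taken on the chain with the first bond already removed.

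You instead take the more primitive observable-level Araki estimate for $e^{-uH}Ve^{uH}$ as input. You then derive the expansional statements from the Dyson equation $\partial_tE=-e^{-tH}V_{\partial I}e^{tH}E$: Gr\"onwall gives the norm bounds, and a Duhamel comparison with the strip Hamiltonians gives the truncation. Because $V_{\partial I}$ enters linearly, this treats both bonds at once and avoids the factorization. The price is a somewhat longer argument where the paper has a direct citation.

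One small correction in your Duhamel step: the propagator multiplying the error is $E(t)E(u)^{-1}$, not $E(t)$. It is bounded by $\exp\int_u^t\|e^{-rH}V_{\partial I}e^{rH}\|\,dr$, so the conclusion is unchanged.

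Your fallback sketch for reproving the Araki input would not work as stated. If $\|\ad_H^n(V)\|\leq C^nn!$, the $n$th Taylor term of $e^{-tH}Ve^{tH}$ is only bounded by $(Ct)^n$, with no $1/n!$ left over. The series then converges only for $t<1/C$, and the tail beyond $n\approx L$ is geometric, not $(Ct)^L/L!$. Naive cluster counting therefore recovers the estimate only at high temperature, whereas the lemma is needed at every finite $\beta$. That requires the genuinely one-dimensional argument of Araki, or the refined combinatorics of P\'erez-Garc\'{\i}a--P\'erez-Hern\'andez. Citing it rather than reproving it, as you propose, is the right call.
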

\begin{proof}
Write $Z=\Tr e^{-sH}$ and $Z^{\mathrm{cut}}=\Tr e^{-sH^{\mathrm{cut}}}$. Set $E_I=e^{-sH/4}e^{sH^{\mathrm{cut}}/4}$. Since $\mathsf W$ represents $\Gamma_{\sigma_s}^{1/2}\circ\Gamma_\tau^{-1/2}$, a direct computation gives
\[
\mathcal W(Y)=(Z^{\mathrm{cut}}/Z)^{1/2}E_IYE_I^*,
 \qquad \mathsf W|Y\rangle=|\mathcal W(Y)\rangle.
\]
We first state the precise estimates used to approximate $E_I$. For one removed bond, let $H_-$ be the Hamiltonian after its removal and put $E=e^{-sH/4}e^{sH_-/4}$. The one-dimensional estimates in~\cite[Lemma~A.3 and Corollary~A.1]{BC2025} give
\[
 \|E\|+\|E^{-1}\|\leq K_\beta,\qquad
 E=\sum_{n\geq0}E_n,\qquad
 \|E_n\|\leq A_\beta\frac{B_\beta^n}{n!},
\]
where $E_n$ is supported within distance $n$ of that bond. The constants are uniform in the bond, the chain length, and $0\leq s\leq\beta$. Truncating this series at $n=L$ gives a supported approximation $E^{[L]}$ with
\[
 \|E-E^{[L]}\|
 \leq A_\beta\sum_{n>L}\frac{B_\beta^n}{n!}
 \leq C_\beta e^{-c_\beta L\log(2+L)}.
\]
For the last inequality, use $n!\geq(n/e)^n$ and bound the tail by a geometric series once $L$ exceeds $2B_\beta$; the finitely many smaller $L$ are absorbed in $C_\beta$.

There are at most two removed bonds. If $H^{(0)}=H$, $H^{(1)}$ removes the first bond, and $H^{(2)}=H^{\mathrm{cut}}$, then
\[
 E_I=\bigl(e^{-sH^{(0)}/4}e^{sH^{(1)}/4}\bigr)
      \bigl(e^{-sH^{(1)}/4}e^{sH^{(2)}/4}\bigr).
\]
Apply the preceding one-bond estimates to each factor; after the first cut, the second factor acts on the remaining component containing the second bond. Replacing each by its truncation gives $E_{I,L}$ supported on the union of at most two boundary strips,
\[
N_L:=
\bigcup_{\{u,v\}\,\text{a removed bond}}
\left\{x\in\Lambda:\min\{\operatorname{dist}(x,u),\operatorname{dist}(x,v)\}\le L\right\},
\]
with
\[
 \|E_I\|+\|E_I^{-1}\|+\|E_{I,L}\|\leq C_\beta,
 \qquad \|E_I-E_{I,L}\|\leq C_\beta e^{-c_\beta L\log(2+L)}.
\]
The cases of one or no removed bonds use one or no factors, respectively; in the latter case, $E_I=E_{I,L}=I$.

The scalar normalization is also uniform. Interpolating between $H^{\mathrm{cut}}$ and $H$ and differentiating the log partition function gives
\[
 |\log(Z/Z^{\mathrm{cut}})|
 \leq s\|H-H^{\mathrm{cut}}\|\leq2\beta.
\]
For $a_I=(Z^{\mathrm{cut}}/Z)^{1/2}$, the induced Hilbert--Schmidt norms therefore satisfy
\[
 \|\mathsf W\|\leq a_I\|E_I\|^2,\qquad
 \|\mathsf W^{-1}\|\leq a_I^{-1}\|E_I^{-1}\|^2.
\]
Define $\widetilde{\mathcal W}_L(Y)=a_I E_{I,L}YE_{I,L}^*$ and let $\widetilde{\mathsf W}_L$ be its vectorized representative. It is supported on the doubled site factors in $N_L$ and satisfies
\[
 \|\widetilde{\mathsf W}_L\|\leq a_I\|E_{I,L}\|^2\leq C_\beta.
\]
Moreover,
\[
 \|\mathsf W-\widetilde{\mathsf W}_L\|
 \leq a_I(\|E_I\|+\|E_{I,L}\|)\|E_I-E_{I,L}\|
 \leq C_\beta e^{-c_\beta L\log(2+L)}.
\]
Together with the bounds on $\mathsf W$ and $\mathsf W^{-1}$, these inequalities prove \eqref{eq:chain-cut-map} after enlarging $K$.
\end{proof}

\proofstep{Step 3: prove the local spectral gap}
We use Lemma~\ref{lem:chain-cut-locality} to prove the following uniform local spectral gap bound for intervals.
\begin{lemma}[Local spectral gaps for intervals]\label{lem:chain-local-poincare}
There is $\lambda=\lambda(\beta,q)>0$ such that, for every interval $I\subseteq\Lambda$, $0\leq s\leq\beta$, and every $|Y\rangle$,
\begin{equation}\label{eq:BC-input}
 \|(\one-\mathsf{P}_{s,I})|Y\rangle\|_2^2
 \leq\lambda^{-1}\langle Y|\mathsf{A}_{s,I}|Y\rangle,
 \qquad \one-\mathsf{P}_{s,I}\leq\lambda^{-1}\mathsf{A}_{s,I}.
\end{equation}
\end{lemma}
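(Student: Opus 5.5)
The plan is to reduce the gap for $\mathsf A_{s,I}$ to the gap of the cut operator $\mathsf A_{s,I}^{\mathrm{cut}}$ through the weighting map $\mathsf W$ of Lemma~\ref{lem:chain-cut-locality}. The gap for the cut operator comes from \cite[Theorem~V.1]{BC2025}, and we only lose powers of $K$.

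\textbf{Step 1: conjugation identities.} Since $\mathsf B_{s,x,u}=\sGamma_{\sigma_s}^{1/2}\Ad_{S_{x,u}}\sGamma_{\sigma_s}^{-1/2}$ and $\sGamma_{\sigma_s}^{1/2}=\mathsf W\sGamma_\tau^{1/2}$, we have $\mathsf B_{s,x,u}=\mathsf W\mathsf B^{\mathrm{cut}}_{s,x,u}\mathsf W^{-1}$. Hence, for any $|Y\rangle$, set $|Z\rangle=\mathsf W^{-1}|Y\rangle$. Then
\[
\langle Y|\mathsf A_{s,I}|Y\rangle=\sum_{x\in I,u}\|\mathsf W\mathsf B^{\mathrm{cut}}_{s,x,u}|Z\rangle\|_2^2\geq\|\mathsf W^{-1}\|^{-2}\langle Z|\mathsf A^{\mathrm{cut}}_{s,I}|Z\rangle .
\]

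\textbf{Step 2: the cut gap.} Because $\tau=\tau_I\otimes\tau_{I^c}$ and the Paulis act only on $I$, $\mathsf A^{\mathrm{cut}}_{s,I}=\mathsf A^{I}\otimes\one_{I^c}^{\mathrm{dbl}}$. Here $\mathsf A^I$ is the weighted Pauli operator of the open chain $I$ with Gibbs state $\tau_I$, and its kernel is spanned by $|\Omega_{\tau_I}\rangle$. Theorem~V.1 of \cite{BC2025} gives $\mathsf A^I\geq\lambda_{\mathrm{cut}}(\one-|\Omega_{\tau_I}\rangle\langle\Omega_{\tau_I}|)$ with $\lambda_{\mathrm{cut}}$ uniform in $|I|$ and $s\in[0,\beta]$. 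It applies because $I$ is itself a chain in the class \eqref{eq:chain-setting}. Tensoring with the identity gives $\langle Z|\mathsf A^{\mathrm{cut}}_{s,I}|Z\rangle\geq\lambda_{\mathrm{cut}}\|(\one-\mathsf P^{\mathrm{cut}}_I)|Z\rangle\|_2^2$, where $\mathsf P^{\mathrm{cut}}_I$ is the projection onto $\ker\mathsf A^{\mathrm{cut}}_{s,I}$.

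\textbf{Step 3: transfer the distance to the kernel.} By \eqref{eq:chain-cut-kernel}, $\ran\mathsf P_{s,I}=\mathsf W\ker\mathsf A^{\mathrm{cut}}_{s,I}$, so $\mathsf W\mathsf P^{\mathrm{cut}}_I|Z\rangle\in\ran\mathsf P_{s,I}$. Since $\mathsf P_{s,I}$ is the orthogonal projection, its complement gives the minimal distance to that range:
\[
\|(\one-\mathsf P_{s,I})|Y\rangle\|_2\leq\|\,|Y\rangle-\mathsf W\mathsf P^{\mathrm{cut}}_I|Z\rangle\|_2=\|\mathsf W(\one-\mathsf P^{\mathrm{cut}}_I)|Z\rangle\|_2\leq\|\mathsf W\|\,\|(\one-\mathsf P^{\mathrm{cut}}_I)|Z\rangle\|_2 .
\]
Combining the three steps yields $\|(\one-\mathsf P_{s,I})|Y\rangle\|_2^2\leq\|\mathsf W\|^2\|\mathsf W^{-1}\|^2\lambda_{\mathrm{cut}}^{-1}\langle Y|\mathsf A_{s,I}|Y\rangle$. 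Lemma~\ref{lem:chain-cut-locality} gives $\|\mathsf W\|,\|\mathsf W^{-1}\|\leq K$, so we may take $\lambda=K^{-4}\lambda_{\mathrm{cut}}$, as anticipated in Section~\ref{sec:comparison-choice}. The operator form of \eqref{eq:BC-input} is then just the quadratic-form statement.

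\textbf{Main obstacle.} The main obstacle is confirming that the cited gap from \cite{BC2025} is stated for exactly this weighted Pauli form on an open interval at inverse temperature $s$, uniformly in the length and in $s\le\beta$. That includes the degenerate cases where $I$ is a single site or equals the whole $\Lambda$ (then $\tau=\sigma_s$ and $\mathsf W=\one$). If the cited theorem is phrased as a Poincaré inequality $\sum\|[S_{x,u},X]\|_{2,\tau_I}^2\geq\lambda_{\mathrm{cut}}\|X-\Tr(\tau_IX)\|_{2,\tau_I}^2$, I would translate it via $|Y\rangle=\sGamma_{\tau_I}^{1/2}|X\rangle$. In that form the kernel projection is the rank-one projector onto $|\Omega_{\tau_I}\rangle$.
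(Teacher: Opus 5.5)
Your proof is correct and follows essentially the paper's argument. You use the interval Poincar\'e inequality of \cite[Theorem~V.1]{BC2025}, tensor it over $I^c$ via $\tau=\tau_I\otimes\tau_{I^c}$, and transfer it to $\sigma_s$ through $\mathsf W$ at a cost of $K^4$, which gives $\lambda=K^{-4}\lambda_{\mathrm{cut}}$. The only difference is presentation: you conjugate $\mathsf B_{s,x,u}=\mathsf W\mathsf B^{\mathrm{cut}}_{s,x,u}\mathsf W^{-1}$ in the vectorized picture, whereas the paper expresses the same two factors of $K^2$ through the norm equivalence $K^{-1}\|X\|_{2,\tau}\leq\|X\|_{2,\sigma_s}\leq K\|X\|_{2,\tau}$, applied both to the commutator sum and to $\inf_Z\|X-I_I\otimes Z\|$.
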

\begin{proof} Cut the boundary bonds of $I$ and write $\tau=\tau_I\otimes\tau_{I^c}$. Lemma~\ref{lem:chain-cut-locality} bounds the change of weighting $\mathsf{W}=\sGamma_{\sigma_s}^{1/2}\sGamma_\tau^{-1/2}$ and its inverse, so
\begin{equation}\label{eq:chain-cut-norm-comparison}
 K^{-1}\|X\|_{2,\tau}\leq\|X\|_{2,\sigma_s}
                         \leq K\|X\|_{2,\tau}.
\end{equation}
Because $H_I$ is a chain of the same interaction strength as $H$, according to~\cite[Theorem~V.1]{BC2025}, we have
\begin{equation}\label{eq:chain-isolated-gap-input}
 \inf_{c\in\mathbb C}\|X_I-cI_I\|_{2,\tau_I}^2
 \leq\lambda_{\mathrm{cut}}^{-1}
        \sum_{x\in I,u}\|[S_{x,u},X_I]\|_{2,\tau_I}^2,
\end{equation}
with $\lambda_{\mathrm{cut}}>0$ uniform in $I$ and $0\leq s\leq\beta$.

To extend this estimate to matrices on $I\cup I^c$, choose an orthonormal basis $\{T_j\}$ in the $\tau_{I^c}$-weighted matrix space and expand $X=\sum_jX_j\otimes T_j$. The product structure of $\tau$ gives
\[
 \begin{aligned}
 \inf_Z\|X-I_I\otimes Z\|_{2,\tau}^2=\sum_j\inf_{c_j\in\mathbb C}\|X_j-c_jI_I\|_{2,\tau_I}^2\leq\lambda_{\mathrm{cut}}^{-1}
             \sum_{x\in I,u}\|[S_{x,u},X]\|_{2,\tau}^2.
 \end{aligned}
\]
We can now compare the two weighted norms:
\[
 \begin{aligned}
 &\inf_Z\|X-I_I\otimes Z\|_{2,\sigma_s}^2
 \leq K^2\inf_Z\|X-I_I\otimes Z\|_{2,\tau}^2\\
 &\leq\frac{K^2}{\lambda_{\mathrm{cut}}}
             \sum_{x\in I,u}\|[S_{x,u},X]\|_{2,\tau}^2\leq\frac{K^4}{\lambda_{\mathrm{cut}}}
             \sum_{x\in I,u}\|[S_{x,u},X]\|_{2,\sigma_s}^2.
 \end{aligned}
\]
For $|Y\rangle=\sGamma_{\sigma_s}^{1/2}|X\rangle$, the left side is $\|(\one-\mathsf{P}_{s,I})|Y\rangle\|_2^2$ and the final sum is $\langle Y|\mathsf{A}_{s,I}|Y\rangle$. This proves \eqref{eq:BC-input} with $\lambda=\lambda_{\mathrm{cut}}K^{-4}$.
\end{proof}

\proofstep{Step 4: prove the local state domination}
\begin{lemma}[Local state domination in one dimension]\label{lem:chain-domination-transfer}
There is $\delta_*>0$, depending only on $\beta,q$, such that, for every $0\leq s\leq\beta$, $x\in\Lambda$, and $r\geq0$, with $I=\Ball_x(r)$ and $J=\Ball_x(\lfloor r/2\rfloor)$, every unit $|\psi_s\rangle\in\ker\mathsf{A}_{s,I}$ and every $0\leq\mathsf{O}_J\leq\one$ supported on $J$ satisfy
\begin{equation}\label{eq:chain-state-domination}
 \langle\psi_s|\mathsf O_J|\psi_s\rangle
 \leq2K^6\langle\Omega_{\sigma_s}|\mathsf O_J|\Omega_{\sigma_s}\rangle
       +\delta_*(1+r)^{-20},
\end{equation}
where $K$ is from Lemma~\ref{lem:chain-cut-locality}. In particular, \eqref{eq:local-gap-kernel-assumption} holds with $C_{\mathrm{dom}}=2K^6$ and $b=10$.
\end{lemma}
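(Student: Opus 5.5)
The plan is to transfer the exact interior identity for the cut kernel, \eqref{eq:cut-interior-expectations}, to the true kernel using the weighting map $\mathsf W=\sGamma_{\sigma_s}^{1/2}\sGamma_\tau^{-1/2}$ of \eqref{eq:chain-cut-reference}. This establishes the reduction \eqref{eq:chain-domination-reduction-overview}, after which the commutator term is bounded with Lemma~\ref{lem:chain-cut-locality}. We may assume $r\geq4$: for $r<4$ the left side of \eqref{eq:chain-state-domination} is at most $1$, so it suffices to take $\delta_*\geq5^{20}$.

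\emph{Step 1: representing the kernel vector.} By \eqref{eq:chain-cut-kernel}, a unit $|\psi_s\rangle\in\ker\mathsf A_{s,I}$ can be written as $|\psi_s\rangle=\mathsf W|\phi\rangle/\|\mathsf W|\phi\rangle\|_2$ with $|\phi\rangle\in\ker\mathsf A_{s,I}^{\mathrm{cut}}$ a unit vector. Lemma~\ref{lem:chain-cut-locality} gives $\|\mathsf W|\phi\rangle\|_2\geq K^{-1}$. Since $\ker\mathsf A^{\mathrm{cut}}_{s,I}=\operatorname{span}\{|\Omega_{\tau_I}\rangle\}\otimes\mathcal H_{I^c}^{\mathrm{dbl}}$ and $J\subseteq I$, we have $\langle\phi|\mathsf O_J|\phi\rangle=\langle\Omega_\tau|\mathsf O_J|\Omega_\tau\rangle$. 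Writing $\sqrt{\mathsf O_J}\mathsf W=\mathsf W\sqrt{\mathsf O_J}+[\sqrt{\mathsf O_J},\mathsf W]$ and using $\|x+y\|^2\leq2\|x\|^2+2\|y\|^2$, we get
\[
\langle\psi_s|\mathsf O_J|\psi_s\rangle\leq K^2\|\sqrt{\mathsf O_J}\mathsf W|\phi\rangle\|_2^2\leq 2K^4\langle\Omega_\tau|\mathsf O_J|\Omega_\tau\rangle+2K^2\|[\sqrt{\mathsf O_J},\mathsf W]\|^2 .
\]

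\emph{Step 2: returning to the Gibbs vector.} Next, use $|\Omega_\tau\rangle=\mathsf W^{-1}|\Omega_{\sigma_s}\rangle$ and the identity $[\sqrt{\mathsf O_J},\mathsf W^{-1}]=-\mathsf W^{-1}[\sqrt{\mathsf O_J},\mathsf W]\mathsf W^{-1}$, whose norm is at most $K^2\|[\sqrt{\mathsf O_J},\mathsf W]\|$. This bounds $\langle\Omega_\tau|\mathsf O_J|\Omega_\tau\rangle$ by a multiple of $K^2\langle\Omega_{\sigma_s}|\mathsf O_J|\Omega_{\sigma_s}\rangle$ plus a commutator term. Combining this with Step~1 yields \eqref{eq:chain-domination-reduction-overview}. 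The exact constant in front of $\langle\Omega_{\sigma_s}|\mathsf O_J|\Omega_{\sigma_s}\rangle$ depends on how the two splittings are weighted. To reach precisely $2K^6$, I would split only once, applying the inequality directly to $\sqrt{\mathsf O_J}\mathsf W\mathsf W^{-1}$-type decompositions. Any fixed power of $K$ is harmless for \eqref{eq:local-gap-kernel-assumption} in any case.

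\emph{Step 3: the commutator decay, which is the main point.} Take $L=\lfloor r/4\rfloor$ and the approximant $\widetilde{\mathsf W}_L$ from Lemma~\ref{lem:chain-cut-locality}, supported on the strip $N_L$ around the at most two removed bonds. These bonds sit at distance about $r$ from $x$, while $J$ has radius $\lfloor r/2\rfloor$, so $\dist(J,N_L)\geq r-\lfloor r/2\rfloor-L-1>0$ for $r\geq4$. The delicate part is this bookkeeping of distances in the open chain, including when $I$ touches an end of the chain and fewer bonds are cut. Disjoint supports then give $[\sqrt{\mathsf O_J},\widetilde{\mathsf W}_L]=0$, hence
\[
\|[\sqrt{\mathsf O_J},\mathsf W]\|\leq2\|\mathsf W-\widetilde{\mathsf W}_L\|\leq2Ke^{-cL\log(2+L)} .
\]
Since $e^{-2cL\log(2+L)}\leq C(1+r)^{-20}$ uniformly in $r$, with $C$ depending only on $\beta,q$, choosing $\delta_*$ large enough gives \eqref{eq:chain-state-domination}, and therefore \eqref{eq:local-gap-kernel-assumption} with $b=10$.
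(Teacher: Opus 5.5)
Your proposal is correct and follows the paper's argument. You pull $|\psi_s\rangle$ back by $\mathsf W^{-1}$ into the cut kernel, commute $T=\sqrt{\mathsf O_J}$ past $\mathsf W$ twice, and then remove $[T,\mathsf W]$ with the truncation $\widetilde{\mathsf W}_L$. The paper writes $\mathsf W T|\Omega_\tau\rangle=T|\Omega_{\sigma_s}\rangle-[T,\mathsf W]|\Omega_\tau\rangle$ instead of your $[T,\mathsf W^{-1}]$ identity, which only changes the coefficient of the commutator term.

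Two small repairs are needed:
\begin{enumerate}
\item The stated constant $2K^6$ comes, as in the paper, from chaining the triangle inequalities on norms, $\|T|\psi_s\rangle\|_2\leq K^3\|T|\Omega_{\sigma_s}\rangle\|_2+(K^4+K)\|[T,\mathsf W]\|$, and squaring only once. Squaring in both of your steps gives $4K^6$.
\item Your separation bound $r-\lfloor r/2\rfloor-L-1$ equals $0$ at $r=4$. The actual gap between $J$ and $N_L$ is $r-\lfloor r/2\rfloor-L\geq r/4\geq1$, so drop the $-1$, or simply note that your choice $\delta_*\geq5^{20}$ already covers $r=4$.
\end{enumerate}
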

\begin{proof}
Set $T=\sqrt{\mathsf O_J}$, $D=[T,\mathsf W]$, and $|v\rangle=\mathsf W^{-1}|\psi_s\rangle$. The identities
\[
 \mathsf B_{s,x,u}^{\mathrm{cut}}\mathsf W^{-1}
 =\mathsf W^{-1}\mathsf B_{s,x,u},\qquad x\in I,
\]
show that $|v\rangle$ belongs to the cut kernel. Hence $|v\rangle=|\Omega_{\tau_I}\rangle\otimes|v_{I^c}\rangle$. Since $T$ is supported in $J\subseteq I$,
\[
 \|T|v\rangle\|_2=\|v\|_2\,\|T|\Omega_\tau\rangle\|_2,
 \qquad \|v\|_2\leq K.
\]
Write $a=\|T|\Omega_{\sigma_s}\rangle\|_2$ and $\varepsilon=\|D\|$. From $\mathsf W|\Omega_\tau\rangle=|\Omega_{\sigma_s}\rangle$ we obtain
\[
 \mathsf W T|\Omega_\tau\rangle
 =T|\Omega_{\sigma_s}\rangle-D|\Omega_\tau\rangle,
 \qquad \|T|\Omega_\tau\rangle\|_2\leq K(a+\varepsilon).
\]
Also $T|\psi_s\rangle=\mathsf W T|v\rangle+D|v\rangle$. Combining the last two bounds gives
\[
 \begin{aligned}
 \|T|\psi_s\rangle\|_2\leq K\|v\|_2\,\|T|\Omega_\tau\rangle\|_2+\varepsilon\|v\|_2\leq K^3a+(K^3+K)\varepsilon\leq K^3a+2K^3\varepsilon.
 \end{aligned}
\]
Squaring and using $(a+2\varepsilon)^2\leq2a^2+8\varepsilon^2$ proves
\begin{equation}\label{eq:chain-domination-reduction}
 \langle\psi_s|\mathsf O_J|\psi_s\rangle
 \leq2K^6\langle\Omega_{\sigma_s}|\mathsf O_J|\Omega_{\sigma_s}\rangle
       +8K^6\|[\sqrt{\mathsf O_J},\mathsf W]\|^2.
\end{equation}

To bound the last term in \eqref{eq:chain-domination-reduction}, let $r\geq4$ and choose $L=\lfloor r/4\rfloor$. The support of $\widetilde{\mathsf W}_L$ in \eqref{eq:chain-cut-map} is disjoint from $J=\Ball_x(\lfloor r/2\rfloor)$, so
\[
 \begin{aligned}
 \|[\sqrt{\mathsf O_J},\mathsf W]\|=\|[\sqrt{\mathsf O_J},\mathsf W-\widetilde{\mathsf W}_L]\|\leq2\|\mathsf W-\widetilde{\mathsf W}_L\|
 \leq Ce^{-cr\log(2+r)}.
 \end{aligned}
\]
If there are no cut bonds, $\mathsf W=\one$ and this commutator is zero. Substitution into \eqref{eq:chain-domination-reduction} proves \eqref{eq:chain-state-domination} for $r\geq4$, because $e^{-cr\log(2+r)}$ decays faster than every fixed inverse power. Enlarging $\delta_*$ covers $r<4$ and completes the proof.
\end{proof}

\proofstep{Step 5: conclude the one-dimensional proof}
\begin{proof}[Proof of Condition~\ref{cond:local-action} for one-dimensional systems]
Step~1 proves \eqref{eq:local-gap-auxiliary} with $c_{\mathrm{cmp}}=c_{\beta,q}$. Equation~\eqref{eq:BC-input} gives \eqref{eq:local-gap-assumption} with $\gamma_*=\lambda$ and $a=1$, and \eqref{eq:chain-state-domination} gives \eqref{eq:local-gap-kernel-assumption} with $C_{\mathrm{dom}}=2K^6$ and $b=10$. These constants are uniform in $s$ and the chain length. Taking $p_0=10$ gives $\min\{p_0,b\}=10>d+a=2$, so Proposition~\ref{prop:local-gap-criterion} proves \eqref{eq:comparison-target}.
\end{proof}

\subsection{Proof for high-temperature systems}\label{app:high-temperature}
For the high-temperature system, we let $\mathsf A_s=\mathsf U_s(-\mathsf L_{\sigma_0})\mathsf U_s^*$.
\proofstep{Step 1: identify the regional kernels and gaps}
\begin{proof}[Proof of \eqref{eq:ht-regional-kernel}]
At $s=0$, \eqref{eq:native-product-generator} gives $\mathsf A_0=\nu_q\sum_z\mathsf Q_z$, where $\mathsf Q_z$ projects onto the complement of the single-site Gibbs vector:
\[
\mathsf Q_z
=
\left(\mathbf1_z-
|\Omega_{\sigma_{0,z}}\rangle
\langle\Omega_{\sigma_{0,z}}|\right)
\otimes\mathbf1_{\Lambda\setminus\{z\}},\quad |\Omega_{\sigma_{0,z}}\rangle
=\frac{|I_z\rangle}{2^{q/2}}.
\]
Set
\[
 \mathsf a_{s,z}=\nu_q\mathsf U_s\mathsf Q_z\mathsf U_s^*,
 \qquad \mathsf P_I^0=\prod_{z\in I}(\one-\mathsf Q_z).
\]
The projections commute. If $|\psi\rangle$ is a common eigenvector of $\left\{\mathsf Q_z\right\}_{z\in I}$ outside $\ran\mathsf P_I^0$, at least one $\mathsf Q_z$ has eigenvalue one. Therefore
\[
 \begin{aligned}
 \mathsf A_{0,I}=\nu_q\sum_{z\in I}\mathsf Q_z
 &\geq\nu_q(\one-\mathsf P_I^0),\\
 \mathsf A_{s,I}=\mathsf U_s\mathsf A_{0,I}\mathsf U_s^*
 &\geq\nu_q(\one-\mathsf P_{s,I}),
 \qquad \mathsf P_{s,I}=\mathsf U_s\mathsf P_I^0\mathsf U_s^*.
 \end{aligned}
\]
The second line is \eqref{eq:ht-regional-kernel}. Since $\mathsf Q_z\mathsf P_I^0=0$ for $z\in I$, the first line shows that $\mathsf P_I^0$ projects onto $\ker\mathsf A_{0,I}$, which fixes the product Gibbs vector on $I$:
\[
 \ker\mathsf A_{0,I}=\ran\mathsf P_I^0
 =\operatorname{span}\{|\Omega_{\sigma_0,I}\rangle\}\otimes\mathcal H_{I^c}^{\mathrm{dbl}}.
\]
Since $\mathsf A_{s,I}=\mathsf U_s\mathsf A_{0,I}\mathsf U_s^*$, the regional kernels are $\ker\mathsf A_{s,I}=\mathsf U_s\ker\mathsf A_{0,I}$, which is \eqref{eq:ht-regional-kernel}. Exact transport of the Gibbs vector gives $\ker\mathsf A_s=\operatorname{span}\{|\Omega_{\sigma_s}\rangle\}$.
\end{proof}

\proofstep{Step 2: approximate transported operators on larger regions}
We control the action of the transport $\mathsf U_s$ on (approximately) local operators.
\begin{lemma}[Locality bound]\label{lem:ht-poisson-action}
Assume $\beta J\leq\theta_*$ as in \eqref{eq:ht-mixing-input}. For every fixed $p\geq0$, $x\in\Lambda$, and operator $\mathsf R$ on the doubled Hilbert space with $\|\mathsf R\|_{p+d,x}<\infty$, and all $0\leq s\leq\beta$,
\begin{equation}\label{eq:ht-poisson-action}
 \|\mathsf{U}_s^*\mathsf{R}\mathsf{U}_s\|_{p,x}\leq C_p\|\mathsf{R}\|_{p+d,x}.
\end{equation}
If $\mathsf{O}_J$ is supported on $J\subseteq I$, there exists an operator $\mathsf{O}_I$ supported on $I$ such that
\begin{equation}\label{eq:ht-observable-truncation}
 \|\mathsf{U}_s^*\mathsf{O}_J\mathsf{U}_s-\mathsf{O}_I\|\leq C_p|J|\|\mathsf{O}_J\|(1+\operatorname{dist}(J,I^c))^{-p}.
\end{equation}
When $I=\Lambda$, the approximation is exact. The constant $C_p$ depends only on $p,d,R_0,q,\theta_*$.
\end{lemma}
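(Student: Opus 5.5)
The plan is to treat $\mathsf U_s$ as the time-ordered evolution generated by the skew-adjoint, quasi-local generator $\mathsf G_s$. The key input is Proposition~\ref{prop:flow-locality}, which gives $\sup_s\|\mathsf G_s\|_{P,\mathrm{loc}}\leq C_G\leq CJ$ for any fixed $P$. That proposition requires Condition~\ref{cond:transport-mixing} with $\nu>P+6d+10$. Under $\beta J\leq\theta_*$, this condition holds for every $\nu$, by \eqref{eq:ht-mixing-input} and Proposition~\ref{prop:regional-mixing}. Since $s\leq\beta$ and $\beta C_G\leq C\theta_*$, all flow times times generator strengths are bounded by constants depending only on $p,d,R_0,q,\theta_*$.

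\textbf{Step 1: decompose the generator and apply a Lieb--Robinson bound.} I will write $\mathsf G_s=\sum_y\mathsf G_{s,y}$ with $\sup_y\|\mathsf G_{s,y}\|_{P,y}\leq 2C_G$. Expanding each $\mathsf G_{s,y}$ into shells supported on $\Ball_y(r)$ gives a time-dependent interaction $\Phi_s(\Ball_y(r))$ on the doubled lattice. Its weighted norm $\sup_z\sum_{X\ni z}(1+\operatorname{diam}X)^{P-d}\|\Phi_s(X)\|$ is finite, because a site $z$ lies in at most $C(1+r)^d$ balls of radius $r$. The polynomial Lieb--Robinson bound \cite[Theorem~3.1]{NSY2019} for time-dependent interactions then gives, for $\mathsf A$ supported on $X$,
\[
 \|[\mathsf U_s^*\mathsf A\mathsf U_s,\mathsf B]\|\leq C|X|\,\|\mathsf A\|\|\mathsf B\|(1+\dist(X,\supp\mathsf B))^{-(P-2d)}
\]
uniformly in $s\leq\beta$, where $P$ is chosen large relative to $p$. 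The same estimate applies to the adjoint evolution $\mathsf U_s^*$, since its generator is $-\mathsf U_s^*\mathsf G_s\mathsf U_s$ read backward in time.

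\textbf{Step 2: prove the truncation estimate \eqref{eq:ht-observable-truncation}.} I will set $\mathsf O_I=\mathbb E_I(\mathsf U_s^*\mathsf O_J\mathsf U_s)$, the normalized partial trace over the doubled factors in $I^c$. This operator is Hermitian and supported on $I$, and the Haar-average representation of $\mathbb E_I$ converts the commutator bound of Step~1 into
\[
 \|\mathsf U_s^*\mathsf O_J\mathsf U_s-\mathsf O_I\|\leq C|J|\|\mathsf O_J\|(1+\dist(J,I^c))^{-p},
\]
as in \cite[Eq.~(2)]{BHV2006}. When $I=\Lambda$, the partial trace is over nothing, so the approximation is exact. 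A cleaner but longer alternative is Duhamel comparison with the evolution generated only by terms inside $I$.

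\textbf{Step 3: prove the quasi-local bound \eqref{eq:ht-poisson-action}.} I will decompose $\mathsf R=\sum_r\mathsf R_r$ with $\supp\mathsf R_r\subseteq\Ball_x(r)$ and apply Step~2 to each $\mathsf R_r$. The approximations $\mathbb E_{\Ball_x(r+n)}$ telescope in $n$, exactly as in the proof of \eqref{eq:hamiltonian-locality}. This yields
\[
 \|\mathsf U_s^*\mathsf R_r\mathsf U_s\|_{p,x}\leq C(1+r)^d(1+r)^p\|\mathsf R_r\|,
\]
where the factor $(1+r)^d$ is the $|J|$ prefactor. Summing over $r$ gives the bound by $C_p\|\mathsf R\|_{p+d,x}$.

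\textbf{Main obstacle.} The hardest step is Step~1: obtaining a Lieb--Robinson bound with only polynomial decay from the local-strength norm of $\mathsf G_s$, with decay exponent large enough for Step~3. This requires checking that the interaction norm of \cite{NSY2019} is controlled by $\|\mathsf G_s\|_{P,\mathrm{loc}}$ up to a loss of order $d$ in the exponent, with constants uniform in the volume. It also requires that the time-dependence of $\mathsf G_s$ is at least measurable, which holds because $\mathsf G_s$ is continuous by Proposition~\ref{prop:exact-continuation}. Since $P$ is arbitrary, the exponent losses can be absorbed by taking $P$ large.
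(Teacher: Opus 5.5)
Your proof follows the paper's route essentially step by step. You apply Proposition~\ref{prop:flow-locality} at a large order, which is legitimate because Condition~\ref{cond:transport-mixing} holds for every $\nu$ when $\beta J\leq\theta_*$. You then use the time-dependent Lieb--Robinson bound \cite[Theorem~3.1]{NSY2019} for the Heisenberg evolution $\mathsf U_s^*\mathsf A\mathsf U_s$ generated by the Hermitian $i\mathsf G_a$, the Haar-averaged partial trace $\mathbb E_I$ for \eqref{eq:ht-observable-truncation}, and telescoping in $\mathbb E_{\Ball_x(n+\ell)}$ for \eqref{eq:ht-poisson-action}.

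The one step that is not justified as written is the choice of interaction in Step~1. You pick a decomposition $\mathsf G_s=\sum_y\mathsf G_{s,y}$ separately for each $s$ so that it nearly attains the infimum in \eqref{eq:local-strength-norm}. Such a choice says nothing about how the terms depend on $s$, and its terms need not be skew-adjoint. The time-dependent Lieb--Robinson theorem is applied to an interaction whose individual terms are Hermitian (here, $i$ times skew-adjoint) and regular in time. Continuity of the total generator $\mathsf G_s$ from Proposition~\ref{prop:exact-continuation} does not transfer to an arbitrary pointwise choice of its local pieces, so your measurability remark addresses the wrong object. The paper avoids this by using the explicit decomposition $\mathsf G_{a,x}=[\mathsf M_{f_a},\mathsf L_{\sigma_a,x}]$ from \eqref{eq:flow-locality-reduction}. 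Each term is a commutator of two self-adjoint operators that are continuous in $a$, so it is skew-adjoint and continuous. The proof of Proposition~\ref{prop:flow-locality} bounds $\sup_{a,x}\|\mathsf G_{a,x}\|_{m+2,x}$ directly. The canonical shells $(\mathbb E_{\Ball_x(r)}-\mathbb E_{\Ball_x(r-1)})(\mathsf G_{a,x})$ then inherit both properties. With this substitution your Step~1 goes through.

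Three bookkeeping points remain; the first two are absorbed by taking $P$ large, as you anticipate, and the third is an unneeded remark.
\begin{enumerate}
\item From $\sum_r(1+r)^P\|\mathsf G_{s,y,r}\|\leq 2C_G$ alone, you have not shown that the weighted norm with weight $(1+\operatorname{diam}X)^{P-d}$ is finite. Summing over centres leaves $\sum_y(1+\dist(y,z))^{-d}$, which diverges logarithmically with the volume. You must lose slightly more than $d$; the paper passes to $F_\zeta(z,w)=(1+\dist(z,w))^{-\zeta}$ with $m>\zeta+d+1$.
\item The telescoping in Step~3 needs the truncation estimate of Step~2 at order $p+2$ rather than $p$, so that $\sum_\ell(1+\ell)^{p}(1+\ell)^{-p-2}$ converges. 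This is why the paper sets $b=p+2$.
\item The remark about the adjoint evolution is unnecessary. $\mathsf U_s^*\mathsf A\mathsf U_s$ is already the Heisenberg evolution that \cite[Theorem~3.1]{NSY2019} bounds, and this lemma needs no statement about $\mathsf U_s\mathsf A\mathsf U_s^*$.
\end{enumerate}
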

\begin{proof}[Proof of Lemma~\ref{lem:ht-poisson-action}] Fix $p\geq0$, put $b=p+2$ and $\zeta=b+d+1$, and choose $m>\zeta+d+1$. Local rapid mixing holds at every fixed order in the high-temperature regime, so Proposition~\ref{prop:flow-locality} gives
\begin{equation}\label{eq:ht-integrated-flow-strength}
 \sup_{0\leq a\leq\beta}\|\mathsf G_a\|_{m,\mathrm{loc}}
 \leq C_G\leq CJ,
 \qquad \int_0^s\|\mathsf G_a\|_{m,\mathrm{loc}}\,da\leq CsJ.
\end{equation}
To obtain a continuous supported decomposition, use $\mathsf G_{a,x}=[\mathsf M_{f_a},\mathsf L_{\sigma_a,x}]$ from \eqref{eq:flow-locality-reduction}. Its terms are continuous and skew-adjoint, and the same estimate at order $m+2$ gives $\sup_{a,x}\|\mathsf G_{a,x}\|_{m+2,x}\leq CJ$. Let $\mathbb E_I$ denote normalized partial trace onto the doubled site factors in $I$, and set
\[
 \begin{aligned}
 \mathsf G_{a,x,0}&=\mathbb E_{\Ball_x(0)}(\mathsf G_{a,x}),\\
 \mathsf G_{a,x,r}&=(\mathbb E_{\Ball_x(r)}-\mathbb E_{\Ball_x(r-1)})(\mathsf G_{a,x}),
 \qquad r\geq1.
 \end{aligned}
\]
For $r\geq1$, subtracting a truncation supported on $\Ball_x(r-1)$ from both partial traces and using contractivity gives $\|\mathsf G_{a,x,r}\|\leq CJ(1+r)^{-m-2}$. Thus $\mathsf G_a=\sum_{x,r}\mathsf G_{a,x,r}$, with
\[
 \supp\mathsf G_{a,x,r}\subseteq\Ball_x(r),\qquad
 \sup_{a,x}\sum_r(1+r)^m\|\mathsf G_{a,x,r}\|\leq CJ.
\]
The partial traces preserve adjoints, so each $i\mathsf G_{a,x,r}$ is Hermitian and continuous in $a$. These are the interaction terms for the evolution $\partial_a\mathsf U_a=-i(i\mathsf G_a)\mathsf U_a$.

For sites $z,w$, any ball containing both has $r\geq\dist(z,w)/2$, and for each $r$ there are at most $C(1+r)^d$ possible centers. Consequently,
\begin{equation}\label{eq:ht-flow-interaction-bound}
 \sum_{\substack{x,r:\ z,w\in\Ball_x(r)}}\|\mathsf G_{a,x,r}\|
 \leq CJ\sum_{r\geq\dist(z,w)/2}(1+r)^{d-m}
 \leq CJ(1+\dist(z,w))^{-\zeta}.
\end{equation}
Write $F_\zeta(z,w)=(1+\dist(z,w))^{-\zeta}$. Since $\zeta>d$,
\[
 \sup_z\sum_wF_\zeta(z,w)<\infty,
 \qquad
 \sum_yF_\zeta(z,y)F_\zeta(y,w)\leq C_\zeta F_\zeta(z,w).
\]
For the second inequality, split the sum according to which of $\dist(z,y)$ and $\dist(y,w)$ is at least $\dist(z,w)/2$, and use the first inequality on the other factor. Equation~\eqref{eq:ht-flow-interaction-bound} therefore verifies the hypothesis of~\cite[Theorem~3.1]{NSY2019}. This implies that, if $J$ and $K$ are disjoint,
\begin{equation}\label{eq:ht-flow-commutator}
 \|[\mathsf U_s^*\mathsf O_J\mathsf U_s,\mathsf B_K]\|
 \leq C\bigl(e^{CJs}-1\bigr)\|\mathsf O_J\|\|\mathsf B_K\|
       \sum_{z\in J}\sum_{w\in K}F_\zeta(z,w).
\end{equation}
The exponential is uniformly bounded because $sJ\leq\theta_*$, and the constants are independent of the size of $K$.

We now construct the approximation in \eqref{eq:ht-observable-truncation}. If $I=\Lambda$, take $\mathsf O_I=\mathsf U_s^*\mathsf O_J\mathsf U_s$. Otherwise, set
\[
 \mathsf O_I=\mathbb E_I(\mathsf U_s^*\mathsf O_J\mathsf U_s).
\]
The operators and partial trace here act on the doubled site factors. Averaging over unitaries $V$ on $I^c$ gives the identity
\[
 \mathsf O-\mathbb E_I(\mathsf O)
 =\int_{\mathcal U(I^c)}[\mathsf O,V]V^*\,dV,
\]
where $dV$ is normalized Haar measure. Apply \eqref{eq:ht-flow-commutator} with $K=I^c$. For $D=\dist(J,I^c)$, polynomial growth of balls gives
\[
 \sum_{z\in J}\sum_{w\in I^c}F_\zeta(z,w)
 \leq C|J|\sum_{n\geq D}(1+n)^{d-\zeta}
 \leq C_b|J|(1+D)^{-b}.
\]
This proves \eqref{eq:ht-observable-truncation}. The approximation is Hermitian whenever $\mathsf O_J$ is Hermitian.

Finally, we prove \eqref{eq:ht-poisson-action} by explicitly telescoping these approximations. Write $\mathsf R=\sum_n\mathsf R_n$ with $\supp\mathsf R_n\subseteq\Ball_x(n)$, and define
\[
 \begin{aligned}
 T_{n,\ell}&=\mathbb E_{\Ball_x(n+\ell)}
                    (\mathsf U_s^*\mathsf R_n\mathsf U_s),\\
 \mathsf R_{n,0}&=T_{n,0},\qquad
 \mathsf R_{n,\ell}=T_{n,\ell}-T_{n,\ell-1}\quad(\ell\geq1).
 \end{aligned}
\]
The approximation error just proved is at most
$C_p(1+n)^d(1+\ell)^{-p-2}\|\mathsf R_n\|$.
The triangle inequality for two successive errors, and contractivity for $\ell=0$, therefore give
\[
 \begin{gathered}
 \mathsf U_s^*\mathsf R_n\mathsf U_s=\sum_{\ell\geq0}\mathsf R_{n,\ell},
 \qquad \supp\mathsf R_{n,\ell}\subseteq\Ball_x(n+\ell),\\
 \|\mathsf R_{n,\ell}\|\leq C_p(1+n)^d(1+\ell)^{-p-2}\|\mathsf R_n\|.
 \end{gathered}
\]
Inserting this decomposition into the definition of the quasi-local norm yields
\[
 \begin{aligned}
 \|\mathsf U_s^*\mathsf R\mathsf U_s\|_{p,x}\leq\sum_{n,\ell\geq0}(1+n+\ell)^p\|\mathsf R_{n,\ell}\|\leq C_p\sum_{n\geq0}(1+n)^{p+d}\|\mathsf R_n\|
                       \sum_{\ell\geq0}(1+\ell)^{-2}.
 \end{aligned}
\]
The last series converges. Taking the infimum over the decompositions of $\mathsf R$ proves \eqref{eq:ht-poisson-action}.
\end{proof}

\proofstep{Step 3: prove local state domination}
\begin{proof}[Proof of \eqref{eq:local-gap-kernel-assumption} at high temperature]
Fix $I=\Ball_x(r)$, $J=\Ball_x(\lfloor r/2\rfloor)$, a unit vector $|\psi_s\rangle\in\ker\mathsf A_{s,I}$, and $0\leq\mathsf O_J\leq\one$ supported on $J$. By \eqref{eq:ht-regional-kernel},
\[
 \mathsf U_s^*|\psi_s\rangle=|\Omega_{\sigma_0,I}\rangle\otimes|v\rangle,
 \qquad \|v\|_2=1.
\]
Thus the transformed vector has the same factor on $I$ as $|\Omega_{\sigma_0}\rangle$. For every Hermitian operator $\mathsf{O}_I$ supported on $I$,
\begin{equation}\label{eq:ht-reference-expectations}
 \langle\psi_s|\mathsf{U}_s\mathsf{O}_I\mathsf{U}_s^*|\psi_s\rangle=\langle\Omega_{\sigma_0}|\mathsf{O}_I|\Omega_{\sigma_0}\rangle=\langle\Omega_{\sigma_s}|\mathsf{U}_s\mathsf{O}_I\mathsf{U}_s^*|\Omega_{\sigma_s}\rangle,
\end{equation}
where the last equality uses $\mathsf{U}_s|\Omega_{\sigma_0}\rangle=|\Omega_{\sigma_s}\rangle$. Hence
\begin{equation}\label{eq:ht-domination-transfer}
 \begin{alignedat}{2}
 &\langle\psi_s|\mathsf{O}_J|\psi_s\rangle\leq\langle\Omega_{\sigma_s}|\mathsf{U}_s\mathsf{O}_I\mathsf{U}_s^*|\Omega_{\sigma_s}\rangle+\|\mathsf{U}_s^*\mathsf{O}_J\mathsf{U}_s-\mathsf{O}_I\|&\qquad(\text{using \eqref{eq:ht-reference-expectations}})&\\
 &\leq\langle\Omega_{\sigma_s}|\mathsf{O}_J|\Omega_{\sigma_s}\rangle+2\|\mathsf{U}_s^*\mathsf{O}_J\mathsf{U}_s-\mathsf{O}_I\|.
 \end{alignedat}
\end{equation}
To bound the last term, take
\[
 \mathsf O_I=\mathbb E_I(\mathsf U_s^*\mathsf O_J\mathsf U_s).
\]
This operator is Hermitian. Applying \eqref{eq:ht-observable-truncation} with $p=5d+5$, using $|J|\leq C(1+r)^d$ and $\dist(J,I^c)\geq r/2$, gives
\begin{equation}\label{eq:ht-interior-approximation}
 \|\mathsf U_s^*\mathsf O_J\mathsf U_s-\mathsf O_I\|
 \leq C(1+r)^d(1+r/2)^{-5d-5}
 \leq C'(1+r)^{-4d-5}.
\end{equation}
If $I=\Lambda$, the error is zero. Substitution into \eqref{eq:ht-domination-transfer} yields
\begin{equation}\label{eq:ht-state-domination}
 \langle\psi_s|\mathsf O_J|\psi_s\rangle
 \leq\langle\Omega_{\sigma_s}|\mathsf O_J|\Omega_{\sigma_s}\rangle
       +2C'(1+r)^{-4d-5}.
\end{equation}
After enlarging the error coefficient, this implies \eqref{eq:local-gap-kernel-assumption} with $C_{\mathrm{dom}}=2$ and $b=2d+2$. All constants are uniform for $\beta J\leq\theta_*$.
\end{proof}

\proofstep{Step 4: compare local actions with the product energy}

We compare the local actions with the product energy at $s=0$ in the following lemma.

\begin{lemma}[Local actions at the product state]\label{lem:ht-poisson-product}
If $p>0$ and $\mathsf{R}|\Omega_{\sigma_0}\rangle=0$, then
\begin{equation}\label{eq:ht-product-spatial}
 \mathsf{R}^*\mathsf{R}\leq C_p\|\mathsf{R}\|_{p+1,z}^2\sum_wF_p(z,w)\mathsf{Q}_w,
\end{equation}
where $F_p(z,w)=(1+\operatorname{dist}(z,w))^{-p}$.
Consequently, for $p>d$ and a family with $\mathsf{R}_\alpha|\Omega_{\sigma_0}\rangle=0$,
\begin{equation}\label{eq:ht-poisson-product-quadratic}
 \sum_\alpha\|\mathsf{R}_\alpha|Y\rangle\|_2^2
 \leq C_p\left(\sup_z\sum_{\alpha:x_\alpha=z}\|\mathsf{R}_\alpha\|_{p+1,z}^2\right)
 \langle Y|(-\mathsf{L}_{\sigma_0})|Y\rangle.
\end{equation}
\end{lemma}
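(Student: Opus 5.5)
The plan is to reuse the nested-kernel argument of Appendix~\ref{app:local-action-criterion} at $s=0$, where everything is explicit. The product form supplies both missing ingredients. Local state domination holds exactly with $\delta_*=0$, and the kernels are $\ran\mathsf P_I^0=\operatorname{span}\{|\Omega_{\sigma_0,I}\rangle\}\otimes\mathcal H_{I^c}^{\mathrm{dbl}}$. The local gap is $\nu_q$ uniformly, and more sharply $\one-\mathsf P^0_I\leq\sum_{w\in I}\mathsf Q_w$, because the $\mathsf Q_w$ are commuting projections.

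\emph{Action on regional kernels.} Fix $\mathsf R$ centered at $z$ with $\mathsf R|\Omega_{\sigma_0}\rangle=0$. Choose a decomposition $\mathsf R=\sum_r\mathsf R_r$ with $\supp\mathsf R_r\subseteq\Ball_z(r)$ and weighted sum at most $2\|\mathsf R\|_{p+1,z}$. Let $\widetilde{\mathsf R}_n=\sum_{r\leq n}\mathsf R_r$ be the truncation, so that $\|\mathsf R-\widetilde{\mathsf R}_n\|\leq 2\|\mathsf R\|_{p+1,z}(1+n)^{-p-1}$. Since $\widetilde{\mathsf R}_n$ is supported in $I=\Ball_z(n)$ and $\ran\mathsf P_I^0$ has the fixed factor $|\Omega_{\sigma_0,I}\rangle$ on $I$, we get
\[
\|\widetilde{\mathsf R}_n\mathsf P^0_I\|=\|\widetilde{\mathsf R}_n|\Omega_{\sigma_0}\rangle\|_2=\|(\widetilde{\mathsf R}_n-\mathsf R)|\Omega_{\sigma_0}\rangle\|_2 .
\]
Hence $\|\mathsf R\mathsf P^0_{\Ball_z(n)}\|\leq 4\|\mathsf R\|_{p+1,z}(1+n)^{-p-1}$.

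\emph{Nested projections.} Apply Lemma~\ref{lem:nested-kernel-bound} with every radius $r_k=k$ (not dyadic), $\mathsf P_k=\mathsf P^0_{\Ball_z(k)}$, $m=\|\mathsf R\|_{p+1,z}$ and $\varepsilon_k=4(1+k)^{-p-1}$. Take $N$ with $\Ball_z(N)=\Lambda$, so that $\mathsf P_N=|\Omega_{\sigma_0}\rangle\langle\Omega_{\sigma_0}|$ and $\mathsf R\mathsf P_N=0$. This gives
\[
\mathsf R^*\mathsf R\leq m^2\Bigl[\mathsf Q_z+C\sum_{k\geq1}(1+k)^{-p-1}\sum_{w\in\Ball_z(k)}\mathsf Q_w\Bigr].
\]
Exchange the sums: the coefficient of $\mathsf Q_w$ is $C\sum_{k\geq\dist(z,w)}(1+k)^{-p-1}\leq C_p F_p(z,w)$, using $p>0$. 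This is \eqref{eq:ht-product-spatial}, and the extra power $1$ in the norm order is exactly what the tail sum consumes. The point to check carefully is the first step. It needs the factorization of the kernel on $I$ together with the exact annihilation of the global Gibbs vector, and both hold at $s=0$ because $\sigma_0$ is a product state.

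\emph{Family bound.} Sum \eqref{eq:ht-product-spatial} over $\alpha$ and group the terms by center:
\[
\sum_\alpha\mathsf R_\alpha^*\mathsf R_\alpha\leq C_p M^2\sum_w\Bigl(\sum_z F_p(z,w)\Bigr)\mathsf Q_w,\qquad M^2=\sup_z\sum_{\alpha:x_\alpha=z}\|\mathsf R_\alpha\|_{p+1,z}^2 .
\]
For $p>d$ the inner sum $\sum_zF_p(z,w)$ is bounded uniformly in $w$ and in the volume. Finally $\sum_w\mathsf Q_w=\nu_q^{-1}(-\mathsf L_{\sigma_0})$ by \eqref{eq:native-product-generator}, and taking expectations in $|Y\rangle$ gives \eqref{eq:ht-poisson-product-quadratic}.
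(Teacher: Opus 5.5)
Your proposal is correct and follows the paper's proof essentially step by step. You truncate a near-optimal decomposition at radius $n$, use the product factorization of $\ran\mathsf P^0_{\Ball_z(n)}$ together with $\mathsf R|\Omega_{\sigma_0}\rangle=0$ to get $\|\mathsf R\mathsf P^0_{\Ball_z(n)}\|\leq 4m(1+n)^{-p-1}$, apply Lemma~\ref{lem:nested-kernel-bound} over all integer radii with $\one-\mathsf P^0_{\Ball_z(n)}\leq\sum_{w\in\Ball_z(n)}\mathsf Q_w$, and exchange sums to get the tail $\sum_{n\geq\dist(z,w)}(1+n)^{-p-1}\lesssim F_p(z,w)$. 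The family bound, via $\sup_w\sum_zF_p(z,w)<\infty$ for $p>d$ and $\sum_w\mathsf Q_w=\nu_q^{-1}(-\mathsf L_{\sigma_0})$, is likewise exactly the paper's argument.
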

\begin{proof}[Proof of Lemma~\ref{lem:ht-poisson-product}]
Put $m=\|\mathsf R\|_{p+1,z}$; the case $m=0$ is immediate. Let $\mathsf P_n^0=\prod_{w\in\Ball_z(n)}(\one-\mathsf Q_w)$ and truncate a decomposition in \eqref{eq:polynorm} with weighted sum at most $2m$ to an operator $\mathsf R^{[n]}$ supported on $\Ball_z(n)$. Product factorization and $\mathsf R|\Omega_{\sigma_0}\rangle=0$ give
\[
 \begin{aligned}
 \|\mathsf R^{[n]}\mathsf P_n^0\|
 &=\|\mathsf R^{[n]}|\Omega_{\sigma_0}\rangle\|_2
 \leq\|\mathsf R-\mathsf R^{[n]}\|,\\
 \|\mathsf R\mathsf P_n^0\|
 &\leq2\|\mathsf R-\mathsf R^{[n]}\|
 \leq4m(1+n)^{-p-1}.
 \end{aligned}
\]
These projections are nested, $\|\mathsf R\|\leq m$, and $\mathsf R\mathsf P_N^0=0$ once $\Ball_z(N)=\Lambda$. Applying Lemma~\ref{lem:nested-kernel-bound} with $\varepsilon_n=4(1+n)^{-p-1}$, and using $\one-\mathsf P_n^0\leq\sum_{w\in\Ball_z(n)}\mathsf Q_w$, yields
\[
 \begin{aligned}
 \mathsf R^*\mathsf R\leq C_pm^2\sum_{n\geq0}(1+n)^{-p-1}(\one-\mathsf P_n^0)\leq C_pm^2\sum_w
       \left[\sum_{n\geq\dist(z,w)}(1+n)^{-p-1}\right]\mathsf Q_w.
 \end{aligned}
\]
The inner sum is at most $(1+p^{-1})F_p(z,w)$ by integral comparison, proving \eqref{eq:ht-product-spatial}. For $p>d$, sum this inequality over $\alpha$, group terms by $x_\alpha=z$, and use $\sup_w\sum_zF_p(z,w)<\infty$. Substituting $\sum_w\mathsf Q_w=\nu_q^{-1}(-\mathsf L_{\sigma_0})$ from \eqref{eq:native-product-generator} and taking the quadratic form on $|Y\rangle$ gives \eqref{eq:ht-poisson-product-quadratic}.
\end{proof}

\proofstep{Step 5: compare the transported product form with the sampler}

We now use Lemma~\ref{lem:ht-poisson-product} to compare the transported product form with the sampler.

\begin{lemma}[Comparison with the transported product form]\label{lem:ht-comparison}
For sufficiently small $\beta J$, the form \eqref{eq:ht-transported-form} satisfies
\begin{equation}\label{eq:ht-sampler-comparison}
 \mathsf{A}_s\leq4(-\mathsf{L}_{\sigma_s})\qquad(0\leq s\leq\beta).
\end{equation}
\end{lemma}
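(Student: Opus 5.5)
Since $\mathsf A_s=\mathsf U_s(-\mathsf L_{\sigma_0})\mathsf U_s^*$, the claim is equivalent to
\[
 -\mathsf L_{\sigma_0}\leq 4\,\mathsf U_s^*(-\mathsf L_{\sigma_s})\mathsf U_s=:4\,\mathsf N_s ,
\]
where $\mathsf N_0=-\mathsf L_{\sigma_0}$. I will prove this by a Gr\"onwall argument in $s$, using the moving-frame calculation of Section~\ref{sec:derivative-comparison}. By \eqref{eq:native-dirichlet-form},
\[
 \mathsf N_s=\sum_{x,u}\int_{\mathbb R^2}(\mathsf U_s^*\widehat{\mathsf B}_{s,x,u}\mathsf U_s)^*(\mathsf U_s^*\widehat{\mathsf B}_{s,x,u}\mathsf U_s)\,d\mu .
\]
By \eqref{eq:moving-frame-derivative}, the derivative of each conjugated gradient is $\mathsf U_s^*\mathsf F_{s,x,u}\mathsf U_s$. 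For a fixed vector $|Y_0\rangle$, set $E_s=\langle Y_0|\mathsf N_s|Y_0\rangle$ and $|Y_s\rangle=\mathsf U_s|Y_0\rangle$. Then \eqref{eq:energy-derivative-bound} gives
\[
 \Bigl|\tfrac{d}{ds}E_s\Bigr|\leq 2\sqrt{E_s}\,\Phi_s^{1/2},\qquad
 \Phi_s=\sum_{x,u}\int\|\mathsf F_{s,x,u}\mathsf U_s|Y_0\rangle\|_2^2\,d\mu .
\]

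The key step is to bound $\Phi_s$ by the \emph{product} energy $\langle Y_0|(-\mathsf L_{\sigma_0})|Y_0\rangle$ rather than by $E_s$. This avoids the circularity of Condition~\ref{cond:local-action}, which is exactly what we are trying to establish in this regime. Write $\Phi_s=\sum_\alpha\|\mathsf U_s^*\mathsf F_\alpha\mathsf U_s|Y_0\rangle\|_2^2$ over the family $\alpha=(x,u,v,z)$. By \eqref{eq:derivative-gibbs-cancellation} and $\mathsf U_s|\Omega_{\sigma_0}\rangle=|\Omega_{\sigma_s}\rangle$, every $\mathsf R_\alpha:=\mathsf U_s^*\mathsf F_\alpha\mathsf U_s$ annihilates $|\Omega_{\sigma_0}\rangle$. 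By Lemma~\ref{lem:ht-poisson-action}, \eqref{eq:ht-poisson-action}, we have $\|\mathsf R_\alpha\|_{p+1,x}\leq C\|\mathsf F_\alpha\|_{p+d+1,x}$. Taking $p_0=p+d+1$ with $p>d$ in \eqref{eq:differentiated-gradient-locality} controls the local strength of the family by $M_E^2$. Lemma~\ref{lem:ht-poisson-product}, \eqref{eq:ht-poisson-product-quadratic}, then gives
\[
 \Phi_s\leq C M_E^2\,E_0 .
\]
Mixing holds at every order in this regime by Proposition~\ref{prop:regional-mixing}, so $M_E$ is available at any fixed order. The appendix estimates (Lemmas~\ref{lem:native-gradient-derivatives} and~\ref{lem:comparison-families}, together with $C_G\leq CJ$) show $M_E\leq C'J$, uniformly for $\beta J\leq\theta_*$, because both $\partial_s\widehat{\mathsf B}$ and $[\widehat{\mathsf B},\mathsf G_s]$ carry a factor $J$.

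With this bound, $\bigl|\tfrac{d}{ds}\sqrt{E_s}\bigr|\leq \Phi_s^{1/2}\leq C''J\sqrt{E_0}$. Integrating over $[0,s]$ gives
\[
 \sqrt{E_s}\geq (1-C''sJ)\sqrt{E_0}\geq(1-C''\beta J)\sqrt{E_0}.
\]
If $\beta J\leq\beta_0:=\min\{\theta_*,1/(2C'')\}$, then $E_s\geq E_0/4$, which is $\mathsf N_s\geq\tfrac14(-\mathsf L_{\sigma_0})$. Conjugating by $\mathsf U_s$ yields \eqref{eq:ht-sampler-comparison}. The degenerate case $E_s=0$ is handled by working with $\sqrt{E_s+\epsilon}$ and letting $\epsilon\to0$.

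The main obstacle is making sure the constant in $\Phi_s\leq CM_E^2E_0$ scales like $J^2$ with no $J$-independent part. Two things must be checked here. First, the transport locality constant in Lemma~\ref{lem:ht-poisson-action} stays bounded for $sJ\leq\theta_*$. Second, the $s$-derivative of the gradients genuinely vanishes as $J\to0$; this follows from the factor $[H,S_{x,u}]$, of norm at most $2J$, in Lemma~\ref{lem:native-gradient-derivatives}. I would also double-check that the orders line up: the derivative terms need order $p+2d+3$ or so, and the required mixing exponent $\nu$ is arbitrary here.
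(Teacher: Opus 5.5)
Your proposal is correct and is essentially the paper's argument. You use the same ingredients: the moving-frame derivative $\mathsf U_a^*\mathsf F_{a,x,u}\mathsf U_a$, which annihilates $|\Omega_{\sigma_0}\rangle$; transport locality from Lemma~\ref{lem:ht-poisson-action}; the product-state comparison of Lemma~\ref{lem:ht-poisson-product}; and the bound $M_E\leq CJ$. Together these control the gradient change by $CsJ\sqrt{\langle Y|(-\mathsf L_{\sigma_0})|Y\rangle}$. The only difference is cosmetic: the paper integrates the moving-frame derivative and applies the triangle and Minkowski inequalities in the $L^2(\mu)$-valued norm. That is the integrated form of your differential bound $|\tfrac{d}{ds}\sqrt{E_s}|\leq\Phi_s^{1/2}$, and it avoids both differentiating $E_s$ and the $\epsilon$-regularization.
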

\begin{proof}[Proof of Lemma~\ref{lem:ht-comparison}]
By \eqref{eq:ht-transported-form}, it suffices to prove $-\mathsf{L}_{\sigma_0}\leq4\mathsf{U}_s^*(-\mathsf{L}_{\sigma_s})\mathsf{U}_s$. For any $|Y\rangle$, the representation \eqref{eq:native-dirichlet-form} and the triangle inequality give
\begin{equation}\label{eq:ht-comparison-reduction}
 \begin{aligned}
 &\sqrt{\langle Y|(-\mathsf{L}_{\sigma_0})|Y\rangle}
 =\left(\sum_{x,u}\int\|\widehat{\mathsf{B}}_{0,x,u}|Y\rangle\|_2^2\,d\mu\right)^{1/2}\\
 &\leq\sqrt{\langle Y|\mathsf{U}_s^*(-\mathsf{L}_{\sigma_s})\mathsf{U}_s|Y\rangle}
 +\left(\sum_{x,u}\int\|(\mathsf{U}_s^*\widehat{\mathsf{B}}_{s,x,u}\mathsf{U}_s-\widehat{\mathsf{B}}_{0,x,u})|Y\rangle\|_2^2\,d\mu\right)^{1/2}.
 \end{aligned}
\end{equation}
We will bound the last term in \eqref{eq:ht-comparison-reduction} by $CsJ\sqrt{\langle Y|(-\mathsf L_{\sigma_0})|Y\rangle}$, with $C$ uniform for $\beta J\leq\theta_*$. Choosing $C\beta J\leq1/2$ will then prove \eqref{eq:ht-sampler-comparison}.

To bound that term, integrate \eqref{eq:moving-frame-derivative}:
\[
 \mathsf{U}_s^*\widehat{\mathsf{B}}_{s,x,u}\mathsf{U}_s-\widehat{\mathsf{B}}_{0,x,u}
 =\int_0^s\mathsf{U}_a^*\mathsf{F}_{a,x,u}\mathsf{U}_a\,da,
\]
where $\mathsf{F}_{a,x,u}$ is defined in \eqref{eq:moving-frame-derivative}. Minkowski's inequality therefore gives
\begin{equation}\label{eq:ht-gradient-change}
 \left(\sum_{x,u}\int\|(\mathsf{U}_s^*\widehat{\mathsf{B}}_{s,x,u}\mathsf{U}_s-\widehat{\mathsf{B}}_{0,x,u})|Y\rangle\|_2^2\,d\mu\right)^{1/2}
 \leq\int_0^s\left(\sum_{x,u}\int\|\mathsf{U}_a^*\mathsf{F}_{a,x,u}\mathsf{U}_a|Y\rangle\|_2^2\,d\mu\right)^{1/2}da.
\end{equation}
We bound the expression under the square root on the right-hand side of \eqref{eq:ht-gradient-change} using \eqref{eq:ht-poisson-product-quadratic}. To verify its quasi-local norm hypothesis, take $p=d+1$, so $p+1+d=2d+2$, and apply \eqref{eq:ht-poisson-action} to $\mathsf F_{a,x,u}$:
\begin{equation}\label{eq:ht-transported-gradient-locality}
 \begin{aligned}
 \sup_x\sum_u\int\|\mathsf{U}_a^*\mathsf{F}_{a,x,u}\mathsf{U}_a\|_{p+1,x}^2\,d\mu\leq C_p\sup_x\sum_u\int\|\mathsf{F}_{a,x,u}\|_{p+1+d,x}^2\,d\mu\leq CJ^2.
 \end{aligned}
\end{equation}
The last bound follows from \eqref{eq:differentiated-gradient-locality} at order $2d+2$: \eqref{eq:native-gradient-locality}, \eqref{eq:dirichlet-locality-target}, and \eqref{eq:ht-integrated-flow-strength} give $C_B\leq C$, $C_{\dot B}\leq CJ$, and $C_G\leq CJ$, uniformly for $\beta J\leq\theta_*$. Local rapid mixing holds at every fixed order in this regime, so the hypotheses of that derivative estimate are satisfied. Its proof in Appendix~\ref{sec:energy-reduction} uses no energy comparison. The cancellation required by \eqref{eq:ht-poisson-product-quadratic} follows from \eqref{eq:derivative-gibbs-cancellation}:
\[
(\mathsf{U}_a^*\mathsf{F}_{a,x,u}\mathsf{U}_a)|\Omega_{\sigma_0}\rangle=\mathsf{U}_a^*\mathsf{F}_{a,x,u}|\Omega_{\sigma_a}\rangle=0\,.
\]

Applying \eqref{eq:ht-poisson-product-quadratic} and then \eqref{eq:ht-transported-gradient-locality} gives
\begin{equation}\label{eq:ht-transported-gradient-action}
 \begin{aligned}
 \sum_{x,u}\int\|\mathsf{U}_a^*\mathsf{F}_{a,x,u}\mathsf{U}_a|Y\rangle\|_2^2\,d\mu
 &\leq C_p\left(\sup_x\sum_u\int\|\mathsf{U}_a^*\mathsf{F}_{a,x,u}\mathsf{U}_a\|_{p+1,x}^2\,d\mu\right)
 \langle Y|(-\mathsf{L}_{\sigma_0})|Y\rangle\\
 &\leq CJ^2\langle Y|(-\mathsf{L}_{\sigma_0})|Y\rangle.
 \end{aligned}
\end{equation}
Substitution of \eqref{eq:ht-transported-gradient-action} into \eqref{eq:ht-gradient-change} bounds its right-hand side by $CsJ\sqrt{\langle Y|(-\mathsf L_{\sigma_0})|Y\rangle}$. Equation~\eqref{eq:ht-comparison-reduction} therefore becomes
\[
 \sqrt{\langle Y|(-\mathsf{L}_{\sigma_0})|Y\rangle}
 \leq\sqrt{\langle Y|\mathsf{U}_s^*(-\mathsf{L}_{\sigma_s})\mathsf{U}_s|Y\rangle}+CsJ\sqrt{\langle Y|(-\mathsf{L}_{\sigma_0})|Y\rangle}.
\]
Choose $0<\beta_0\leq\theta_*$ with $C\beta_0\leq1/2$, and assume $\beta J\leq\beta_0$. Moving the last term to the left gives $\langle Y|(-\mathsf{L}_{\sigma_0})|Y\rangle\leq4\langle Y|\mathsf{U}_s^*(-\mathsf{L}_{\sigma_s})\mathsf{U}_s|Y\rangle$. Conjugating by $\mathsf U_s$ proves \eqref{eq:ht-sampler-comparison}. This choice of $\beta_0$ depends only on $d,R_0,q$.
\end{proof}

\proofstep{Step 6: conclude the high-temperature proof}
\begin{proof}[Proof of Condition~\ref{cond:local-action} at high temperature]
Choose $\beta_0\leq\theta_*$ as in Lemma~\ref{lem:ht-comparison} and assume $\beta J\leq\beta_0$. Step~1 identifies the kernel of $\mathsf A_s$, and \eqref{eq:ht-sampler-comparison} completes \eqref{eq:local-gap-auxiliary} with $c_{\mathrm{cmp}}=4$. Equation~\eqref{eq:ht-regional-kernel} gives \eqref{eq:local-gap-assumption} with $\gamma_*=\nu_q$ and $a=1$. Equation~\eqref{eq:ht-state-domination} gives \eqref{eq:local-gap-kernel-assumption} with $C_{\mathrm{dom}}=2$ and $b=2d+2$. Taking $p_0=2d+2$ gives $\min\{p_0,b\}=2d+2>d+a=d+1$. Proposition~\ref{prop:local-gap-criterion} therefore proves \eqref{eq:comparison-target}, uniformly in $s$ and the volume.
\end{proof}

\FloatBarrier

\end{document}